\documentclass[12pt]{article}
\usepackage{amsmath,enumerate,amsbsy,amsfonts,amssymb,mathabx,amscd,graphicx,multirow,xcolor,subcaption}
\usepackage{caption}
\captionsetup{font={small}}
\usepackage[round,authoryear]{natbib}
\usepackage{authblk}
\usepackage{mathrsfs}
\usepackage[flushleft]{threeparttable}
\usepackage{booktabs}
\RequirePackage[colorlinks,citecolor=blue,urlcolor=blue]{hyperref}
\usepackage{enumerate}
\usepackage{float}
\usepackage{amsmath}
\usepackage{setspace}
%\newcommand{\rmbf}[2][\mathbf]{\mathrm{#1{#2}}} %\mathrm{\mathbf ..}
%\newcommand{\gbf}[1]{\boldsymbol{#1}}

%specific redefine:

%\setlength{\oddsidemargin}{0.25in}
%\setlength{\textwidth}{6in}
\setlength{\topmargin}{-0.25in}
\setlength{\textheight}{8in}

\addtolength{\oddsidemargin}{-.5in}%
\addtolength{\evensidemargin}{-.5in}%
\addtolength{\textwidth}{1in}%
%\addtolength{\topmargin}{-.8in}%

\def\spacingset#1{\renewcommand{\baselinestretch}%
	{#1}\small\normalsize} \spacingset{1}

\usepackage{amsthm,amsmath,enumerate,amsbsy,amsfonts,amssymb,mathabx,amscd,graphicx,algorithm, indentfirst}
\usepackage{natbib}
\usepackage{geometry}
\geometry{left = 1cm, right = 1cm}
\usepackage{authblk, caption, subcaption}
\usepackage{mathrsfs}
\usepackage{algorithmic}
\usepackage{appendix}
%\usepackage[ruled,vlined]{algorithm2e}

%%%%%%%%%%%%%%%%%%%%%%%%%%%%%%%%%%%%%%%%%%%%%%%%doublespace,
\usepackage{epsfig}
\newtheorem{definition}{Definition}
\newtheorem{theorem}{Theorem}
\newtheorem{corollary}{Corollary}
\newtheorem{lemma}{Lemma}

\newtheorem{remark}{Remark}
\newtheorem{example}{Example}
\newtheorem{condition}{Condition}

\usepackage{multirow}
%\usepackage{xr}
%\externaldocument{5th_FSM_Supp}

\newcommand{\bzero}{\boldsymbol 0}
\newcommand{\bLambda}{\boldsymbol \Lambda}

\newcommand{\bSigma}{\boldsymbol \Sigma}

\newcommand{\bGamma}{\boldsymbol \Gamma}
\newcommand{\bTheta}{\boldsymbol \Theta}

\newcommand{\bDelta}{\boldsymbol \Delta}
\newcommand{\bXi}{\boldsymbol \Xi}
\newcommand{\bpsi}{\boldsymbol \psi}

\newcommand{\bphi}{\boldsymbol \phi}
\newcommand{\bPhi}{\boldsymbol \Phi}
\newcommand{\bmu}{\boldsymbol \mu}
\newcommand{\bxi}{\boldsymbol \xi}
\newcommand{\btheta}{\boldsymbol \theta}

\newcommand{\bvarepsilon}{\boldsymbol \varepsilon}

\newcommand{\bepsilon}{\boldsymbol \epsilon}
\newcommand{\bbbeta}{{\boldsymbol \eta}}

\newcommand{\bbf}{{\mathbf f}}
\newcommand{\bx}{{\mathbf x}}

\newcommand{\bv}{{\mathbf v}}

\newcommand{\br}{{\mathbf r}}

\newcommand{\bg}{{\mathbf g}}

\newcommand{\bbb}{{\mathbf b}}

\newcommand{\bG}{{\bf G}}
\newcommand{\bW}{{\bf W}}
\newcommand{\bD}{{\bf D}}
\newcommand{\bA}{{\bf A}}
\newcommand{\bB}{{\bf B}}
\newcommand{\bC}{{\bf C}}

\newcommand{\bI}{{\bf I}}
\newcommand{\bK}{{\bf K}}

\newcommand{\bT}{{\bf T}}
\newcommand{\bX}{{\bf X}}
\newcommand{\bY}{{\bf Y}}
\newcommand{\bZ}{{\bf Z}}

\newcommand{\bR}{{\bf R}}

\newcommand{\bQ}{{\bf Q}}

\newcommand{\bUps}{{\boldsymbol{\Upsilon}}}

\newcommand{\boZ}{{\boldsymbol{Z}}}

\newcommand{\cL}{{\cal L}}

\newcommand{\cU}{{\cal U}}
\newcommand{\cV}{{\cal V}}
\newcommand{\cS}{{\cal S}}

\newcommand{\cN}{{\cal N}}

\newcommand{\eR}{\mathbb{R}}

\newcommand{\eE}{\mathbb{E}}

\newcommand{\tF}{\text{F}}

\newcommand{\tdiag}{\text{diag}}
\newcommand{\tcorr}{\text{Corr}}
\newcommand{\tswap}{\text{swap}}

\newcommand{\cov}{\text{Cov}}
\newcommand{\var}{\text{Var}}

\newcommand{\tr}{\mbox{tr}}
\newcommand{\diag}{\mbox{diag}}
\newcommand{\FDR}{\text{FDR}}

\newcommand{\mFDR}{\text{mFDR}}
\newcommand{\Pow}{\text{Power}}
\newcommand{\tAnd}{\text{A}}
\newcommand{\tOr}{\text{O}}

\def\T{{ \mathrm{\scriptscriptstyle T} }}

\makeatletter
\newcommand*{\rom}[1]{\expandafter\@slowromancap\romannumeral #1@}
\makeatother

\makeatletter
\DeclareRobustCommand\widecheck[1]{{\mathpalette\@widecheck{#1}}}
\def\@widecheck#1#2{%
	\setbox\z@\hbox{\m@th$#1#2$}%
	\setbox\tw@\hbox{\m@th$#1%
		\widehat{%
			\vrule\@width\z@\@height\ht\z@
			\vrule\@height\z@\@width\wd\z@}$}%
	\dp\tw@-\ht\z@
	\@tempdima\ht\z@ \advance\@tempdima2\ht\tw@ \divide\@tempdima\thr@@
	\setbox\tw@\hbox{%
		\raise\@tempdima\hbox{\scalebox{1}[-1]{\lower\@tempdima\box
				\tw@}}}%
	{\ooalign{\box\tw@ \cr \box\z@}}}
\makeatother
\RequirePackage[OT1]{fontenc}
\usepackage[round,authoryear]{natbib}\RequirePackage[colorlinks,citecolor=blue,urlcolor=blue]{hyperref}

%\setstretch{1.45}
%\setlength{\voffset}{-2.5cm}
%\setlength{\oddsidemargin}{15pt}
%\setlength{\textwidth}{435pt}
%\setlength{\textheight}{670pt}

\marginparwidth 0pt
\marginparsep 0in
\topskip 0pt
\headsep 0pt
\headheight 0pt
\oddsidemargin -0.0in
\evensidemargin 0in
\textwidth 6.5in \topmargin 0pt\textheight 9.0in

%jasa 1.7
%\renewcommand{\baselinestretch}{1.7}
%\def\spacingset#1{\renewcommand{\baselinestretch}%
%{#1}\small\normalsize} \spacingset{1}

\providecommand{\keywords}[1]{\textbf{\textit{Keywords: }} #1}

\newcommand{\blind}{1}

\begin{document}

\def\spacingset#1{\renewcommand{\baselinestretch}%
{#1}\small\normalsize} \spacingset{1}

	\if1\blind
	{
  \spacingset{1.25}
		\title{\bf \Large
		%Functional knockoffs selection for high-dimensional functional graphical models
        %Knockoffs inference for high-dimensional functional data with applications
	Functional knockoffs selection with applications to functional data analysis in high dimensions
		% See below for other options
		% False discovery rate control via functional knockoffs for functional graphical models
        % Functional graphical model estimation with false discovery rate control via knockoffs
		}
            \author[1,2]{Xinghao Qiao}
            \author[1]{Mingya Long}
		\author[3]{Qizhai Li}
         \affil[1]{\it \small Department of Statistics, London School of Economics, London, U.K.}
  \affil[2]{\it \small Faculty of Business and Economics, The University of Hong Kong, Hong Kong}
       		\affil[3]{\it \small Academy of Mathematics and Systems Science CAS and University of Chinese Academy of Sciences, Beijing, China}
		\setcounter{Maxaffil}{0}
		
		\renewcommand\Affilfont{\itshape\small}
		%\date{\today}
		\date{\vspace{-5ex}}
		\maketitle
		%\vspace{-2cm}
	} \fi
	\if0\blind
	{\spacingset{2}
		\bigskip
		\bigskip
		\bigskip
		\begin{center}
			{\Large\bf Functional knockoffs selection with applications to functional data analysis in high dimensions}
		\end{center}
		\medskip
	} \fi

%\renewcommand\Affilfont{\itshape\small}
%\date{\vspace{-5ex}}
%\maketitle
%\vspace{-2cm}
\spacingset{1.69}
\begin{abstract}
The knockoffs is a recently proposed powerful framework that effectively controls the false discovery rate (FDR) for variable selection. However, none of the existing knockoff solutions are directly suited to handle multivariate or high-dimensional functional data, which has become increasingly prevalent in various scientific applications. In this paper, we propose a novel functional model-X knockoffs selection framework tailored to sparse high-dimensional functional models, and show that our proposal can achieve the effective FDR control for any sample size. Furthermore, we illustrate the proposed functional model-X knockoffs selection procedure along with the associated theoretical guarantees for both FDR control and asymptotic power using examples of commonly adopted functional linear additive regression models and the functional graphical model. In the construction of functional knockoffs, we integrate essential components including the correlation operator matrix, the Karhunen-Lo\`eve expansion, and semidefinite programming, and develop executable algorithms. We demonstrate the superiority of our proposed methods over the competitors through both extensive simulations and the analysis of two brain imaging datasets.

%Variable selection is commonly used in the analysis of various fields. However, there are few methods that effectively control the false discovery rate (FDR) when dealing with a limited sample size, making it crucial to address this issue. The recently proposed knockoff method can resolve this issue, but the existing knockoff method is not directly applicable to functional data.In this paper, we present a functional model-X knockoff framework for functional data. In this process, our main focus is on defining operators to describe the dependence structure and accurately measuring the strength of dependence. To facilitate implementation, we primarily leverage coordinate mapping, Karhunen-Lo$\grave{\hbox{e}}$ve expansion, and correlation operators expansion expression. Additionally, we adopt the pre-smoothing approach to estimate each sample for partially observed functional  data. Across three specific high-dimensional functional data models, we have demonstrated that the proposed method can effectively control the FDR for any given sample size. Additionally, as the sample size approaches infinity, the method exhibits a high power close to one. ultimately, we demonstrate that the proposed functional model-X knockoff outperforms the competitors significantly through extensive simulations and the functional connectivity analysis of two real datasets. 

\end{abstract}

\keywords{Correlation operator matrix; False discovery rate (FDR); Functional graphical model; Functional linear additive regression; Karhunen-Lo\`eve expansion; Power.}

\newpage
\spacingset{1.69} 
\vspace{-0.3cm}
\section{Introduction}
\label{sec:intro}
\vspace{-0.3cm}
Selecting important covariates associated with a response from a pool of potential candidates holds paramount importance across various scientific fields. At the same time, controlling the false discovery rate (FDR) offers an effective means to control error rates, ensuring replicable discoveries. A large body of literature on FDR control focuses on multiple testing approaches based on the $p$-values for assessing the significance of individual covariates; see, e.g., the seminal papers of \cite{benjamini1995,benjamini2001}. Yet, the high-dimensionality in covariates often renders many traditional approaches for $p$-value calculations inapplicable. Furthermore, none of the existing works along this vein directly address the problem of variable selection while simultaneously controlling the FDR.

The recent paper of \cite{barber2015controlling} introduced a fixed-X knockoffs inference framework that effectively controls the FDR for variable selection in Gaussian linear model with the dimensionality $p$ no larger than the sample size $n$ under fixed designs. The key idea is to construct knockoff variables that mimic the dependence structure of original covariates while maintaining independence from the response conditional on the original covariates. It then compares the importance statistics (e.g., lasso coefficients) between the original covariates and knockoffs for variable selection. %and selects covariates with sufficiently large differences.
The fixed-X knockoffs inference has been extended to many settings, such as group-variable selection and multitask learning \cite[]{dai2016knockoff}, high-dimensional linear model using data-splitting and feature screening \cite[]{barber2019knockoff} and Gaussian graphical model through a node-based local and a graph-based global procedure \cite[]{li2021ggm}.

More recently, \cite{candes2018} proposed a model-X knockoffs extension that accommodates random design and allows for arbitrary and unknown conditional distribution of the response given the covariates, and for arbitrarily large $p$ compared to $n.$
The model-X knockoffs framework has witnessed a plethora of advancements. For instance, \cite{fan2020rank} developed a graphical nonlinear knockoffs method to handle the unknown covariate distribution, providing theoretical guarantees on the power and robustness.
\cite{Fan2020IPAD} applied knockoffs inference to high-dimensional latent factor models, enabling stable and intepretable forecasting.
\cite{dai2022kernel} introduced a kernel knockoffs procedure for nonparametric additive models, employing subsampling and random feature mapping. See also \cite{Romano2020deep,ren2023} and the references therein. Nevertheless, the existing efforts are primarily devoted to addressing scalar data. As a result, it remains unclear whether the model-X knockoffs framework is applicable to functional data.

The rapid development of data collection technology has led to an increased availability of multivariate or high-dimensional functional data datasets. Examples include time-course gene expression data and different types of neuroimaging data, where brain signals are measured over time at a multitude of regions of interest (ROIs). 
Building upon recent proposals \citep[e.g.,][]{zhu2016,li2018nonparametric,fang2022}, these signals are modelled as multivariate random functions with each ROI represented by a random function, where, under high-dimensional scaling, the number of functional variables $p$ 
%Such data can be represented as$\bX_i(\cdot)=\{X_{i1}(\cdot), \dots, X_{ip}(\cdot)\}^{\T}$ for $i=1, \dots, n,$ defined on a compact interval, where the dimensionality $p$ 
can be comparable to, or even exceed, the number of subjects $n.$ 
To overcome the difficulties caused by high-dimensionality, various functional sparsity assumptions are commonly imposed on the model parameter space. E.g., scalar-on-function linear additive regression (SFLR) \cite[]{fan2015,kong2016,xue2021}, function-on-function linear additive regression (FFLR) \cite[]{fan2014SFLR,luo2017,chang2023}, functional linear discriminant analysis \cite[]{xue2023opti} as well as functional Gaussian graphical model (FGGM) \cite[]{qiao2019a} and its various extensions \cite[]{solea2022copula,zapata2019,Lee2023Conditional}. These models involve the development of sparse function-valued estimates in the sense of selecting important functional variables. However, none of the existing work has achieved the essential task of FDR control.

The major goal of our paper is to establish a methodological and theoretical foundation of model-X knockoffs selection for functional data and apply it to concrete examples of sparse high-dimensional functional models, thereby bridging an important gap in the respective fields. Specifically, we propose a functional model-X knockoffs selection framework that begins with dimension reduction via functional principal components analysis (FPCA), thus effectively converting infinite-dimensional curves into vector-valued FPC scores. We then compare group-lasso-based importance statistics between the estimated FPC scores of original and knockoff variables for variable selection. We demonstrate that our proposal is guaranteed to control the FDR below the nominal level regardless of $n.$ We then showcase the proposed framework through three useful examples, i.e., SFLR, FFLR and FGGM, and, additionally, establish that the power for each model asymptotically approaches one as $n$ goes to infinity. In constructing functional knockoffs, we integrate key ingredients: the correlation operator matrix, the Karhunen-Lo\`eve expansion, and semidefinite programming. We also develop executable algorithms using a coordinate representation system within finite-dimensional Hilbert spaces. %It is noteworthy that, in practical scenarios where functions are partially observed with errors, we incorporate a nonparametric smoothing approach alongside functional model-X knockoffs selection, which delivers superior empirical performance, as detailed in Sections~\ref{sec:sim} and \ref{sec:real}.

%Furthermore, we illustrate the proposed functional knockoffs selection, which involves initial dimension reduction followed by comparing the group-lasso-based importance statistics between the original and knockoff vector-valued covariates for variable selection, through three examples of sparse high-dimensional functional models, i.e., SFLR, FFLR and FGGM. For each model, we demonstrate that our proposal is guaranteed to control the FDR below the nominal level regardless of $n$, and achieves a power approaching one as $n$ tends to infinity. It is noteworthy that, in practical scenarios where functions are partially observed with errors, we incorporate a nonparametric smoothing approach alongside functional model-X knockoffs selection, which delivers superior empirical performance, as detailed in Sections~\ref{sec:sim} and \ref{sec:real}.

The main contribution of our paper is threefold. First, we propose a general functional model-X knockoffs selection framework. Arising from the initial dimension reduction, we have to deal with estimated FPC scores and truncation errors, whereas the conventional knockoffs is applied directly to observed data. Accounting for both estimation and truncation errors is a major undertaking in theoretical analysis. We show that the estimated vector-valued FPC scores possess two crucial properties: exchangeability and coin-flipping, which are pivotal in ensuring the effective FDR control of our proposal. By comparison, \cite{dai2016knockoff} developed a ``truncation first'' strategy that constructs group-knockoffs based on truncated FPC scores followed by group-lasso for variable selection. However, their FDR is limited to staying below the target level within the fixed-X rather than model-X knockoffs framework. Regarding the statistical power, our ``knockoffs first" proposal constructs functional knockoffs before dimension reduction and group-variable selection, thus capturing more feature information and leading to improved power, as evidenced by  simulation results in Section~\ref{sec:sim}.

%Compared to \cite{dai2016knockoff} who developed an ``truncation first'' strategy that constructs vector-valued knockoffs based on truncated FPC scores followed by group lasso for variable selection, our ``knockoffs first'' proposal constructs fully functional knockoffs before dimension reduction and group-variable selection, thus capturing more feature information and leading to improved empirical power, as demonstrated in all simulation scenarios in Section~\ref{sec:sim}.

Second, we apply our proposal to two sparse functional linear additive regression models, i.e., SFLR and FFLR, effectively achieving the FDR controls. We also delve into the associated power properties, which pose greater challenges compared to non-functional models due to the inclusion of aforementioned additional errors. Furthermore, we integrate the fixed-X GGM knockoff filter \cite[]{li2021ggm} into our functional model-X knockoffs framework to accommodate FGGM. Specifically, our proposal begins by locally constructing functional knockoffs and group-lasso coefficients for each node, and then solves a global optimization problem to determine nodewise thresholds for graph estimation. Compared to \cite{qiao2019a}, we circumvent the estimation of unbounded inverse covariance functions by reformulating the graph estimation through penalized functional regressions on each functional variable against the remaining ones. Unlike \cite{li2021ggm} which lacks theoretical power analysis,  we establish the power guarantee for the more challenging task of functional model-X selection for FGGM in a high-dimensional regime.

Third, constructing knockoff variables is a pivotal step in implementing the knockoffs procedure, and our functional extension is far from incremental. One challenge lies in characterizing the dependence across infinite-dimensional objects and choosing suitable functional norm to quantify the strength of dependence. The natural functional extension, which seeks to minimize the average covariance operators in certain norms between the functional original and knockoff variables, is inappropriate, since the minimum eigenvalues of covariance operators converge to zero, thus making it exceedingly difficult to distinguish the original variables from knockoff counterparts. Motivated by the result that, under mild conditions, all eigenvalues of associated correlation operators remain bounded away from zero and infinity, we propose to minimize the average correlation operators in operator norm as opposed to other unbounded norms, which largely enhances the distinguishability and ensures good power in signal detection. The other challenges are to solve semidefinite programming problems and specify conditional distributions at the operator level with the aid of Karhunen-Lo$\grave{\hbox{e}}$ve expansions. To develop executable algorithms, we reformulate their sample versions by representing operators as matrices using the coordinate mapping.

Our paper is set out as follows. Section~\ref{sec:method} proposes the functional model-X knockoffs selection framework. Section~\ref{sec:app} applies the proposal to three examples, i.e., SFLR, FFLR and FGGM, and establishes the associated theoretical guarantees on the FDR and power.
Section~\ref{sec:cons} presents the construction of functional knockoffs with executable algorithms. We demonstrate the superior finite-sample performance of our methods through simulations in Section~\ref{sec:sim} and the analysis of two brain imaging datasets in Section~\ref{sec:real}. 
All technical proofs are relegated to the Supplementary Material.

%To conclude the introduction, we collect some notation used throughout the paper. 
{\bf Notation}. For a positive integer $p,$ denote $[p]=\{1, \dots, p\}$ and $\bI_p$ as $p \times p$ identity matrix.  
For any vector $\bbb = (b_{1},\dots,b_p)^\T,$ define $\|\bbb\| = (\sum_{j} b^2_{j})^{1/2}.$
For any matrix $\bB = (B_{ij})_{p\times q}$, denote $\|\bB\|_{\tF} = (\sum_{i,j} B^2_{ij})^{1/2}$ its Frobenius norm and $\bB^\dag$ its Moore-Penrose inverse.  
Let $L_2(\cU)$ be a Hilbert space of square-integrable functions on a compact interval $\cU$ with inner product $\langle f, g\rangle =  \int_\cU f(u) g(u) du$ and norm  $\|\cdot\| = \langle \cdot , \cdot \rangle^{1/2}$ for $f,g \in L_2 (\cU).$ 
For $j\in[p]$, we take a separable Hilbert space %defined on $\cU,$ 
$\mathcal{H}_j \subseteq L_2 (\cU)$.  
For a compact linear operator $K$ from $\mathcal{H}_j$ to $\mathcal{H}_k$ induced from the kernel function $K$ with $K(f)(u)=\int_{\cU}K(u,v)f(v){\rm d}v \in \mathcal{H}_k$ for $f \in \mathcal{H}_j,$ there exist two orthonormal bases $\{\phi_l\}$ and $\{\psi_l\}$ of $\mathcal{H}_j$ and $\mathcal{H}_k,$, respectively, and a sequence $\{\lambda_l\}$ in $\mathbb R$ tending to zero, such that $K$ has the spectral decomposition $K=\sum_{l=1}^{\infty} \lambda_l \phi_l \otimes \psi_l,$ 
where $\otimes$ denotes the tensor product. For notational economy, we will use $K$ to 
denote both the operator and kernel function. We denote its Hilbert--Schmidt norm by $\|K \|_{\cS} = (\sum_{l=1}^{\infty} \lambda_{l}^2)^{1/2}=\{\int\int K^2(u,v) {\rm d}u {\rm d}v\}^{1/2},$ nuclear norm by $\| K\|_{\cN} = \sum_{l=1}^{\infty} |\lambda_{r}| $ and operator norm by $\|K\|_{\cL} = \sup_{\|f\|\leq 1,f \in \mathcal{H}_j} \|K(f)\|.$
Let $\boldsymbol{\mathcal{H}}$ be the Cartesian product of $\mathcal{H}_1, \dots, \mathcal{H}_p$ with inner product $\langle \bbf, \bg \rangle = \sum _{j=1}^{p}\langle f_j,g_j\rangle$ for $\bbf  = (f_1,\dots,f_p)^{\T}$ and $\bg= (g_1,\dots,g_p)^{\T} \in \boldsymbol{\mathcal{H}}.$
An operator matrix $\bK= (K_{jk})$ is a $p \times p$ matrix of operators with its $(j,k)$th operator-valued entry $K_{jk},$ and can be thought of an operator from $ \boldsymbol{\mathcal{H}}$ to $\boldsymbol{\mathcal{H}}$ with $\bK(\bbf) = \big(\sum_{j=1}^{p} K_{1j}( f_{j}),\dots, \sum_{j=1}^{p} K_{pj} (f_{j})\big)^\T$ for $\bbf \in \boldsymbol{\mathcal{H}}.$ %for $\bx = (x_1, \dots,x_p)^{\T} \in \boldsymbol{\mathcal{H}}.$ 
We use $\bK \succeq 0$ to denote a positive semidefinite operator matrix satisfying $\langle\bK(\bbf), \bbf\rangle \geq 0$ for any $\bbf \in \boldsymbol{\mathcal{H}}$. For $a,b \in \eR$, we use $a\vee b = \max\{a,b\}$. For two sequences of positive numbers $\{a_n\}$ and $\{b_n\}$, we write $a_n \lesssim b_n$ or $b_n \gtrsim a_n$ if there exists some constant $c>0$ such that $a_n \leq c b_n$. %Define $a_n \asymp b_n$ if and only if both $a_n \gtrsim b_n$ and $a_n \lesssim b_n$ hold. 

\vspace{-0.3cm}
\section{Functional model-X knockoffs selection framework}
\label{sec:method}
\vspace{-0.3cm}
\subsection{Definition} 
\vspace{-0.3cm}
\label{sec:def}
Let $\bX(\cdot) = \big(X_1(\cdot),\dots, X_p(\cdot)\big)^{\T}$ be a random element in $\boldsymbol{\mathcal{H}}.$ Before framing the variable selection in the context of sparse functional models with $Y$ being a scalar or functional response, we define a functional covariate $X_j(\cdot)$ as null if and only if $Y$ is independent of $X_j(\cdot)$ conditional on the remaining functional covariates $X_{-j}(\cdot)=\{X_1(\cdot), \dots, X_p(\cdot)\} \setminus \{X_j(\cdot)\}$ and as nonnull otherwise. Let $S^c$ denote the index set of null functional covariates, i.e., 
\begin{equation} 
\label{f.nullset}
    S^c = \big\{j\in[p]: X_{j}(\cdot) \hbox{ is independent of } Y \hbox{ conditional on } X_{-j}(\cdot)\big\}, 
\end{equation}
and hence the index set of nonnull functional covariates is given by $S,$ the complement of $S^c.$ This formulation naturally establishes the equivalence between the selection of null functional covariates and functional variable selection. % (i.e.,support recovery of function-valued parameters). 
E.g., it follows from Lemmas~\ref{lemmaeqSFLR} and \ref{lemmaeqFFLR} of the Supplementary Material that $S^c$ is the same as the set $\{j\in [p]: \|\beta_j\|=0\}$ in (\ref{SFLR.eq}) for SFLR or $\{j\in [p]:\|\beta_j\|_{\cS}=0\}$ in (\ref{FFLR.eq}) for FFLR. 
Our goal is to discover as many nonnull functional covariates as possible while controlling the FDR, defined as
\begin{equation*}
\label{df.FDR}
\FDR = \eE\left[\frac{|\widehat{S}\cap S^c|}{|\widehat{S}|\vee 1}\right],
\end{equation*}
where $\widehat{S}$ represents the index set of functional covariates identified by the variable selection procedure, and $|\cdot|$ denotes the cardinality of a set.

%Our definition (\ref{nulset}) naturally establish the equivalence between the index set of null functional covariates and that of sparse function-valued parameters. For example, it is easy to show that, under mild conditions, testing the hypothesis that $X_j(\cdot)$ is null is the same as testing $\|\beta_j\|=0$ in (\ref{SFLR.eq}) for SFLR  or  $\|\beta_j\|_{\cS}=0$ in (\ref{FFLR.eq}) for FFLR.

%In our construction of functional model-X knockoffs, since the predictor $\bX$ are supposed to be a Gaussian random function, we employ the second-order knockoffs generation method.  
The key ingredient of functional model-X knockoffs selection framework is the construction of functional model-X knockoffs, which is defined as follows. 
%First, we will provide the definition of the functional model-X knockoffs.
\begin{definition}\label{def1}
Functional model-X knockoffs for the family of random functions $\bX(\cdot)$ are a new family of random functions $\widetilde{\bX}(\cdot) = \big(\widetilde{X}_1(\cdot), \dots, \widetilde{X}_p (\cdot)\big)^{\T} \in {\boldsymbol{\mathcal{H}}}$ that satisfies the following two properties:
(i) 
%\begin{equation}\label{eq221}
$ \big(\bX(\cdot)^\T,\widetilde{\bX}(\cdot)^\T\big)_{\tswap (G)} \overset{D}{=}  \big(\bX(\cdot)^\T, \widetilde{\bX}(\cdot)^\T\big)$ 
%\end{equation}
for any subset $G \subseteq [p],$ where $\tswap(G)$ means swapping components $X_j(\cdot)$ and $\widetilde X_j(\cdot)$ for each $j \in G$ and $\overset{D}{=}$  denotes the equality in distribution.
(ii) $\widetilde{\bX}(\cdot)$ and $Y$ are independent conditionally on $\bX(\cdot).$
\end{definition}

Definition~\ref{def1} generalizes the definition of model-X knockoffs \cite[]{candes2018} within Hilbert spaces.
Property~(ii) is fulfilled when constructing the knockoffs $\widetilde\bX(\cdot)$ without any reference to $Y,$ and Property~(i) corresponds to the pairwise exchangability between the original and knockoff variables. 
%To give an example of knockoff variables obeying property ($i$), suppose that ${\bX} \sim N(\bzero,\Sigma_{XX})$. Then the joint distribution  of  
%$(\bX,\widetilde{\bX})\sim N(\bzero,\Sigma_{(X,\widetilde{X})(X,\widetilde{X})})$, where $\cov(\widetilde{X}) = \Sigma_{XX}$ and $\cov({X},\widetilde{X}) = \Sigma_{XX}- \diag(R)$,  $\diag(R)$ is a diagonal matrix such that $\Sigma_{(X,\widetilde{X})(X,\widetilde{X})})$ is positive semidefinite.
%(\cdot)\big)_{{\tswap(G)}}$ is obtained from $\big(\bX(\cdot),\widetilde{\bX}(\cdot)\big)$
% by swapping the entries $X_j(\cdot)$ and $\widetilde{X}_j(\cdot)$ for each $j\in G$; for example,
%$p=2$, $G =\{2\}$,
%$$
%\begin{array}{cll}
%\big(X_1(\cdot),X_2(\cdot),\widetilde{X%}_1(\cdot),\widetilde{X%}_2(\cdot)\big)_{{\tswap(\{2\})}}& \overset{D}{=} & \big(X_1(\cdot),\widetilde{X}_2(\cdot),{X}_1(\cdot),\widetilde{X}_2(\cdot)\big).
%\end{array}
%$$
%Motivated from this example, we move to generate the functional model-X knockoffs. In the proposed process of constructing functional model-X knockoffs, the predictors $\bX$ are assumed to be a Gaussian random function with known distribution and we utilize the second-order knockoffs generation method, which generates the knockoff variables by matching the first two moments. 
Before giving an example obeying this property, we introduce some notation. For each $j,k \in [p],$ denote the mean of $X_j$ as $\mu_{j}$ and the covariance operator between $X_{j}$ and $X_{k}$ as $\Sigma_{X_{j}X_{k}} =  \cov(X_j, X_k) = {\mathbb E}\{(X_{j}-\mu_{j})\otimes(X_{k}-\mu_{k})\},$ which has a one-to-one correspondence with the covariance function $\Sigma_{X_jX_k}(u,v)=\cov\{X_j(u),X_k(v)\}$ for $(u,v) \in \cU^2.$ Denote the $p\times p$ covariance operator matrix of $\bX$ as $\bSigma_{XX},$ whose $(j,k)$th entry is operator $\Sigma_{X_{j}X_{k}}.$ The cross-covariance operator matrix $\bSigma_{X\widetilde X}$ between $\bX$ and $\widetilde \bX$ can be defined similarly. We will utilize the example below as the way of constructing functional knockoffs in Section~\ref{sec:cons}. 
\begin{example}
\label{ex.MGP}
Suppose that ${\bX}$ follows a multivariate Gaussian process (MGP) with mean zero and covariance $\bSigma_{XX},$ denoted as $\text{MGP}(\bzero,\bSigma_{XX}).$ Then $(\bX^\T,\widetilde{\bX}^\T)^\T\sim\text{MGP}(\bzero,\bSigma_{(X,\widetilde{X})(X,\widetilde{X})})$ satisfies Property~(i),
where 
\begin{equation}
\setlength{\abovedisplayskip}{5pt}
    \setlength{\belowdisplayskip}{5pt}
\label{Q.cov}
\bSigma_{XX} = \bSigma_{\widetilde X\widetilde X},~~\bSigma_{X\widetilde X} = \bSigma_{\widetilde X X} = \bSigma_{XX}- \bQ_{XX},
\end{equation}
and $\bQ_{XX} = \tdiag(Q_{{X}_1 X_1},\dots, Q_{{X}_p X_p})$ is selected in such a way that $\bSigma_{(X,\widetilde{X})(X,\widetilde{X})}  \succeq 0.$
\end{example}

Suppose that  we observe $n$ i.i.d. realizations $\{\bX_i(\cdot),Y_i\}_{i \in [n]}$ from the population $\{\bX(\cdot),Y\}.$ Due to the infinite-dimensionality of functional data, we adopt the standard dimension reduction approach by performing Karhunen-Lo$\grave{\hbox{e}}$ve expansions of $X_{ij}(\cdot)$ and $\widetilde X_{ij}(\cdot)$ for each $j$ and truncating the expansions to the first $d_j$ terms, which serves as the foundation of FPCA:
\begin{equation}
\setlength{\abovedisplayskip}{8pt}
    \setlength{\belowdisplayskip}{8pt}
\label{exX}
{X}_{ij} (\cdot) -\mu_j(\cdot)= \sum_{l=1}^{\infty} \xi_{ijl} \phi_{jl}(\cdot) \approx  \bxi_{ij}^\T \bphi_{j}(\cdot),~~
\widetilde {X}_{ij} (\cdot) - \mu_j(\cdot) = \sum_{l=1}^{\infty} \tilde \xi_{ijl} \phi_{jl}(\cdot) \approx  \widetilde \bxi_{ij}^\T \bphi_{j}(\cdot),  
\end{equation}
where $\bphi_{j} = (\phi_{j1},\dots, \phi_{jd_j})^\T$, $\bxi_{ij} = (\xi_{ij1},\dots,\xi_{ijd_j})^\T$ and $\widetilde\bxi_{ij} = (\tilde\xi_{ij1},\dots,\tilde\xi_{ijd_j})^\T.$ Here  $\xi_{ijl}=\langle X_{ij} -\mu_j, \phi_{jl} \rangle$ (or $\tilde \xi_{ijl}=\langle \widetilde X_{ij} -\mu_j, \phi_{jl} \rangle$), namely FPC score of original (or knockoff) variables, corresponds to a sequence of random variables with ${\mathbb E}(\xi_{ijl})=0$ and $\cov(\xi_{ijl}, \xi_{ijl'})=\omega_{jl}I(l=l'),$ where $\omega_{j1} \geq \omega_{j2} \geq \dots > 0$ are the eigenvalues of $\Sigma_{X_jX_j}$ and $\phi_{j1}(\cdot), \phi_{j2}(\cdot), \dots$ are the corresponding eigenfunctions. To implement FPCA based on $n$ observations, we compute the sample estimator of $\Sigma_{X_jX_j}$ via $\widehat\Sigma_{X_jX_j} = n^{-1} \sum_{i=1}^n (X_{ij} -\hat \mu_j) \otimes (X_{ij} -\hat \mu_j)$ with $\hat \mu_j =n^{-1}\sum_{i=1}^n X_{ij}$ and then carry out an eigenanalysis of $\widehat\Sigma_{X_jX_j}$ that leads to estimated eigenvalue/eigenvector pairs $\{\hat\omega_{jl},\hat \phi_{jl}(\cdot)\}_{l \in [d_j]}.$ We then obtain estimated FPC scores $\hat \xi_{ijl} =\langle X_{ij} - \hat \mu_j, \hat \phi_{jl} \rangle$  and $\check \xi_{ijl} =\langle \widetilde X_{ij} - \hat \mu_j, \hat \phi_{jl} \rangle$ for $l \in [d_j].$
Let $\widehat\bxi_{ij}=(\hat\xi_{ij1}, \dots, \hat\xi_{ijd_j})^\T,$ $\widecheck \bxi_{ij}=(\check\xi_{ij1}, \dots, \check\xi_{ijd_j})^\T,$ $\widehat\bxi_{i}=(\widehat\bxi_{i1}^\T, \dots, \widehat\bxi_{ip}^\T)^\T$ and $\widecheck\bxi_{i}=(\widecheck\bxi_{i1}^\T, \dots, \widecheck\bxi_{ip}^\T)^\T.$ Resulting from the dimension reduction, the estimation of sparse function-valued parameters based on $\{\bX_i(\cdot)^\T,Y_i\}_{i\in [n]}$ is transformed to the block sparse estimation of parameter vectors/matrices based on vector-valued estimated FPC scores $\{\widehat\bxi_i\}_{i \in [n]}$ and transformed responses $\{\widetilde Y_i\}_{i \in [n]},$ where, e.g., $\widetilde Y_i$ equals $Y_i$ for SFLR in Section~\ref{sec:sflr} and estimated FPC scores of $Y_i(\cdot)$ for FFLR in Section~\ref{sec:fflr}.

We next present the exchangeability condition, i.e., swapping estimated FPC scores of null functional covariates with those of corresponding functional knockoffs will not affect the joint distribution of $\widehat \bxi_i$ and $\widecheck \bxi_i$ conditional on $\widetilde Y_i.$

\begin{condition}\label{conscore}
    For any subset $G  \subseteq S^{c}$ and $i\in[n]$,  $\big({\widehat{\bxi}^\T_i}, {\widecheck{{\bxi}}^\T_i} \big)|   \widetilde{Y}_i   \overset{D}{=} \big({\widehat{\bxi}^\T_i}, {\widecheck{{\bxi}}^\T_i} \big)_{\text{swap}(G)}|   \widetilde{Y}_i $.
\end{condition}
This condition can be validated across three examples we consider by applying the functional exachangeability result in Lemma~\ref{lemmaEx.func} of the Supplementary Material, which is built upon the properties in Definition \ref{def1}. It plays a crucial role in establishing the coin-flipping property in Lemma~\ref{lemmacoin} below. %which lays the foundation in controlling the FDR.
%Lemma \ref{lemmaEx.func} builds on the properties in Definition \ref{def1} and lays foundation to prove the coin-flipping property of the importance statistics defined in Section \ref{subsec23}, which is a crucial property for functional model-X knockoffs in controlling FDR. However, when dealing with high-dimensional function models, we must address the estimated basis coefficients and the bias terms that arise from truncation errors. 

%It is noteworthy that $\widetilde{Y}_i$ is the response variable after modification. When $Y$ is scalar, $\widetilde{Y}_i = Y_i$.
%While $Y$ is functional, the $\widetilde{Y}_i$ equals to the estimated scores of the $Y_i$. 
\vspace{-0.3cm}
 \subsection{Feature statistics} 
 \label{subsec23}
 \vspace{-0.3cm}
To select the nonnull functional variables, we compute knockoff statistics $W_j = w_j(Z_j,\widetilde{Z}_j)$ for each $j \in [p]$, where $w_j$ is antisymmetric function satisfying $w_j(\cdot,*) = -w_j(*,\cdot)$, and $Z_j$ and $\widetilde{Z}_j$ respectively represent the feature importance measure of estimated FPC scores of $\{X_{ij}\}_{i \in [n]}$ and $\{\widetilde X_{ij}\}_{i \in [n]}.$ For three examples we consider, we can choose $Z_j$ and $\widetilde Z_j$ as the group-lasso coefficient vectors/matrices under the vector $\ell_2$/matrix Frobenius norm of $\{\widehat \bxi_{ij}\}_{i \in [n]}$ and $\{\widecheck \bxi_{ij}\}_{i \in [n]},$ respectively, and a valid knockoff statistic is $w_j(Z_j,\widetilde Z_j)=Z_j-\widetilde Z_j.$ Intuitively, a large positive value of $W_j$ suggests that $X_j(\cdot)$ is an important (i.e., nonnull) feature, while small magnitudes of $W_j$ often correspond to unimportant (i.e., null) features. As noted in the \cite{candes2018}, the knockoffs selection can control the FDR when the feature importance measures $W_j$'s possess the essential coin flipping property below. 

\begin{lemma}
\label{lemmacoin} 
Suppose that Condition~\ref{conscore} holds. Let $(\delta_1, \dots, \delta_p)$ be a sequence of independent random variables such that $\delta_j = \pm 1$ with a probability of $1/2$ if $j \in S^{c}$, and $\delta_j =  1$ otherwise. Then $ (W_1,\dots,W_p) \overset{D}{=}  (\delta_1 W_1,\dots, \delta_p W_p),$ conditional on $(|W_1|,\dots,|W_p|).$ 
%the importance statistics $W_j, j \in [p],$ exhibit the property of coin-flipping. This property implies that, given $(|W_1|,\dots,|W_p|)$, the joint distribution of $(W_1,\dots,W_p)$ is equivalent to $(\epsilon_1 W_1,\dots, \epsilon_p W_p)$, where $\epsilon_j = \pm 1$ with a probability of $1/2$ if $j \in S^{c}$, and $\epsilon_j =  1$ otherwise.The importance statistics $W_j,j\in[p]$ have coin-flipping property, hich means that conditional on $(|W_1|,\dots,|W_p|),$ $ (W_1,\dots,W_p) \overset{D}{=}  (\epsilon_1 W_1,\dots, \epsilon_p W_p)$, where $\epsilon_j = \pm 1$ with probability $1/2$ if $j \in S^{c}$, and $\epsilon_j =  1$ otherwise.
\end{lemma}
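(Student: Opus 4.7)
The strategy is to adapt the classical model-X knockoffs coin-flipping argument of \cite{candes2018} to the present setting, in which the operational ``features'' are the estimated FPC-score vectors $(\widehat{\bxi}_i, \widecheck{\bxi}_i)$ together with the transformed response $\widetilde Y_i$. The starting observation is that each knockoff statistic $W_j = w_j(Z_j,\widetilde Z_j)$ is, by construction, a measurable function of the augmented sample,
\[
W_j = \Phi_j\bigl(\widehat{\bxi}_{1:n},\,\widecheck{\bxi}_{1:n},\,\widetilde Y_{1:n}\bigr),
\]
since $Z_j$ and $\widetilde Z_j$ are computed from the group-lasso coefficient block associated with the original and knockoff FPC scores of variable $j$ (e.g.\ through a vector $\ell_2$ or matrix Frobenius norm).

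The first main step is to verify the antisymmetry (flip-sign) property of the map $\Phi = (\Phi_1,\dots,\Phi_p)$: for every $G \subseteq [p]$ and every $j \in [p]$,
\[
\Phi_j\bigl((\widehat{\bxi},\widecheck{\bxi})_{\tswap(G)},\,\widetilde Y\bigr) = \epsilon_j(G)\,\Phi_j\bigl(\widehat{\bxi},\widecheck{\bxi},\widetilde Y\bigr),
\]
where $\epsilon_j(G) = -1$ if $j \in G$ and $\epsilon_j(G) = +1$ otherwise. This rests on two facts: (a) the group-lasso penalty treats the $2p$ score blocks symmetrically, so swapping $\widehat{\bxi}_{\cdot j} \leftrightarrow \widecheck{\bxi}_{\cdot j}$ for each $j \in G$ interchanges the coefficient pair $(Z_j,\widetilde Z_j)$ for $j \in G$ while leaving $(Z_k,\widetilde Z_k)$ untouched for $k \notin G$; (b) each $w_j$ is antisymmetric in its two arguments, so this interchange multiplies $W_j$ by $-1$ and preserves every other $W_k$.

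The second step combines the flip-sign identity with Condition~\ref{conscore}. That condition gives, for every $G \subseteq S^c$ and every $i$,
\[
\bigl(\widehat{\bxi}_i^\T,\widecheck{\bxi}_i^\T\bigr)\,\big|\,\widetilde Y_i \overset{D}{=} \bigl(\widehat{\bxi}_i^\T,\widecheck{\bxi}_i^\T\bigr)_{\tswap(G)}\,\big|\,\widetilde Y_i,
\]
which lifts to the full sample by i.i.d.\ observations. Pushing this equality in distribution through $\Phi$ and invoking the flip-sign identity yields
\[
(W_1,\dots,W_p) \overset{D}{=} \bigl(\epsilon_1(G) W_1,\dots,\epsilon_p(G) W_p\bigr) \quad \text{for every } G \subseteq S^c.
\]
Since $G$ is arbitrary, the joint law of $(W_1,\dots,W_p)$ is invariant under any sign flips of the coordinates indexed by $S^c$. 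Conditioning on $(|W_1|,\dots,|W_p|)$ and using the decomposition $W_j = |W_j|\,\text{sgn}(W_j)$ then gives the claimed representation with $\delta_j$ independent Rademacher for $j \in S^c$ and $\delta_j \equiv 1$ for $j \in S$.

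The main obstacle is step~(a): one must carefully check that the group-lasso objective built from estimated FPC-score blocks is genuinely invariant under the block-level permutation $\widehat{\bxi}_{\cdot j} \leftrightarrow \widecheck{\bxi}_{\cdot j}$ for arbitrary $j \in G$, so that the coefficient blocks $Z_j,\widetilde Z_j$ transform by a mere interchange. This is nontrivial because the blocks have varying sizes $d_j$, the dimension reduction relies on \emph{estimated} eigenfunctions (so the design is data-dependent), and the three target examples (SFLR, FFLR, FGGM) employ slightly different group-lasso formulations that each need to be checked. Once this block-level symmetry is secured, the rest of the argument proceeds along routine symmetrisation lines.
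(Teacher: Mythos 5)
Your proposal follows essentially the same route as the paper's own proof: you establish the flip-sign property of the group-lasso-based statistics $W_j$ under block swapping of original and knockoff FPC-score blocks, invoke Condition~\ref{conscore} (lifted to the full sample) to get $(W_1,\dots,W_p)\overset{D}{=}(\epsilon_1(G)W_1,\dots,\epsilon_p(G)W_p)$ for every $G\subseteq S^c$, and then conclude by the standard symmetrization argument conditional on $(|W_1|,\dots,|W_p|)$, exactly as the paper does via the random subset $S^c_-=\{j\in S^c:\delta_j=-1\}$. The argument is correct and the point you flag about verifying block-level symmetry of the group-lasso objective is precisely what the paper checks in each example.
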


Hence a large positive value of $W_j$  provides evidence that $j \in S,$ whereas, under $j \in S^c$, $W_j$ is equally likely to be positive and negative. Notice that Lemma~\ref{lemmacoin} is established on the sets of null/nonnull functional covariates, and fails to hold for the corresponding truncated sets specified in (\ref{t.nullset}) below. Nevertheless, this does not place a constraint to control the FDR, as justified in Theorem~\ref{thfdr} below.

The last step of our functional knockoffs selection framework is to apply the knockoff filter 
\cite[]{candes2018} by ranking $W_j$'s from large to small and selecting features whose associated $W_j$'s are at least some threshold $T_{\delta}$ in (\ref{threshold}). This results in estimated nonnull sets 
\begin{equation}\label{est.null}
\widehat{S}_\delta = \{j\in[p]:W_j \geq T_\delta\},~~~\delta=0~\text{or}~1.
\end{equation}
%hence the estimated nonnull set $\widehat{S}_\delta = \{j\in[p]:W_j \geq T_\delta\}$, where $T_\delta$ is the threshold. 
%{\color{red} 
%Note that coin-flipping property indicates that the null $W_j$ has a symmetric distribution., which means for any fixed threshold $T$, $$|\{j:W_j\leq -T\}| \geq |\{\hbox{ null } j:W_j\leq -T\}| \overset{D}{=} |\{\hbox{ null } j:W_j\geq T\}|.$$
%Then the FDR can be denoted as 
%$$
%\FDR = \frac{|\{\hbox{ null } j:W_j\geq T\}|}{|\{ j:W_j\geq T\}|\}},
%$$ 
%which can be estimated via 
%$$
%\widehat{\FDR} = \frac{|\{ j:W_j\leq -T\}|}{| \{j:W_j\geq T\}|}. 
%$$
%}
To select a data-driven threshold as permissive as possible while still managing the control over the FDR, we choose the threshold in the following two ways:%We can establish two different methods for defining the data-driven threshold $T_\delta$ ($\delta=1$ or 0) as
\begin{equation} \label{threshold}
T_{\delta} = \min\Big \{t \in \big\{|W_j| > 0:j\in [p] \big\}: \frac{\delta +|\{j:W_j \leq -t\}|}{|\{j:W_j \geq t\}|} \leq q \Big\},
\end{equation}
where $T_0$ is used for knockoff filter and $T_1$ is used for more conservative knockoff+ filter. The false discovery is measured by both FDR based on $T_1$ and modified FDR based on $T_0,$ which are respectively defined as,
\begin{equation*}\label{df.mFDR}
\FDR ={\mathbb E}\left[\frac{|\widehat{S}_1 \cap  S^c |}{|\widehat{S}_1|\vee 1}\right]  ~~\text{and}~~ 
\mFDR ={\mathbb E}\left[\frac{| \widehat{S}_0\cap  S^c  |}{|\widehat{S}_0|+1/q}\right].
 \end{equation*}
We are now ready to present a theorem regarding the effective FDR control.
%Having now explained each step of the functional knockoffs selection process, we will proceed to present a theorem pertaining to the control of the FDR. 

\begin{theorem}
\label{thfdr}
Suppose that Condition~\ref{conscore} holds. For any sample size $n$ and target FDR level $q\in [0,1]$, the selected set $\widehat S_1$ satisfies $\FDR \leq q$ and the selected set $\widehat S_0$ satisfies $\mFDR \leq q.$.
%based on $T_1$ controls the normal FDR and the selected set $\widehat S_0$ based on $T_0$ controls the modified FDR (mFDR), i.e., 
%selected set $\widehat{S}_1$  satisfies that $\FDR \leq q.$ Meanwhile, the selected set $\widehat{S}_0$  satisfies that $\mFDR \leq q,$ where $q$ is the target FDR level,the FDR is defined as follows:
\end{theorem}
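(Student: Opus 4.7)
The plan is to reduce the theorem to the classical FDR-control argument of \cite{barber2015controlling} and \cite{candes2018} by exploiting the coin-flipping property in Lemma~\ref{lemmacoin}, which serves as the bridge that allows the standard martingale machinery to operate on the functional knockoff statistics.

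First, I would rewrite the two error rates in terms of the null counts $V^+(t) = |\{j \in S^c : W_j \geq t\}|$ and $V^-(t) = |\{j \in S^c : W_j \leq -t\}|$, together with the observable count $R^+(t) = |\{j : W_j \geq t\}|$. The threshold definition (\ref{threshold}) enforces $\delta + |\{j : W_j \leq -T_\delta\}| \leq q\, R^+(T_\delta)$, and since $V^-(T_\delta) \leq |\{j : W_j \leq -T_\delta\}|$, elementary manipulations yield
$$\frac{V^+(T_0)}{R^+(T_0) + 1/q} \leq q\cdot\frac{V^+(T_0)}{1 + V^-(T_0)}, \qquad \frac{V^+(T_1)}{R^+(T_1) \vee 1} \leq q\cdot\frac{V^+(T_1)}{1 + V^-(T_1)}.$$
Taking expectations reduces both parts of the theorem to showing $\mathbb{E}[V^+(T_\delta)/(1+V^-(T_\delta))] \leq 1$ for $\delta \in \{0,1\}$.

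Second, I would invoke Lemma~\ref{lemmacoin} so that $\{\mathrm{sign}(W_j) : j \in S^c\}$ can be treated as i.i.d. symmetric $\pm 1$ signs conditional on $(|W_1|,\dots,|W_p|)$ together with the nonnull signs. Ordering the distinct positive $|W_j|$ values in decreasing magnitude $t_{(1)} > t_{(2)} > \cdots$, I would define the backward filtration $\mathcal{F}_k$ that reveals, at step $k$, the sum $V^+(t_{(k)}) + V^-(t_{(k)})$ along with all individual null signs at earlier steps. A standard calculation identifies $V^+(t_{(k)})/(1 + V^-(t_{(k)}))$ as a supermartingale in $k$ under $(\mathcal{F}_k)$, because flipping a single null sign from $+$ to $-$ decreases the ratio in conditional expectation. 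Since $T_\delta$ is a stopping time for this filtration---the thresholding rule depends only on the running counts $R^+(t_{(k)})$ and $|\{j : W_j \leq -t_{(k)}\}|$, which are $\mathcal{F}_k$-measurable---the optional stopping theorem together with a direct bound on the initial value yields $\mathbb{E}[V^+(T_\delta)/(1+V^-(T_\delta))] \leq 1$, as in \cite{barber2015controlling} and \cite{candes2018}.

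The main obstacle---and the reason the excerpt devotes substantial effort to establishing Lemma~\ref{lemmacoin}---is that the null set $S^c$ in (\ref{f.nullset}) is defined through the original functional covariates $X_j(\cdot)$, while the knockoff statistics $W_j$'s are built from estimated FPC scores $\widehat{\bxi}_i$ and $\widecheck{\bxi}_i$ subject to both truncation and estimation errors. Once Condition~\ref{conscore} and Lemma~\ref{lemmacoin} are in hand, however, the symmetry required by the classical argument transfers automatically to the joint law of $(|W_j|,\mathrm{sign}(W_j))_{j\in[p]}$, so no further analysis of truncation or estimation error enters the FDR proof and the bound holds exactly for every finite $n$.
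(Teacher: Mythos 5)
Your proposal is correct and follows essentially the same route as the paper: the paper's proof also reduces the theorem to the coin-flipping property of the signs of the null statistics (Lemma~\ref{lemmacoin}, guaranteed by Condition~\ref{conscore}) and then concludes FDR and mFDR control by invoking Theorem~3.4 of \cite{candes2018}, whose selective-sequential/supermartingale argument you have simply unpacked explicitly. Your closing observation—that because $S^c$ and Lemma~\ref{lemmacoin} are formulated at the level of the functional nulls, truncation and estimation errors play no role in the finite-sample bound—matches the paper's Remark~\ref{errorRmK}.
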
 

\begin{remark}\label{errorRmK}
Resulting from the dimension reduction step, we are confronted with estimated FPC scores and bias terms formed by truncation errors. However, Theorem~\ref{thfdr} makes it evident that neither estimation errors nor truncation errors affect the effectiveness of FDR control, which remains valid regardless of truncated dimensions $d_j$'s. This is because Condition~\ref{conscore}, concerning estimated FPC scores, serves as the foundation for proving Theorem~\ref{thfdr} and can be verified for any values of $d_j$'s without regard to truncation errors.
\end{remark}

\begin{remark}\label{fdrRmK}
Our proposed ``knockoffs first" framework begins with constructing functional knockoffs before performing dimension reduction and knockoff filter to the group-lasso coefficients for variable selection. Theorem~\ref{thfdr} ensures that our proposal effectively controls FDR when the null set $S^c$ is defined in (\ref{f.nullset}).
By comparison, an alternative ``truncation first" approach applies fixed-X group knockoffs selection \cite[]{dai2016knockoff} by constructing vector-valued knockoffs based on vector-valued estimated FPC scores, followed by adopting a group-lasso-based knockoff filter for variable selection.
However, it is important to note that the presence of functional versions of the conditional independence in both the null set $S^c$ and Definition~\ref{def1}'s Property~(ii) does not imply the corresponding truncated (sample) versions of the conditional independence, which are essential to ensure FDR control within the model-X framework. We thus formulate the ``truncation first" strategy within the fixed-X framework to accommodate linear model settings with $n \geq \sum_{j=1}^p d_j$. For instance, in Section~\ref{sec:sflr} for SFLR, we define the corresponding null set $S^c$ as the complement of $S$ in (\ref{support.SFLR}), and the truncated version of the null set $\widetilde S^c$ in (\ref{t.nullset}) with $S^c \subseteq \widetilde S^c$.  This allows us to employ the group knockoffs selection \cite[]{dai2016knockoff} that results in the effective FDR control via 
\begin{equation}
\label{fdr.ineq}
\underbrace{{\mathbb E}\left[\frac{|  \widehat{S} \cap S^c |}{|\widehat{S}|\vee 1}\right]}_{\text{functional version of FDR}} \leq \underbrace{{\mathbb E}\left[\frac{|  \widehat{S} \cap \widetilde S^c |}{|\widehat{S}|\vee 1}\right]}_{\text{truncated version of FDR}} \leq q, 
\end{equation}
where $\widehat S$ denotes the set of selected variables.
\end{remark}

Compared to our ``functional knockoffs" proposal with both FDR and power guarantees, it is evident from Remarks~\ref{fdrRmK} and \ref{powerRmK} below that the ``truncation first" strategy ensures FDR control but results in empirical power with possibly slower asymptotic rate of convergence for SFLR. These findings also hold true for FFLR and FGGM, and are consistent with our simulation results in Section~\ref{sec:sim}.

\vspace{-0.3cm}
\section{Applications} 
\label{sec:app}
%In this section, we apply the functional model-X knockoffs selection framework to three concrete high-dimensional functional data models, namely SFLR, FFLR and FGGM. 
 \vspace{-0.3cm}

\subsection{High-dimensional SFLR}\label{sec:sflr}
\vspace{-0.3cm}
Consider the high-dimensional SFLR model
\begin{equation}\label{SFLR.eq}
\setlength{\abovedisplayskip}{5pt}
    \setlength{\belowdisplayskip}{5pt}
\begin{array}{cll}
Y_i &=& \sum _{j=1}^{p} \int_{\cU} \beta_{j}(u) X_{ij}(u) du + \varepsilon_i, ~~i\in [n] ,
\end{array}
\end{equation}
where $\{\varepsilon_i\}_{i \in [n]} $ are i.i.d. mean-zero random errors, independent of mean-zero $\{X_{ij}(\cdot)\}_{i\in [n], j\in [p]}.$ The function-valued coefficients $\{\beta_j(\cdot)\}_{j \in [p]}$ are to be estimated and are assumed to be functional $s$-sparse with support
\begin{equation}
\setlength{\abovedisplayskip}{5pt}
    \setlength{\belowdisplayskip}{5pt}
    \label{support.SFLR}
S = \big\{j \in [p] : \|\beta_j\| \neq 0\big\}
\end{equation}
and cardinality $s=|S| \ll p.$ Our target is to select important functional variables (i.e., recovery of support $S$) while simultaneously controlling FDR. To integrate this task into our functional model-X knockoffs selection framework, we give the following condition.

\begin{condition}\label{irrSFLR}
For any $j \in [p]$ and any bivariate functions $\gamma_{k} \in L_2(\cU) \otimes L_2(\cU)$ with $k \in [p] \setminus \{j\},$
$X_{ij}(u) \neq \sum_{k \neq j} \int_\cU \gamma_{k}(u,v)X_{ik}(v)dv$ for $i \in [n].$
\end{condition}

Under Condition \ref{irrSFLR}, Lemma \ref{lemmaeqSFLR} of the Supplementary Material establishes the equivalence between the nonnull set (\ref{f.nullset}) and the support set (\ref{support.SFLR}) for SFLR. 
For each $j\in[p],$ we expand $X_{ij}(\cdot)$ according to (\ref{exX}).
Some specific calculations lead to the representation of (\ref{SFLR.eq}) as
\begin{equation}
\setlength{\abovedisplayskip}{8pt}
    \setlength{\belowdisplayskip}{8pt}
\label{Yscore}
Y_i=\sum_{j=1}^p \bxi_{ij}^\T \bbb_j + \epsilon_i + \varepsilon_i,
\end{equation}
where the $j$th coefficient vector $\bbb_j =\int_{\cU}\bphi_j(u)\beta_j(u){\rm d}u \in {\mathbb R}^{d_j}$ and $\epsilon_i = \sum_{j=1}^p \sum_{l=d_j+1}^{\infty} \xi_{ijl} b_{jl}$ is the truncation error. Hence, we can rely on the group sparsity pattern in $\{\bbb_j\}_{j \in [p]}$ to recover the functional sparsity structure in $\{\beta_j(\cdot)\}_{j \in [p]}.$ Within functional knockoffs framework, we denote the augmented coefficient vectors of FPC scores by $\{\bbb_j\}_{j \in [2p]}$ (the first $p$ coefficient vectors are for the original covariates and the last $p$ are for the knockoffs).

We initiate by adopting FPCA on $\{X_{ij}(\cdot)\}_{i \in [n]}$ for each $j$ and obtain vector-valued estimated FPC scores of original and knockoff covariates (i.e., $\widehat \bxi_{ij}$'s and $\widecheck \bxi_{ij}$'s). We then estimate $\{\bbb_j\}_{j \in [2p]}$ via the group-lasso regression on the augmented set of estimated FPC scores
\begin{equation}
\setlength{\abovedisplayskip}{8pt}
    \setlength{\belowdisplayskip}{8pt}
\label{glasso.SFLR}
\min _{\bbb_1,\dots,\bbb_{2p}} \frac{1}{2n} \sum_{i=1}^n \big(Y_i - \sum_{j=1}^p \widehat{\bxi}_{ij}^\T \bbb_j- \sum_{j=p+1}^{2p} \widecheck{\bxi}_{i(j-p)}^\T \bbb_j \big)^2 + \lambda_n \sum_{j=1}^{2p} \|\bbb_j\|,
\end{equation}
where $\lambda_n \geq 0$ is the regularization parameter. Denote the solution to (\ref{glasso.SFLR}) by $\{\widehat \bbb_j\}_{j \in [2p]}.$ Complying with the knockoff selection step in Section \ref{subsec23}, we choose the $j$th feature importance measures by $Z_j=\|\widehat \bbb_j\|$ and $\widetilde Z_j=\|\widehat\bbb_{j+p}\|$ and the corresponding knockoff statistics is $W_j=\|\widehat \bbb_j\|-\|\widehat\bbb_{j+p}\|.$ Hence we obtain the set of selected functional covariates $\widehat S_{\delta}$ by applying the knockoff filter to $\{W_j\}_{j \in [p]}$ in (\ref{est.null}), where the threshold $T_{\delta}$ is determined by (\ref{threshold}).

With the constructed functional model-X knockoffs satisfying Definition~\ref{def1}, we can show that the estimated FPC scores of 
original and knockoff variables fulfill Condition~\ref{conscore}. Then an application of Theorem~\ref{thfdr} leads to the following theorem, which achieves the valid FDR control for SFLR without any constraint on the dimensionality $p$ relative to the sample size $n.$ As such, our proposal works for both $p<n$ and $p>n$ scenarios.

\begin{theorem}
\label{thm_fdr_sflr}
Suppose that Condition~\ref{irrSFLR} holds. Then for any sample size $n$ and target FDR level $q\in [0,1],$  $\widehat S_1$ satisfies $\FDR \leq q$  and $\widehat S_0$ satisfies $\mFDR \leq q.$
%the selected set $\widehat{S}_1$  satisfies that $\FDR \leq q.$ Meanwhile, the selected set $\widehat{S}_0$  satisfies that $\mFDR \leq q,$ where $q$ is the target FDR level.
\end{theorem}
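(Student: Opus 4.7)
The plan is to reduce the statement to a direct application of Theorem~\ref{thfdr} by verifying its sole hypothesis, Condition~\ref{conscore}, in the SFLR setting (where $\widetilde Y_i = Y_i$). Everything downstream, namely the coin-flipping property (Lemma~\ref{lemmacoin}), the thresholding rule (\ref{threshold}), and the (m)FDR bookkeeping for both $\widehat S_0$ and $\widehat S_1$, is already supplied by Lemma~\ref{lemmacoin} and Theorem~\ref{thfdr} and need not be revisited.

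First I would reconcile the two notions of ``null.'' In the general framework, $S^c$ is the conditional-independence set (\ref{f.nullset}), whereas in SFLR the support is the sparsity set $\{j:\|\beta_j\|\neq 0\}$ in (\ref{support.SFLR}). Condition~\ref{irrSFLR}, invoked through Lemma~\ref{lemmaeqSFLR} of the Supplementary Material, identifies the two, so for any $j\in S^c$ I may use $\beta_j=0$ and $X_j(\cdot)\indep Y\mid X_{-j}(\cdot)$ interchangeably.

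Next I would verify Condition~\ref{conscore} in two sub-steps. At the functional level, the knockoffs $\widetilde\bX(\cdot)$ constructed in Section~\ref{sec:cons} (cf.\ Example~\ref{ex.MGP}) satisfy Definition~\ref{def1}. For any $G\subseteq S^c$, combining (i) and (ii) with the null conditional independence from the preceding step upgrades the pairwise exchangeability to the response-augmented identity
\begin{equation*}
\bigl(\bX(\cdot)^\T,\widetilde\bX(\cdot)^\T,Y\bigr) \overset{D}{=} \bigl(\bX(\cdot)^\T,\widetilde\bX(\cdot)^\T,Y\bigr)_{\text{swap}(G)},
\end{equation*}
which is precisely the content of Lemma~\ref{lemmaEx.func}. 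At the sample score level, because $(\widehat\bxi_i,\widecheck\bxi_i)$ is a measurable function of $\{\bX_i(\cdot),\widetilde\bX_i(\cdot)\}_{i\in[n]}$, applying this same map to both sides of the above identity yields $(\widehat\bxi_i^\T,\widecheck\bxi_i^\T)\mid Y_i \overset{D}{=} (\widehat\bxi_i^\T,\widecheck\bxi_i^\T)_{\text{swap}(G)}\mid Y_i$, which is Condition~\ref{conscore}. Theorem~\ref{thfdr} then delivers $\FDR\leq q$ for $\widehat S_1$ and $\mFDR\leq q$ for $\widehat S_0$ at once.

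The step requiring the most care is the transfer from the functional swap identity to its sample-score counterpart, because the estimated eigenfunctions $\{\hat\phi_{jl}\}$ are built from $\{X_{ij}\}$ only and not from the knockoffs, so the score map appears asymmetric between the two blocks. The resolution is that Definition~\ref{def1}(i) forces $X_j\overset{D}{=}\widetilde X_j$ marginally and, more strongly, keeps the full joint law of $(\bX,\widetilde\bX)$ invariant under any swap on $G\subseteq S^c$; hence the joint law of the pooled eigenfunction estimates, and therefore of $(\widehat\bxi_i,\widecheck\bxi_i)$, is also invariant. This is exactly the work packaged by Lemma~\ref{lemmaEx.func}. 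As Remark~\ref{errorRmK} already notes, neither the truncation errors $\epsilon_i$ in (\ref{Yscore}) nor the choice of $\{d_j\}$ enters this argument, which is why the conclusion holds for any $n$ without any constraint linking $p$ and $n$, covering both the $p<n$ and $p>n$ regimes.
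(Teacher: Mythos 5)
Your overall architecture matches the paper's: identify the conditional-independence null set (\ref{f.nullset}) with the coefficient support (\ref{support.SFLR}) via Condition~\ref{irrSFLR} and Lemma~\ref{lemmaeqSFLR}, verify Condition~\ref{conscore} starting from the functional exchangeability of Lemma~\ref{lemmaEx.func}, and then hand off to the generic knockoffs machinery (the paper runs this through Lemma~\ref{lemmacoin} in the SFLR setting and Theorem~3.4 of \cite{candes2018} rather than literally citing Theorem~\ref{thfdr}, but that is the same content). The problem is the step you yourself flag as delicate: the transfer from the functional swap identity to the estimated-score level is not correctly executed. The map you propose to ``apply to both sides'' first estimates $\{\hat\mu_j,\hat\phi_{jl}\}$ from the original block $\{X_{ij}\}$ and then projects both blocks onto that basis. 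Applied to the swapped sample, this map re-estimates the basis from the swapped original block (which for $j\in G$ now contains knockoff curves), so the push-forward of Lemma~\ref{lemmaEx.func} only yields equality in law between the unswapped scores and scores computed in a swap-dependent basis. That is not what Condition~\ref{conscore} asserts: there, $(\widehat\bxi_i^\T,\widecheck\bxi_i^\T)_{\tswap(G)}$ has both blocks projected onto the single basis built from the unswapped originals, with only the score blocks exchanged.

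Your resolution paragraph does not close this gap. The paper's eigenfunctions are not pooled over originals and knockoffs (had they been, the basis would be literally invariant under the swap and your push-forward would go through, but that is a different procedure from the one analyzed), and the observation that ``the joint law of the eigenfunction estimates is invariant'' merely restates the push-forward identity; it does not deliver exchangeability of the two score blocks computed in a \emph{common} basis. The paper's Corollary~\ref{CorSF} supplies the missing mechanism: it evaluates characteristic functionals of $(\bX_i^\T,\widetilde\bX_i^\T,Y_i)$ at test elements built from the estimated eigenfunctions and argues conditionally on them, exploiting that projecting the swapped curves onto the same fixed basis is identical to swapping the projected scores. Any complete proof needs this conditional-on-the-basis argument (or an equivalent), so as written your verification of Condition~\ref{conscore}, and hence the appeal to Theorem~\ref{thfdr}, is incomplete; the remaining ingredients (flip-sign property of $W_j=\|\widehat\bbb_j\|-\|\widehat\bbb_{j+p}\|$, the threshold bookkeeping, and the null-set translation via Lemma~\ref{lemmaeqSFLR}) are handled as in the paper.
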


\begin{remark}
\label{rmk.truncate}
As discussed in Remark~\ref{fdrRmK}, we formulate the ``truncation first" strategy for SFLR within the fixed-X group knockoffs framework \cite[]{dai2016knockoff}. Referring to (\ref{Yscore}), we can represent $Y_i$ linearly as %as a linear function of estimated vector-valued FPC scores 
$Y_i=\sum_{j=1}^p \widehat\bxi_{ij}^\T \bbb_j + e_i,$ where the error term $e_i$ encompasses truncation, estimation and random errors. Define the corresponding null set as
\begin{equation}
\setlength{\abovedisplayskip}{5pt}
    \setlength{\belowdisplayskip}{5pt}
\label{t.nullset}
\widetilde{S}^c = \{j\in[p]:\|\bbb_j\|=0\}.
\end{equation} 
Treating $\{\widehat \bxi_{ij}\}_{j \in [p]}$ as original group covariates, we
can follow \cite{dai2016knockoff} to construct group knockoffs, then choose group-lasso-based coefficient vectors under the $\ell_2$ norm as feature importance measures and finally apply the knockoff filter for group-variable selection. According to (\ref{fdr.ineq}), the selected set by the ``truncation first" approach achieves the FDR control. However, it may lead to declined power compared to our proposal, as argued in Remark~\ref{powerRmK}.  

\end{remark}

Before asymptotic power analysis of our approach, we impose some regularity conditions. 
\begin{condition}\label{cond_error_SFLR}
$\{\varepsilon_i\}_{i\in[n]}$ with finite variance $\sigma^2$ are i.i.d. sub-Gaussian variables, i.e., there exists some constant $c$ such that $\eE[e^{x\varepsilon_i}] \leq e^{c^2 \sigma^2 x^2/2}$ for any $x\in\eR.$ 
\end{condition}
\begin{condition} \label{cond_inf_SFLR}
For $(\bX, \widetilde{\bX})  \in \boldsymbol{\mathcal{H}}^2$, we denote a diagonal operator matrix by $\bD_{(X,\widetilde{X})(X,\widetilde{X})} = \tdiag(\Sigma_{X_1X_1},\dots, \Sigma_{X_pX_p},\Sigma_{\widetilde X_1\widetilde X_1},\dots,\Sigma_{\widetilde X_p \widetilde X_p} ).$ The infimum
    $$
    \setlength{\abovedisplayskip}{8pt}
    \setlength{\belowdisplayskip}{8pt}
    \underline{\mu} = \inf_{\bPhi \in \boldsymbol{\mathcal{H}}_0^2}\frac{\langle \bPhi,  \bSigma_{(X,\widetilde{X})(X,\widetilde{X})} (\bPhi)\rangle}{\langle \bPhi,  \bD_{(X,\widetilde{X})(X,\widetilde{X})} (\bPhi)\rangle}$$is bounded away from 0, where 
     $\boldsymbol{\mathcal{H}}_0^2 = \{ \bPhi \in \boldsymbol{\mathcal{H}}^2: \langle \bPhi,  \bD_{(X,\widetilde{X})(X,\widetilde{X})} (\bPhi)\rangle \in (0,\infty)\}$.
\end{condition}

\begin{condition} \label{cond_eigen_SFLR}
For each $j \in [p],$ $\omega_{j1} > \omega_{j2} > \dots >0,$ and $\max_{j\in [p]} \sum_{l= 1}^{\infty} \omega_{jl} = O(1)$. There exists some constant $\alpha >1$ such that $\omega_{j l} -  \omega_{j(l+1)} \gtrsim l^{-\alpha-1}$ for $l=1, \dots, \infty.$
\end{condition}

\begin{condition} \label{cond_coef_SFLR}
(i) For each $j \in S,$ there exists some constant $\tau > \alpha/2+1$ such that $|b_{j l }| \lesssim l^{-\tau}$ for $l\geq 1;$  
(ii) $\min_{j \in S} \|\bbb_j\| \geq \kappa_{n} d^\alpha\lambda_n/ \underline{\mu}$ for some slowly diverging sequence %/{\underline{\mu}}
$\kappa_n \to \infty $ as $n \to \infty,$ and the regularization parameter $\lambda_n \gtrsim s[ d^{\alpha+2} \{\log(pd)/n\}^{1/2} + d^{1- \tau}];$  
(iii) There exists some constant $ c'_1 \in (2(qs)^{-1},1)$ such that $|S_2| \geq c'_1 s$ with $S_2 = \{j \in [p]: \|\bbb_j\| \gg { s^{1/2} d^\alpha \lambda_n}\}.$
\end{condition}

To simplify notation, we assume the same $d$ across $j \in [p]$ in the power analysis in Sections~\ref{sec:sflr}, \ref{sec:fflr} and \ref{sec:fggm}, but our theoretical results extend naturally to the more general setting where $d_j$'s are different. Condition~\ref{cond_inf_SFLR} can be interpreted as requiring the minimum eigenvalue of the correlation operator matrix of $(\bX^\T, \widetilde{\bX}^\T)^\T$ to be bounded away from zero. See similar conditions in \cite{fan2020rank,dai2022kernel} on the minimal eigenvalue of the corresponding covariance matrix, whose functional extension fails to hold as the infimum of the covariance operator matrix is zero.
Conditions~\ref{cond_eigen_SFLR} and \ref{cond_coef_SFLR}(i) are standard in functional regression literature \citep[e.g.,][]{kong2016} with parameter $\alpha$ capturing the tightness of gaps between adjacent eigenvalues and parameter $\tau$ controlling the level of smoothness in nonzero coefficient functions. 
Condition~\ref{cond_coef_SFLR}(ii) requires the $\ell_2$ norms of nonzero coefficient vectors exceed a certain threshold, which ensures that the selected set contains the majority of important variables. Given that our knockoffs selection is built upon the group lasso, its power is upper bounded by that of the group lasso, which approaches one as $n \rightarrow \infty$ under this condition. %guarantees the asymptotic power one of the group lasso. 
Specifically, in the proof of Theorem~\ref{thm_power_sflr}, we obtain that, with high probability and $\widehat{S}_{\text{\tiny{GL}}}^c = \{j \in [p]: \|\widehat{\bbb}_j\| = 0\},$
$|\widehat{S}_{\text{\tiny{GL}}}^c \cap S|\min_{j\in S} \|\bbb_j \| \leq \sum_{j\in (\widehat S_{\text{\tiny{GL}}}^c \cap S)} \|\bbb_j \| = \sum_{j\in (\widehat S_{\text{\tiny{GL}}}^c \cap S)} \|\bbb_j -\widehat{\bbb}_j\|\leq \sum_{j=1}^p \|\bbb_j -\widehat{\bbb}_j\|=O(s d^{\alpha}\lambda_n/\underline{\mu}).$
%$O(s\lambda_n/\underline{\mu}) = \sum_{j=1}^p \|\bbb_j -\widehat{\bbb}_j\| \geq \sum_{j\in \widehat S^c \cap S} \|\bbb_j -\widehat{\bbb}_j\| = \sum_{j\in \widehat S^c \cap S} \|\bbb_j \| \geq  |S^c \cap S|\min_{j\in S} \|\bbb_j \|.$ 
Then
Condition~\ref{cond_coef_SFLR}(ii) implies that $|\widehat S_{\text{\tiny{GL}}}^c \cap S|=O(s \kappa_n^{-1})$ and hence the group lasso exhibits asymptotic power one. Condition~\ref{cond_coef_SFLR}(iii) requires a large enough suitable subset of $S$ that contains relatively strong signals to attain high power. See similar conditions in \cite{Fan2020IPAD,dai2022kernel}. %Compared to Condition~\ref{cond_coef_SFLR} (ii)--(iii), similar conditions have been used in \cite{fan2020rank}.

We are now ready to characterize the power of our proposal, which is defined as
\begin{equation}\label{defpower}
\setlength{\abovedisplayskip}{5pt}
    \setlength{\belowdisplayskip}{5pt}
    \Pow(\widehat{S}_{\delta})  = {\mathbb E}\left[ \frac{|\widehat{S}_{\delta}\cap S|}{| S| \vee 1}\right]. 
\end{equation}
%going to demonstrate that our proposed method in SFLR is capable of achieving asymptotic power approaching one under the aforementioned assumptions. 

\begin{theorem}
\label{thm_power_sflr}
Suppose that Conditions~\ref{irrSFLR}--\ref{cond_coef_SFLR} hold and 
%with $\underline{\mu} \gtrsim s d^{\alpha+2} \{\log (pd)/n\}^{1/2}.$ For sample size $n \gtrsim d^{4\alpha +2} \log (pd),$ 
%Suppose Condition~\ref{irrSFLR}--\ref{cond_power_SFLR} hold with  $\underline{\mu} \geq 32 c_z s d^{\alpha+2} \sqrt{\log(pd)/n} $ for $n \gtrsim   d^{4\alpha +2} \log(pd)$ and $c_z$ being some positive constant, then we have as $n \to \infty,$
$s d^\alpha \lambda_n \to 0.$ %as $n,p,d \to \infty$, 
Then the selected sets $S_{\delta}$'s satisfy $\Pow(\widehat{S}_{\delta}) \rightarrow 1$ as $n \to \infty.$
\end{theorem}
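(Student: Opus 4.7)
The plan is to reduce the power analysis of the functional model-X knockoffs filter to a uniform accuracy statement about the group-lasso estimators $\{\widehat{\bbb}_j\}_{j\in[2p]}$ from (\ref{glasso.SFLR}), then use the strong-signal subset $S_2$ to upper-bound the data-driven threshold $T_\delta$ from (\ref{threshold}), and finally sharpen the conclusion via Condition~\ref{cond_coef_SFLR}(ii). First, I would derive a group-lasso oracle inequality that treats the augmented design $(\widehat{\bxi}_i^\T,\widecheck{\bxi}_i^\T)$ as a perturbation of the true FPC score vectors and absorbs the truncation term $\epsilon_i$ in (\ref{Yscore}) plus the analogous knockoff expansion error into the effective noise. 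Condition~\ref{cond_inf_SFLR} transfers the operator-level minimum-eigenvalue bound on $\bSigma_{(X,\widetilde{X})(X,\widetilde{X})}$ into a restricted eigenvalue condition on the sample Gram matrix of the estimated scores, after a perturbation analysis of FPCA that uses the eigenvalue-gap assumption in Condition~\ref{cond_eigen_SFLR}; Condition~\ref{cond_error_SFLR} controls the sub-Gaussian stochastic noise; and Condition~\ref{cond_coef_SFLR}(i) controls the truncation bias. The target bound is
\begin{equation*}
\sum_{j=1}^{2p}\|\widehat{\bbb}_j - \bbb_j\| \lesssim \frac{s\, d^{\alpha}\lambda_n}{\underline{\mu}} \quad \text{with high probability},
\end{equation*}
where $\bbb_j=\mathbf{0}$ for all knockoff indices $j\in\{p+1,\dots,2p\}$ and for all original null indices $j\in S^c$ (under Condition~\ref{irrSFLR} and Lemma~\ref{lemmaeqSFLR}).

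The pointwise corollary $\|\widehat{\bbb}_j-\bbb_j\|\lesssim sd^{\alpha}\lambda_n/\underline{\mu}$ then yields: (a) for every $j\in S_2$, $\|\widehat{\bbb}_j\|\gg s^{1/2}d^{\alpha}\lambda_n$ while the knockoff counterpart $\|\widehat{\bbb}_{j+p}\|$ is of order $sd^{\alpha}\lambda_n$, so $W_j=\|\widehat{\bbb}_j\|-\|\widehat{\bbb}_{j+p}\|$ is large and positive; (b) for $j\in S^c$, both $\|\widehat{\bbb}_j\|$ and $\|\widehat{\bbb}_{j+p}\|$ are at most of order $sd^{\alpha}\lambda_n$, and by the coin-flipping property (Lemma~\ref{lemmacoin}) the signs of the corresponding $W_j$'s are symmetric. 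Therefore, at any threshold $t^\star$ strictly below $\min_{j\in S_2}W_j$ one has $|\{j:W_j\geq t^\star\}|\geq|S_2|\geq c_1' s$, and since $c_1'>2/(qs)$, the ratio $(\delta+|\{j:W_j\leq -t^\star\}|)/|\{j:W_j\geq t^\star\}|$ falls below $q$ with probability tending to one. Consequently $T_\delta\leq t^\star$, and every $j\in S_2$ lies in $\widehat{S}_\delta$.

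To upgrade $|\widehat{S}_\delta\cap S|\geq|S_2|\geq c_1'|S|$ into the stronger $|\widehat{S}_\delta\cap S|/|S|\to 1$, I would reuse the argument stated after Condition~\ref{cond_coef_SFLR}: the aggregate bound above combined with $\min_{j\in S}\|\bbb_j\|\geq\kappa_n d^\alpha\lambda_n/\underline{\mu}$ forces $|\{j\in S:\|\widehat{\bbb}_j\|=0\}|=O(s/\kappa_n)=o(s)$, and the same bound applied componentwise inside $S$ gives $\|\widehat{\bbb}_j\|\gg\|\widehat{\bbb}_{j+p}\|$ for all but an $o(s)$ fraction of $j\in S$, so those indices satisfy $W_j\geq T_\delta$ as well. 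This delivers $\Pow(\widehat{S}_\delta)\to 1$ as $n\to\infty$.

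The main obstacle is the initial group-lasso oracle inequality. Standard analyses assume a fixed design or a design observed without error, whereas here the design columns are the estimated FPC scores $\widehat{\bxi}_{ij}$ and $\widecheck{\bxi}_{ij}$, and the response contains the truncation bias $\epsilon_i$ together with a bias introduced by replacing $\bphi_j$ with $\widehat{\bphi}_j$. Transferring Condition~\ref{cond_inf_SFLR}, stated at the correlation-operator level, into a sample-level restricted eigenvalue condition on the augmented Gram matrix is the technically delicate step, and is precisely where the polynomial factor $d^{\alpha}$ that propagates through all subsequent bounds arises from the eigenvalue-gap perturbation analysis of FPCA.
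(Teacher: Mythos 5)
Your first two stages match the paper's route: the aggregate group-lasso bound $\sum_{j=1}^{2p}\|\widehat\bbb_j-\bbb_j\|\lesssim s d^{\alpha}\lambda_n/\underline{\mu}$ (the paper's Lemmas on concentration of estimated scores, the restricted-eigenvalue transfer of Condition~\ref{cond_inf_SFLR}, eigenvalue perturbation under Condition~\ref{cond_eigen_SFLR}, and truncation bias under Condition~\ref{cond_coef_SFLR}(i) culminating in exactly this oracle inequality), and the use of $S_2$ with $c_1'>2(qs)^{-1}$ to guarantee $|\widehat S_\delta|\geq c_1' s$ and $S_2\subseteq\widehat S_\delta$. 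Incidentally, the coin-flipping property is not needed there: emptiness of $\{j:W_j\le -t^\star\}$ for $t^\star$ above the uniform bound on $\|\widehat\bbb_{j+p}\|$ already makes $t^\star$ feasible.

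The genuine gap is in your final "upgrade" step. The only control on the threshold your argument delivers is $T_\delta\lesssim s^{1/2}d^{\alpha}\lambda_n$ (the smallest $t^\star$ that empties the negative set is of the order of $\max_j\|\widehat\bbb_{j+p}\|\le\|\widehat\bbb-\bbb\\|\lesssim s^{1/2}d^{\alpha}\lambda_n$). But a signal of minimal admissible strength has $\|\bbb_j\|\geq\kappa_n d^{\alpha}\lambda_n/\underline{\mu}$ with $\kappa_n$ only a slowly diverging sequence; Condition~\ref{cond_coef_SFLR} does not assume $\kappa_n\gtrsim s^{1/2}$, and Condition~\ref{cond_coef_SFLR}(iii) allows up to $(1-c_1')s$ signals below the $s^{1/2}d^{\alpha}\lambda_n$ scale. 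So "those indices satisfy $W_j\geq T_\delta$" does not follow: with your bound on $T_\delta$ you can only conclude $\Pow\geq c_1'$, not $\Pow\to 1$. The paper closes this hole with a two-scenario analysis of the ordered statistics at the threshold. If the next ordered $|W|$ below $T_\delta$ is strictly negative, minimality of $T_\delta$ in (\ref{threshold}) forces $|\{j:W_j\le -T_\delta\}|> q\,|\{j:W_j\ge T_\delta\}|-2\gtrsim qc_1's-2$, and dividing the aggregate knockoff mass $\sum_j\|\widehat\bbb_{j+p}\|\lesssim s d^{\alpha}\lambda_n$ by this count yields the much sharper bound $T_\delta=O(d^{\alpha}\lambda_n)\le\kappa_n d^{\alpha}\lambda_n/(2\underline{\mu})$, after which your componentwise counting gives $|\widehat S_\delta^c\cap S|=O(s/\kappa_n)$. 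If instead that ordered statistic is zero, then $\widehat S_\delta=\{j:W_j>0\}$, and the missed signals are counted directly: those with $W_j<0$ (either fewer than $O(s/\kappa_n)$, or else one is back in the first scenario), plus those with $W_j=0$, which lie in the group-lasso null support and number $O(s/\kappa_n)$ by the same aggregate bound (this uses the no-ties convention for the nonzero group-lasso norms). Without this dichotomy, or an extra assumption of the form $\kappa_n\gtrsim s^{1/2}$, your proposal does not establish the theorem as stated.
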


Theorem~\ref{thm_power_sflr} establishes the asymptotic power guarantee for both scenarios of $p<n$ and $p>n,$ more specifically $n<p \lesssim e^{n/d^{4\alpha+4}}/d$ under a high-dimensional regime.
\begin{remark}\label{powerRmK} 
In the proof of Theorem~\ref{thm_power_sflr}, we show that, with high probability, 
\begin{equation}\label{low.pow.kf}
\setlength{\abovedisplayskip}{5pt}
    \setlength{\belowdisplayskip}{5pt}
    \frac{|\widehat{S}_{\delta}\cap S|}{|S|\vee 1} \geq 1-O(\kappa_n^{-1}).
\end{equation}
By comparison, we present a heuristic argument about the power of the ``truncation first" strategy, employing the same sets $S, \widetilde S$ and $\widehat S$ as specified in Remarks~\ref{fdrRmK}--\ref{rmk.truncate}. Under conditions similar to those in \cite{fan2020rank} within the fixed-X framework, it is expected that the 
$\text{truncated version of Power} = {\mathbb E}\left[\frac{|\widehat{S} \cap \widetilde S |}{|\widetilde S|\vee 1}\right] \rightarrow 1,$ which holds in the sense of $|\widehat S \cap \widetilde S| \geq |\widetilde S|\big\{1-O(\kappa_n^{-1})\big\}$ with high probability.  This, together with the fact $\widetilde S \to S$ as $d, n \rightarrow \infty,$
%$|\widehat S \cap  S| \geq |\widehat S \cap \widetilde S|$ 
yields that the
%\begin{equation*}
%\label{power.ineq}
$\text{functional version of Power} ={\mathbb E}\left[\frac{|\widehat{S} \cap S |}{|S|\vee 1}\right] \rightarrow 1,$
%\end{equation*}
which holds since, with high probability, 
\begin{equation}\label{low.pow.ef} 
\setlength{\abovedisplayskip}{10pt}
    \setlength{\belowdisplayskip}{10pt}
    \frac{|\widehat{S} \cap S |}{|S|\vee 1}  \geq \frac{|\widetilde S|}{|S|\vee 1} \big\{1-O(\kappa_n^{-1})\big\}\geq 1-O(\kappa_n^{-1})-O\big(h(d)\big).
\end{equation}
Here $h(d) = \{ 1- |\widetilde S|/(|S|\vee 1) \} \rightarrow0$ as $d,n \rightarrow \infty.$ Although both methods are guaranteed with asymptotic power one, the asymptotic rate for the empirical power of our ``knockoffs first" proposal in (\ref{low.pow.kf}) is not slower than that of the ``truncation first" competitor in (\ref{low.pow.ef}).
%$h(d) = 1- |\widetilde S|\big(|S|\vee 1\big)^{-1}$ converging to $0$ as $d$ goes to infinity. 
%combining (\ref{low.pow.kf}) and (\ref{low.pow.ef}), we can conclude that the conditions ensuring (\ref{power.ineq}) are sufficient for the convergence of $\Pow(\widehat{S}_{\delta}) \rightarrow 1$.  
\end{remark} 
%Based on Theorem \ref{thm_fdr_sflr} and \ref{thm_power_sflr}, we can conclude that our proposed method simultaneously enjoys appealing properties of controlling False Discovery Rate (FDR) and power in the aforementioned high-dimensional settings and SFLR. 
\vspace{-0.3cm}
\subsection{High-dimensional FFLR}
\label{sec:fflr}
\vspace{-0.3cm}
Consider the high-dimensional FFLR model
\begin{equation}\label{FFLR.eq}
\setlength{\abovedisplayskip}{5pt}
    \setlength{\belowdisplayskip}{5pt}
\begin{array}{cll}
Y_i(v) &=& \sum _{j=1}^{p} \int_{\cU} X_{ij}(u)\beta_{j}(u,v) {\rm d}u + \varepsilon_i(v), ~i \in [n] ,~v \in \cV, 
\end{array}
\end{equation}
where $\{\varepsilon_i(\cdot)\}_{i \in [n]} $ are i.i.d. mean-zero random error functions, independent of mean-zero $\{X_{ij}(\cdot)\}_{i\in [n], j\in[p]},$ %and $\{\varepsilon_i(\cdot)\}_{i \in [n]}$ are independent, $\{Y_i(\cdot)\}_{i=1}^n$ are i.i.d. samples of functional response variables $Y(\cdot)$.
and $\{\beta_j(\cdot,\cdot)\}_{j\in[p]}$ are functional coefficients to be estimated with support 
\begin{equation}\label{support.FFLR}
\setlength{\abovedisplayskip}{5pt}
    \setlength{\belowdisplayskip}{5pt}
    {S}= \{j\in [p]: \|\beta_{j}\|_{\cS} \neq 0\}~~\text{and}~~s=|S| \ll p.
\end{equation} %and $s=|S| \lesssim p.$
Our target is to identify the support $S$ and control FDR at the same time within our proposed framework in Section~\ref{sec:method}. To achieve this, we establish in Lemma~\ref{lemmaeqFFLR} of the Supplementary Material that the nonnull set (\ref{f.nullset}) is equivalent to the support set (\ref{support.FFLR}) for FFLR.

%our objective is to choose significant functional variables within our model-X functional knockoffs selection framework. To accomplish this, it is also necessary to assume that Condition \ref{irrSFLR} is satisfied. Given Condition \ref{irrSFLR}, Lemma \ref{lemmaeqFFLR} in the Supplementary Material proves the equality between the nonnull set (\ref{f.nullset}) and the support set in (\ref{support.FFLR}) for FFLR. 

We follow (\ref{exX}) to expand $X_{ij}(\cdot)$ for each $i,j.$ We also approximate $\{Y_i(\cdot)\}$ under the Karhunen-Lo$\grave{\hbox{e}}$ve expansion truncated at $\tilde{d},$ i.e.,
$
Y_i(\cdot)  \approx  {{\bbbeta}}_{i}^\T {\bpsi}(\cdot),
$ where $\bpsi = (\psi_{1},\dots, \psi_{\tilde{d}})^\T$, $\bbbeta_{i} = (\eta_{i1},\dots,\eta_{i\tilde{d}})^\T.$   %$\eta_{il}=\langle Y_{i}.$ %the eigenvalues of $\Sigma_{YY}$ is denoted by $\tilde\omega_l$.
Some specific calculations lead to
the representation of (\ref{FFLR.eq}) as 
\begin{equation*}
\setlength{\abovedisplayskip}{5pt}
    \setlength{\belowdisplayskip}{5pt}
\label{Yfunction}
\bbbeta_{i}^\T =\sum_{j=1}^p \bxi_{ij}^\T \bB_j + \bepsilon_i^\T + \bvarepsilon_i^\T,
\end{equation*}
where $\bB_j=\int_{\cU \times \cV}\bphi_j(u) \beta_j(u,v) \bpsi(v)^\T {\rm d}u {\rm d}v \in {\mathbb R}^{d_j \times \tilde d},$ and $\bepsilon_i$ and $\bvarepsilon_i$ represent truncation and random errors, respectively.
%where the $j$th coefficient matrix $\bB_j = \{B_{jlm}\}_{l\in[d_j],m\in[\tilde{d}]}$ with each $B_{jlm} = \langle\langle \phi_{jl},\beta_{j}\rangle ,\psi_{m}\rangle,$$\br_i = (r_{i1},\dots,r_{i\tilde{d}})^\T$ is the truncation error with $r_{im}=\sum_{j=1}^{p}\sum_{l=d_j+1}^{\infty} \xi_{ijl} \langle\langle \phi_{jl}, \beta_{j}\rangle, \psi_{m}\rangle$, and $\bvarepsilon_i= (\varepsilon_{i1},\dots,\varepsilon_{i\tilde{d}})^\T$ is the random error with $\varepsilon_{im} = \langle \varepsilon_{i}, \psi_{m}\rangle$. 
%{\color{red}Hence, we can utilize the group sparsity pattern in $\{\bB_j\}_{j \in [p]}$ to identify the functional sparsity structure in $\{\beta_j(\cdot,\cdot)\}_{j \in [p]}$. Within functional knockoffs framework, we denote the augmented coefficient matrices of FPC scores as $\{\bB_j\}_{j \in [2p]}$ (the first $p$ coefficient matrices are for original covariates and the last $p$ are for knockoffs). We start by performing FPCA on $\{Y_i(\cdot)\}_{i \in [n]}$ and $\{X_{ij}(\cdot)\}_{i \in [n]}$ for each $j$, resulting in estimated FPC scores for the response, original covariates, and knockoffs, respectively, denoted as $\widehat{\bbbeta}_i,$ $\widehat \bxi_{ij}$, and $\widecheck \bxi_{ij}$ for $i \in [n], j \in [p]$.}
In a similar fashion to (\ref{glasso.SFLR}) with multivariate responses formed by estimated FPC scores of $Y_i(\cdot)$'s (i.e., $\widehat{\bbbeta}_i$'s),  we  implement the group lasso to estimate $\{\bB_j\}_{j \in [2p]}$ using the augmented set of estimated FPC scores of functional covariates
\begin{equation}
\setlength{\abovedisplayskip}{5pt}
    \setlength{\belowdisplayskip}{5pt}
\label{glasso.FFLR}
  \min _{\bB_1,\dots,\bB_{2p}} \frac1{2n} \sum_{i=1}^n \|\widehat{\bbbeta}_{i}^\T - \sum_{j=1}^p  {\widehat{\bxi}}_{ij}^\T \bB_j -
 \sum_{j=p+1}^{2p}  \widecheck{{{\bxi}}}_{i(j-p)}^\T\bB_j\|^2
+ \lambda_n \sum_{j=1}^{2p} \|\bB_j\|_{\tF}, 
\end{equation}
where $\lambda_n \geq 0$ is the regularization parameter.  
%where the scores of $(Y_1(\cdot),\dots,Y_n(\cdot))^\T$ by $ {{\bUps}}  = ({{\bUps}}_1, \dots,  {{\bUps}}_n)^\T ,$ 
% ${{\bUps}}_i =  ( {{{\eta}}}_{i1},\dots,  {{\eta}}_{i\tilde{d}})^\T$ the estimates of scores of $\bY(\cdot)$ by $ {\widehat{\bUps}}  = ({\widehat{\bUps}}_1, \dots,  {\widehat{\bUps}}_n)^{\T},$ 
 %${\widehat{\bUps}}_i =  ( {{\hat{\eta}}}_{i1},\dots, {\hat{\eta}}_{i\tilde{d}})^\T$. Likewise,  the estimated score of $\bX(\cdot)$ and $\widetilde{\bX}(\cdot)$ utilize the same notions as those in section \ref{sec:sflr}.
%Let $\bB_j\in R^{{d_j} \times {\tilde{d}}}$ be the matrix of coefficient with the $(r,l)$-th element being,
 %  $ B_{j rl}  =   \langle \langle \beta_j, {\psi}_{l} \rangle , {\phi}_{jl}\rangle ,  $
%where $i \in [n]$, $j \in [2p]$, $r  \in [d_j]$ and $l \in [\tilde{d}]$. 
Let $\{\widehat \bB_j\}_{j \in [2p]}$ be the solution to (\ref{glasso.FFLR}).
Following the knockoffs selection step in Section \ref{subsec23}, we choose the $j$th feature importance measures by $Z_j=\|\widehat \bB_j\|_{\tF}$ and $\widetilde Z_j=\|\widehat\bB_{j+p}\|_{\tF},$ and hence the corresponding knockoff statistic $W_j=\|\widehat \bB_j\|_{\tF}-\|\widehat\bB_{j+p}\|_{\tF}$. Applying the knockoff filter to $\{W_j\}_{j \in [p]},$ %in (\ref{est.null}) with the threshold $T_{\delta}$ determined via (\ref{threshold}), 
we obtain the set of selected functional covariates denoted as $\widehat S_{\delta}$.

%Using the constructed functional model-X knockoffs that satisfy Definition~\ref{def1}, we can demonstrate that the estimated FPC scores of the functional response, original variables, and knockoff variables meet the requirements of Condition~\ref{conscore}.  

%Under Definition~\ref{def1}, 
We can verify Condition~\ref{conscore}  with the choice of $\widetilde Y_i=\widehat \bbbeta_i.$ Applying Theorem~\ref{thm_fdr_sflr}, we then attain valid FDR control in our approach for FFLR.

\begin{theorem}
\label{thm_fdr_fflr}
Suppose that Condition \ref{irrSFLR} holds. Then for any sample size $n$ and target FDR level $q\in [0,1]$, $\widehat{S}_1$  satisfies $\FDR \leq q$ and $\widehat{S}_0$ satisfies $\mFDR \leq q.$
\end{theorem}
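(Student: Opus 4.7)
The plan is to reduce Theorem~\ref{thm_fdr_fflr} to the master FDR result in Theorem~\ref{thfdr} by verifying Condition~\ref{conscore} with the specific choice of transformed response $\widetilde{Y}_i = \widehat{\bbbeta}_i$. Once Condition~\ref{conscore} is in hand, the conclusion follows immediately from Theorem~\ref{thfdr}, since the construction of knockoff statistics $W_j = \|\widehat{\bB}_j\|_{\tF} - \|\widehat{\bB}_{j+p}\|_{\tF}$ in (\ref{glasso.FFLR}) uses the antisymmetric function $w_j(\cdot,*) = \cdot - *$, and the thresholds $T_0, T_1$ in (\ref{threshold}) produce the selected sets $\widehat{S}_0, \widehat{S}_1$ exactly as in the generic framework of Section~\ref{sec:method}.

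First, I would invoke Lemma~\ref{lemmaeqFFLR} of the Supplementary Material to equate the nonnull set (\ref{f.nullset}) with the support set (\ref{support.FFLR}) under Condition~\ref{irrSFLR}, so that the claim about $\FDR$ using $S^c$ in (\ref{f.nullset}) agrees with the support-based interpretation of nonnulls in (\ref{support.FFLR}). This parallels the first step in the proof of Theorem~\ref{thm_fdr_sflr} and is a cosmetic but necessary alignment.

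The crux is then to verify Condition~\ref{conscore}. By construction in Section~\ref{sec:cons} (to be used when the knockoffs are generated from the MGP setup in Example~\ref{ex.MGP}), $\widetilde{\bX}(\cdot)$ satisfies Properties (i) and (ii) of Definition~\ref{def1}. Applying the functional exchangeability result Lemma~\ref{lemmaEx.func}, one transfers the pairwise exchangeability from $(\bX^\T, \widetilde{\bX}^\T)^\T$ at the functional level to their truncated/estimated FPC scores $(\widehat{\bxi}_i^\T, \widecheck{\bxi}_i^\T)$: namely, swapping any $G \subseteq S^c$ preserves the joint distribution. Since $\widehat{\bbbeta}_i$ is computed by FPCA on $\{Y_i(\cdot)\}_{i\in[n]}$ alone and $\widetilde{\bX}(\cdot)$ is constructed without reference to $Y$, one has $\widetilde{\bX}(\cdot) \perp Y \mid \bX(\cdot)$, hence $\widecheck{\bxi}_i \perp \widehat{\bbbeta}_i \mid \widehat{\bxi}_i$. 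Conditioning on $\widetilde{Y}_i = \widehat{\bbbeta}_i$ then preserves the swap-invariance required by Condition~\ref{conscore}.

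The main obstacle, and the reason the argument is not a direct copy of Theorem~\ref{thm_fdr_sflr}, is that the ``response'' $\widetilde{Y}_i = \widehat{\bbbeta}_i$ is itself an \emph{estimated} vector depending on all $\{Y_i(\cdot)\}_{i\in[n]}$ through the sample covariance operator $\widehat{\Sigma}_{YY}$ and its eigenfunctions $\widehat{\psi}_l$. One must verify that this estimation does not corrupt the exchangeability: the key observation is that the entire FPCA of $Y$ is a measurable function of $\{Y_i(\cdot)\}$ only, and therefore conditioning on $\widehat{\bbbeta}_i$ is equivalent, for purposes of dependence with $(\widehat{\bxi}_i, \widecheck{\bxi}_i)$, to conditioning on a measurable transformation of $Y_i(\cdot)$. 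Since swapping components within $S^c$ leaves the conditional law of $\bX(\cdot) \mid Y$ unchanged by Definition~\ref{def1}, it leaves the conditional law of $(\widehat{\bxi}_i, \widecheck{\bxi}_i)$ given any measurable function of $\{Y_i\}$ unchanged. Plugging the verified Condition~\ref{conscore} into Theorem~\ref{thfdr} delivers both $\FDR(\widehat{S}_1) \le q$ and $\mFDR(\widehat{S}_0) \le q$ for arbitrary sample size, completing the proof. As emphasized in Remark~\ref{errorRmK}, neither the estimation errors in $\widehat{\bxi}_i, \widecheck{\bxi}_i, \widehat{\bbbeta}_i$ nor the truncation parameters $d_j, \tilde{d}$ enter the FDR bound.
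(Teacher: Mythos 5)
Your proposal is correct and takes essentially the same route as the paper: verify Condition~\ref{conscore} with $\widetilde Y_i=\widehat{\bbbeta}_i$ (the paper's FFLR analogue of its SFLR corollary, proved from Lemma~\ref{lemmaEx.func} by pushing the swap-invariant joint law of $\{(\bX_i,\widetilde\bX_i,Y_i)\}$ through the estimated eigenfunctions and scores), feed this into the master FDR theorem (equivalently, the coin-flip property plus the knockoff-filter bound), and use Lemma~\ref{lemmaeqFFLR} under Condition~\ref{irrSFLR} to identify the null set (\ref{f.nullset}) with the support set (\ref{support.FFLR}). One minor caveat: your side claim $\widecheck{\bxi}_i \perp \widehat{\bbbeta}_i \mid \widehat{\bxi}_i$ does not actually follow from $\widetilde\bX(\cdot)\perp Y\mid \bX(\cdot)$ (conditioning on truncated, estimated scores is weaker than conditioning on $\bX(\cdot)$), but your argument does not need it, since the measurable-transformation/swap-invariance reasoning in your final paragraph is what verifies Condition~\ref{conscore}, exactly as in the paper's proof.
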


Before presenting the power analysis, we need the following regularity conditions, which serve as the FFLR analogs of the conditions imposed for SFLR. 

\begin{condition}\label{cond_error_FFLR}
$\{\varepsilon_i(\cdot)\}_{i\in[n]}$ with covariance operator $\Sigma_{\varepsilon\varepsilon}$ are i.i.d. sub-Gaussian processes in $L_2(\cV),$ i.e., there exists some constant $\tilde{c}$ such that $\eE[e^{\langle x,\varepsilon \rangle}] \leq e^{\tilde{c}^2 \langle x,\Sigma_{\varepsilon\varepsilon} (x) \rangle/2 }$ for all $x\in L_2(\cV).$
\end{condition}

\begin{condition} \label{cond_eigen_FFLR}
The eigenvalues of $\Sigma_{YY}$ satisfy $\tilde\omega_{1} > \tilde\omega_{2} > \dots >0,$ and $ \sum_{l = 1}^{\infty} \tilde\omega_{l} = O(1).$  
There exists some constant $\tilde{\alpha} >1$ such that  $ \tilde\omega_{ l} -  \tilde\omega_{l+1} \gtrsim  l ^{-\tilde{\alpha}-1}$ for $ l \geq 1.$
\end{condition}

\begin{condition} \label{cond_coef_FFLR}
(i) For each $j \in S$, there exists some constant $\tau > (\alpha \vee \tilde{\alpha})/2 +1$ s.t. $|B_{j lm}| \lesssim  (l+m)^{-\tau-1/2}$ for $l,m \geq 1;$ 
(ii) $\min_{j \in S} \|\bB_j\|_\tF \geq \kappa_{n} d^\alpha\lambda_n/ \underline{\mu} $ for some slowly diverging sequence  %/{\underline{\mu}}
$\kappa_n \to \infty $ as $n \to \infty,$ and $\lambda_n \gtrsim s d^{1/2}[( d^{\alpha+3/2} \vee \tilde{d}^{\tilde\alpha+3/2})\{\log(pd\tilde{d})/n\}^{1/2}+   d^{1/2- \tau}];$ 
(iii) There exists some constant $c'_2 \in (2(qs)^{-1},1)$ s.t. $|S_2| \geq c'_2 s$ with $S_2 = \{j \in [p]: \|\bB_j\|_\tF \gg {s^{1/2} d^\alpha \lambda_n}\}.$
 \end{condition} 
%To simplify the notation, we assume the same truncated dimension $d_j$ across $j \in [p].$ 
%Condition~\ref{cond_eigen_FFLR} is on the eigenstructure of the functional response.
%Condition~\ref{cond_coef_FFLR} can be viewed as the FFLR counterpart of Condition~\ref{cond_coef_SFLR} for SFLR.
Define the power of the proposed approach for FFLR in the same form as (\ref{defpower}). We are now ready to present the following theorem about the asymptotic power one of our proposal. %, which holds under high-dimensional scaling.
 
\begin{theorem}
\label{thm_power_fflr}
Suppose that Conditions~\ref{irrSFLR}, \ref{cond_inf_SFLR}--\ref{cond_eigen_SFLR}, \ref{cond_error_FFLR}--\ref{cond_coef_FFLR} hold %with  $\underline{\mu} \gtrsim sd^{\alpha+2}\tilde{d} \{\log(pd)/n\}^{1/2}.$ 
and $s d^\alpha \lambda_n \to 0.$ Then the selected sets $S_{\delta}$'s satisfy $\Pow(\widehat{S}_{\delta}) \rightarrow 1$ as $n \to \infty.$ 
\end{theorem}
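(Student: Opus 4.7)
The plan is to mirror the argument sketched in Remark~\ref{powerRmK} for SFLR, but with two new complications: the response is a truncated Karhunen–Lo\`eve expansion of $Y_i(\cdot)$ with its own eigenvalue-gap and smoothness structure (Conditions~\ref{cond_eigen_FFLR}--\ref{cond_coef_FFLR}(i)), and the parameter objects are matrices $\bB_j \in {\mathbb R}^{d_j \times \tilde d}$ measured in Frobenius norm. First I would derive a non-asymptotic $\ell_1/\text{Frobenius}$ oracle inequality for the group-lasso solution $\{\widehat{\bB}_j\}_{j \in [2p]}$ to (\ref{glasso.FFLR}) of the form
\begin{equation*}
\sum_{j=1}^{2p} \|\widehat{\bB}_j - \bB_j\|_\tF \;\lesssim\; \frac{s\, d^\alpha \lambda_n}{\underline\mu},
\end{equation*}
holding with probability tending to one. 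The ingredients are a restricted-eigenvalue condition for the augmented $2p$-block design, which is supplied by the correlation lower bound $\underline\mu>0$ in Condition~\ref{cond_inf_SFLR} together with the eigenvalue control in Condition~\ref{cond_eigen_SFLR} (the factor $d^\alpha$ comes from inverting the block of FPC score covariances), and a union-type upper bound on the group-wise score/noise term. The latter decomposes into a random-error part (sub-Gaussian $\bvarepsilon_i$, contributing $\tilde d^{\tilde\alpha+3/2}\{\log(pd\tilde d)/n\}^{1/2}$ via Condition~\ref{cond_error_FFLR} and Condition~\ref{cond_eigen_FFLR}), a truncation-error part for $X_{ij}$ and for $Y_i$ (contributing $d^{1/2-\tau}$ via Condition~\ref{cond_coef_FFLR}(i)), and an FPCA estimation-error part for $\widehat{\bxi}_{ij},\widecheck{\bxi}_{ij},\widehat{\bbbeta}_i$ (contributing the $d^{\alpha+3/2}\{\log(pd\tilde d)/n\}^{1/2}$ term using standard FPCA perturbation bounds under Conditions~\ref{cond_eigen_SFLR}, \ref{cond_eigen_FFLR}). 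The lower bound $\lambda_n\gtrsim s d^{1/2}[(d^{\alpha+3/2}\vee \tilde d^{\tilde\alpha+3/2})\{\log(pd\tilde d)/n\}^{1/2}+d^{1/2-\tau}]$ in Condition~\ref{cond_coef_FFLR}(ii) is precisely what is needed to dominate all three sources.

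Next I would translate this into group-support recovery. Write $\widehat S^c_{\text{\tiny GL}} = \{j\in[p]: \|\widehat{\bB}_j\|_\tF = 0\}$. For each missed signal $j \in \widehat S^c_{\text{\tiny GL}} \cap S$, $\|\widehat{\bB}_j - \bB_j\|_\tF = \|\bB_j\|_\tF \geq \min_{j\in S}\|\bB_j\|_\tF$, so the oracle inequality combined with Condition~\ref{cond_coef_FFLR}(ii) yields
\begin{equation*}
|\widehat S^c_{\text{\tiny GL}} \cap S|\,\min_{j \in S}\|\bB_j\|_\tF \;\leq\; \sum_{j=1}^{p} \|\widehat{\bB}_j - \bB_j\|_\tF \;=\; O\!\left(\tfrac{s d^\alpha \lambda_n}{\underline\mu}\right) \;=\; O\!\left(\tfrac{\min_{j\in S}\|\bB_j\|_\tF}{\kappa_n}\right),
\end{equation*}
so $|\widehat S^c_{\text{\tiny GL}} \cap S| = O(s/\kappa_n)$ with high probability. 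The same bound applied on the knockoff side (where the true $\bB_{j+p} = 0$ by Definition~\ref{def1}(ii)) gives $\|\widehat{\bB}_{j+p}\|_\tF = O(s d^\alpha\lambda_n/\underline\mu)$ uniformly. Hence for every $j \in S_2$ as in Condition~\ref{cond_coef_FFLR}(iii), the knockoff statistic satisfies
\begin{equation*}
W_j \;=\; \|\widehat{\bB}_j\|_\tF - \|\widehat{\bB}_{j+p}\|_\tF \;\geq\; \|\bB_j\|_\tF - 2\sum_{j'}\|\widehat{\bB}_{j'} - \bB_{j'}\|_\tF \;\gg\; s^{1/2} d^\alpha \lambda_n,
\end{equation*}
so all of $S_2$ produces large positive $W_j$'s with high probability.

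The remaining step is to bound the threshold $T_\delta$ from above, which is where the main obstacle sits. Because $|S_2| \geq c'_2 s$ with $c'_2 > 2/(qs)$ (i.e.\ $|S_2| \geq 2/q$), the coin-flipping property (Lemma~\ref{lemmacoin}) applied to the null indices implies that, for any candidate level $t$ just below $\min_{j\in S_2}|W_j|$, the ratio $(\delta + |\{j: W_j \leq -t\}|)/|\{j: W_j \geq t\}|$ has denominator at least $|S_2|$ while the numerator is stochastically at most $1 + \mathrm{Binomial}(|S^c|, 1/2)$ truncated at the same level; standard arguments (as in the proof of Theorem~3 of \cite{candes2018} and Theorem~3 here) then show that this ratio is below $q$ with probability tending to one, so $T_\delta$ is at most that level. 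Consequently every $j \in S_2$ is selected, and $|\widehat S_\delta \cap S| \geq |S_2| - o(s) + (|S\setminus S_2| - |\widehat S^c_{\text{\tiny GL}} \cap S|) = |S| - o(s)$; the hardest piece is tracking the three error sources through the concentration bound to ensure this all holds simultaneously with probability $1-o(1)$. Dividing by $|S|\vee 1$ gives $\Pow(\widehat S_\delta)\to 1$, as desired.
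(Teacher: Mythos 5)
Your oracle-inequality plan is essentially the paper's (same error decomposition into random, truncation and FPCA-estimation parts, same restricted-eigenvalue use of Condition~\ref{cond_inf_SFLR}, same rate $\sum_{j}\|\widehat\bB_j-\bB_j\|_\tF\lesssim s d^\alpha\lambda_n/\underline{\mu}$ as in Lemma~\ref{lemmaC4}), and your bound $|\widehat S_{\text{\tiny{GL}}}^c\cap S|=O(s/\kappa_n)$ also matches. The gap is in the passage from the group lasso to the knockoff filter. First, your threshold control is far too weak: you only argue $T_\delta\leq\min_{j\in S_2}|W_j|$, and the coin-flipping/binomial detour you invoke is both vague and unnecessary for that (on the lasso event every $W_j\geq -O(s^{1/2}d^\alpha\lambda_n)$, so at the candidate $t=\min_{j\in S_2}W_j$ the numerator is at most $1$ and the denominator is at least $|S_2|\geq 2/q$, which gives feasibility deterministically). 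That only yields $S_2\subseteq\widehat S_\delta$, i.e.\ power bounded below by $c_2'$, a fixed constant strictly less than one — not power tending to one. Second, your final count adds $|S\setminus S_2|-|\widehat S_{\text{\tiny{GL}}}^c\cap S|$, which implicitly equates ``selected by the group lasso'' ($\|\widehat\bB_j\|_\tF>0$) with ``selected by the knockoff filter'' ($W_j\geq T_\delta$); these are not the same, since a signal outside $S_2$ can have $0<W_j<T_\delta$ and be missed even though its group-lasso coefficient is nonzero.

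The missing ingredient is the paper's quantitative bound on $T_\delta$ and the ensuing pigeonhole count of missed signals. By minimality of $T_\delta$, the ratio at the next candidate violates the constraint, so $|\{j:W_j\leq -T_\delta\}|> q|\widehat S_\delta|-2\geq qc_2's-2$; each such $j$ forces $\|\widehat\bB_{j+p}\|_\tF\geq T_\delta$, and summing against the $\ell_1$ oracle bound gives $T_\delta\lesssim s d^\alpha\lambda_n/(qc_2's-2)\asymp d^\alpha\lambda_n$, hence $T_\delta\leq\kappa_n d^\alpha\lambda_n/(2\underline{\mu})$ for $\kappa_n$ large. Then for every missed signal $j\in\widehat S_\delta^c\cap S$, the triangle inequality gives $\|\bB_j-\widehat\bB_j\|_\tF+\|\widehat\bB_{j+p}\|_\tF\geq\|\bB_j\|_\tF-T_\delta\geq\kappa_n d^\alpha\lambda_n/(2\underline{\mu})$ by Condition~\ref{cond_coef_FFLR}(ii), and summing over missed signals against the same $\ell_1$ bound yields $|\widehat S_\delta^c\cap S|=O(s/\kappa_n)$ — this is the step that actually produces $\Pow(\widehat S_\delta)\to 1$, and your route has no substitute for it. (The paper also handles the boundary case $W_{(j^*+1)}=0$ separately, using the no-ties convention and $\{j:W_j=0\}\subseteq\widehat S_{\text{\tiny{GL}}}^c$, which your sketch does not address.) A smaller slip: where you compare estimation errors with the $S_2$ signal size $s^{1/2}d^\alpha\lambda_n$, you should use the blockwise bound $\max_j\|\widehat\bB_j-\bB_j\|_\tF\lesssim s^{1/2}d^\alpha\lambda_n$ coming from the aggregate Frobenius bound (as in Lemma~\ref{lemmaC5}), not the $\ell_1$ bound of order $sd^\alpha\lambda_n$, which is a factor $s^{1/2}$ too lossy for that comparison.
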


%Drawing insights from Theorem \ref{thm_power_fflr}, we can establish the asymptotic power guarantee for both scenarios of $p<n$ and $p>n$. Specifically, when $n<p \lesssim e^{n/( d^{4\alpha +2} \vee  \tilde{d}^{4\tilde{\alpha} +2} )}/(d\tilde{d})$ under a high-dimensional regime. 
%Similarly, learning from Theorem \ref{thm_fdr_fflr} and \ref{thm_power_fflr}, we can conclude that our proposed method simultaneously enjoys appealing properties of controlling False Discovery Rate (FDR) and power in the aforementioned high-dimensional settings and FFLR.
\vspace{-0.3cm}
\subsection{High-dimensional FGGM}
\label{sec:fggm}
\vspace{-0.3cm}
Consider the high-dimensional FGGM, which depicts the conditional dependence structure among $p$ Gaussian random functions $X_1(\cdot), \dots, X_p(\cdot).$ To be specific, let 
$C_{{j}{k}}(u,v)= \cov\big\{X_{j}(u),X_{k}(v)| X_{-\{j,k\}}(\cdot)\big\}$ %for $(u,v) \in \cU^2$
be the covariance between $X_{j}(u)$ and $X_{k}(v)$ conditional on the remaining $p-2$ random functions. Then nodes $j$ and $k$ are connected by an edge if and only if $\|C_{jk}\|_{\cS} \neq 0.$ Let $(V,E)$ be an undirected graph with vertex set $V=[p]$ and edge set
%Let $(X_1(\cdot), \dots, X_p(\cdot))^\T$ jointly follow from a $p$-dimensional multivariate Gaussian process with mean zero. 
%The graphical model is utilized to depict the conditional dependence structure among the $p$ variables. This model comprises $p$ nodes, each representing a variable, and a series of edges connecting them. These edges reveal the conditional dependencies between the nodes, signifying that nodes $j$ and $k$ are connected by an edge only if $X_j$ and $X_k$ exhibit dependence, given the other $p-2$ variables \citep{qiao2019a}. 
%Specifically, let 
%\begin{equation}\label{condiCov}
%    C_{{j}{k}}(u,v)= C_{{k}{j}}(v,u) = \cov \big(X_{j}(u),X_{k}(v)| X_{-\{j,k\}}(\cdot)\big), 
%\end{equation}
%represent the covariance of $X_j$ and $X_k$ conditional on the remaining $p-2$. 
%Let $(V, E)$ denote the undirected graph where $V = [p]$ is the set of vertices and $E \subset V^2$ is the set of edges satisfying 
\begin{equation}\label{Edge}
\setlength{\abovedisplayskip}{5pt}
    \setlength{\belowdisplayskip}{5pt}
E = \big\{(j,k) \in [p]^2: \|C_{jk}\|_{\cS} \neq 0, j \neq k\big\}~~\text{with}~~s=|E|.
\end{equation}
%where we count $(j,k)$ and $(k,j)$ as one edge in an undirected graph if $\|C_{jk}\|_{\cS} \neq 0$.

Our goal is estimate $E$ based on $n$ i.i.d. observations.
To achieve this, \cite{qiao2019a} proposed a functional graphical lasso approach to estimate a block sparse inverse covariance matrix by treating dimensions of $X_{ij}(\cdot)$'s as approaching infinity. However, their proposal cannot handle truly infinite-dimensional objects due to the unboundedness of the inverse of $\bSigma_{XX}.$ Along the line of \cite{mein2006}, we develop a functional neighborhood selection method to estimate $E$ by identifying the important neighborhoods of each node in a FFLR setup,
%The goal of this work is to estimate the edge set $E$.In contrast to the functional graphical Lasso strategy described by \cite{qiao2019a},in the following section, we suggest a functional neighborhood selection method based on functional model-X knockoffs. First, we describe hoe the estimation of the neighborhood for $j$th node can be converted  to a function-on-function linear regression model as 
\begin{equation} 
\setlength{\abovedisplayskip}{5pt}
    \setlength{\belowdisplayskip}{5pt}
\label{eq.GGM}
X_{ij}(v) = \sum _{k \neq j} \int_{\cU} X_{ik}(u) \beta_{jk}(u,v) {\rm d}u + \varepsilon_{ij}(v),~~i \in [n], j \in [p], v \in \cU,
\end{equation}
where, for each $j,$ $\{\varepsilon_{ij}(\cdot)\}_{i \in [n]} $ are i.i.d. mean-zero Gaussian random errors, independent of $\{X_{i,-j}(\cdot)\}_{i\in [n]}.$ Denote the neighborhood set of node $j$ by
%We denote  the true neighborhood set and non-neighborhood set of node $j$ by $S_j$ and $S^c_j$, $j \in [p]$,
\begin{equation*}\label{neibor.GGM}
\setlength{\abovedisplayskip}{5pt}
    \setlength{\belowdisplayskip}{5pt}
S_j = \big\{ k \in [p] \backslash\{j\} : \|\beta_{jk}\|_{\cS} \neq 0  \big\}~~\text{with}~s_j=|S_j|.
%S^c_j  = \big\{ k \in [p] \backslash\{j\} : \|\beta_{jk}\|_{\cS} = 0  \big\}.
\end{equation*}
%where $s_j=|S_j|.$ %and $s = \sum_{j=1}^{p} s_j.$ 
Before associating the edge set $E$ with neighborhood sets $S_j$'s, we give a regularity condition.
\begin{condition}
    \label{irrGGM}
    $\bSigma_{X_{-\{j,k\}}X_{-\{j,k\}}}^{-1} \bSigma_{X_{-\{j,k\}}X_{j}}$ and $\bSigma_{X_{-\{j,k\}}X_{-\{j,k\}}}^{-1} \bSigma_{X_{-\{j,k\}}X_{k}}$ are bounded linear operators for each $(j,k) \in [p]^2$ with $j \neq k.$
\end{condition} 

Despite the unboundedness of $\bSigma_{X_{-\{j,k\}}X_{-\{j,k\}}}^{-1},$ it is usually associated with another operator. The composite operators in Condition~\ref{irrGGM} can be viewed as regression operators, %~\cite[]{Lee2016JR}, 
and hence can reasonably be assumed to be bounded.

%The operator $\bSigma_{X_{-\{j,k\}}X_{-\{j,k\}}}^{-1} \bSigma_{X_{-\{j,k\}}X_{j}}$ is referred to as the regression operator according to \cite{Lee2016JR}, as it resembles the regression coefficient in the $j$th regression. In our analysis, this condition ensures the well-defined property of the conditional covariance operator $\bSigma_{X_{j}X_{k}|X_{-\{j,k\}}}$.  

\begin{lemma}
\label{lemmaeqGGM}
 Suppose that Condition \ref{irrGGM} holds. Then $E = \{(j,k) \in [p]^2: k \in S_j\}.$
\end{lemma}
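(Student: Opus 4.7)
The plan is to reduce the lemma to establishing, for each ordered pair $j\neq k$ in $[p]$, the equivalence
$$\|C_{jk}\|_{\cS}=0 \ \Longleftrightarrow\ \|\beta_{jk}\|_{\cS}=0.$$
Since $\bX(\cdot)$ is Gaussian, my route is to realize both sides as equivalent reformulations of the single conditional-independence statement $X_j \perp X_k \mid X_{-\{j,k\}}(\cdot)$. Because that statement is symmetric in $(j,k)$, it automatically delivers $k \in S_j \iff j \in S_k$, so the set $\{(j,k): k \in S_j\}$ is symmetric and can be identified with the edge set $E$ in (\ref{Edge}).

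For the ``only if'' direction I would argue as follows. Assume $\beta_{jk}\equiv 0$; substituting into (\ref{eq.GGM}) gives $\eE[X_{ij}(v)\mid X_{i,-j}(\cdot)] = \sum_{k'\notin\{j,k\}}\int_\cU X_{ik'}(u)\beta_{jk'}(u,v){\rm d}u$, which is measurable with respect to $X_{i,-\{j,k\}}(\cdot)$. Joint Gaussianity then promotes this equality of $L^2$-projections to the conditional independence $X_j \perp X_k \mid X_{-\{j,k\}}$, whence $\cov\{X_j(u),X_k(v)\mid X_{-\{j,k\}}\} \equiv 0$ and $\|C_{jk}\|_{\cS}=0$. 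Conversely, if $\|C_{jk}\|_{\cS}=0$ then by Gaussianity $X_j\perp X_k\mid X_{-\{j,k\}}$, so $\eE[X_{ij}(v)\mid X_{i,-j}] = \eE[X_{ij}(v)\mid X_{i,-\{j,k\}}]$, i.e., the best $L^2$ predictor of $X_{ij}$ built from all other curves does not involve $X_{ik}$. Uniqueness of the $L^2$-projection in (\ref{eq.GGM}) then forces the operator $\beta_{jk}$ to be zero, so $\|\beta_{jk}\|_{\cS}=0$.

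The main obstacle is that, in the genuinely infinite-dimensional setting, one cannot invoke the Schur-complement identity $C_{jk} = \Sigma_{X_jX_k} - \bSigma_{X_jX_{-\{j,k\}}}\bSigma_{X_{-\{j,k\}}X_{-\{j,k\}}}^{-1}\bSigma_{X_{-\{j,k\}}X_k}$ without care, since $\bSigma_{X_{-\{j,k\}}X_{-\{j,k\}}}^{-1}$ is typically unbounded on $\boldsymbol{\mathcal{H}}$; the same obstruction afflicts the operator-level form of (\ref{eq.GGM}). Condition~\ref{irrGGM} is precisely what resolves this: the composite operators $\bSigma_{X_{-\{j,k\}}X_{-\{j,k\}}}^{-1}\bSigma_{X_{-\{j,k\}}X_j}$ and $\bSigma_{X_{-\{j,k\}}X_{-\{j,k\}}}^{-1}\bSigma_{X_{-\{j,k\}}X_k}$ are assumed bounded, so both conditional expectations above exist as bounded linear maps of the conditioning curves, (\ref{eq.GGM}) admits a unique operator-valued projection with coefficient $\beta_{jk}$, and the Schur-complement representation of $C_{jk}$ is justified at the operator level. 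With these ingredients in place, the classical Meinshausen--B\"uhlmann equivalence between zero partial covariance and zero regression coefficient transports cleanly to the Hilbert-space setting, completing the proof.
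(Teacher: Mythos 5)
Your overall reduction---proving, for each pair $j \neq k$, that $\|C_{jk}\|_{\cS}=0$ if and only if $\|\beta_{jk}\|_{\cS}=0$---is exactly the reduction the paper makes, and you correctly identify Condition~\ref{irrGGM} as what makes the conditional (partial) covariance well defined at the operator level. The mechanics differ, however: the paper never passes through conditional independence. It invokes Lemma S5 of \cite{solea2022copula}, which under Condition~\ref{irrGGM} identifies $\langle f, C_{jk}(g)\rangle$ with $\cov\big(\langle f, X_j\rangle, \langle g, X_k\rangle \mid X_{-\{j,k\}}\big)$, substitutes the regression equation (\ref{eq.GGM}) for $X_j$, and kills every term except the one involving $\beta_{jk}$: the error term because it is independent of $X_{-j}$, and the terms $\beta_{jl}(X_l)$ with $l\notin\{j,k\}$ because they are measurable with respect to the conditioning curves. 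The whole lemma is thus a second-moment computation; your detour through Gaussian conditional independence is not wrong in this Gaussian model, but it is unnecessary for one direction and is where the other direction of your argument becomes fragile.

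The concrete weak point is the last step of your converse: from $\eE[X_j\mid X_{-j}]=\eE[X_j\mid X_{-\{j,k\}}]$ you conclude $\beta_{jk}=0$ by ``uniqueness of the $L^2$-projection in (\ref{eq.GGM})''. The projection is unique as a random element, but that does not make the operator coefficients unique: what you actually obtain is that $\beta_{jk}(X_k)$ coincides almost surely with a linear functional of $X_{-\{j,k\}}$, and to deduce $\beta_{jk}=0$ you must rule out such collinearity---for instance by taking the conditional variance given $X_{-\{j,k\}}$, which gives $\beta_{jk}\,\Sigma_{kk\mid -\{j,k\}}^{1/2}=0$ and hence $\beta_{jk}=0$ provided the conditional covariance operator of $X_k$ given $X_{-\{j,k\}}$ is nondegenerate (injective, equivalently with dense range). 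This is precisely the algebraic statement the paper's proof lands on, namely that $\cov\big(\langle f,\beta_{jk}(X_k)\rangle, \langle g, X_k\rangle\mid X_{-\{j,k\}}\big)=0$ for all $f,g$ holds iff $\beta_{jk}=0$; to be fair, the paper itself asserts this with minimal justification, but Condition~\ref{irrGGM} speaks only to boundedness of the regression operators, not to this identifiability, so your write-up should state the nondegeneracy argument explicitly rather than attribute it to projection uniqueness.
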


Lemma~\ref{lemmaeqGGM} suggests that we can recover $E$ by estimating $S_j$ for each $j.$ To achieve this with FDR control, we build upon the idea of \cite{li2021ggm}, which employs a nodewise construction of knockoffs/feature statistics, and a global procedure to obtain thresholds for different nodes, and incorporate it into our %functional model-X knockoffs 
framework.
  %Structure learning of Gaussian graphical models aims to obtain an estimator $\widehat{E}$ of the true edge set $E$, while simultaneously controlling FDR without substantial power loss. To achieve this goal, two main steps are conducted in our approach: a node-wise construction of knockoffs and feature statistics, and a global procedure to obtain thresholds for different nodes.
%To incorporate this task into our model-X functional knockoffs selection framework, we impose the following condition. 
With each $X_{ij}(\cdot)$ expanded according to (\ref{exX}), some specific calculations lead to  
the representation of (\ref{eq.GGM}) as 
\begin{equation*}
\setlength{\abovedisplayskip}{5pt}
    \setlength{\belowdisplayskip}{5pt}
\label{YGGM}
\bxi_{ij}^\T =\sum_{k\neq j} \bxi_{ik}^\T \bB_{jk} + \bepsilon_{ij}^\T + \bvarepsilon_{ij}^\T,
\end{equation*}
where $\bB_{jk}=\int_{\cU^2}\bphi_k(u)\beta_j(u,v)\bphi_j(v)^\T{\rm d}u{\rm d}v \in {\mathbb R}^{d_k \times d_j},$ and $\bepsilon_{ij}$ and $\bvarepsilon_{ij}$ represent truncation and random errors, respectively. As a result, we can rely on the block sparsity pattern in $\{\bB_{jk}\}_{1 \leq j\neq k\leq p}$ to identify neighbourhood sets $S_j$'s. Within functional knockoffs framework, we denote the augmented coefficients of PFC scores as $\{\bB_{jk}\}_{k \in [2p] \backslash \{j,j+p\}}$ (the first $p-1$ coefficient matrices are for original covariates and the last $p-1$ are for knockoffs).

%where the coefficient matrix $\bB_{jk} = \{B_{jklm}\}_{l\in[d_j],m\in[{d_k}]}$ with each $B_{jklm} = \langle\langle \phi_{km}, \beta_{jk}\rangle,\phi_{jl}\rangle,$$\br_{ij} = (r_{ij1},\dots,r_{ij{d_j}})^\T$ is the truncation error with $r_{ijl}=\sum_{j=1}^{p}\sum_{m=d_k+1}^{\infty} \xi_{ikl} \langle\langle \phi_{km}, \beta_{jk}\rangle, \phi_{jl}\rangle$, and $\bvarepsilon_{ij}= (\varepsilon_{ij1},\dots,\varepsilon_{ij{d}_j})^\T$ is the random error with $\varepsilon_{ijl} = \langle \varepsilon_{ij}, \phi_{jl}\rangle$. 
 %Drawing on this foreshadowing, we will demonstrate that the statement $\|\beta_{jk}(u,v)\|_{\cS} = 0$  is equivalent to $\|C_{jk}(u,v) \|_{\cS}= 0$. The following condition serves as the basis for establishing the equivalence between conditional covariance and regression parameters.

After performing FPCA on $\{X_{ij}(\cdot)\}_{i \in [n]}$ for each $j,$ we solve the following group-lasso-based minimization problem using estimated FPC scores of original and knockoff variables:
%Then, we can estimate $\bB_{jk}$ for $k\in[p] \backslash\{j\}$ by solving the following problem
\begin{equation} \label{glasso.FGGM} 
\setlength{\abovedisplayskip}{7pt}
    \setlength{\belowdisplayskip}{7pt}
    \min _{\bB_{jk}}
 \frac1{2n} \sum_{i=1}^n \|\widehat{\bxi}_{ij}^\T - \sum_{1\leq k \neq j \leq p}  {\widehat{\bxi}}_{ik}^\T \bB_{jk} -
 \sum_{p < k\neq( p+j )\leq 2p}  \widecheck{{{\bxi}}}_{i(k-p)}^\T\bB_{jk}\|^2
+ \lambda_{nj} \sum_{k\neq j, (p+j)} \|\bB_{jk}\|_{\tF}, 
\end{equation}
where $\lambda_{nj} \geq 0$ is the regularization parameter. Let $\{\widehat{\bB}_{jk}\}_{k\in[2p] \backslash \{j,j+p\}}$ be the solution to (\ref{glasso.FGGM}).
For the $j$th node, we select the importance measures as $Z_{jk} = \|\widehat{\bB}_{jk}\|_\tF$ and $\widetilde{Z}_{jk} = \|\widehat{\bB}_{j(k+p))}\|_\tF$ for $k \neq j,$ which results in the corresponding knockoff statistic
$W_{jk} = \|\widehat{\bB}_{jk}\|_{\tF}-\|\widehat{\bB}_{j(k+p)}\|_{\tF}.$
%where $j \in [p] $ and $k \in [p]\backslash\{j\}.$
This forms a $p \times p$ matrix of knockoff statistics $\bW= (W_{jk})_{p \times p}$ with $W_{jj}=0.$
%$$
%\bW = \left(
%\begin{array}{cllll}
%0 &W_{12} &\dots& W_{1(p-1)} & W_{1p}\\
%W_{21} &0 &\dots& W_{2(p-1)}& W_{2p}\\
%\vdots &\vdots &\ddots& \vdots & \vdots\\
%W_{(p-1)1} &W_{(p-1)2} &\dots& 0& W_{(p-1)p}\\
%W_{p1} &W_{p2} &\dots& W_{p(p-1)}&0
%\end{array}
%\right).
%$$
%We have completed the construction of knockoffs and feature statistics on a node-wise basis. For each node $j$, Condition \ref{conscore} is satisfied with $\widetilde Y_i = \widehat\bxi_{ij}$. 
Given a threshold vector $\bT_{\delta} = (T_{\delta,1},\dots,T_{\delta,p})^\T$ with $\delta=1$ or 0 for knockoff or knockoff+ filter, respectively, we obtain the estimated neighborhood set of node $j$ as 
\begin{equation*}\label{neihat}
\setlength{\abovedisplayskip}{5pt}
    \setlength{\belowdisplayskip}{5pt}
\widehat{S}_{\delta, j} = \{k \in [p]\backslash \{j\} :W_{jk} \geq T_{\delta, j}   \}. 
\end{equation*}
We estimate the edge set $E$ by applying either the AND or OR rule to $\widehat S_{\delta, j}$'s,
\begin{equation*} \label{edge_and_or}
\setlength{\abovedisplayskip}{5pt}
    \setlength{\belowdisplayskip}{5pt}
\widehat{E}_{\tAnd,\delta} = \big\{(k,j): k\in \widehat{S}_{\delta,j} ~\hbox{and}~ j\in \widehat{S}_{\delta,k} \big\},~~~
\widehat{E}_{\tOr,\delta} = \big\{(k,j): k\in \widehat{S}_{\delta,j} ~\hbox{or}~ j\in \widehat{S}_{\delta,k}  \big\}.
\end{equation*}

There are two options for selecting $\bT_{\delta},$ the node-based local procedure and the graph-based global procedure. For the local one, we employ the knockoff filter to each row of $\bW$ with corresponding $T_{\delta,j}$'s determined via (\ref{threshold}).
For each $j,$ under verified Condition~\ref{conscore} with $\widetilde Y_i=\widehat \bxi_{ij},$ 
we can apply Theorem~\ref{thm_fdr_sflr} to attain node-based FDR control at level $q/p.$ Whereas such local procedure can be easily verified to achieve graph-based FDR control at level $q,$ it results in substantial power loss as discussed in \cite{li2021ggm}. Inspired by \cite{li2021ggm}, we develop a graph-based global approach by solving the following optimization problems to compute $\bT_{\delta}$ under the AND and OR rules, respectively.
\begin{equation}\label{opt.AND}
\setlength{\abovedisplayskip}{5pt}
    \setlength{\belowdisplayskip}{5pt}
\begin{split}
{\bT}_{\delta} = & \arg\max _{\bT} |\widehat{E}_{\tAnd, \delta}|, \\
               & \text{subject to } \frac{a \delta +|\widehat{S}^{-}_{\delta,j}|}{|\widehat{E}_{\tAnd, \delta}| \vee 1} \leq \frac{2q}{c_ap} \hbox{ and }~T_{\delta,j} \in \big\{|W_{jk}|, k\in [p] \big\} \cup \{\infty\} \backslash \{0\}, j\in [p],
\end{split}
\end{equation}
%and
\begin{equation}\label{opt.OR}
\begin{split}
{\bT}_{\delta} =& \arg\max _{\bT} |\widehat{E}_{ \tOr,\delta}|, \\
               &\text{subject to } \frac{a\delta+|\widehat{S}^{-}_{\delta,j}|}{|\widehat{E}_{\tOr,\delta}| \vee 1} \leq \frac{q}{c_ap} \hbox{ and }~T_{\delta,j} \in \big\{|W_{jk}|, k\in [p] \big\} \cup \{\infty\} \backslash \{0\}, j\in [p],
\end{split}
\end{equation}
where, due to the coin flipping property of each row in $\bW,$  $\widehat{S}^{-}_{\delta,j} = \{k \in [p]\backslash \{j\} :W_{jk} \leq -T_{\delta,j} \}$ is used to approximate the set of false discoveries (i.e., $\widehat{S}_{\delta,j} \cap S_{j}^c$), $c_a > 0 $ is a constant depending on $a,$ and ${\bT}_\delta = \{+\infty,\dots,+\infty\}$ if there is no feasible solution. 
%Please note that although the objective function in (\ref{threshold}) and (\ref{opt.OR}), (\ref{opt.AND}) may be different, their goals remain the same: to obtain a larger estimated set $\widehat{S}$ or $\widehat{E}$. A larger estimated set ensures higher power. However, in order to control the FDR, some constraints need to be imposed on the estimated false discovery set $\widehat{S}_{\delta,j}^-$. Since the denominator in the constraints of (\ref{threshold}) and (\ref{opt.AND}), (\ref{opt.OR}) varies, additional parameters $a$ and $c_a$ are required to align these constraints. 
We then define the corresponding FDRs and mFDRs under the AND and OR rules as
\begin{equation*}
\setlength{\abovedisplayskip}{7pt}
    \setlength{\belowdisplayskip}{7pt}
%\begin{array}{cllcll}
 {\FDR}_{\tAnd} =  {\mathbb E}\left[\frac{|\widehat{E}_{\tAnd,1}  \cap E^c|}{|\widehat{E}_{\tAnd,1}| \vee 1} \right],~~~
 {\FDR}_{\tOr} =   {\mathbb E}\left[\frac{|\widehat{E}_{\tOr,1} \cap E^c|}{|\widehat{E}_{\tOr,1}| \vee 1}\right],
%\end{array}
\end{equation*}
\begin{equation*}\label{eq.mfdr}
 {\mFDR}_{\tAnd} =  {\mathbb E}\left[\frac{|\widehat{E}_{\tAnd,0} \cap  E^c|}{|\widehat{E}_{\tAnd,0}| +ac_ap/(2q)} \right],~~~
 {\mFDR}_{\tOr} =   {\mathbb E}\left[\frac{|\widehat{E}_{\tOr,0} \cap E^c|}{|\widehat{E}_{\tOr,0}| +ac_ap/q}\right].
\end{equation*} 

\begin{remark} \label{GGM.FDR.pwr}
The global approach in (\ref{opt.AND}) and (\ref{opt.OR}) not only enables FDR control but also ensures good power. To see this, we use $\FDR_O$ as an illustrative example. Then
$\FDR_{\tOr} \leq \sum_{j=1}^p {\mathbb E}\left[\frac{\widehat S_{1,j} \cap S_j^c}{a + |\widehat S_{1,j}^{-}|}\right] \cdot \frac{q}{c_a p} \leq \sum_{i=1}^p c_a \cdot \frac{q}{c_a p }=q,$
where the first inequality follows from (\ref{opt.OR}) and the second inequality follows from \cite{li2021ggm}. 
Furthermore, the denominators in constraints of (\ref{opt.AND}) and (\ref{opt.OR}) are graph-based global terms instead of node-based local terms. This results in a broader feasible domain for the threshold vector, yielding larger estimated edge sets and increased power.
%The graph-wide global approach (\ref{opt.AND}) and (\ref{opt.OR}) not only control FDR at the desired level but also preserve power. 
%Initially, since the coin flipping property does not hold for the entire matrix $\bW$, it is infeasible to get the FDR controlling by applying Theorem \ref{thm_fdr_sflr}. 
%Inspired by \cite{li2021ggm}, as long as each row of matrix $\bW$ possesses the coin flipping property, for any given $a>0$, we can obtain a constant $c_a$ satisfying that 
%\begin{equation}\label{GGM.fdr}
%   {\mathbb E}\left[\frac{|\widehat{S}_{\delta,j} \cap S^c_{\delta,j}|}{a+|\widehat S_{\delta,j}^-\cap S^c_{\delta,j}|} \right]  \leq c_a.
%\end{equation} 
%Inequality (\ref{GGM.fdr}) provides insurance of FDR controlling in FGGM by the fact that 
%$$\FDR \leq \frac{q}{c_ap} \sum_{j=1}^p {\mathbb E}\left[\frac{|\widehat{S}_{\delta,j} \cap S^c_{\delta,j}|}{a+|\widehat S_{\delta,j}^-\cap S^c_{\delta,j}|} \right] \leq q. $$ 
%As to power, we will present a crucial point to explain how the graph-wise global approach can effectively maintain power.  
%The denominator in the constraints of (\ref{opt.AND}) and (\ref{opt.OR}) correspondingly represents a global term, which is larger than that in (\ref{threshold}). As a result, the feasible domain of the threshold in the former case is greater than that in the latter case. Consequently, the constraints of (\ref{opt.AND}) and (\ref{opt.OR}) can yield a larger estimated set $\widehat{E}$, thereby contributing to higher powers.   
\end{remark}
 
We formalize the above remark about valid FDR control in the following theorem. 
%Given that $\bW$ satisfies the sign-flip property at the row level, similar to Lemma \ref{lemmacoin}, and also considering the inequality (\ref{GGM.fdr}), Theorem \ref{thm_fdr_fggm} provides the theoretical guarantee for controlling the FDR. 

\begin{theorem}\label{thm_fdr_fggm}
%Suppose that Condition~\ref{irrGGM} holds. 
For any $n$ and $q\in [0,1],$ ${\FDR}_{\tAnd}\leq q$ and ${\mFDR}_{\tAnd}\leq q$ under the AND rule, while ${\FDR}_{\tOr} \leq  q$ and ${\mFDR}_{\tOr} \leq  q$ under the OR rule.
\end{theorem}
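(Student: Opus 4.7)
The overall strategy is to establish a nodewise coin-flipping property for each $j\in[p]$ and then combine the resulting per-node FDR bounds with the graph-level constraints built into the optimizations (\ref{opt.AND}) and (\ref{opt.OR}), extending \cite{li2021ggm} to our functional setting and making rigorous the outline of Remark~\ref{GGM.FDR.pwr}.

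First, I would verify that for each fixed $j$, Condition~\ref{conscore} holds for the regression (\ref{eq.GGM}) with transformed response $\widetilde Y_i=\widehat{\bxi}_{ij}$, so that $\{W_{jk}\}_{k\neq j}$ inherits the coin-flipping property of Lemma~\ref{lemmacoin}. By Lemma~\ref{lemmaeqGGM} the relevant null set at node $j$ is $S_j^c=[p]\setminus(\{j\}\cup S_j)$, and applying the functional exchangeability result (Lemma~\ref{lemmaEx.func}) to (\ref{eq.GGM}) together with Definition~\ref{def1}(i) shows that swapping $(\widehat{\bxi}_{ik},\widecheck{\bxi}_{ik})$ for any $G\subseteq S_j^c$ preserves the joint conditional distribution given $\widehat{\bxi}_{ij}$. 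Since the group lasso problem (\ref{glasso.FGGM}) is equivariant under such swaps, the statistics $(W_{jk})_{k\in S_j^c}$ are equal in distribution to $(\epsilon_k W_{jk})_{k\in S_j^c}$ conditional on $(|W_{jk}|)_{k\neq j}$, with $\epsilon_k$ i.i.d.\ Rademacher. A standard martingale/stopping-time argument as in \cite{barber2015controlling,candes2018} and applied nodewise in \cite{li2021ggm} then delivers the per-node bound $\mathbb{E}\bigl[|\widehat{S}_{1,j}\cap S_j^c|/(a+|\widehat{S}^{-}_{1,j}|)\bigr]\le c_a$ for a constant $c_a>0$ depending only on $a$, with an analogous inequality in the $\delta=0$ case after rescaling.

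Second, I would translate these nodewise bounds into graph-level controls via the combinatorial inequalities $|\widehat{E}_{\tOr,\delta}\cap E^c|\le \sum_{j=1}^p|\widehat{S}_{\delta,j}\cap S_j^c|$ and $|\widehat{E}_{\tAnd,\delta}\cap E^c|\le \tfrac{1}{2}\sum_{j=1}^p|\widehat{S}_{\delta,j}\cap S_j^c|$, where the factor $1/2$ in the AND case reflects that each false AND discovery $\{j,k\}$ contributes simultaneously to both $|\widehat{S}_{\delta,j}\cap S_j^c|$ and $|\widehat{S}_{\delta,k}\cap S_k^c|$. Chaining these with the denominator constraints in (\ref{opt.OR}) and (\ref{opt.AND})---namely $(a\delta+|\widehat{S}^{-}_{\delta,j}|)/(|\widehat{E}_{\tOr,\delta}|\vee 1)\le q/(c_ap)$ and $(a\delta+|\widehat{S}^{-}_{\delta,j}|)/(|\widehat{E}_{\tAnd,\delta}|\vee 1)\le 2q/(c_ap)$---and summing over $j$ will produce $\FDR_{\tOr}\le p\cdot c_a\cdot q/(c_ap)=q$ and $\FDR_{\tAnd}\le \tfrac{1}{2}\cdot p\cdot c_a\cdot 2q/(c_ap)=q$. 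The mFDR bounds follow identically after observing that the denominator $|\widehat{E}_{\tOr,0}|+ac_ap/q$ (respectively $|\widehat{E}_{\tAnd,0}|+ac_ap/(2q)$) is at least $(|\widehat{S}^{-}_{0,j}|+a)c_ap/q$ (respectively $(|\widehat{S}^{-}_{0,j}|+a)c_ap/(2q)$) for every $j$ by the corresponding constraint.

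The main obstacle is the first step. Because the functional knockoffs, the FPCA step and the response $\widehat{\bxi}_{ij}$ are all coupled, establishing that the joint conditional distribution of $(\widehat{\bxi}_i^\T,\widecheck{\bxi}_i^\T)$ given $\widehat{\bxi}_{ij}$ is exchangeable under swaps restricted to $S_j^c$---rather than to the global null set $S^c$ of Section~\ref{sec:method}---requires re-deriving Lemma~\ref{lemmacoin} at the node level, with $S_j^c$ in place of $S^c$ and $\widehat{\bxi}_{ij}$ in place of $\widetilde Y_i$, and carefully tracking how truncation and estimation errors propagate through the group lasso objective (\ref{glasso.FGGM}). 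Once this nodewise coin-flipping is secured, the remaining graph-level argument is essentially combinatorial and mirrors the blueprint of \cite{li2021ggm}.
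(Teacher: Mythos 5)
Your proposal is correct and follows essentially the same route as the paper: establish the rowwise (nodewise) sign-flip property of $\bW$ on the null neighborhood sets by extending Lemma~\ref{lemmacoin}/Condition~\ref{conscore} to the regression of $\widehat{\bxi}_{ij}$ on the remaining estimated scores, then pass to the graph level. The only difference is that the paper simply invokes Theorems 3.1 and 3.2 of \cite{li2021ggm} for the per-node supermartingale bound and the AND/OR combinatorial chaining, whereas you sketch those steps explicitly; the substance is the same.
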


%Finally, we will provide a theorem to demonstrate the efficiency of the proposed method for high-dimensional FGGM. However, prior to that, there are several regular assumptions that must be considered.

To present the power theory, we give a condition akin to Condition \ref{cond_coef_SFLR} for SFLR and Condition \ref{cond_coef_FFLR} for FFLR.

\begin{condition} \label{cond_power_GGM}
(i) For each $(j,k) \in E$, there exists some constant $\tau > \alpha/2 +1$ such that $|B_{jk lm }| \lesssim (l+m)^{-\tau-1/2}$ for $l,m\geq 1$;  
(ii) For each $j\in [p],$ $\min_{k \in S_j} \|\bB_{jk}\|_\tF \geq \kappa_{nj} d^\alpha  \lambda_{nj}/ \underline{\mu} $ for some slowly %/{\underline{\mu}}
diverging sequences $\kappa_{nj} \rightarrow \infty$ as $n \rightarrow \infty,$ and $\lambda_{nj} \gtrsim s_j[ d^{\alpha+2} {\{\log(pd)/n}\}^{1/2} + d^{1- \tau}]$; 
(iii) For each $j,$ there exists some constant $c_j \in \big((1+a) c_a p (qs)^{-1},1\big)$ such that $|S_{j2}| \geq c_j s_j$  with $S_{j2} = \{k \in [p] \backslash \{j\}: \|\bB_{jk}\|_\tF \gg {{s_j}^{1/2}  d^\alpha  \lambda_{nj}} \}$.  
\end{condition}

%This condition is similar to Condition \ref{cond_coef_SFLR} in SFLR and Condition \ref{cond_coef_FFLR} in FFLR, which have been discussed in Section \ref{sec:sflr}.  
%We assume the same truncated dimension $d_j$ across $j \in [p]$ for notational simplification. 
We define the power of our proposal by
\begin{equation*}
\setlength{\abovedisplayskip}{7pt}
    \setlength{\belowdisplayskip}{7pt}
 \Pow(\widehat{E}_{\tAnd,\delta})   =  {\mathbb E}\left[ \frac{|\widehat{E}_{\tAnd,\delta} \cap E|}{| E| \vee 1}\right] ~~\text{and}~~\Pow(\widehat{E}_{\tOr,\delta})   =  {\mathbb E}\left[ \frac{|\widehat{E}_{\tOr,\delta} \cap E|}{| E| \vee 1}\right].
\end{equation*}
\begin{theorem}
\label{thm_power_fggm}
Suppose that Conditions~\ref{cond_inf_SFLR}--\ref{cond_eigen_SFLR} and \ref{irrGGM}--\ref{cond_power_GGM} hold, 
%with $\underline{\mu}\gtrsim sd^{\alpha+3} \{\log(pd)/n\}^{1/2}.$
and $s_j d^\alpha \lambda_{nj} \to 0$ for each $j$. Then the selected edge sets $\widehat{E}_{\tOr,\delta}$'s satisfy  $\Pow(\widehat{E}_{\tOr,\delta}) \rightarrow 1$ as $n\rightarrow \infty$.
\end{theorem}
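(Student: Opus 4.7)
The plan is to establish the power guarantee through a nodewise analysis that parallels Theorem~\ref{thm_power_fflr}, combined with a global feasibility argument for the thresholds adapted from the framework of \cite{li2021ggm}. The structure is: (i) control the nodewise group-lasso estimation error, (ii) exhibit a feasible threshold that captures the strong signals $S_{j2}$, (iii) deduce that the OR-rule optimum $\bT_\delta$ recovers most edges.

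First, I would observe that at each node $j$, problem~(\ref{glasso.FGGM}) is a FFLR of $X_{ij}(\cdot)$ against the augmented set of original and knockoff functional covariates, where the population coefficients for the knockoffs $\bB_{j(k+p)}$ are identically zero by Definition~\ref{def1}(ii). The Gaussianity of $\{\varepsilon_{ij}\}$ verifies Condition~\ref{cond_error_FFLR}, and Conditions~\ref{cond_inf_SFLR}, \ref{cond_eigen_SFLR}, \ref{cond_power_GGM} play the roles of Conditions~\ref{cond_inf_SFLR}--\ref{cond_eigen_FFLR}, \ref{cond_coef_FFLR} from the FFLR analysis. Applying the proof machinery of Theorem~\ref{thm_power_fflr} nodewise, I obtain, with high probability uniformly in $j\in[p]$,
\begin{equation*}
\sum_{k\neq j,(p+j)} \|\widehat{\bB}_{jk}-\bB_{jk}\|_\tF \lesssim s_j d^\alpha \lambda_{nj}/\underline{\mu}.
\end{equation*}
Combined with Condition~\ref{cond_power_GGM}(ii), the standard argument (cf.\ the discussion after Condition~\ref{cond_coef_SFLR}) gives $|S_j \setminus \widehat{S}_{\text{\tiny GL},j}| = O(s_j \kappa_{nj}^{-1})$ for the group-lasso support. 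Furthermore, for any $k \in S_{j2}$, since $\|\bB_{jk}\|_\tF \gg s_j^{1/2} d^\alpha \lambda_{nj}$ dominates the per-group error, we get $W_{jk} = \|\widehat{\bB}_{jk}\|_\tF - \|\widehat{\bB}_{j(k+p)}\|_\tF \gtrsim \|\bB_{jk}\|_\tF > 0$ with high probability.

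Second, I verify Condition~\ref{conscore} at each node with $\widetilde Y_i = \widehat{\bxi}_{ij}$ via Lemma~\ref{lemmaEx.func}, ensuring the coin-flipping property of Lemma~\ref{lemmacoin} applies to each row of $\bW$. I would then construct an explicit feasible candidate $\bT^*$ for (\ref{opt.OR}) by setting $T^*_{\delta,j}$ to a value just below $\min_{k\in S_{j2}} W_{jk}$, so that $S_{j2} \subseteq \widehat{S}_{\delta,j}(\bT^*)$. This yields the lower bound
\begin{equation*}
|\widehat{E}_{\tOr,\delta}(\bT^*)| \geq \tfrac{1}{2}\sum_{j=1}^p |S_{j2}| \geq \tfrac{1}{2}\sum_{j=1}^p c_j s_j > (1+a) c_a p/q,
\end{equation*}
using Condition~\ref{cond_power_GGM}(iii) and $\sum_j s_j = 2s$. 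For the numerator, coin-flipping conditional on $|\bW|$ combined with a concentration argument shows $|\widehat{S}^-_{\delta,j}(\bT^*)| \leq a\delta + c_a p |\widehat{E}_{\tOr,\delta}(\bT^*)|/(qp)$ uniformly in $j$ with high probability, confirming feasibility.

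Third, since $\bT^*$ is feasible, the optimum $\bT_\delta$ of (\ref{opt.OR}) is a genuine (finite) threshold vector, and the selected neighborhoods satisfy $|S_j \setminus \widehat{S}_{\delta,j}(\bT_\delta)| \leq O(s_j \kappa_{nj}^{-1})$ because any group-lasso nonzero coefficient satisfying the signal-strength condition is captured regardless of the specific threshold used (provided the threshold does not exceed the typical signal). An edge $(j,k)\in E$ is missed under the OR rule iff $k\notin\widehat{S}_{\delta,j}$ \emph{and} $j\notin\widehat{S}_{\delta,k}$, hence
\begin{equation*}
|E \setminus \widehat{E}_{\tOr,\delta}| \leq \tfrac{1}{2}\sum_{j=1}^p |S_j \setminus \widehat{S}_{\delta,j}| = O\Big(\sum_{j=1}^p s_j \kappa_{nj}^{-1}\Big) = o(|E|),
\end{equation*}
which directly gives $\Pow(\widehat{E}_{\tOr,\delta}) \to 1$. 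The main obstacle is the feasibility verification: the constraint in (\ref{opt.OR}) couples the random nodewise counts $|\widehat{S}^-_{\delta,j}|$ with the global quantity $|\widehat{E}_{\tOr,\delta}|$, and the candidate threshold $T^*_{\delta,j}$ is itself random and dependent on the very $W_{jk}$'s being controlled. A careful conditioning argument on $|\bW|$, paired with Hoeffding-type concentration for the conditionally independent symmetric signs on $S_j^c$, is needed to decouple these dependencies and to push through the uniform control across $j\in[p]$; this parallels the corresponding step in \cite{li2021ggm} but must absorb the additional truncation and FPCA estimation errors specific to the functional setting.
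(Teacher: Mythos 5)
There is a genuine gap in your step (iii), and it sits exactly at the point your parenthetical remark tries to wave away. Exhibiting a feasible candidate $\bT^*$ only tells you that the optimum of (\ref{opt.OR}) satisfies $|\widehat{E}_{\tOr,\delta}(\bT_\delta)|\geq|\widehat{E}_{\tOr,\delta}(\bT^*)|$, i.e.\ the selected edge set is \emph{large}; it does not tell you that it contains the true edges, because the optimizer is free to realize a large edge count with thresholds $T_{\delta,j}$ that are big at some nodes (discarding strong signals there) and small elsewhere (admitting nulls). Your claim that every strong signal in $S_{j2}$ is captured ``regardless of the specific threshold used (provided the threshold does not exceed the typical signal)'' assumes precisely the missing fact: an upper bound on the realized, data-driven $T_{\delta,j}$ at the optimum. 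FDR control cannot substitute for this, since it holds only in expectation. In addition, the sign-concentration (Hoeffding-type) feasibility verification you flag as the main obstacle is left unexecuted, and, as the paper's argument shows, it is not actually needed.

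The paper closes this gap by controlling $T_{\delta,j}$ directly rather than through a feasible candidate. It first proves (its Lemma~\ref{lemmaC10}, via the analogue of Lemma~\ref{lemmaC5} and the nodewise error bound of Lemma~\ref{lemmaC9}) that $|\widehat{E}_{\tOr,\delta}|\geq c'|E|$ with high probability. Then, for each node, it splits into two scenarios according to whether the next-ranked statistic $W_{j(k_j^*+1)}$ is strictly negative or zero. In the first scenario, the minimality of $T_{\delta,j}$ within the constraint of (\ref{opt.OR}) forces $\big(a\delta+|\{k:W_{jk}\leq -T_{\delta,j}\}|+1\big)/|\widehat{E}_{\tOr,\delta}|$ to exceed $q/(c_ap)$ at the next smaller candidate, so $|\{k:W_{jk}\leq -T_{\delta,j}\}| > qc'|E|/(c_ap)-1-a>0$ by Condition~\ref{cond_power_GGM}(iii); since each such $k$ has $\|\widehat\bB_{j(k+p)}\|_{\tF}\geq T_{\delta,j}$ and $\sum_k\|\widehat\bB_{j(k+p)}\|_{\tF}\lesssim s_jd^\alpha\lambda_{nj}/\underline{\mu}$, this yields $T_{\delta,j}\lesssim s_jd^\alpha\lambda_{nj}$, which is dominated by the signal floor $\kappa_{nj}d^\alpha\lambda_{nj}/(2\underline{\mu})$, and the missed-signal count per node is then $O(s_j\kappa_{nj}^{-1})$. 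In the second scenario ($W_{j(k_j^*+1)}=0$) the argument either reduces to the first or is finished through the group-lasso support, using $\{k:W_{jk}=0\}\subseteq\{k:\|\widehat\bB_{jk}\|_{\tF}=0\}$ and the estimation bound. Summing over nodes and using the OR rule then gives the power statement. So while your nodewise estimation bounds and the OR-rule bookkeeping match the paper, your route lacks the threshold-control mechanism (optimality-of-the-threshold plus the knockoff-coefficient-norm bound) that makes the argument go through; without it the conclusion $S_{j2}\subseteq\widehat S_{\delta,j}(\bT_\delta)$ is unsupported.
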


Theorem~\ref{thm_power_fggm} provides the power guarantee in high dimensions when utilizing the OR rule and 
%. With $\Pow(\widehat{E}_{\tAnd,\delta}) \leq \Pow(\widehat{E}_{\tOr,\delta}),$
we leave the power analysis of $\widehat{E}_{\tAnd,\delta}$ as future research. 

\vspace{-0.3cm}
\section{Constructing functional model-X knockoffs}
\vspace{-0.3cm}
\label{sec:cons} 
\subsection{Exact construction}
\label{sec:construct} 
\vspace{-0.3cm}
We consider the second-order functional knockoffs construction by matching the mean and covariance functions of $\bX$ and $\widetilde \bX.$ Assuming that $(\bX^\T, \widetilde \bX^\T)^\T$ follows MGP as specified in Example~$\ref{ex.MGP},$ we achieve alignment of the first two moments, which implies a matching of joint distributions, so that we have an exact construction of functional knockoffs.

%We will utilize the example $(\bX,\widetilde{\bX})\sim\text{MGP}(\bzero,\bSigma_{(X,\widetilde{X})(X,\widetilde{X})})$ in Section~\ref{sec:def} to construct functional knockoffs. As shown in Section~\ref{sec:def}, the joint distribution $(\bX,\widetilde{\bX})\sim\text{MGP}(\bzero,\bSigma_{(X,\widetilde{X})(X,\widetilde{X})})$,where $\bSigma_{XX} = \bSigma_{\widetilde X\widetilde X}$ and $\bSigma_{X,\widetilde X} = \bSigma_{\widetilde X, X} = \bSigma_{XX}- \bR_{XX}$, and $\bR_{XX} = \tdiag(R_{{X}_1 X_1},\dots, R_{{X}_j X_j})$ is a diagonal matrix operator such that $\bSigma_{(X,\widetilde{X})(X,\widetilde{X})}  \succeq 0.$ 

To ensure the positive semi-definiteness of $\bSigma_{(X,\widetilde{X})(X,\widetilde{X})}$ and maintain good statistical power, we need to solve the following optimization problem to obtain $Q_{X_jX_j}$'s:
%There are two main goals in the procedure of selecting $\bR_{XX}$. The first one is to guarantee that $\bSigma_{(X,\widetilde{X})(X,\widetilde{X})}  \succeq 0$, which is equivalent to  $2\bSigma_{XX} - \bR_{XX} \succeq 0$. The second goal is to gain higher power as possible by minimizing $\sum_j\|\Sigma_{ X_j{X}_j} - R_{X_j X_j}\|_{\cL}$. Inspired from the two goals, we can get following optimization problem to obtain $R_{X_j X_j}$. 
\begin{equation}\label{opt.R.cov}
\setlength{\abovedisplayskip}{5pt}
    \setlength{\belowdisplayskip}{5pt}
   % \begin{split}
 \min _{\{Q_{X_jX_j}\}_{j \in [p]}}\sum_{j} \|\Sigma_{ X_j{X}_j} - Q_{X_j X_j}\|_{\rm{norm}}~~
 \text{subject~to}~~2\bSigma_{XX} - \bQ_{XX}\succeq 0,
%\end{split}
\end{equation}  
where $\|\cdot\|_{\text{norm}}$ denotes some proper functional norm.
However, for each $j \in [p],$ the fact $X_j \in {\cal H}_j$ along with the constraint in (\ref{opt.R.cov}) implies that, $\lambda_{\min}(Q_{ X_j{X}_j}) \leq 2\lambda_{\min}(\Sigma_{ X_j{X}_j}) \rightarrow 0,$ where $\lambda_{\min}(\cdot)$ denotes the minimum eigenvalue. When the minimum and maximum eigenvalues of $Q_{X_jX_j}$ are of the same order, % (c.f. the constructions of $R_{X_jX_j}$ under Cases~\ref{R.ex1} and \ref{R.ex2} below), 
we have $Q_{X_jX_j} \rightarrow 0.$ This makes the original variables nearly indistinguishable from the knockoff counterparts, leading to substantially declined power.

%However, $2\bSigma_{XX} - \bR_{XX} \succeq 0$ implies  that the diagonal operators of $2 \bSigma_{XX} - \bR_{XX}$ are positive semidefinite \citep{Bach2008}.  Then we can conclude that for each $j \in [p]$,  $2\lambda_{\min}(\Sigma_{ X_j{X}_j}) \geq  \lambda_{\min}(R_{ X_j{X}_j})$, where $\lambda_{\min}(\Sigma_{ X_j{X}_j})$ is the minimum eigenvalue of the operator $\Sigma_{ X_j{X}_j}$. $X_j \in  L_2(\cU)$ implies the fact that $\Sigma_{X_{j}X_{j}}$ is nuclear with respect to the norm $\|  \Sigma_{X_{j}X_{j}}\|_\cN = \sum_{l=1}^{\infty} |\omega_{jl}| < \infty$, i.e., $\lambda_{\min}(\Sigma_{ X_j{X}_j}) \to 0$. As shown in \ref{R.expre}, the expressions E\ref{R.ex1} and E\ref{R.ex2} of $R_{X_{j}X_{j}}$ indicate the minimum eigenvalue which is of the same order as the maximum eigenvalue. By $2 \lambda_{\min}(\Sigma_{ X_j{X}_j}) \geq  \lambda_{\min}(R_{ X_j{X}_j})$, we have  $R_{X_{j}X_{j}} \to 0$, for each $j\in [p]$.  Thus, $\Sigma_{ \widetilde{X}_j{X}_j}  = \Sigma_{ X_j{X}_j} $, which can lead to significantly reduced statistical power. 

To address this issue, we leverage the correlation operators between $X_j$'s and $X_k$'s for $j,k \in [p]$ \cite[]{baker1973joint}, i.e., $C_{X_{j}X_{k}}:\mathcal{H}_{k} \to \mathcal{H}_{j}$ such that $\|C_{X_{j}X_{k}}\|_{\cL} \leq 1$  and
   $\Sigma_{X_{j}X_{k}} = \Sigma_{X_{j}X_{j}}^{1/2} C_{X_{j}X_{k}} \Sigma_{X_{k}X_{k}}^{1/2},$  
where $\Sigma_{X_jX_j}^{1/2} = \sum _{l=1}^{\infty} \omega_{jl}^{1/2}\phi_{jl}\otimes\phi_{jl}$ is the square-root of the operator $\Sigma_{X_jX_j}.$
We denote $\bC_{X X} =(C_{X_{j}X_{k}})_{p \times p}$ as the correlation operator matrix of $\bX.$ 
It follows from \cite{solea2022copula} that, 
under mild regularity conditions, $\bC_{ XX} - c^* \bI \succeq 0$ for some positive constant $c^*,$ which implies $\lambda_{\min}(\bC_{ XX}) \geq c^*.$
%the minimum eigenvalue of the correlation  operator matrix $\bC_{ XX}$ is constrained to be greater than zero, i.e., $\bC_{ XX} - c \bI \succeq 0$  for some $c> 0$ \citep{li2018nonparametric}.
 %As an example, if $(\sigma_{lm})_{\infty\times\infty}$ with $\sigma_{lm} = 2^{-|l-m|}$, we can verify that
 % the minimum eigenvalue of the correlation matrix  be greater than or equals to $1/3$ \citep{xue2023opti}. 
E.g., when $C_{X_{j}X_{k}} = 2^{-|j-k|} \sum_{l} \phi_{jl} \otimes  \phi_{kl},$ %Therefore, the eigenvalues of $\mathbf{C}_{XX}$ are equal to those of the matrix $\{ 2^{-|j-k|}\}_{j,k \in[p]} \otimes \mathbf{I}_{\infty \times \infty}$. 
it is easy to verify that $\lambda_{\min}(\bC_{XX}) \geq 1/3.$  By utilizing $\bC_{XX}$ instead of $\bSigma_{XX},$ the constraint in (\ref{opt.R.cov}) becomes $2 \bC_{XX} - \bR_{ XX} \succeq 0,$ where $\bR_{XX}=\tdiag(R_{X_1X_1},\dots,R_{X_pX_p}).$ This makes it feasible to determine $R_{X_{j}X_{j}}$ with eigenvalues being bounded away from zero (see e.g., the construction of $R_{X_{j}X_{j}}$ under E\ref{R.ex1} and E\ref{R.ex2} below), which ensures discrepancies between original and knockoff variables with enhanced power. Some simple calculations show that the covariance structure in (\ref{Q.cov}) reduces to the correlation structure
\begin{equation}
\label{R.cor}
 \setlength{\abovedisplayskip}{5pt}
    \setlength{\belowdisplayskip}{5pt}
\bC_{X X}=\bC_{\widetilde X \widetilde X}, ~~ %\bSigma_{\widetilde{X}X} = \bSigma_{X\widetilde{X}}
%&=& \bSigma_{XX} - \tdiag\{\bSigma_{XX}^{1/2}\}~ \bR_{XX}~ \tdiag\{\bSigma_{XX}^{1/2}\},
%=\bSigma_{XX} - \bQ_{XX} = \tdiag(\bSigma_{XX}^{1/2})(\bC_{XX} -  \bR_{XX}) \tdiag(\bSigma_{XX}^{1/2}),
\bC_{X \widetilde X}= \bC_{\widetilde X X} =\bC_{XX} -  \bR_{XX},
\end{equation}
where $Q_{X_jX_j}= \Sigma_{X_jX_j}^{1/2}R_{X_jX_j}\Sigma_{X_jX_j}^{1/2}$ for $j \in [p].$ Hence we can achieve the equivalence between $\bSigma_{(X,\widetilde{X})(X,\widetilde{X})} \succeq 0$ and $2\bC_{XX} - \bR_{XX} \succeq 0.$ 
As a result, we propose to obtain $R_{X_jX_j}$'s by solving the following optimization problem instead of (\ref{opt.R.cov})
\begin{equation}\label{opt.R.cor}
\setlength{\abovedisplayskip}{5pt}
    \setlength{\belowdisplayskip}{5pt}
%\begin{split}
 \min _{\{R_{X_j X_j }\}_{j\in[p]}} \sum_{j} \|C_{ X_j{X}_j} - R_{X_j X_j}\|_{\cL}~~
\text{subject~to}~~ 2\bC_{XX} - \bR_{XX} \succeq 0.
%\end{split}
\end{equation}

\begin{remark}
It is crucial to select an appropriate functional norm in the objective function of (\ref{opt.R.cor}). Notice that, for each $j,$ $C_{ X_j{X}_j} - R_{X_j X_j}$ is neither Hilbert--Schmidt nor nuclear with possibly unbounded $\|C_{ X_j{X}_j} - R_{X_j X_j}\|_{\cS}$ and $\|C_{X_j{X}_j} - R_{X_j X_j}\|_{\cN}.$ Therefore, we opt for the operator norm, considering that
%We have introduced Hilbert-Schmidt norm, the nuclear norm and operator norm. Nevertheless, in optimization programming (\ref{opt.R.cor}), we  select the operator norm of $C_{ X_j{X}_j} - R_{X_j X_j}$ instead of the other two norms.  First, the operator $C_{ X_j{X}_j} - R_{X_j X_j}$ is neither Hilbert-Schmidt nor nuclear, which implies that $ \|C_{ X_j{X}_j} - R_{X_j X_j}\|_{\cS} \to  \infty$ and $ \|C_{X_j{X}_j} - R_{X_j X_j}\|_{\cN} \to  \infty$.
%$C_{ X_j{X}_j}- R_{X_j X_j}$ belongs to the a bounded linear operator space with 
$\|C_{ X_j{X}_j} - R_{X_j X_j}\|_{\cL} < \infty$. %Therefore, in the optimization programming (\ref{opt.R.cor}), we choose the operator norm. 
\end{remark}
 %To generate the knockoff variables, we have to select the  diagonal  operator matrix $\bR_{XX}$ by ensuring that the positive semidefinite property of $\bSigma_{(X,\widetilde{X})(X,\widetilde{X})}$ and guaranteeing that our method has good statistical power. The first condition that  $\bSigma_{(X,\widetilde{X})(X,\widetilde{X})}$ is positive semidefinite is equivalent to  that  $2\bSigma_{XX} - \bR_{XX}$ is positive semidefinite. Furthermore, the second condition that maximize power can be guaranteed by minimizing $\sum_j\|\Sigma_{ X_j{X}_j} - R_{X_j X_j}\|_{\cL}$.  
%\begin{equation}\label{optcov}
 %   \begin{array}{rl}
% \min _{S_{X_j X_j }} & \sum_{j} \|\Sigma_{ X_j{X}_j} - R_{X_j X_j}\|_{\cL} \\
%\rm{subject~to}  &2\bSigma_{XX} - \bR_{XX}\succeq 0.
%\end{array}
%\end{equation} 
%where $\bC_{XX} - \bR_{XX}\succeq 0$ means that  for any $\bx \in {\boldsymbol{\mathcal{H}}}, $ $\langle (2{\bC}_{XX} - {\bR}_{XX})\bx, \bx\rangle \geq 0$. 
%According to the Karhunen-Lo$\grave{\hbox{e}}$ve expansion (\ref{exX}) of $\bX,$ %and $\widetilde{\bX}$, we have
To solve the optimization problem (\ref{opt.R.cor}), we rely on the expression of the correlation operator $C_{X_{j}X_{k}}$ under the Karhunen-Lo$\grave{\hbox{e}}$ve expansion (\ref{exX}) in the following lemma.
%show that  the correlation operator $C_{X_{j}X_{k}}$ has an explicit decomposition expression.

\begin{lemma}
\label{lemma2}
Suppose that $\sum_{l=1}^{\infty} \omega_{jl}^{1/2} < \infty$.  
%and $\sum_{l,m =1}^{\infty} |\tcorr(\xi_{jl},\xi_{km})| < \infty $
%and let $\big(\omega_{jl}, \phi_{jl}\big)$ be  the eigenvalues and eigenfunctions of $\Sigma_{X_jX_j}$, 
Then, for each $j, k \in [p]$, 
$$
\setlength{\abovedisplayskip}{5pt}
    \setlength{\belowdisplayskip}{5pt}
C_{X_{j}X_{k}} = \sum _{l=1}^{\infty} \sum _{m=1}^{\infty}
\tcorr(\xi_{jl},\xi_{km}) (\phi_{jl} \otimes \phi_{km}).
$$
\end{lemma}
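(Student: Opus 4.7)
The plan is to start from the defining factorization $\Sigma_{X_{j}X_{k}} = \Sigma_{X_{j}X_{j}}^{1/2} C_{X_{j}X_{k}} \Sigma_{X_{k}X_{k}}^{1/2}$ and invert it termwise in the tensor basis $\{\phi_{jl} \otimes \phi_{km}\}_{l,m \ge 1}$. Under the paper's standing FPCA setup all eigenvalues $\omega_{jl}$ are strictly positive, so $\Sigma_{X_{k}X_{k}}^{1/2}$ has dense range in $\mathcal{H}_{k}$, and the correlation operator $C_{X_{j}X_{k}}$ is uniquely characterized as the bounded operator with $\|C_{X_{j}X_{k}}\|_{\mathcal{L}} \le 1$ that realizes the factorization. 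It therefore suffices to exhibit a bounded operator of the stated form that, when sandwiched between the two square roots, reproduces $\Sigma_{X_{j}X_{k}}$.

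First, I would collect the three spectral expansions. The Karhunen--Lo\`eve expansion $X_{j}-\mu_{j} = \sum_{l} \xi_{jl}\phi_{jl}$ combined with Fubini and dominated convergence yields
\[
\Sigma_{X_{j}X_{k}} \;=\; \sum_{l=1}^{\infty}\sum_{m=1}^{\infty} \mathrm{Cov}(\xi_{jl},\xi_{km})\,(\phi_{jl}\otimes \phi_{km}),
\]
while the spectral theorem for positive self-adjoint compact operators gives $\Sigma_{X_{r}X_{r}}^{1/2} = \sum_{l} \omega_{rl}^{1/2}(\phi_{rl}\otimes \phi_{rl})$ for $r \in \{j,k\}$. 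The hypothesis $\sum_{l}\omega_{jl}^{1/2} < \infty$ makes each square root nuclear, which is what permits the term-by-term multiplication of the three series in the next step.

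Second, I would test the candidate
\[
C^{\star}_{X_{j}X_{k}} \;:=\; \sum_{l=1}^{\infty}\sum_{m=1}^{\infty} \mathrm{Corr}(\xi_{jl},\xi_{km})\,(\phi_{jl}\otimes \phi_{km})
\]
against the factorization on a generic basis element $\phi_{km'}$. Using the rule $(f\otimes g)(h) = \langle g,h\rangle f$ and orthonormality of $\{\phi_{km}\}_{m}$, the four summations collapse to
\[
\Sigma_{X_{j}X_{j}}^{1/2} C^{\star}_{X_{j}X_{k}} \Sigma_{X_{k}X_{k}}^{1/2}(\phi_{km'}) \;=\; \omega_{km'}^{1/2}\sum_{l}\omega_{jl}^{1/2}\,\mathrm{Corr}(\xi_{jl},\xi_{km'})\,\phi_{jl} \;=\; \sum_{l}\mathrm{Cov}(\xi_{jl},\xi_{km'})\,\phi_{jl} \;=\; \Sigma_{X_{j}X_{k}}(\phi_{km'}),
\]
where the middle equality is just $\mathrm{Cov}(\xi_{jl},\xi_{km'}) = \omega_{jl}^{1/2}\omega_{km'}^{1/2}\,\mathrm{Corr}(\xi_{jl},\xi_{km'})$. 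Since the $\phi_{km'}$ span a dense subspace of $\mathcal{H}_{k}$, this identifies $C^{\star}_{X_{j}X_{k}}$ as a factor in the defining relation.

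The main obstacle will be to justify that $C^{\star}_{X_{j}X_{k}}$ is a genuine bounded linear operator and thus coincides with $C_{X_{j}X_{k}}$. Here I would invoke Cauchy--Schwarz at the level of the random variables $\xi_{jl}$ and $\xi_{km}$: the infinite matrix $[\mathrm{Corr}(\xi_{jl},\xi_{km})]_{l,m}$ is the Gram-type matrix of unit vectors in $L_{2}(\Omega)$ and so defines a bounded bilinear form on $\ell^{2}\times\ell^{2}$ of norm at most one; transferring through the orthonormal bases $\{\phi_{jl}\}$ and $\{\phi_{km}\}$ gives $\|C^{\star}_{X_{j}X_{k}}\|_{\mathcal{L}}\le 1$. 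Combined with the uniqueness of $C_{X_{j}X_{k}}$ as the bounded operator realizing the factorization on the dense range of $\Sigma_{X_{k}X_{k}}^{1/2}$, this identifies $C^{\star}_{X_{j}X_{k}}$ with $C_{X_{j}X_{k}}$ and proves the claim.
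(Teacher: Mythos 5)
Your proof is correct and follows essentially the route the paper intends: the paper omits the argument, deferring to Theorem 2 of Solea and Li (2022), and what you write out — the Karhunen--Lo\`eve/spectral expansions, verification of the Baker factorization $\Sigma_{X_jX_j}^{1/2}C^{\star}\Sigma_{X_kX_k}^{1/2}=\Sigma_{X_jX_k}$ on the basis $\{\phi_{km'}\}$, a bound $\|C^{\star}\|_{\cL}\le 1$, and uniqueness via injectivity of $\Sigma_{X_jX_j}^{1/2}$ and density of the range of $\Sigma_{X_kX_k}^{1/2}$ — is exactly that argument made explicit. One small point of precision in the boundedness step: a cross-Gram matrix of mere unit vectors need not define a bounded form on $\ell^2\times\ell^2$ (e.g.\ all entries equal to one); what makes your bound valid is that each family of normalized scores $\{\xi_{jl}/\omega_{jl}^{1/2}\}_l$ and $\{\xi_{km}/\omega_{km}^{1/2}\}_m$ is orthonormal in $L_2(\Omega)$, since $\cov(\xi_{jl},\xi_{jl'})=\omega_{jl}I(l=l')$, so that $\sum_{l,m}a_l\,\mathrm{Corr}(\xi_{jl},\xi_{km})\,b_m=\eE\big[(\sum_l a_l\xi_{jl}/\omega_{jl}^{1/2})(\sum_m b_m\xi_{km}/\omega_{km}^{1/2})\big]\le\|a\|\,\|b\|$ — state this orthonormality explicitly rather than appealing to "unit vectors."
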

 
It then holds that $C_{X_{j}X_{j}} =  \sum_{l=1}^{\infty} (\phi_{jl} \otimes \phi_{jl}).$  While Lemma~\ref{lemma2} applies to $\bC_{XX},$ $\bC_{\widetilde X\widetilde X}$ and cross-correlation operators $C_{X_j\widetilde X_k}=C_{X_jX_k}$ for $j \neq k$ under (\ref{R.cor}), we focus on $C_{X_j\widetilde X_j}$ that depends on $R_{X_jX_j},$ i.e., $C_{X_j\widetilde{X}_j} = C_{X_{j}X_{j}}-  R_{{X_j}{X_j}}.$ 
Combining these facts, we derive three expressions of $R_{X_j X_j}$ as follows, leading to the corresponding correlation operator $C_{X_j\widetilde{X}_j}$: 
\begin{enumerate}[E1:] %[label=\roman*]
\setlength{\itemsep}{0ex}
    \item\label{R.ex1} When $ R_{X_j X_j} = \sum_{l=1}^{\infty}r(\phi_{jl} \otimes \phi_{jl})$ for $r \in [0,1],$    $C_{X_j\widetilde{X}_j} = \sum_{l=1}^{\infty} (1-r) (\phi_{jl} \otimes \phi_{jl})$;
    \item\label{R.ex2} When $ R_{X_j X_j} = \sum_{l=1}^{\infty} r_{j} (\phi_{jl} \otimes \phi_{jl})$ for each $r_j \in [0,1],$ $C_{X_j\widetilde{X}_j} = \sum_{l=1}^{\infty} (1-r_{j}) (\phi_{jl} \otimes \phi_{jl})$;
    \item\label{R.ex3} When $ R_{X_j X_j} = \sum_{l=1}^{\infty} r_{jl} (\phi_{jl} \otimes \phi_{jl})$ for each $r_{jl} \in [0,1],$ $C_{X_j\widetilde{X}_j} = \sum_{l=1}^{\infty} (1-r_{jl}) (\phi_{jl} \otimes \phi_{jl}).$
\end{enumerate}

In (\ref{opt.R.cor}), we present the optimization problem at the operator level. To facilitate this optimization task, we establish in Section \ref{obj.dedu} of the Supplementary Material that the objective functions of (\ref{opt.R.cor}), subject to E\ref{R.ex1}, E\ref{R.ex2} and E\ref{R.ex3}, can be simplified to the corresponding equivalent forms as presented in Section \ref{obj.dedu} of the Supplementary Material. %below.
%Given that the optimization programming (\ref{opt.R.cor}) is not feasible to solve, we must address the obstacle sequentially. First, we will give formulae without operators of the objective functions in optimization  programming (\ref{opt.R.cor}).  Corresponding to three expressions of $R_{X_j X_j}$ with explicit deducted procedure in Section \ref{obj.dedu} of the Supplementary Material, we have that the objective function $\min _{ \{R_{X_j X_j}\}_{j \in [p]}}  \sum_{j} \|C_{ X_j{X}_j} - R_{X_j X_j}\|_{\cL}$ is equivalent to that 
%{\color{red}
%\begin{enumerate}[E1:]
%\setlength{\itemsep}{0pt}
%\item
%$\min _{r} (1-r);$
%\item 
%$\min _{ (r_1,\dots,r_p)} \sum_{j} (1-r_j);$
%\item
%$\min _{ (\br_1,\dots,\br_p)}  \sum_{j} \sup_{l} |1-r_{jl}|, \hbox{~where} ~\br_j = (r_{j1}, r_{j2},\dots,)^{\T} \in \eR^{\infty}.$
%\end{enumerate} 
%}
%We next present the executable sample versions of the corresponding constraints in (\ref{opt.R.cor}).
%Based on the three three expressions of the objective function, we will give the estimates of constrains in the programming (\ref{opt.R.cor}) corresponding to E\ref{R.ex1}--E\ref{R.ex3}.
\vspace{-0.3cm}
\subsection{Implementation}
\label{sec:implement}
\vspace{-0.3cm}
Based on i.i.d. observations $\bX_1, \dots, \bX_n,$ %consist of $n$ i.i.d. copies of $(X_1(\cdot),\dots, X_p(\cdot))$,  $ j \in [p]$. 
we can compute the sample versions $\widehat\bC_{XX}$ and $\widehat\bR_{XX}$ in the constraint of (\ref{opt.R.cor}). To do this, we replace relevant terms in Lemma~\ref{lemma2} by their sample counterparts, resulting in the sample correlation matrix operator $\widehat{\bC}^{\text{S}}_{XX} = (\widehat C^{\text{S}}_{{X}_{j}{X}_{k}})_{p\times p},$ where $\widehat C_{{X}_{j}{X}_{k}}^{\text{S}} = \sum _{l=1}^{\infty} \sum _{m=1}^{\infty} \widehat\Theta_{jklm}^{\text{S}} (\hat \phi_{jl}\otimes  \hat \phi_{km})$
with $\widehat \Theta_{jklm}^{\text{S}}=n^{-1}\sum_{i=1}^n(\widehat\xi_{ijl}-n^{-1}\sum_{i=1}\widehat\xi_{ijl})(\widehat\xi_{ikm}-n^{-1}\sum_{i=1}\widehat\xi_{ikm})/(\hat\omega_{jl}^{1/2}\hat\omega_{km}^{1/2}).$ However, the sample correlation estimator performs poorly in high-dimensional settings. Inspired by \cite{Schfer2005ASA}, we propose a shrinkage version of the sample correlation operator matrix as
\begin{equation}
\label{C.est.shrink}
 \setlength{\abovedisplayskip}{5pt}
    \setlength{\belowdisplayskip}{5pt}
\widehat{\bC}_{XX} = (1-\gamma_n) \widehat{\bC}^{\text{S}}_{XX} + \gamma_n \widehat\bI_{XX}, 
\end{equation}
where $\gamma_n \in [0,1]$ is the shrinkage parameter, and $\widehat\bI_{XX} = \tdiag(\hat{I}_{X_1X_1}, \dots, \hat{I}_{X_pX_p})$ with 
$\hat{I}_{X_jX_j} = \sum _{l=1}^{\infty}\hat{\phi}_{jl} \otimes \hat{\phi}_{jl}$. Additionally, we can obtain sample versions of $R_{X_j X_j}$ under E\ref{R.ex1}, E\ref{R.ex2}, E\ref{R.ex3} as $\widehat{R}_{X_jX_j} = \sum _{l=1}^{\infty} r (\hat{\phi}_{jl} \otimes \hat{\phi}_{jl}),$   $\widehat{R}_{X_jX_j} = \sum _{l=1}^{\infty} r_j (\hat{\phi}_{jl} \otimes \hat{\phi}_{jl}),$ $\widehat{R}_{X_jX_j} = \sum _{l=1}^{\infty} r_{jl} (\hat{\phi}_{jl} \otimes \hat{\phi}_{jl}),$ respectively. 

%First, if we get the estimated constrains of the  (\ref{opt.R.cor}) at the operator level directly,  
Nevertheless, it is still difficult to solve the optimization problems at the operator level. To make them executable algorithms, we map operators as matrices using a coordinate representing system within finite-dimensional Hilbert spaces \citep[]{solea2022copula}.
The coordinate mapping employs a finite set of functions $\mathcal{B}_j = \{b_{j1},\dots,b_{jk_n}\}$ to approximate $\mathcal{H}_j$ for $j \in [p].$ 
{\color{black}
Hence, each $X_{ij}$ can be expressed as a linear combination of
$b_{j1},\dots,b_{jk_n},$
where the coefficient vector is denoted by $[{X}_{ij}]_{\mathcal{B}_j}$ and %$(c_{ij1},\dots, c_{ijk_n})^{\T}$ 
is called the coordinate of ${X}_{ij}$
with respect to $\mathcal{B}_j.$ For any operator $K: \mathcal{H}_j \to \mathcal{H}_k$, the coefficient matrix $\big([K(b_{{j}1})]_{\mathcal{B}_k},\dots, [K(b_{{j}k_n})]_{\mathcal{B}_k}\big)$ is denoted by $_{{\mathcal{B}}_{k}}[K]_{{\mathcal{B}}_{j}}$ and 
is called the coordinate of $K$ with respect to $\mathcal{B}_j$ and $\mathcal{B}_k.$ 
In this way, we map each $X_{ij} \in \mathcal{H}_j$ to a vector in $\eR^{k_n}$ 
and each operator $K:\mathcal{H}_j \to \mathcal{H}_k$ to a matrix in $\eR^{k_n \times k_n}.$ 
} 
Let $\bG_j=(G_{jlm})_{k_n \times k_n}$ with $G_{jlm} = \langle b_{jl},b_{jm}\rangle$ be the Gram matrix of $\mathcal{B}_j.$
See details of coordinate mapping in Section~\ref{cor.map} of the Supplementary Material.
%By following this approach, we can obtain the matrix of estimated correlation operators for finite samples  $\widehat{\bC}^S_{XX} = \{\widehat C^S_{{X}_{j}{X}_{k}}\}_{j,k\in[p]}$ with $\widehat C_{{X}_{j}{X}_{k}}^S = \sum _{l=1}^{k_n} \sum _{m=1}^{k_n} \tcorr(\widehat\bxi_{jl},\widehat\bxi_{km})  (\hat \phi_{jl}\otimes  \hat \phi_{km})$. Note that the standard correlation operator estimation $\widehat{\bC}^S_{XX}$ that  are ill-suited  for the large-scale correlation matrix estimation. 
%Motivated from \cite{Schfer2005ASA}, we propose a shrinkage correlation operator matrix estimator as  $\widehat{\bC}_{XX} = (1-\lambda^*) \widehat{\bC}^S_{XX} + \lambda^* \widehat\bI$, where $\widehat\bI = \tdiag(\hat{I}_1, \dots, \hat{I}_p)$ and 
%$\hat{I}_j = \sum _{l=1}^{k_n}\hat{\phi}_{jl} \otimes \hat{\phi}_{jl}$. 
%And we also provide finite estimates of $R_{X_j X_j}$ corresponding to three expressions as  E\ref{R.ex1}, $\widehat{R}_{X_jX_j} = \sum _{l=1}^{k_n} r (\hat{\phi}_{jl} \otimes \hat{\phi}_{jl})$;   E\ref{R.ex2}, $\widehat{R}_{X_jX_j} = \sum _{l=1}^{k_n} r_j (\hat{\phi}_{jl} \otimes \hat{\phi}_{jl})$;   E\ref{R.ex3}, $\widehat{R}_{X_jX_j} = \sum _{l=1}^{k_n} r_{jl} (\hat{\phi}_{jl} \otimes \hat{\phi}_{jl}).$

{\color{black}In Section \ref{cor.map} of the Supplementary Material, we derive that $2\widehat{\bC}_{XX} - \widehat{\bR}_{XX}\succeq 0$ is implied by $2 \widehat\bTheta_C- \widehat\bTheta_{R} \succeq 0,$  where 
%\begin{equation} \label{shrincor}
    $\widehat\bTheta_C=(1-\gamma_n)(\widehat{\Theta}_{jklm}^{\text{S}})_{pk_n\times pk_n}+  \gamma_n \bI_{p k_n}.$
%\end{equation}
}
%{\color{red}In Section \ref{cor.map} of the Supplementary Material, we derive that $2\widehat{\bC}_{XX} - \widehat{\bR}_{XX}\succeq 0$ reduces to $2 \widehat\bOmega_C - \widehat\bOmega_{R}  \succeq 0$ under the finite coordinate representation, where $\widehat\bOmega_C$ and $\widehat\bOmega_{R} $ are $pk_n \times pk_n$ matrices obtained from the coordinates of $\widehat{\bC}_{XX}$ and $\widehat{\bR}_{XX},$ respectively.
%Specifically, by (\ref{C.est.shrink}) and (\ref{opt.R.corcond})--(\ref{opt.Theta}) of the Supplementary Material, we obtain  
%\begin{equation} \label{shrincor}
%    \widehat\bOmega_C=
%\widehat\bA \widehat\bTheta_C\widehat\bA^\T ~~\text{with}~~\widehat\bTheta_C=(1-\gamma_n)(\widehat{\Theta}_{jklm}^{\text{S}})_{pk_n\times pk_n}+  \gamma_n \bI_{p k_n},
%\end{equation}
%where %$\widehat\bTheta_C=(1-\gamma)(\widehat{\Theta}_{jklm}^{\text{S}})_{pk_n\times pk_n}+  \gamma \bI_{p k_n}$, 
%$\widehat\bA = \tdiag(\bG_{1}\widehat\bPhi_{1},\dots,\bG_{p}\widehat\bPhi_{p})  
%\in \eR^{pk_n \times pk_n}$ represents the mapping matrix from the space of FPC scores to that of coordinates, and $\widehat\bPhi_j \in {\mathbb R}^{k_n \times k_n}$ with its $l$th column $[\hat{\phi}_{j l}]_{{\mathcal{B}}_{j}}$ for $j \in [p]$ and $l\in[k_n].$ 
%Additionally, we can obtain the corresponding $\widehat\bOmega_R =  \widehat\bA \widehat\bTheta_R \widehat\bA^\T$ below under E\ref{R.ex1}, E\ref{R.ex2} and E\ref{R.ex3}. Combing the above facts,} 
Combing this with Section \ref{obj.dedu} of the Supplementary Material, we propose solving three sample finite-representations of the optimization problem (\ref{opt.R.cor}):
%Similarly, by (\ref{opt.R.corcond}) and (\ref{2231}), there are three expressions of %$\widehat{\boldsymbol{\Omega}}_{R}$ corresponding to E\ref{R.ex1} - E\ref{R.ex3}, which will be displayed in the finite-sample-representations (\ref{opt.R.mat.1}), (\ref{opt.R.mat.2}) and (\ref{opt.R.mat.3}) of programming (\ref{opt.R.cor}). 
\begin{enumerate}[E1:] \label{Omega.R.hat}
\setlength{\itemsep}{0ex}
    \item With $\widehat\bTheta_R = \diag (r \bI_{k_n}, \dots ,r \bI_{k_n})\in \eR^{pk_n \times pk_n},$ 
    %We can get the estimated finite-representation of optimization programming (\ref{opt.R.cor}):
\begin{equation}\label{opt.R.mat.1}
    %\begin{split}
    \setlength{\abovedisplayskip}{5pt}
    \setlength{\belowdisplayskip}{5pt}
  \min _{r}(1-r)~~
  \text{subject~to} ~~ r\in [0,1], 
  ~2 \widehat\bTheta_C- \widehat\bTheta_{R} \succeq 0. 
%\end{split}
\end{equation}
\item With
$\widehat\bTheta_R = \diag (r_1 \bI_{k_n}, \dots ,r_p \bI_{k_n})\in \eR^{pk_n \times pk_n},$ %, the estimated finite-representation of  optimization programming (\ref{opt.R.cor}) is 
\begin{equation}\label{opt.R.mat.2}
    %\begin{split}
     \setlength{\abovedisplayskip}{5pt}
    \setlength{\belowdisplayskip}{5pt}
  \min _{(r_1,\dots,r_p) }\sum_{j=1}^{p} (1-r_j)~~
  \text{subject~to}~~r_j\in [0,1],  ~2 \widehat\bTheta_C- \widehat\bTheta_{R}  \succeq 0.
%\end{split}
\end{equation}
\item With $\widehat\bTheta_R = \diag(r_{11}, \dots,r_{1k_n}, \dots, r_{p1}, \dots, r_{pk_n}
)\in \eR^{pk_n \times pk_n}$ for $j\in [p],$ 
%$, \dots, r_{p1}, \dots, r_{pk_n}$
%The  estimated finite-representation of optimization programming (\ref{opt.R.cor}) is 
\begin{equation}\label{opt.R.mat.3}
%\begin{split}
 \setlength{\abovedisplayskip}{5pt}
    \setlength{\belowdisplayskip}{5pt}
 \min _{(\Bar{\br}_1,\dots,\Bar{\br}_p)} \sum_{j=1}^{p} \sum_{l= 1}^{k_n} |1-r_{jl}|~~
\text{subject~to}~~r_{jl}\in [0,1],  ~
  2 \widehat\bTheta_C- \widehat\bTheta_{R} \succeq 0, 
%\end{split}
\end{equation} 
where $\Bar{\br}_j = (r_{j1}, \dots,r_{jk_n})^\T.$
%\begin{equation*}
%    \begin{split}
%& \min _{\br= (\br_1,\dots,\br_p)}~~   %\sum_{j=1}^{p}\sup_{l\in[k_n]} |1-r_{jl}| \\
% & {\rm{subject~to}}~~   r_{jl}\in [0,1],  ~
%  2 \widehat\bOmega_C- \widehat\bOmega_{R} \succeq 0.
%\end{split}
%\end{equation*}
\end{enumerate}
Note that the original objective function $\min _{ (\Bar{\br}_1,\dots,\Bar{\br}_p)}\sum_{j=1}^{p}\sup_{l\in[k_n]} |1-r_{jl}|$ corresponding to E\ref{R.ex3} is difficult to handle. To simplify the computation, we consider the objective function in (\ref{opt.R.mat.3}) instead. We establish the equivalence of both optimization tasks in Remark~\ref{eq.obj.E3} of the Supplementary Material.
%See also Section \ref{cor.map} of the Supplementary Material for full details of the development of coordinate representation and derivations of (\ref{opt.R.mat.1}), (\ref{opt.R.mat.2}) and (\ref{opt.R.mat.3}).
\vspace{-0.3cm}
\subsection{Algorithms} 
\label{sec:algorithm}
\vspace{-0.3cm}
To simplify notation, we use $[X_{ij}]_{{\mathcal{B}}_{j}}$ to denote the centered version $[X_{ij} - \hat{\mu}_{j}]_{{\mathcal{B}}_{j}}.$ Since constructing functional knockoffs is achieved with the aid  of empirical Karhunen-Lo$\grave{\hbox{e}}$ve expansion, we firstly summarize the algorithm for Karhunen-Lo$\grave{\hbox{e}}$ve expansion using the finite coordinate representation in Algorithm~\ref{Alg1}. We then present the algorithm for constructing functional model-X knockoffs in Algorithm \ref{Alg:2}. 
By the fact that $(\bX_i^\T, \widetilde \bX_i^\T)^\T$ is MGP and the derivations in Section~\ref{smsec:algorithm} of the Supplementary Material, we obtain, in Step~2 that,  
\begin{equation}\label{cond.dist}
\setlength{\abovedisplayskip}{5pt}
    \setlength{\belowdisplayskip}{5pt}
    \widehat\bW^{-1/2}\widehat\bA^\T\big ( [\widetilde X_{i1}]_{\mathcal{B}_1}^\T,\dots,[\widetilde X_{ip}]_{\mathcal{B}_p}^\T\big)^{\T} \Big|   \widehat\bW^{-1/2}\widehat\bA^\T \big([X_{i1}]_{\mathcal{B}_1}^\T,\dots,[X_{ip}]_{\mathcal{B}_p}^\T\big)^{\T} \sim {\cal N}(\widehat\bmu_{\tilde X|X},\widehat\bTheta_{\tilde X|X})
\end{equation} 
for each $i \in [n],$ where the normalization matrix
$\widehat\bW = \tdiag(\hat\omega_{11}, \dots, \hat\omega_{1k_n},\dots, \hat\omega_{p1}, \dots, \hat\omega_{pk_n}),$ {\color{black} $\widehat\bA = \tdiag(\bG_{1}\widehat\bPhi_{1},\dots,\bG_{p}\widehat\bPhi_{p})  
\in \eR^{pk_n \times pk_n}$ is the mapping matrix from the space of FPC scores to that of coordinates, $\widehat\bPhi_j \in {\mathbb R}^{k_n \times k_n}$ with its $l$th column $[\hat{\phi}_{j l}]_{{\mathcal{B}}_{j}}$ for $l\in[k_n]$, } $\widehat\bmu_{\widetilde X|X} =  \big(\bI_{pk_n} - \widehat\bTheta_R \widehat\bTheta_C^{-1} \big)\widehat\bW^{-1/2}\widehat\bA^\T \big([X_{i1}]_{\mathcal{B}_1}^\T,\dots,[X_{ip}]_{\mathcal{B}_p}^\T\big)^{\T}$ and
$\widehat\bTheta_{\widetilde X|X} = 2\widehat\bTheta_{R}- \widehat\bTheta_{R} \widehat\bTheta_C^{-1}\widehat\bTheta_{R}.$
\begin{algorithm}[h]
\caption{Algorithm for Karhunen-Lo$\grave{\hbox{e}}$ve expansion.}
\begin{algorithmic}[1]
   \STATE For each $j \in [p],$ choose a set of functions $\mathcal{B}_j = \{b_{j1},\dots,b_{jk_n }\}$ on $\cU$ and compute $\bG_j.$
   \STATE Compute the coordinates  $[X_{ij}]_{\mathcal{B}_j}$ relative to the basis $\mathcal{B}_j$ by least squares. 
    \STATE Perform spectral decomposition on $n^{-1} \bG_j^{1/2}  \sum_{i=1}^n \big([ X_{ij}  ]_{{\mathcal{B}}_{j}} [ X_{ij} ]_{{\mathcal{B}}_{j}}^{\T}\big) \bG_j^{1/2}$ to obtain eigenpairs $(\hat\omega_{jl}, \widehat \bv_{jl})$ for $l\in [k_n].$
    \STATE Compute $\hat\phi_{jl}  =(b_{j1}, \dots, b_{jk_n}) \bG_j^{\dag 1/2} \widehat\bv_{jl}$ and ${\hat\xi}_{ijl}=[X_{ij} ]_{\mathcal{B}_j}^\T \bG_j^{1/2} \widehat \bv_{jl}$ for $i \in [n]$ and $l \in [k_n].$
 \end{algorithmic}
 \label{Alg1}
\end{algorithm} 
\begin{algorithm}[!htp]
\caption{ Algorithm for constructing functional model-X knockoffs.}
 \label{Alg:2}
\begin{algorithmic}[1]
\STATE\label{step:1} %Solve the optimization problems in (\ref{opt.R.mat.1}), (\ref{opt.R.mat.2}) and (\ref{opt.R.mat.3}) to obtain their solutions $\hat r,$ vector $\widehat \br= (\hat r_1,\dots,\hat r_p),$ and matrix $\br= (\br_1,\dots,\br_p)$, respectively. 
Under E\ref{R.ex1}, E\ref{R.ex2} and E\ref{R.ex3}, recover the expressions of {\color{black}$ \widehat\bTheta_{R}$} by replacing $r,$ $(r_1,\dots,r_p)$ and $(\Bar{\br}_1,\dots,\Bar{\br}_p)$ with the corresponding solutions to the optimization problems in (\ref{opt.R.mat.1}), (\ref{opt.R.mat.2}) and (\ref{opt.R.mat.3}), respectively.
   \STATE \label{step:2} %Based on the correlation of the 
%estimated scores as \\
  % $ E\ref{R.ex1}. 
%\widehat \tcorr(\hat\bxi_{jl},\check{\bxi}_{jl})  = (1-r) \widehat\tcorr(\hat\bxi_{jl},\hat\bxi_{jl})$ and $\widehat\tcorr(\hat\bxi_{jl},\check{\bxi}_{jl})  =  \widehat\tcorr(\hat\bxi_{jl},\hat\bxi_{jl})
%$, \\
%$E\ref{R.ex2}. 
% \widehat\tcorr(\hat\bxi_{jl},\check{\bxi}_{jl})  = (1-r_{j}) \widehat\tcorr(\hat\bxi_{jl},\hat\bxi_{jl})$ and $\widehat\tcorr(\hat\bxi_{jl},\check{\bxi}_{jl})  = \widehat \tcorr(\hat\bxi_{jl},\hat\bxi_{jl}),
%$ \\
%$E\ref{R.ex3}.  
 %\widehat\tcorr(\hat\bxi_{jl},\check{\bxi}_{jl})  = (1-r_{jl}) \widehat\tcorr(\hat\bxi_{jl},\hat\bxi_{jl})$ and $\widehat\tcorr(\hat\bxi_{jl},\check{\bxi}_{jl})  =  \widehat\tcorr(\hat\bxi_{jl},\hat\bxi_{jl}),$\\
 %Denote $\widecheck{\bxi}_{ij} = (\check{\xi}_{ij1}, \dots, \check{\xi}_{ijk_n})^{\T}$ for be the estimates of the scores for the functional knockoffs  $\widetilde{X}_{ij}$, $i\in [n]$ $j\in [p]$. 
Sample $\bZ_1, \dots \bZ_n$ independently from ${\cal N}({\bf 0}_{pk_n}, \bI_{pk_n}).$ Based on the conditional distribution (\ref{cond.dist}),  obtain $\widehat\bW^{-1/2}\widehat\bA^\T( [\widetilde X_{i1}]_{\mathcal{B}_1}^\T,\dots,[\widetilde X_{ip}]_{\mathcal{B}_p}^\T)^{\T}  = \widehat \bmu_{\widetilde X|X}  +  \widehat\bTheta_{\widetilde X|X}^{1/2}\bZ_i,$ which turns to be the coefficient-vector $(\check \xi_{i11}, \dots, \check \xi_{i1k_n}, \dots, \check \xi_{ip1}, \dots, \check \xi_{ipk_n})^{\T}$ of $p$-vector of functional knockoffs under respective Karhunen-Lo$\grave{\hbox{e}}$ve expansions. 
 \STATE \label{step:3} Construct functional knockoffs as $\widecheck{X}_{ij}(\cdot)  =  \sum_{l=1}^{k_n}\check\xi_{ijl} \hat \phi_{jl}(\cdot)$ for $i\in [n]$ and $j\in [p].$ 
 \end{algorithmic}
\end{algorithm}

%\begin{enumerate}
%      \item [i.] Use shrinkage correlation estimator to get $\widehat\bC_{XX}$. Get $s$ ($s_j$, $\br_{j}$) by solving the optimization  programming (\ref{opt.R.cor}), seeing (\ref{opt.R.mat.1}), (\ref{opt.R.mat.2}) and (\ref{opt.R.mat.3}). 
%    \item [ii.]  Based on the previous step's result, the estimated correlation matrix of scores of $(\bX, \widetilde{\bX})$  is obtained.
 %   \item [iii.] Get the scores of $\widetilde{\bX}$ from the estimated correlation matrix and the the scores of ${\bX}$ . 
    %\item [iii.] To filter out the noise we truncate the sum at $d_j$. Thus, we have the truncated empirical Karhunen-Lo$\grave{\hbox{e}}$ve expansion of $X_{ij}$ and $\widetilde{X}_{ij}$: 
%$$ X_{ij} -\hat{\mu}_{j} = \sum_{r = 1}^{d_j} \langle X_{ij} - \hat{\mu}_{j}, \hat\phi_{jl}\rangle\hat\phi_{jl} = \sum_{r = 1}^{d_j} \hat\omega_{jl}^{1/2} {\hat\xi}_{ijl} \hat\phi_{jl}.$$
%\end{enumerate}
 \vspace{-0.3cm}
\subsection{Partially observed functional data}
\label{sec:partial}
\vspace{-0.3cm}
In this section we consider a practical scenario where each $X_{ij}(\cdot)$ is partially observed, with errors, at $L_{ij}$ random time points $U_{ij1}, \dots, U_{ijL_{ij}}\in \cU.$ Let $W_{ijl}$ be the observed value of $X_{ij}(U_{ijl})$ satisfying
\begin{equation}
\label{model.partial}
W_{ijl} = X_{ij}(U_{ijl}) + e_{ijl},~~l = 1,\dots, L_{ij},
\end{equation}
where $e_{ijl}$'s are i.i.d. mean-zero errors with finite variance, independent of $X_{ij}(\cdot).$ The sampling frequencies $L_{ij}$'s play a crucial role when choosing the smoothing strategy. When $L_{ij}$'s are larger than some order of $n,$ the conventional approach employs nonparametric smoothing on $W_{ij1}, \dots, W_{ijL_{ij}}$ to reduce the noise, see, e.g., local linear smoothers \cite[]{kong2016}. This allows the reconstruction of individual curves, which can be treated as original covariates to construct functional model-X knockoffs. When $L_{ij}$'s are bounded, the pre-smoothing step is no longer viable. In such cases, it is recommended to apply nonparametric smoothers for estimating the mean, marginal- and cross-covariance functions, which are essential terms within the functional model-X knockoffs framework. This can be achieved by pooling data from subjects to build strength across all observations \cite[]{fang2022}.

\vspace{-0.3cm}
\section{Simulations}
\label{sec:sim}
\vspace{-0.3cm}
In this section, we conduct a number of simulations to assess the finite-sample performance of the proposed functional knockoffs selection methods for SFLR, FFLR and FGGM. 
We compare our ``knockoff first" proposals, which include the construction of functional knockoffs under E\ref{R.ex1}, E\ref{R.ex2} and E\ref{R.ex3} (respectively denoted as KF1, KF2 and KF3), with two competing methods. The first competitor follows a ``truncation first" strategy (denoted as TF), as detailed in Remark \ref{fdrRmK}. The second is a group-lasso-based variable selection method (denoted as GL), which involves initial dimension reduction followed by group-lasso for group-variable selection but does not include knockoffs.
%Our proposed approach for three optimization programming approximations (\ref{opt.R.mat.1}), (\ref{opt.R.mat.2}), and (\ref{opt.R.mat.3}) are abbreviated as KF1, KF2 and KF3. In our proposed approach, the first step is to construct the functional model-X knockoffs (Algorithm \ref{Alg:2}) and then conduct the knockoff filter shown in Section \ref{sec:method}. 
 %, which mean that `` functional model-X knockoffs first expansion second". In the proposed method, we apply the coordinate mapping to get the functional model-X knockoffs of the predictor firstly.
 %Next, we use the truncated Karhunen-Lo$\grave{\hbox{e}}$ve expansions. Finally, the group Lasso penalized regression and other knockoff filter are conducted.
 %There are two competitors. The first one is the ``truncation first" method (TF) with details shown in Remark \ref{fdrRmK}. 
 %. More specifically, th performs  truncated Karhunen-Lo$\grave{\hbox{e}}$ve expansions, then construct the group knockoffs \citep{dai2016knockoff} based on the truncated Karhunen-Lo$\grave{\hbox{e}}$ve expansions. The final step is same as that in the proposed method. 
%We also perform a comparison of our method with the Variable Selection (VS) method, which utilizes group Lasso in functional linear regression models. In the context of functional Gaussian graphical models, we compare our proposed method with the Neighborhood Selection (NS) method. The NS method extends from the work of \cite{mein2006} and also employs the group Lasso technique.

 %under the varying signal strength, sample size and number of predictors. 

Implementing our proposals require choosing the shrinkage parameter $\gamma_n$ in (\ref{C.est.shrink}), the dimension $k_n$ in the coordinate mapping and truncated dimensions $d_j$'s (and $\tilde d$ for FFLR). 
%{\color{red} To Long: please be specific about the selection of $\gamma$}. 
To select $\gamma_n,$ we can either use cross-validation or minimize the mean squared error of $\widehat\bTheta_C$ \cite[]{Schfer2005ASA}, while the latter approach is adopted for its computational efficiency.
In the coordinate mapping, we follow \cite{solea2022copula} to use cubic spline functions with 3 interior nodes, which results in $k_n=7$ spline functions to span each ${\cal H}_j$. %free parameters. Hence each ${\cal H}_j$ for $j \in [p]$ is spanned by these $k_n=7$ spline functions.
To determine the truncated dimensions, we take the standard approach by selecting the largest $d_j$ (or $\tilde d$ for FFLR) eigenvalues of $\widehat \Sigma_{X_jX_j}$ (or $\widehat \Sigma_{YY}$ for FFLR) such that the cumulative percentage of selected eigenvalues exceeds $90\%.$ Inspired from \cite{wang2011consistent}, we consider minimizing the following high-dimensional BIC to choose the regularization parameter in the penalized least squares (\ref{glasso.FFLR}) for FFLR
\begin{equation} 
\label{HBIC}
\text{HBIC}(\lambda_n)=n\log\big\{\text{RSS}(\lambda_n)\big\} + 2\hbar \log\Big(\tilde{d} \sum_{j=1}^{2p} d_j\Big) \sum_{j=1}^{2p} \Big\{\frac{(\tilde{d}d_j-1)\|\widehat{\bB} \|_{\tF}}{\|\widehat{\bB} \|_{\tF} + \lambda_n} + \bI(\|\widehat{\bB} \|_{\tF} >0)\Big\},
\end{equation}
where $\text{RSS}(\lambda_n)$ represents the residual sum of squares, and $\hbar$ is a constant in $[0.1,3]$ to maintain comparable power levels. The criterion (\ref{HBIC}) is also applicable for selecting $\lambda_n$ in (\ref{glasso.SFLR}) for SFLR with $\tilde d=1$ and $\|\bB\|_{\tF}$ degenerated to $\|\bbb\|.$ Since our proposal for FGGM involves $p$ FFLRs, we can select the regularization parameters $\lambda_{nj}$'s  %$\lambda_{n1}, \dots, \lambda_{np}$ in (\ref{glasso.FGGM}) 
in the same fashion as for FFLR. With (\ref{HBIC}), we can also determine the corresponding regularization parameters for each comparison method within each model.% In optimization tasks (\ref{opt.AND}) and (\ref{opt.OR}) for FGGM, we take suggested values of $a$ and $c_a$ in \cite{li2021ggm}.
%a=0.01$ and $c_a = 102$ as suggested by \cite{li2021ggm}.

%We will discuss the parameter turning in (\ref{glasso.SFLR}), (\ref{glasso.FFLR}) and (\ref{glasso.FGGM}). It should be noted that, provided the predictor dimension is fixed, the conventional BIC criterion may consistently identify the true model \citep{wang2009shrinkage}. For the diverging predictor dimension,  \cite{wang2011consistent} proposed the high dimensional BIC (HBIC) ($\hbar \geq 1$) criterion which can find the true model consistently. Motivated by HBIC, we select the regularization parameter $\lambda_n$ by minimizing the BIC as $$\hat{\lambda}_n = \argmin_{\lambda_n \geq 0}~ n \log[\text{RSS}(\lambda_n)] + 2\hbar \log(\tilde{d} \sum_{j=1}^{2p} d_j) \sum_{j=1}^{2p} \bigg[\frac{(\tilde{d}d_j-1)\|\widehat{\bB} \|_{\tF}}{\|\widehat{\bB} \|_{\tF} + \lambda_n} + \bI(\|\widehat{\bB} \|_{\tF} >0)\bigg], $$ where $\text{RSS}(\lambda_n)$ is the residual sum of squares, $\hbar$ is a non-negative constant. For different scenarios, we select different $\hbar \in [0.1,3]$ to keep the power comparable.
 % When $1 \geq \gamma \geq 0$, the combined BIC is  EBIC; while  $\gamma \geq 1$, the combined BIC is  HBIC.  For (\ref{shrincor}), we choose $\gamma$ by minimizing the mean squared error of $\widehat\bTheta_C$. In all simulations, we select the truncation parameters $d_j$ or $\tilde{d}$ such that the first $d_j$ or $\tilde{d}$ eigenvalues in the Karhunen-Lo$\grave{\hbox{e}}$ve  expansions explain $90\%$ of the total variation in the functional PCA analysis. 

%\subsection{Fully observed functional data} 
\label{sim:full}
To mimic the infinite-dimensionality of random functions, we generate functional variables by $X_{ij}(u)=\widetilde\bphi(u)^{\T}{\btheta}_{ij}$ for $i\in [n], j=[p]$ and $u \in \cU=[0,1],$ where $\widetilde\bphi(u)$ is a $25$-dimensional Fourier basis function and $(\btheta_{i1}^{\T},\dots,\btheta_{ip}^{\T} )^{\T} \in \eR^{25p}$ is generated independently from a mean zero multivariate normal distribution with block covariance matrix
$\bLambda \in \eR^{25p \times 25p},$ whose $(j,k)$th block is $\bLambda_{jk} \in \eR^{25 \times 25}$ for $j,k \in [p].$
%$\bOmega=(\bOmega_{jk})_{p \times p}.$
The functional sparsity pattern in $\bSigma_{XX}= \big(\Sigma_{jk}(\cdot,\cdot)\big)_{p \times p}$ with its $(j,k)$th entry $\Sigma_{jk}(u,v) = \widetilde\bphi(u)^{\T}\bLambda_{jk}\widetilde\bphi(v)$ 
can be captured by the block sparsity structure in $\bLambda.$ 
Define $\bLambda_{jj} = \text{diag}(1^{-2},\dots, 25^{-2})$ and $\bLambda_{jk}=(\Lambda_{jklm})_{25 \times 25},$ where
$\Lambda_{jklm} = 0.5\rho^{|j-k|}l^{-1}m^{-1}$ for $l \neq m$ and $\rho^{|j-k|} l^{-2}$ for $l=m$ with $\rho=0.5.$
%$ =\rho^{|j-k|} \bLambda_{jj}^{1/2}  \{0.5^{I(l\neq m)}\}_{25\times25} \bLambda_{kk}^{1/2}$ with $\rho = 0.5$,    
%and $\bQ = (0.5^{I(j\neq k)})_{25\times25}$.
%In this section, the random functions are fully observed at $100$ equally spaced points. 
We generate $n =100,200$ observations of $p =50,100,150$ functional variables, and replicate each simulation $200$ times. For each of the three models, the data is generated as follows.
  
{\bf SFLR:} 
We generate scalar responses $\{Y_{i}\}_{ i \in [n]}$ from model (\ref{SFLR.eq}), 
where $\varepsilon_i$'s are independent standard normal.
For each $j \in S = \{1, \dots, 10\}$, we generate $\beta_j(u) = \sum_{l =1}^{25} b_{jl} \tilde\phi_l(u) $ for $u \in  \cU,$ where $b_{jl} = (-1)^{l} c_b  l^{-2}$ for $l=1, \dots,25,$ and the strength of signals via $c_b$'s are sampled from $\text{Unif}[4,6].$
%The magnitude of $C_b$ reflects the strength of the signals,  randomly selected in uniform $[4,6]$.
For $j \in [p] \setminus S,$ we set $\beta_j(u)  = 0.$

{\bf FFLR:} We generate functional responses $\{Y_{i} (v):v \in \cU\}_{ i \in [n]}$ from model (\ref{FFLR.eq}),
where $\varepsilon_i(v) = \sum_{l= 1}^{5} g_{il} \tilde\phi_l(v)$ with $g_{il}$ being i.i.d. ${\cal N}(0,1).$
For $j \in S$, we generate $\beta_j(u,v) = \sum_{l,m =1}^{25} B_{jlm} \tilde\phi_l(u)\tilde \phi_m(v) $ for $(u,v) \in \cU^2, $ 
where $B_{jlm} = (-1)^{l+m} c_b (l+m)^{-2}$ for $l,m=1, \dots, 25,$ and $c_b$'s are sampled from $\text{Unif}[4,6].$
%The magnitude of $C_b$ reflects the strength of the signals, varying in $ [4,6]$.
For $j \in [p] \setminus S,$ we set $\beta_j(u,v)=0.$

{\bf FGGM:}
Different from the above data generating process, we sequentially generate $X_{i1}(\cdot), \dots, X_{ip}(\cdot).$ We firstly generate the functional errors 
$\varepsilon_{ij}(u) =\widetilde \bphi(u)^{\T}\widetilde \btheta_{ij}$ for $i\in [n]$ and $j=[p],$ %and $u \in \cU=[0,1],$where $\bphi(u)$ is a $25$-dimensional Fourier basis function and 
where $\widetilde \btheta_{ij}$ are sampled independently from ${\cal N}({\bf 0}_{25}, \bLambda_{jj}).$ We then adopt the following structural equations to establish a directed acyclic graph, 
$$\setlength{\abovedisplayskip}{5pt}
    \setlength{\belowdisplayskip}{5pt}
    X_{i1}(u)=\varepsilon_{i1}(u)~~\text{and}~~X_{ij}(u) =\sum_{(k,j) \in E_{D}} \int_{\cU} X_{ik}(v) \beta_{jk}(u,v) dv   + \varepsilon_{ij}(u) ~\text{for}~j\in [p]\setminus \{1\},$$ 
where $E_{D}$ represents the directed edge set. A pair $(i,j) \in E_D$ is said to be directed from node $i$ to node $j$ if $(j,i) \notin E_D,$ then node $j$ is a child of node $i.$ Denote the candidate edge set as $E_c=\{(k,j) \in [p]^2:k<j\}.$ To determine $E_D,$ we randomly select one edge from $E_c$ for each child node $j\in \{2, \dots,p\}$ in a sequential way,
and then randomly choose $p/3$ edges from the remaining $E_c.$ 
%{\color{red}  In a directed acyclic graph, the node $j$ serves as the child node in the edge $(k,j)$. We denote a global edge set as $\{(k,j):k<j\}$. There are two steps to determine the directed edge set. Firstly, we randomly select one edge from the global edge set for each child node $j\in [p]\backslash \{1\}$. This process yields $p-1$ directed edges. Subsequently, we randomly choose $\lceil p/3 \rceil$ edges from the remaining  global edge set after the aforementioned process. Ultimately, the directed edge set is obtained with $|E_D| = \lceil 4p/3 \rceil-1$.}
We generate $\beta_{jk}(u,v) = \sum_{l,m =1}^{25} B_{jklm} \tilde\phi_l(u) \tilde\phi_m(v) $ for $(u,v) \in \cU^2,$ where $B_{jklm} = (-1)^{l+m} c_bs_j^{-1} (l+m)^{-2}$ for $l,m=1,\dots,25,$ and $c_b$'s are sampled from $\text{Unif}[4,6].$  
%Finally, the directed acyclic graph is subsequently subjected to moralization in order to obtain an undirected graph \citep{Cowell2007}. 
We finally moralize the directed graph to obtain the undirected graph \cite[]{Cowell2007}.

We present numerical summaries of all comparison methods in terms of empirical power and FDR for SFLR, FFLR and FGGM in Tables~\ref{tab:1}, \ref{tab:2} and \ref{tab:3}, respectively. Given the similar performance of three ``knockoff first" methods for SFLR and FFLR, we only employ KF1 for FGGM due to computational efficiency. We choose to report results under the OR rule, which demonstrates superior performance compared to the AND rule.
%It can be concluded that KF1, KF2, and KF3 have similar outcomes based on the simulation results of SFLR and FFLR models. Hence, for time efficiency reasons, we only apply KF1 to FGGM and we use $a=0.01$ and $c_a = 102$ in both simulations and applications as recommended by \cite{li2021ggm}.  
Several conclusions can be drawn from Tables \ref{tab:1}--\ref{tab:3}.
First, in all three models whether $p>n$ or $p<n,$ the knockoffs-based methods, including KF1, KF2, KF3 and TF, effectively control the empirical FDR below the target level of $q=0.2$. In contrast, GL results in significantly inflated FDR, especially for SFLR and FFLR.
Second, across all scenarios, our ``knockoff first" methods consistently achieve higher empirical powers compared to ``truncation first" competitors. These empirical findings nicely validate the heuristic arguments presented in Remarks \ref{fdrRmK} and \ref{powerRmK}.
Third, among KF1, KF2 and KF3, they exhibit similar performance in terms of FDR control and power. Due to its lowest computational cost, we recommend using KF1 in practice. 
\begin{table}[htbp]
  \caption{The empirical power and FDR for SFLR.}
  \label{tab:1}
  \centering
  \setstretch{0.8} 
  \setlength{\abovecaptionskip}{0pt} % 设置表格上方间距为 0
\setlength{\belowcaptionskip}{0pt} % 设置表格下方间距为 0
      \resizebox{5.5in}{!}{
\begin{tabular}{*{12}{c}}
  \toprule
  \multirow{2}*{$p$} & \multirow{2}*{$n$} & \multicolumn{2}{c}{KF1} & \multicolumn{2}{c}{KF2} & \multicolumn{2}{c}{KF3} & \multicolumn{2}{c}{TF} &\multicolumn{2}{c}{GL}\\
  \cmidrule(lr){3-4}\cmidrule(lr){5-6}\cmidrule(lr){7-8}\cmidrule(lr){9-10}\cmidrule(lr){11-12}
  &  & FDR &Power & FDR &Power & FDR &Power & FDR &Power& FDR &Power\\
  \midrule
  50 &  100 & 0.16  & 0.96  &0.16   &0.95  & 0.16  
     &0.96  &0.16   &0.82  &0.25  & 1.00 \\

      % &  150 & 0.15  & 1.00  & 0.13  &1.00  & 0.14  &0.98  & 0.16 & 0.93 &0.22  &1.00  \\

      &  200 & 0.13  &1.00   &   0.14& 1.00 &  0.13 &1.00  & 0.17  &0.96  &0.19  &1.00  \\

  100 &  100 & 0.13  & 0.89  &0.15   &0.89  &0.16   &0.89  &  0.19 &0.68  &0.47  &1.00  \\

       %&  150 & 0.19  &0.98   & 0.20  & 0.98 & 0.20  & 0.98 & 0.18  &0.80  & 0.35 &1.00  \\

      &  200 & 0.18  & 1.00  & 0.19  &1.00  &0.19   &1.00  & 0.16  & 0.89 & 0.28 &1.00  \\
      
 150 &  100 & 0.08  & 0.99  &0.08   &0.99  & 0.07  
      &1.00  &  0.10 &0.79  &0.59  &1.00  \\

       %&  150 & 0.19  &0.98   & 0.20  & 0.98 & 0.20  & 0.98 & 0.18  &0.80  & 0.35 &1.00  \\

      &  200 & 0.19  & 1.00  & 0.17  &1.00  &0.19   &1.00  & 0.18  & 0.86 & 0.43 &1.00  \\
      
  \bottomrule
\end{tabular}
}
	\vspace{-0.4cm}
\end{table}

\begin{table}[htbp]
  \caption{The empirical power and FDR for FFLR.}
  \label{tab:2}
  \centering
  \setstretch{0.8}
  \setlength{\abovecaptionskip}{0pt} % 设置表格上方间距为 0
\setlength{\belowcaptionskip}{0pt} % 设置表格下方间距为 0
    \resizebox{5.5in}{!}{
\begin{tabular}{*{12}{c}}
  \toprule
  \multirow{2}*{$p$} & \multirow{2}*{$n$} & \multicolumn{2}{c}{KF1} & \multicolumn{2}{c}{KF2} & \multicolumn{2}{c}{KF3} & \multicolumn{2}{c}{TF} &\multicolumn{2}{c}{GL}\\
  \cmidrule(lr){3-4}\cmidrule(lr){5-6}\cmidrule(lr){7-8}\cmidrule(lr){9-10}\cmidrule(lr){11-12}
  &  & FDR &Power & FDR &Power & FDR &Power & FDR &Power& FDR &Power\\
  \midrule
   50   &  100 & 0.18  & 0.88  &0.17   &0.86  & 0.19  &0.88 &0.17   &0.69  &0.80  & 1.00 \\

      % &  150 & 0.15  & 0.91  & 0.15  &0.91  & 0.14  &0.91  & 0.20  & 0.82 &0.70  & 0.94  \\

       &  200 & 0.12  &0.93   &   0.13& 0.93 &  0.14 &0.93  & 0.14  &0.83  &0.20  &0.94  \\

  100  &  100 & 0.14  & 0.78  &0.14   &0.78  &0.12   &0.78  &  0.07 &0.71  &0.69  &0.81  \\

       %&  150 & 0.13  &0.93   & 0.13  & 0.93 & 0.13  & 0.93 & 0.11  &0.78  & 0.90  &1.00  \\

       &  200 & 0.08  & 0.96  & 0.08  &0.96  &0.08   & 0.96  & 0.07  & 0.85 & 0.88 &1.00  \\
       
        150 &  100 & 0.17  & 0.99  &0.15   &0.99  &0.19   
      &0.99  &  0.16 &0.70  &0.93  &1.00  \\

       %&  150 & 0.19  &0.98   & 0.20  & 0.98 & 0.20  & 0.98 & 0.18  &0.80  & 0.35 &1.00  \\

      &  200 & 0.16  & 1.00  & 0.17  &1.00  &0.18   &1.00  & 0.14  & 0.82 & 0.93 &1.00  \\
  \bottomrule
\end{tabular}
}
	\vspace{-0.2cm}
\end{table}

\begin{table}[htbp]
  \caption{The empirical power and FDR for FGGM.}
  \label{tab:3}
  \centering
  \setstretch{0.8}
  \setlength{\abovecaptionskip}{0pt} % 设置表格上方间距为 0
\setlength{\belowcaptionskip}{0pt} % 设置表格下方间距为 0
    \resizebox{4in}{!}{
\begin{tabular}{*{8}{c}}
  \toprule
  \multirow{2}*{$p$} & \multirow{2}*{$n$} & \multicolumn{2}{c}{KF1} & \multicolumn{2}{c}{TF} &\multicolumn{2}{c}{GL}\\
  \cmidrule(lr){3-4}\cmidrule(lr){5-6}\cmidrule(lr){7-8}
  &  & FDR &Power & FDR &Power& FDR &Power\\
  \midrule
   50   &  100 & 0.18   &0.74 &0.15   &0.50  &0.23  &0.75  \\

       &  200 & 0.20  &0.94 & 0.20  &0.79  &0.22  &0.94  \\

  100  &  100 & 0.17  & 0.63   &  0.12 &0.48  &0.20  &0.68  \\

       &  200 & 0.19  & 0.84    & 0.18  & 0.73 & 0.22 &0.86  \\
       
  \bottomrule
\end{tabular}
}
	\vspace{-0.3cm}
\end{table}

We also assess the finite-sample performance of competing methods for handling partially observed functional data. We generate $X_{ij}(\cdot)$ for $i=[n]$ and $j=[p]$ following the same procedure as above. We then generate the observed values $W_{ijl}$'s from (\ref{model.partial}), where the observational time points and errors $e_{ijl}$'s are independently sampled from $\text{Unif}[0,1]$ and ${\cal N}(0, 0.5^2),$ respectively. We consider the dense setting $L=51$ since brain signals in neuroimaging data are commonly measured at a dense set of points. We employ the local-linear-based pre-smoothing approach using a Gaussian kernel with the optimal bandwidth proportional to $L^{-1/5}.$ The numerical results for SFLR and FGGM are respectively presented in Tables~\ref{tab:4} and \ref{tab:5} of the Supplementary Material. Similar conclusions can be drawn compared to the results obtained for fully observed functional data from Tables~\ref{tab:1}--\ref{tab:3}.

%We mainly apply the methods to partially observed functional data in this section.
%First, we generate random functions $X_{ij}(\cdot)$ for all $i=[n]$ and $j=[p]$ using the same procedure outlined in Section~\ref{sec:fully} and then we use equation (\ref{model.partial}) to create observed values $W_{ijl}$. 
%The measurement locations $U_{il}$ and errors $e_{ijl}$ are independently sampled from $\text{Unif}[0,1]$ and ${\cal N}(0,0.5^2),$ respectively. 
%We consider $L_i = 51$, $i\in [n]$ and use pre-smooth methods to handle partially observed functional data.   {\color{red} We employ the Gaussian kernel with optimal bandwidths  as recommended in .} 
%\cite{}. We find that selecting proportionality constants from the range $(0,1]$ yields good results across all settings we investigate. 
%In this section, we will exclusively apply the methods to SFLR and FGGM, focusing on the empirical results presented in Table \ref{tab:4}-\ref{tab:5}. Based on these results, we can draw similar conclusions to those obtained from fully observed functional data. 

\vspace{-0.3cm}
\section{Real data analysis}
\label{sec:real}
\vspace{-0.3cm}
\subsection{Emotion related fMRI dataset} 
\label{real1}
\vspace{-0.3cm}
In this section, we illustrate our functional model-X knockoffs selection proposal for SFLR using a publicly available brain imaging dataset obtained from the Human Connectome Project (HCP), \url{http://www.humanconnectome.org/}. This dataset comprises $n=848$ subjects of functional magnetic resonance imaging (fMRI) scans with Blood Oxygenation Level-Dependent (BOLD) signals in the brain.
We follow recent proposals, based on HCP, to model signals as multivariate random functions, thus representing each region of interest (ROI) as
one random function; see, e.g., \cite{xue2021,zapata2019,Lee2023Conditional}.
For each subject, the BOLD signals are recorded every 0.72 seconds, totalling $L=176$ observational time points (2.1 minutes). The response of interest is referred to as {\it Emotion Task Shape Acc}, which represents a continuous score measured by the Penn Emotion Recognition Test and is associated with the brain's processing of negative emotional tasks. We construct an SFLR model via (\ref{SFLR.eq}) by treating $p=34$ ROIs as functional covariates; see Table~\ref{tab:regions} of the Supplementary Material for details on the specific ROIs. Our goal is to identify ROIs that significantly influence the Emotion Task Shape Acc. For this purpose, we apply our proposed KF1 approach for SFLR with target FDR level of $q=0.2.$ To construct ${\cal H}_j$'s, we use cubic spline functions with 11 interior nodes, corresponding to $k_n=15.$ For comparison, we also implement the TF and GL methods. While TF and GL respectively select 9 and 15 ROIs, our KF1 identifies 6 ROIs, indexed by $j \in \{9, 12, 20, 22, 31, 32\}$ with sorted importance levels $W_9 > W_{31} > W_{32} > W_{20} > W_{12} > W_{22}.$ Among three competitors, these six ROIs are consistently selected, and align with findings in the existing literature. Specifically, \cite{xue2021} identified isthmus cingulate ($j=9$), lingual ($j=12$) and frontal pole ($j=31$) as important ROIs associated with Emotion Task Shape Acc. Furthermore, previous studies have found regions like pericalcarine ($j=20$), posterior cingulate ($j=22$) and temporal pole ($j=32$) to be responsible for negative emotions \cite[]{Sabatinelli2006, rolls2019cingulate, Olson2007Brain}.

\vspace{-0.3cm}
\subsection{Functional connectivity analysis}
\label{real2}
\vspace{-0.3cm}

In this section, we investigate the relationship between brain functional connectivity and fluid intelligence (gF), which represents the capacity to think and reason independently of acquired knowledge. The dataset, obtained from HCP, consists of fMRI scans and the corresponding gF scores, determined based on participants’ performance on the Raven’s Progressive Matrices. We focus on $n_{\text{low}}=73$ subjects with low fluid intelligence scores (gF$\leq 8$) and $n_{\text{high}}=85$ subjects with high scores (gF$\geq 23$). In an analogy to Section~\ref{real1}, we treat the BOLD signals at different ROIs as multivariate functional data, considering $p=83$ ROIs across three well-established modules in neuroscience literature \cite[]{finn2015}: the medial frontal module (29 ROIs), frontoparietal module (34 ROIs), and default mode module (20 ROIs). The signals for each subject at each ROI are collected every 0.72 seconds at $L=1200$ measurement locations (14.4 minutes). To exclude unrelated frequency bands in resting-state functional connectivity, we apply ICA$+$FIX preprocessed pipeline and use a standard band-pass filter between 0.01 and 0.08 Hz \cite[]{Glasser2016}. For our analysis, we employ the proposed KF1 method for FGGM under the OR rule to construct brain functional connectivity networks, which depict the conditional dependence structures among respective ROIs within each of the three modules. We use cubic splines with 31 interior nodes, resulting in each ${\cal H}_j$ being spanned by $k_n=35$ spline functions. We continue to set the target FDR level at $q=0.2.$

\begin{figure}[h]
\centering
\vspace{-0.5cm}
\begin{subfigure}{0.49\linewidth}
 \centering
  \includegraphics[width=7cm]{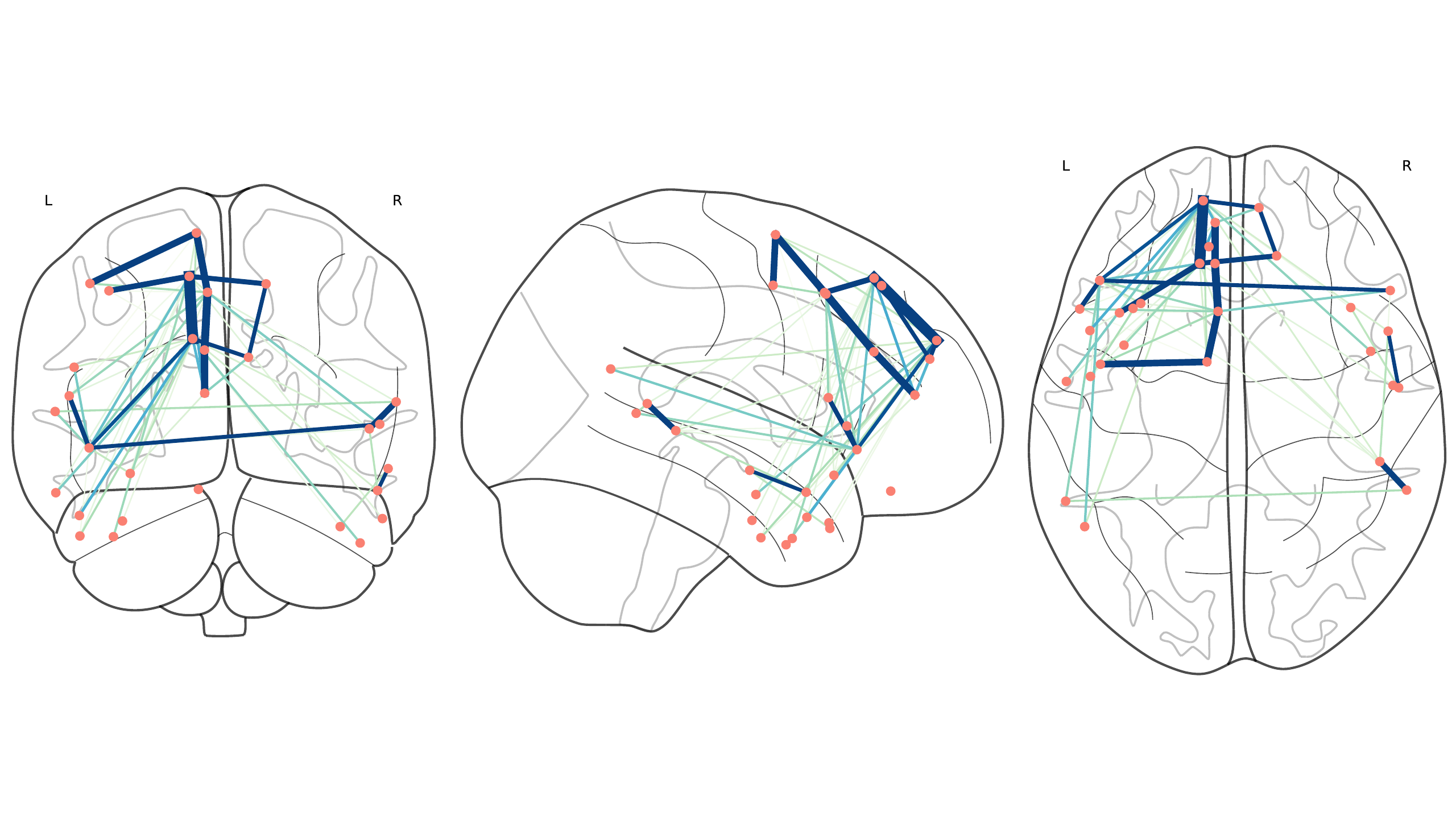}
  \vspace{-0.95cm} 
  \caption{gF$\leq 8$: the medial frontal module } 
 % \par\smallskip % if more vertical separation needed, use \bigskip
 \vspace{-0.09cm}
  \includegraphics[width=7cm]{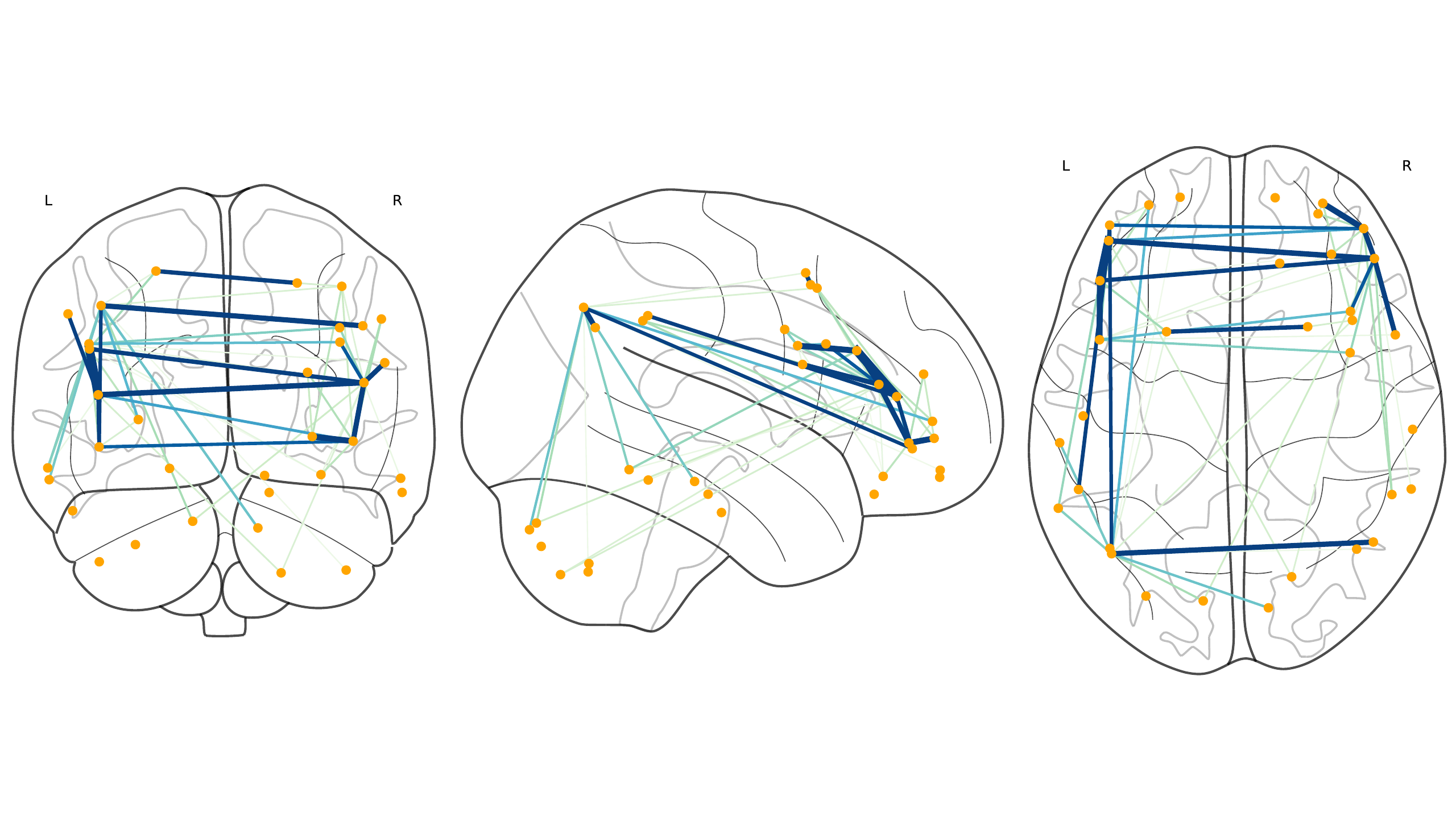}
  \vspace{-0.95cm} 
  \caption{gF$\leq 8$: the frontoparietal module} 
  %\par\smallskip % if more vertical separation needed, use \bigskip
  \vspace{-0.09cm}
  \includegraphics[width=7cm]{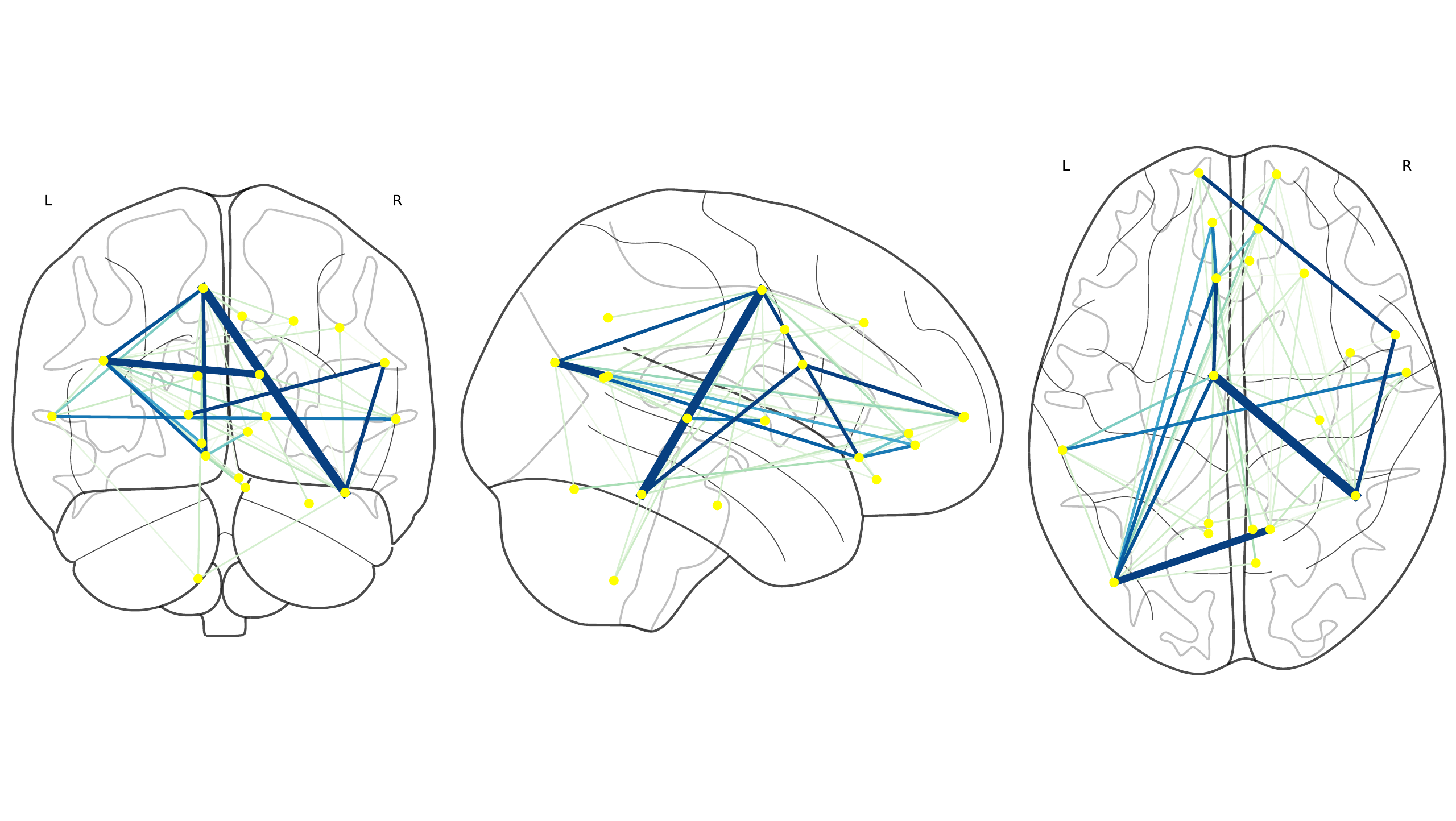}
    \vspace{-0.95cm} 
  \caption{gF$\leq 8$: the default mode module } 
\end{subfigure}
\centering
\begin{subfigure}{0.49\linewidth}
\centering
\vspace{-0.1cm}
  \includegraphics[width=7cm]{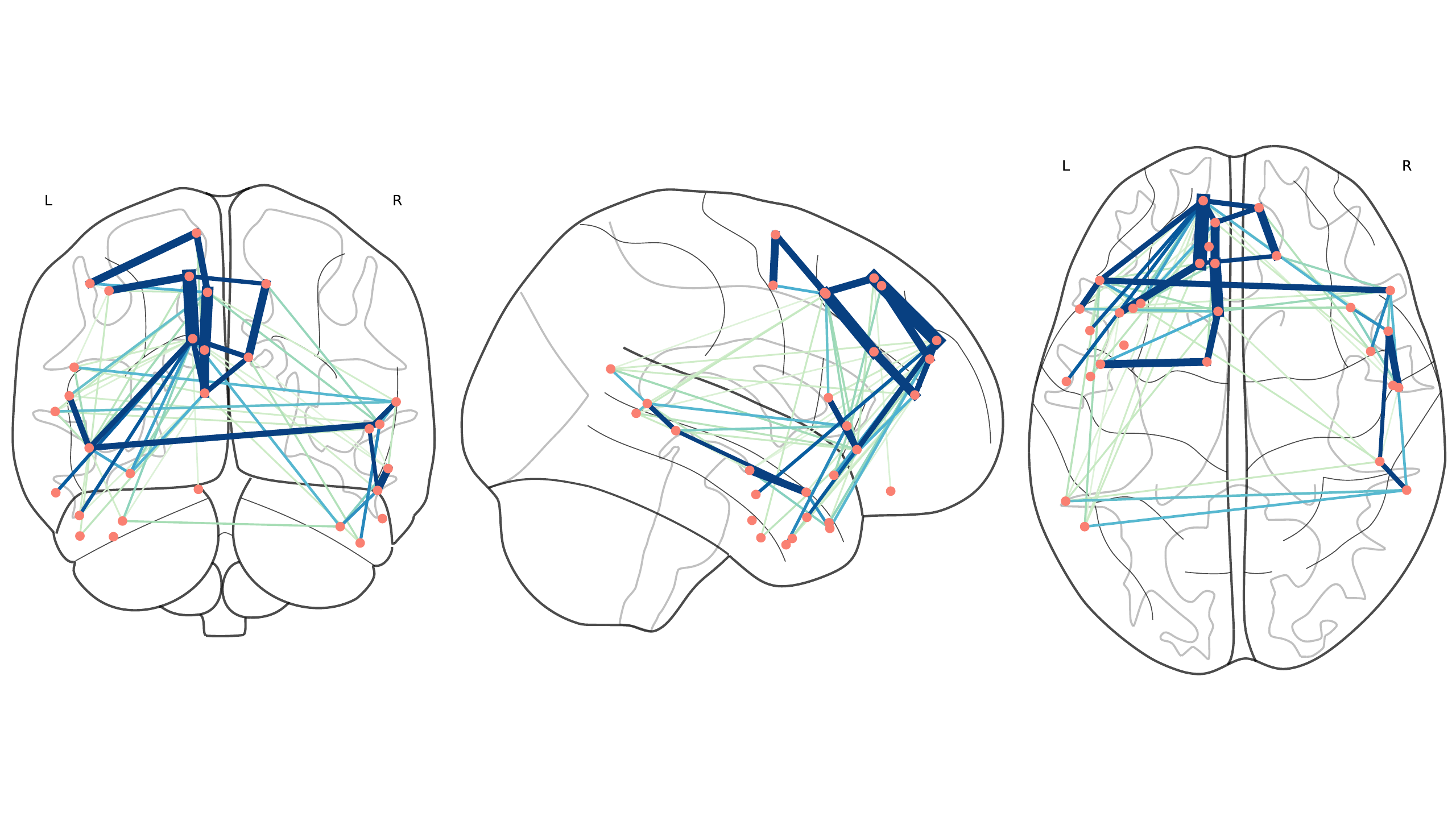}
    \vspace{-0.95cm} 
  \caption{gF$\geq 23$: the medial frontal module } 
  %\par\smallskip % if more vertical separation needed, use \bigskip
   \vspace{-0.09cm}
  \includegraphics[width=7cm]{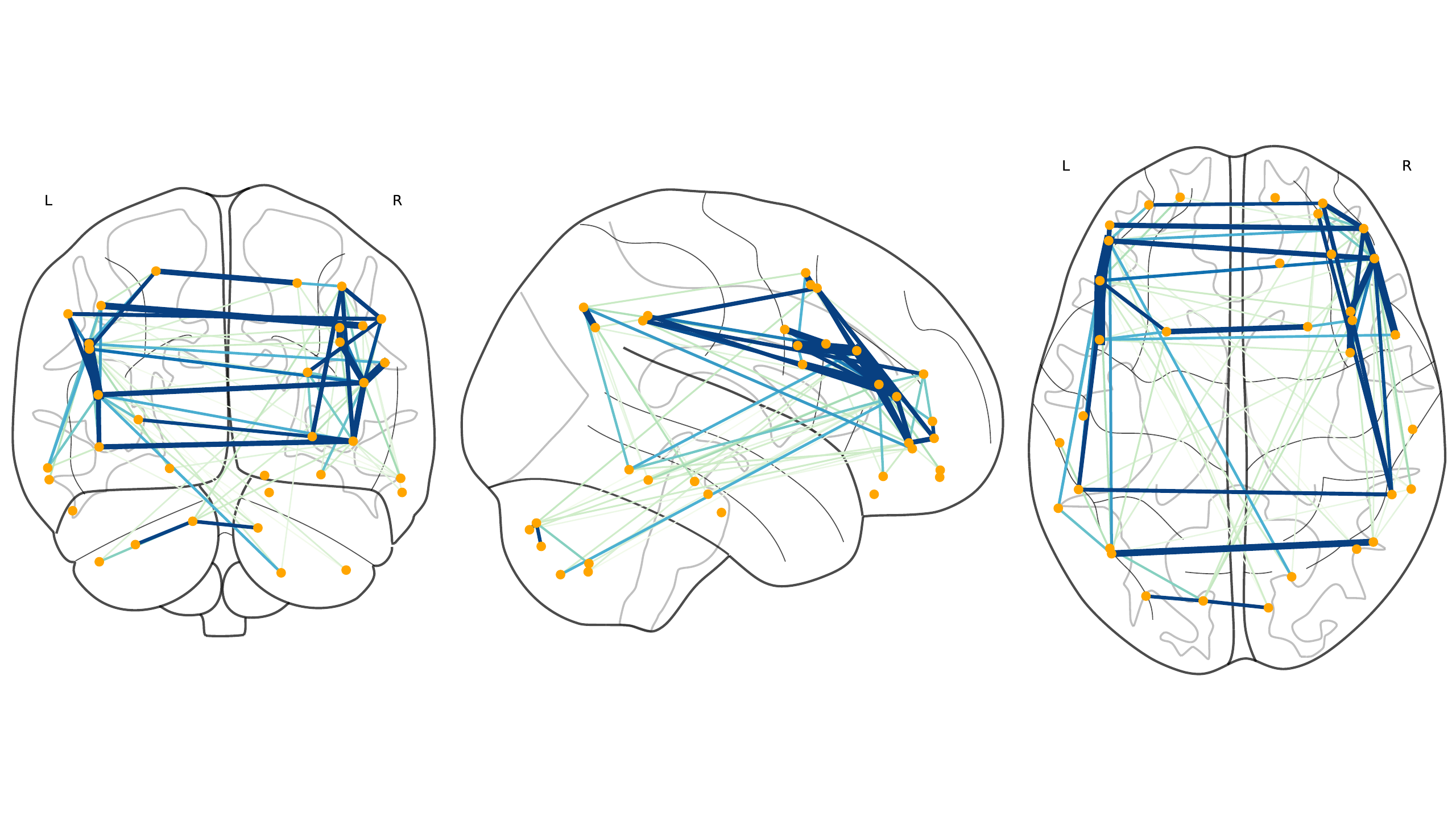}
    \vspace{-0.95cm} 
  \caption{gF$\geq 23$: the frontoparietal module } 
  %\par\smallskip % if more vertical separation needed, use \bigskip
   \vspace{-0.09cm}
  \includegraphics[width=7cm]{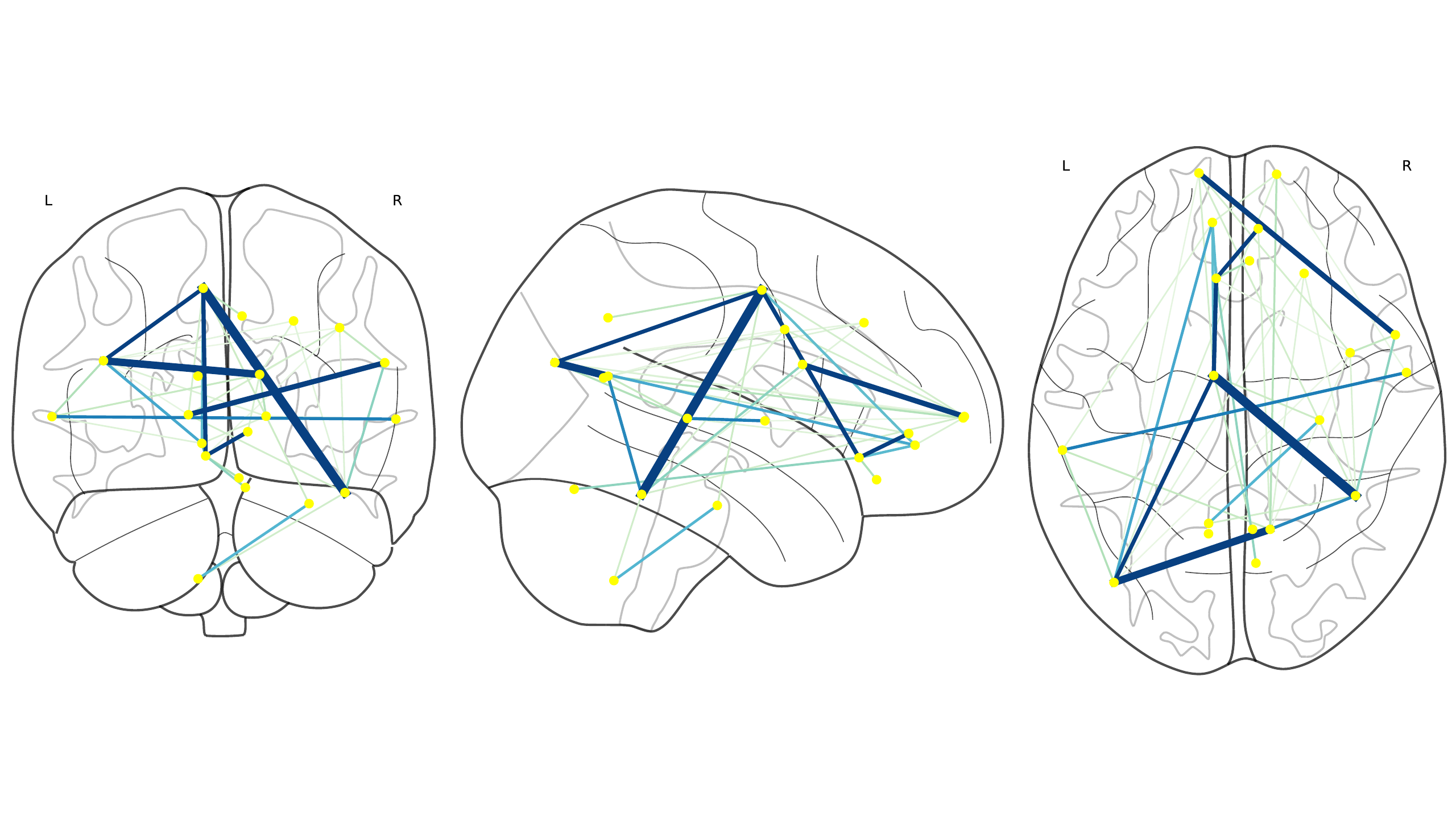}
    \vspace{-0.95cm} 
  \caption{gF$\geq 23$: the default mode module} 
\end{subfigure}
%\vspace{-0.1cm}
\centering
\caption{\label{hcp_network}{\small The connectivity strengths at fluid intelligence gF$\leq 8$ (left column) and gF$ \geq 23$ (right column).
Salmon, orange and yellow nodes represent the ROIs in the medial frontal, frontoparietal and  default mode modules, respectively. 
The edge color ranging from light to dark corresponds to the value of $W_{jk}$ from small to large.}}
\vspace{-0.1cm}
\end{figure}

Figure~\ref{hcp_network} displays the functional connectivity networks identified for subjects with gF $\leq 8$ and gF $\geq 23.$ To assess the impact of gF on functional connectivity, we measure connectivity strength through $W_{jk}$ for $j,k \in [p]$ with larger values yielding stronger connectivity. A few patterns are apparent. First, we observe increased connectivity and strength in the medial frontal and frontoparietal modules for subjects with gF$ \geq 23.$ This observation aligns with and supports the existing literature, which has reported a strong positive association between the functional connectivity and intellectual performance in these two modules \cite[]{van2009}.
Second, it is evident that the default mode module exhibits declined connectivity and strength for subjects with higher intelligence scores, which is in line with the previous finding in neuroscience that reduced activity in the default mode module is associated with improved cognitive performance \cite[]{anticevic2012}.

\appendix
% 1.1, 0.92
\spacingset{1}
\bibliography{ref}
\bibliographystyle{dcu}

\newpage
%jasa 1.7
\linespread{1.6}\selectfont
\begin{center}
	{\noindent \bf \large Supplementary material  to ``Functional knockoffs selection with applications to functional data analysis in high dimensions"}\\
\end{center}
	\begin{center}
		{\noindent Xinghao Qiao, Mingya Long and Qizhai Li}
	\end{center}
\bigskip

\setcounter{page}{1}
\setcounter{section}{0}
\renewcommand\thesection{\Alph{section}}
\newtheorem{lemmaS}{Lemma}
\setcounter{lemmaS}{0}
\renewcommand{\thelemmaS}{\Alph{section}\arabic{lemmaS}}
\setcounter{equation}{0}
\renewcommand{\theequation}{S.\arabic{equation}}

This supplementary material contains all technical proofs in Section~\ref{supp.pf.FDR}, %in Section~\ref{supp.pf.Power}. 
additional methodological derivations in Section~\ref{supp.details} and additional empirical results in Section~\ref{sec:additional.emp}.

\section{Technical proofs}\label{supp.pf.FDR} % for FDR controlling
\subsection{Proof of Theorem~\ref{thm_fdr_sflr} }
To prove Theorem~\ref{thm_fdr_sflr}, we firstly present some technical lemmas with their proofs.

\begin{lemmaS}
\label{lemmaEx.func}
For any subset $G\subseteq S^{c},$  $\big(\bX(\cdot)^\T, \widetilde{\bX}(\cdot)^\T\big) ~\Big|~   Y   \overset{D}{=} \big(\bX(\cdot)^\T, \widetilde{\bX}(\cdot)^\T\big)_{\tswap(G)} ~\Big|~   Y.$
\end{lemmaS}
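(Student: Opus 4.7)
The plan is to adapt the classical exchangeability argument from Candès et al.\ (2018) to the Hilbert-space-valued setting. Rather than working directly with the conditional law given $Y$, I would first establish the stronger unconditional identity
\[
\bigl((\bX(\cdot)^\T, \widetilde{\bX}(\cdot)^\T), Y\bigr) \overset{D}{=} \bigl((\bX(\cdot)^\T, \widetilde{\bX}(\cdot)^\T)_{\tswap(G)}, Y\bigr)
\]
for every $G \subseteq S^c$. The lemma then follows by a standard disintegration: conditioning on $Y$ on both sides preserves the equality in distribution of the other marginal.

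The engine of the proof is the observation that the conditional distribution of $Y$ given the enlarged pair $(\bX(\cdot), \widetilde{\bX}(\cdot))$ depends only on the nonnull block $\bX_S(\cdot)$. To see this, I would apply Property (ii) of Definition~\ref{def1}, which gives $\mathcal{L}\bigl(Y \mid \bX(\cdot), \widetilde{\bX}(\cdot)\bigr) = \mathcal{L}\bigl(Y \mid \bX(\cdot)\bigr)$, and then invoke the joint version of the null-set property $Y \perp \bX_{S^c}(\cdot) \mid \bX_S(\cdot)$, yielding $\mathcal{L}\bigl(Y \mid \bX(\cdot)\bigr) = \mathcal{L}\bigl(Y \mid \bX_S(\cdot)\bigr)$. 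The joint conditional independence strengthening of (\ref{f.nullset}) is available in each of the concrete examples by Lemmas~\ref{lemmaeqSFLR} and \ref{lemmaeqFFLR} (and a parallel fact for FGGM), where the explicit regression/Gaussian structure makes pairwise and joint conditional independence coincide.

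With this reduction, fix $G \subseteq S^c$. Because $G \cap S = \emptyset$, the swap operation on coordinates in $G$ leaves $\bX_S(\cdot)$ pointwise unchanged. Hence the conditional law of $Y$ given $(\bX, \widetilde{\bX})_{\tswap(G)}$ is the same measurable function of $\bX_S(\cdot)$ as the conditional law of $Y$ given the un-swapped pair. Combining this with Property (i) of Definition~\ref{def1}, which asserts $(\bX^\T, \widetilde{\bX}^\T) \overset{D}{=} (\bX^\T, \widetilde{\bX}^\T)_{\tswap(G)}$, both factors in the disintegration
\[
\mathcal{L}\bigl((\bX, \widetilde{\bX}), Y\bigr) = \mathcal{L}(\bX, \widetilde{\bX}) \otimes \mathcal{L}\bigl(Y \mid (\bX, \widetilde{\bX})\bigr)
\]
are invariant under the swap, delivering the displayed equality in distribution.

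The main obstacle is measure-theoretic bookkeeping in infinite dimensions: I need to ensure regular conditional distributions of $Y$ given $\bX(\cdot)$ and given $\bX_S(\cdot)$ exist (which holds since $\boldsymbol{\mathcal{H}}$ and its subproducts are Polish spaces) and that the strengthening from pairwise to joint conditional independence of $Y$ from $\bX_{S^c}(\cdot)$ given $\bX_S(\cdot)$ is justified in each application. Once these technicalities are in place, the combinatorial structure of the argument is identical to the finite-dimensional model-X exchangeability proof.
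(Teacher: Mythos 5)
Your skeleton (reduce to invariance of the joint law of $\big(\bX,\widetilde\bX,Y\big)$, then use Properties (i) and (ii) of Definition~\ref{def1} and disintegrate) matches the paper's, but the step that does the real work is different, and as written it has a genuine gap. You replace the null structure by the \emph{joint} conditional independence $Y \perp \bX_{S^c}(\cdot) \mid \bX_S(\cdot)$, i.e.\ $\mathcal{L}(Y\mid \bX)=\mathcal{L}(Y\mid \bX_S)$. The lemma, however, is stated for the general framework of Section~2, where $S^c$ is defined in (\ref{f.nullset}) only through the \emph{pairwise} statements $Y \perp X_j(\cdot)\mid X_{-j}(\cdot)$, $j\in S^c$. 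Pairwise conditional independence does not imply the joint version without extra conditions (an intersection/graphoid-type property or model-specific structure), and the lemma assumes none. Your own acknowledgement that the strengthening must be "justified in each application" concedes the point: Lemmas~\ref{lemmaeqSFLR} and \ref{lemmaeqFFLR} identify the pairwise null set with the zero-coefficient set but do not themselves deliver the joint statement, so your argument proves the lemma only under an added hypothesis (or only for the specific linear/Gaussian examples, where it can indeed be checked), not at the stated level of generality.

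The missing idea — and the route the paper takes, following the model-X exchangeability proof of Cand\`es et al.\ (2018) — is a sequential, one-coordinate-at-a-time swap that needs nothing beyond the pairwise definition. Conditioning on the swapped pair and applying Property (ii) reduces the problem to showing $\mathcal{L}(Y\mid \bX=\bx')=\mathcal{L}(Y\mid \bX=\bx)$, where $\bx'$ differs from $\bx$ only in the coordinates in $G=\{1,\dots,m\}\subseteq S^c$. One then uses $Y \perp X_1(\cdot)\mid X_{-1}(\cdot)$ to replace $\widetilde x_1$ by $x_1$ (the conditional law of $Y$ given $\bX$ factors through $\bX_{-1}$), then uses $Y \perp X_2(\cdot)\mid X_{-2}(\cdot)$ for the next index, and iterates until $G$ is exhausted; each step invokes exactly one pairwise null condition. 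Your regular-conditional-distribution remarks (Polish spaces, a.e.\ versions) are fine and apply equally to either route, but to prove the lemma as stated you should swap in this leave-one-out fashion rather than assert $\mathcal{L}(Y\mid\bX)=\mathcal{L}(Y\mid\bX_S)$.
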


\begin{proof}
It is  equivalent to show that $\big(\bX(\cdot)^\T, \widetilde{\bX}(\cdot)^\T,  Y\big)   \overset{D}{=} \big(\{\bX(\cdot)^\T, \widetilde{\bX}(\cdot)^\T\}_{\tswap(G)},  Y\big) $.
Moreover, since $\big( \bX(\cdot)^\T, \widetilde{ \bX}(\cdot)^\T\big)  \overset{D}{=} \big( \bX(\cdot)^\T, \widetilde{ \bX}(\cdot)^\T\big)_{\tswap(G)}$ from Property~(i) in Definition~\ref{def1},
it suffices to prove that 
\begin{equation}\label{eq.eX}
   Y  ~\Big|~\big( \bX(\cdot)^\T, \widetilde{ \bX}(\cdot)^\T\big)_{\tswap(G)}   \overset{D}{=}  Y ~\Big|~ \big( \bX(\cdot)^\T, \widetilde{ \bX}(\cdot)^\T\big).
\end{equation}
By Properties~(i) and (ii) in Definition~\ref{def1}, we have 
$$
\begin{array}{cll}
  Y ~\Big|~ \big[\{ \bX(\cdot)^\T, \widetilde{ \bX}(\cdot)^\T\}_{\tswap(G)}= \{\bx(\cdot)^\T, \widetilde\bx(\cdot)^\T\} \big] &\overset{D}{=}& Y ~\Big|~ \big[ \{\bX(\cdot)^\T, \widetilde{ \bX}(\cdot)^\T\}= \{\bx(\cdot)^\T, \widetilde\bx(\cdot)^\T\}_{\tswap(G)} \big] \\  
  &\overset{D}{=}& Y~\Big|~\big\{\bX(\cdot)^\T = \bx'(\cdot)^\T\big\},
\end{array}
$$ where $\bx(\cdot) = 
\big(x_1(\cdot),\dots,x_p(\cdot)\big)^\T$, $\widetilde\bx(\cdot) = 
\big(\widetilde x_1(\cdot),\dots,\widetilde x_p(\cdot)\big)^\T$, and the $j$th element of $\bx'(\cdot)$ is $x'_j(\cdot) = \widetilde{x}_j(\cdot)$ if $j \in G$
 and $x'_j(\cdot) = x_j(\cdot)$ otherwise.
 Without loss of generality, assuming  that $G = \{1,2,\dots, m\} \subseteq S^{c}$, we have
 \begin{equation}\label{seqminus}
 \begin{split}
       Y~\Big|~\big\{\bX(\cdot)^\T = \bx'(\cdot)^\T\big\}& \overset{D}{=}  Y~\Big|~\big\{\bX_{2:p}(\cdot)^\T = \bx_{2:p}'(\cdot)^\T\big\} \\
                          & \overset{D}{=}  Y ~\Big|~ \big\{X_1(\cdot) = x_1(\cdot), \bX_{2:p}(\cdot)^\T = \bx_{2:p}'(\cdot)^\T\big\},
 \end{split}
 \end{equation}
where we use $\bX_{2:p}(\cdot)$ to denote $\big(X_{2}(\cdot), \dots, X_{p}(\cdot)\big)^{\T}$, $\bx_{2:p}'(\cdot) = \big(x'_{2}(\cdot), \dots, x'_{p}(\cdot)\big)^{\T}$ and the above two equalities hold since $Y$ and $X_1(\cdot)$ are independent conditional on $\bX_{2:p}(\cdot)$.
(\ref{seqminus}) shows that $ Y  ~\Big|~\big( \bX(\cdot)^\T, \widetilde{ \bX}(\cdot)^\T\big)_{\tswap(G)}    \overset{D}{=} Y ~\Big|~   \big(\bX(\cdot)^\T, \widetilde{ \bX}(\cdot)^\T \big)_{\tswap (G  \backslash \{1\} ) }.$
Repeating this strategy until $G$ is empty, we obtain that (\ref{eq.eX}) holds, which completes our proof. 
Note that the response $Y$ in our proof can be either scalar or functional, we use the same notation for simplicity. 
\end{proof}
%\textbf{Proof of Lemma~\ref{lemmaEx.func}}.

We will next demonstrate that the estimated  FPC scores in SFLR satisfy Condition \ref{conscore}. 

\begin{corollary}
\label{CorSF}
For any subset $G \subseteq S^{c}$, $\big({\widehat{\bxi}^\T_i}, {\widecheck{{\bxi}}^\T_i} \big) ~\Big|~   Y_i   \overset{D}{=} \big({\widehat{\bxi}^\T_i}, {\widecheck{{\bxi}}^\T_i}\big)_{\text{swap}(G)}~\Big|~   Y_i $.
%Take any subset $G \subset S^{c}$, then $\big({\widehat{\bxi}^\T_i}, {\widecheck{{\bxi}}^\T_i} \big) ~\Big|~   Y_i   \overset{D}{=} \big({\widehat{\bxi}^\T_i}, {\widecheck{{\bxi}}^\T_i}\big)_{\text{swap}(G)}~\Big|~   Y_i $.
\end{corollary}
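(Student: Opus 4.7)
The plan is to derive Corollary~\ref{CorSF} from Lemma~\ref{lemmaEx.func} by exploiting the fact that the estimated FPC scores are a deterministic Borel-measurable function of the pooled sample. First, I would invoke Lemma~\ref{lemmaEx.func} for the single i.i.d.\ subject $i$: for any $G\subseteq S^{c}$, $(\bX_i(\cdot)^\T,\widetilde{\bX}_i(\cdot)^\T)\mid Y_i \overset{D}{=} (\bX_i(\cdot)^\T,\widetilde{\bX}_i(\cdot)^\T)_{\text{swap}(G)}\mid Y_i$, and combine it with independence across subjects to lift this to the joint swap-invariance of the full pooled sample $\{(\bX_{i'}(\cdot),\widetilde{\bX}_{i'}(\cdot))\}_{i'\in[n]}$, jointly with $Y_i$, under $\text{swap}(G)$ applied uniformly to all subjects.

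Second, I would note that $(\widehat{\bxi}_i,\widecheck{\bxi}_i)$ is obtained from $\{(\bX_{i'}(\cdot),\widetilde{\bX}_{i'}(\cdot))\}_{i'\in[n]}$ via a deterministic map $\Phi$: compute $\hat{\mu}_j$ and $\hat{\phi}_{jl}$ from the sample covariance of $\{X_{i'j}\}_{i'\in[n]}$, then project $X_{ij}$ and $\widetilde{X}_{ij}$ onto $\hat{\phi}_{jl}$. Applying $\Phi$ to both sides of the distributional identity from the first step preserves the equality, so the output of $\Phi$ on the swapped input has the same joint distribution with $Y_i$ as $(\widehat{\bxi}_i,\widecheck{\bxi}_i)$.

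Third, I would identify the output of $\Phi$ on the swapped input. For $j\in G$, the functional-level swap causes the eigendecomposition to be performed on the knockoff sample, yielding new eigenfunctions $\widetilde{\hat{\phi}}_{jl}$, and the $j$th blocks of the output become $(\langle \widetilde{X}_{ij}-\widetilde{\hat{\mu}}_j,\widetilde{\hat{\phi}}_{jl}\rangle,\,\langle X_{ij}-\widetilde{\hat{\mu}}_j,\widetilde{\hat{\phi}}_{jl}\rangle)$. Invoking the marginal swap-invariance $(X_{ij},\widetilde{X}_{ij},\hat{\mu}_j,\hat{\phi}_{jl},Y_i)\overset{D}{=}(\widetilde{X}_{ij},X_{ij},\widetilde{\hat{\mu}}_j,\widetilde{\hat{\phi}}_{jl},Y_i)$, obtained by applying the deterministic FPCA map to the exchangeable pair $(X_j,\widetilde{X}_j)$ (for $j\in S^c$), and then composing with the projection functional $(a,b,\mu,\phi,y)\mapsto(\langle b-\mu,\phi\rangle,\langle a-\mu,\phi\rangle,y)$, I would conclude that the output on the swapped input has the same joint distribution with $Y_i$ as the original output with the $j$th blocks of $\widehat{\bxi}_i$ and $\widecheck{\bxi}_i$ interchanged for $j\in G$. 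Chaining these equalities yields the claim.

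The main obstacle will lie in the third step: reconciling the ``swap-then-FPCA'' output, which uses eigenfunctions derived from the knockoff sample, with the desired ``FPCA-then-swap'' output, which retains the eigenfunctions from the original sample. These two are generally different random variables pointwise, and equating their joint laws with $Y_i$ requires carefully invoking the marginal joint exchangeability $(X_j,\widetilde{X}_j,\hat{\phi}_{jl})\overset{D}{=}(\widetilde{X}_j,X_j,\widetilde{\hat{\phi}}_{jl})$ guaranteed by Definition~\ref{def1}(i) for $j\in S^c$, and confirming that this symmetry survives composition with the projection functional and conditioning on $Y_i$ via Property~(ii) of Definition~\ref{def1}.
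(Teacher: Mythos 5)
The gap is in your third step, and it is exactly where you predicted trouble. Steps 1--2 correctly give $\big(\widehat\bxi_i,\widecheck\bxi_i,Y_i\big)\overset{D}{=}\big(\Phi(\text{swapped sample}),Y_i\big)$, where for $j\in G$ the $j$th blocks of $\Phi(\text{swapped sample})$ are projections of $(\widetilde X_{ij},X_{ij})$ onto the knockoff-sample quantities $\widetilde{\hat\mu}_j,\widetilde{\hat\phi}_{jl}$. To finish you must identify this, in law jointly with $Y_i$, with $\big(\widehat\bxi_i,\widecheck\bxi_i\big)_{\text{swap}(G)}$, whose $j$th blocks use the original-sample quantities $\hat\mu_j,\hat\phi_{jl}$. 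Your proposed fix --- apply the swap-invariance $\big(X_{ij},\widetilde X_{ij},\hat\mu_j,\hat\phi_{jl},Y_i\big)\overset{D}{=}\big(\widetilde X_{ij},X_{ij},\widetilde{\hat\mu}_j,\widetilde{\hat\phi}_{jl},Y_i\big)$ composed with the flipping functional $(a,b,\mu,\phi,y)\mapsto\big(\langle b-\mu,\phi\rangle,\langle a-\mu,\phi\rangle,y\big)$ --- does not close the loop: its left side indeed produces the target blocks $(\check\xi_{ijl},\hat\xi_{ijl})$, but its right side produces projections of $(X_{ij},\widetilde X_{ij})$ onto the knockoff-derived basis. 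So you have only shown that ``FPCA-then-swap'' is equidistributed with yet another ``knockoff-basis'' object, while step 2 relates ``FPCA'' to a different ``knockoff-basis'' object; chaining them requires knowing that scores computed in the original-derived basis and in the knockoff-derived basis (with matching block order) have the same joint law with $Y_i$, which is a statement of exactly the same kind as the corollary. Every composition of the data-level exchangeability with a deterministic map merely shuttles between these two families, so the argument as proposed is circular rather than a proof.

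The paper's proof avoids this mismatch by never letting the basis change under the swap: starting from the single-subject identity $\big(\bX_i,\widetilde\bX_i,Y_i\big)\overset{D}{=}\big((\bX_i,\widetilde\bX_i)_{\text{swap}(G)},Y_i\big)$ (Lemma~\ref{lemmaEx.func}), it writes the characteristic-function identity for a test element $\mathbf t$, then plugs in $\mathbf t$ built from the estimated eigenfunctions $\widehat\bphi_j$ and argues conditionally on $\mathbf t$, so that both sides are linear functionals of scores formed with the \emph{same} directions $\hat\phi_{jl}$; the tower property then yields the unconditional claim. The ingredient your route is missing is precisely what that plug-in step supplies: a conditional (given $\hat\mu_j,\hat\phi_{jl}$) swap-invariance of $\big(\bX_i,\widetilde\bX_i,Y_i\big)$. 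Be aware that this does not follow from Lemma~\ref{lemmaEx.func} plus measurability alone, since $\hat\mu_j$ and $\hat\phi_{jl}$ are computed from the original curves only and hence are not swap-invariant statistics of the augmented sample; any rescue of your unconditional route would have to confront that same issue rather than appeal again to the pooled-sample exchangeability.
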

\begin{proof} 
It is equivalent to show that $({\widehat{\bxi}_i}^\T, {\widecheck{{\bxi}}_i}^\T,    Y_i) \overset{D}{=} \big\{({\widehat{\bxi}_i}^\T, {\widecheck{{\bxi}}_i} ^\T)_{\text{swap}(G)},   Y_i \big\}$. 
Referring to the result in Lemma~\ref{lemmaEx.func} and considering that $\big\{\big(\bX_i(\cdot)^\T, \widetilde{\bX}_i(\cdot)^\T,   Y_i\big)\big\}_{i\in[n]}$ are i.i.d., we can establish that $\big(\bX_i(\cdot)^\T, \widetilde{\bX}_i(\cdot)^\T\big) ~\Big|~   Y_i   \overset{D}{=} \big(\bX_i(\cdot)^\T, \widetilde{\bX}_i(\cdot)^\T\big)_{\tswap(G)} ~\Big|~   Y_i,$ where the response $Y_i$ is scalar in SFLR.
%Based on the conclusion in Lemma \ref{lemmaEx.func} and each row of the data is i.i.d., we have that 
%$\big(\bX_i(\cdot)^\T, \widetilde{\bX}_i(\cdot)^\T\big) ~\Big|~   Y_i   \overset{D}{=} \big(\bX_i(\cdot)^\T, \widetilde{\bX}_i(\cdot)^\T\big)_{\tswap(G)} ~\Big|~   Y_i, $
%where $Y$ is scalar in SFLR.  
This implies that $\big(\bX_i(\cdot)^\T, \widetilde{\bX}_i(\cdot)^\T,   Y_i\big)   \overset{D}{=} \big(\{\bX_i(\cdot)^\T, \widetilde{\bX}_i(\cdot)^\T\}_{\tswap(G)}, Y_i\big).$ This further implies that, for any $\mathbf{t} \in (\boldsymbol{\mathcal{H}}^2, \eR)$ and $\iota  = \sqrt{-1}$, 
%which means that  
%for any $\mathbf{t}  \in (\boldsymbol{\mathcal{H}}^2, \eR)$,  we have 
\begin{equation*}
    \eE \Big[\exp\Big\{\iota \big\langle \mathbf{t} , (\bX_i^\T, \widetilde{\bX}_i^\T,   Y_i)^\T \big\rangle \Big\}\Big] = \eE \Big[\exp\Big\{\iota \big\langle \mathbf{t} , \big\{(\bX_i^\T, \widetilde{\bX}_i^\T)_{\tswap(G)}, Y_i\big\}^\T \big\rangle\Big\}\Big].
    %\eE \Big[\exp\Big\{\iota \big\langle \mathbf{t} , \big(\bX_i(\cdot)^\T, \widetilde{\bX}_i(\cdot)^\T,   Y_i\big)^\T \big\rangle \Big\}\Big] = \eE \Big[\exp\Big\{\iota \big\langle \mathbf{t} , \big(\{\bX_i(\cdot)^\T, \widetilde{\bX}_i(\cdot)^\T\}_{\tswap(G)}, Y_i\big)^\T \big\rangle\Big\}\Big]. 
\end{equation*}
%where 
Given %$\mathbf{t} = 
${\bf t} = (\mathbf{c}_1^\T \widehat\bphi_{1},\dots, \mathbf{c}_p^\T \widehat\bphi_{p}, \widetilde{\mathbf{c}}_1^\T \widehat\bphi_{1}, \dots, \widetilde{\mathbf{c}}_p^\T \widehat\bphi_{p}, c_Y)^\T$, where 
%$c_{jl},  \tilde c_{jl}, t_y\in \eR$ for $l\in [d_j],$ 
$\mathbf{c}_{j} = (c_{j1}, \dots, c_{jd_j}, 0 ,0 ,\dots )^\T$, $\widetilde{\mathbf{c}}_{j} = (\tilde c_{j1},  \dots, \tilde c_{jd_j}, 0 ,0 ,\dots )^\T$,  
and $\widehat{\bphi}_j = (\hat\phi_{j1}, \hat\phi_{j2}, \dots )^\T$ for  $j \in [p],$ we can establish that 
%Conditional on $\mathbf{t} = \widehat{\bf t} = (\mathbf{c}_1^\T \widehat\bphi_{1},\dots, \mathbf{c}_p^\T \widehat\bphi_{p}, \widetilde{\mathbf{c}}_1^\T \widehat\bphi_{1}, \dots, \widetilde{\mathbf{c}}_p^\T \widehat\bphi_{p}, t_y)^\T,$
%for any $ t_y \in \eR,$ any $\mathbf{c}_{j} = (c_{j1}, \dots, c_{jd_j}, 0 ,0 ,\dots )^\T$, and
%any $\widetilde{\mathbf{c}}_{j} = (\tilde c_{j1},  \dots, \tilde c_{jd_j}, 0 ,0 ,\dots )^\T$, where $\widehat{\bphi}_j = (\hat\phi_{j1}, \hat\phi_{j2}, \dots, )^\T$,  $j \in [p],$ we have 
\begin{equation}\label{co12}
  \eE \Big[\exp\Big\{\iota \big\langle \mathbf{c} , \big({\widehat{\bxi}_i}^\T, {\widecheck{{\bxi}}^\T_i}  ,Y_i \big)^\T \big\rangle \Big\} \Big| \mathbf{t} \Big]
  %= \widehat{\bf t}     
    = \eE \Big[\exp\Big\{\iota \big\langle \mathbf{c}  , \big\{({\widehat{\bxi}_i^\T}, {\widecheck{{\bxi}}_i^\T})_{\tswap(G)} ,Y_i \big\}^\T \big\rangle\Big\}  \Big| \mathbf{t}  \Big], %= \widehat{\bf t}
\end{equation}
for any $\mathbf{c} = (\Bar{\mathbf{c}}_{1}^{\T}, \dots, \Bar{\mathbf{c}}_{p}^\T,\widetilde{\Bar{\mathbf{c}}}_{1}^\T, \dots, \widetilde{\Bar{\mathbf{c}}}_{p}^\T,c_Y)^\T$, %\in \eR^{2\sum_{j=1}^pd_j},$ 
where 
$\Bar{\mathbf{c}}_{j} = (c_{j1}, \dots, c_{jd_j})^\T$,
$\widetilde{\Bar{\mathbf{c}}}_{j} = (\tilde c_{j1},  \dots, \tilde c_{jd_j})^\T$. 
By (\ref{co12}) and the total expectation formula, the joint characteristic function of $({\widehat{\bxi}_i}^\T, {\widecheck{{\bxi}}_i}^\T,    Y_i) $ is equal to that of $ \big\{({\widehat{\bxi}_i}^\T, {\widecheck{{\bxi}}_i}^\T )_{\tswap(G)},   Y_i \big\},$ which implies that $({\widehat{\bxi}_i}^\T, {\widecheck{{\bxi}}_i}^\T,    Y_i) \overset{D}{=} \big\{({\widehat{\bxi}_i}^\T, {\widecheck{{\bxi}}_i} ^\T)_{\tswap(G)},   Y_i \big\},$ 
and thus completes our proof. 
\end{proof}
%\textbf{Proof of Corollary~\ref{CorSF}}.

Following the above corollary, we next validate Lemma~\ref{lemmacoin} in the context of SFLR. 
 \begin{proof}[Proof of Lemma~\ref{lemmacoin} in SFLR]
    Denote $\widehat {\bXi} = (\widehat\bxi_{1},\dots,\widehat\bxi_{n})^\T\in \eR^{n \times \sum_j d_j}$ , $ \widecheck {\boldsymbol \Xi} = (\widecheck\bxi_{1},\dots,\widecheck\bxi_{n})^\T \in \eR^{n \times \sum_j d_j}$ and $\bY= (Y_1, \dots, Y_n)^\T \in \eR^n$. 
    First, note
    $W_j( \widehat {\boldsymbol \Xi}, \widecheck{ {\boldsymbol \Xi}},\bY)$ is a function of $\widehat {\boldsymbol \Xi}, \widecheck{ {\boldsymbol \Xi}}$ and $\bY$.  %denoted by $ W_j( \widehat {\boldsymbol \Xi}, \widecheck{ {\boldsymbol \Xi}},\bY)$. 
    Since $W_j = \| \widehat\bbb_j\|- \| \widehat\bbb_{p+j}\|$ in SFLR for each $j \in [p]$,  
    %and $w_j$ is antisymmetric function satisfying $w_j(\cdot,*) = -w_j(*,\cdot)$ and 
    %$Z_j = \| \widehat\bbb_j\|$,  $\widetilde{Z}_j = \| \widehat\bbb_{p+j}\|$,
      the flip-sign property of $W_j( \widehat {\boldsymbol \Xi}, \widecheck{ {\boldsymbol \Xi}},\bY)$ holds. That is,  for any subset $G  \subseteq [p],$ 
$$
W_j\big(\{\widehat {\boldsymbol \Xi}, \widecheck{ {\boldsymbol \Xi}} \}_{\tswap(G)},\bY\big)  = \left\{
\begin{array}{cl}
 W_j\big(\{\widehat {\boldsymbol \Xi}, \widecheck{ {\boldsymbol \Xi}} \}_{\tswap(G)},\bY\big) , & j \notin G   \\
 - W_j\big(\{\widehat {\boldsymbol \Xi}, \widecheck{ {\boldsymbol \Xi}}\}_{\tswap(G)},\bY\big), & j \in G,
\end{array}
\right.
$$ 
where $(\widehat {\boldsymbol \Xi}, \widecheck{ {\boldsymbol \Xi}})_{\tswap(G)}$ is obtained from $(\widehat {\boldsymbol \Xi}, \widecheck{ {\boldsymbol \Xi}})$ by swapping the corresponding estimated FPC scores of $X_j(\cdot)$ and $\widetilde{X}_j(\cdot)$ for each $j\in G$. 
{ Second, let $\boldsymbol W = (W_1,\dots,W_p)^\T$, and consider any subset $G \subseteq S^{c}$. By swapping variables in $G$, we define %$\boldsymbol W_{\tswap{(G)}}$ as
$$
\boldsymbol W_{\tswap{(G)}}  \overset{\triangle}{=} \Big(W_1\big(\{\widehat {\boldsymbol \Xi}, \widecheck{ {\boldsymbol \Xi}}\}_{\tswap(G)},\bY\big), \dots, W_p\big(\{\widehat {\boldsymbol \Xi}, \widecheck{ {\boldsymbol \Xi}} \}_{\tswap(G)},\bY\big)\Big)^\T.
$$ 
%Secondly, let $\boldsymbol W = (W_1,\dots,W_p)^\T$, and consider any subset $G \subseteq S^{c}$. Considering swapping variables in $G$, we have 
%$$
%\boldsymbol W_{\tswap{(G)}}  \overset{\triangle}{=} \Big(W_1\big(\{\widehat {\boldsymbol \Xi}, \widecheck{ {\boldsymbol \Xi}}\}_{\tswap(G)},\bY\big), \dots, W_p\big(\{\widehat {\boldsymbol \Xi}, \widecheck{ {\boldsymbol \Xi}} \}_{\tswap(G)},\bY\big)\Big)^\T.
%$$
By Corollary~\ref{CorSF}, it follows that $\big((\widehat\bXi,\widecheck{\bXi}),\bY\big)   \overset{D}{=} \big((\widehat\bXi,\widecheck{\bXi})_{\text{swap}(G)},\bY\big)$, which consequently implies $\boldsymbol W \overset{D}{=} \boldsymbol W_{\tswap{(G)}}$.} 
Finally, consider $S^c_- = \{j \in S^{c}: \delta_j = -1 \}$, where $\boldsymbol \delta = (\delta_1, \dots, \delta_p)^\T$ represents a sequence of independent random variables. These $\delta_j$ variables follow the Rademacher distribution if $j \in S^{c}$ and $\delta_j =1$ otherwise. 
Through the first step, we derive $\boldsymbol W _{\text{swap}(S^c_-)} = (\delta_1 W_1, \dots,\delta_p W_p)^\T.$ Subsequently, from the second step, we obtain $\boldsymbol W _{\text{swap}(S^c_-)} \overset{D}{=} \boldsymbol W.$ 
Combining the above results, we have $(\delta_1 W_1, \dots,\delta_p W_p)^\T \overset{D}{=} \boldsymbol W,$
which completes the proof. 
 \end{proof}
%\textbf{Proof of Lemma~\ref{lemmacoin} in SFLR} .

\begin{lemmaS}
    \label{lemmaeqSFLR}
   Suppose that  Condition~\ref{irrSFLR} holds. Then $\|\beta_{j}\| = 0$  if and only if $j\in S^c$, where $S^c$ is defined in (\ref{f.nullset}). 
\end{lemmaS}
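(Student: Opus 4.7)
The plan is to establish the two directions of the equivalence separately, with the reverse direction being the more substantive one and the only place Condition~\ref{irrSFLR} is invoked.

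For the forward direction ($\|\beta_j\| = 0 \Rightarrow j \in S^c$), the argument is immediate from the SFLR equation~(\ref{SFLR.eq}). When $\beta_j \equiv 0$, the model reduces to $Y = \sum_{k \neq j} \int_\cU \beta_k(u) X_k(u)\, du + \varepsilon$, which is a measurable function of $(X_{-j}, \varepsilon)$ alone. Combined with the stated independence of $\varepsilon$ from $\bX$, this immediately yields that $Y$ and $X_j$ are independent conditional on $X_{-j}$, so $j \in S^c$ by definition~(\ref{f.nullset}).

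For the reverse direction ($j \in S^c \Rightarrow \|\beta_j\| = 0$), I would work at the level of conditional covariances. Assume $X_j$ and $Y$ are conditionally independent given $X_{-j}$. Using~(\ref{SFLR.eq}), the fact that the $X_{-j}$-measurable terms in $Y$ drop out, and $\varepsilon \perp \bX$, one gets $\cov(Y, X_j(u) \mid X_{-j}) = \int_\cU \beta_j(v)\, C_j(v, u)\, dv$ for every $u \in \cU$, where $C_j(v, u) = \cov\{X_j(v), X_j(u) \mid X_{-j}\}$. Conditional independence forces the left-hand side to vanish identically in $u$, so $\beta_j$ lies in the null space of the conditional covariance operator $C_j : L_2(\cU) \to L_2(\cU)$. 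Equivalently, $\var\{\int_\cU \beta_j(v) X_j(v)\, dv \mid X_{-j}\} = 0$ almost surely, so that the scalar $\int_\cU \beta_j X_j$ is $\sigma(X_{-j})$-measurable.

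The final step invokes Condition~\ref{irrSFLR} to conclude $\beta_j \equiv 0$. Intuitively, the condition rules out any exact linear representation $X_j(u) = \sum_{k\neq j} \int_\cU \gamma_k(u,v) X_k(v)\, dv$, which corresponds to the residual process $X_j - \Pi_{X_{-j}}(X_j)$ being non-degenerate. Translated into operator language, the conditional covariance operator $C_j$ has trivial kernel on $L_2(\cU)$, which in turn forces $\beta_j \equiv 0$ and hence $\|\beta_j\| = 0$. The main obstacle I expect is precisely this last translation: rigorously linking Condition~\ref{irrSFLR}, stated as a pointwise-in-$u$ non-representability of $X_j$ by linear operations on $X_{-j}$, to the operator-level injectivity of $C_j$. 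This will likely exploit the joint Gaussian/linear structure (cf.\ Example~\ref{ex.MGP}) to identify conditional expectations with $L_2$-projections and then run an orthogonality argument inside the closed Hilbert subspace generated by $X_{-j}$.
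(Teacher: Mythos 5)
Your overall strategy is sound and runs parallel to the paper's own argument, which proves the FFLR analogue (Lemma~\ref{lemmaeqFFLR}) and notes that the SFLR case is identical. The forward direction matches: with $\beta_j\equiv 0$ and $\varepsilon$ independent of $\bX$, conditional independence of $Y$ and $X_j$ given $X_{-j}$ is immediate (the paper phrases this as a factorization of the conditional characteristic function). In the reverse direction you work with conditional second moments where the paper works with conditional characteristic functions: you deduce $\cov\{Y,X_j(u)\mid X_{-j}\}\equiv 0$, hence $\var\{\int_\cU\beta_j X_j\mid X_{-j}\}=0$, i.e.\ $\int_\cU\beta_j X_j$ is $\sigma(X_{-j})$-measurable; the paper's factorization argument reduces to the same degeneracy, namely that the ``interaction term'' $\eE[\exp\{\iota t_Y\int_\cU\beta_j X_j\}\mid X_{-j}]$ must be constant. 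Your route uses the existence of second moments (harmless here) but is otherwise an equivalent reduction; the characteristic-function route avoids even that.

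Where you stop --- ``translating Condition~\ref{irrSFLR} into injectivity of the conditional covariance operator $C_j$'' --- is exactly where the content of the lemma sits, and your worry is legitimate: as literally written, Condition~\ref{irrSFLR} only excludes an exact representation of the whole curve $X_j(\cdot)$ as $\sum_{k\neq j}\int_\cU\gamma_k(\cdot,v)X_k(v)\,dv$, whereas your argument (and the paper's) needs the stronger statement that no nonzero linear functional $\int_\cU\beta_j X_j$ is almost surely determined by $X_{-j}$, equivalently that the residual process $X_j-\eE[X_j\mid X_{-j}]$ has an injective covariance operator; a nondegenerate residual can still annihilate a particular $\beta_j$. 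The paper bridges this step with a one-sentence assertion (``Condition~\ref{irrSFLR} implies that the interaction term is a constant only when $\|\beta_j\|=0$''), so your proposal is not weaker than the published proof, but to be complete you must actually supply the bridge you sketch: under the Gaussian setting of Example~\ref{ex.MGP}, identify $\eE[\cdot\mid X_{-j}]$ with the $L_2$ projection onto the closed linear span generated by $X_{-j}$ and read Condition~\ref{irrSFLR} at the level of linear functionals of $X_j$ rather than of the curve itself; without that (re)interpretation the final implication does not follow from the condition as stated.
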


Note that the proof of Lemma~\ref{lemmaeqSFLR} follows the same argument as that of Lemma~\ref{lemmaeqFFLR}. We will provide detailed proof of Lemma~\ref{lemmaeqFFLR} in Section~\ref{pf.thm_fflr} and omit the proof of Lemma~\ref{lemmaeqSFLR} here.
%Considering the resemblance between the proofs of Lemma \ref{lemmaeqSFLR} and Lemma \ref{lemmaeqFFLR}, our focus will be on elucidating the more intricate proof of Lemma \ref{lemmaeqFFLR}.  
 We are now ready to prove Theorem~\ref{thm_fdr_sflr}. 
 \begin{proof}[Proof of Theorem \ref{thm_fdr_sflr}]
  Provided that Lemma~\ref{lemmacoin} holds in SFLR, it implies that the signs of the null statistics are distributed as i.i.d. coin flips. Referring to Theorem 3.4 of \cite{candes2018}, %this leads us to the following inequalities:
  we obtain that
\begin{equation*} \label{eqsflr}
{\mathbb E}\left[\frac{|\widehat{S}_1  \cap S^c |}{|\widehat{S}_1|}\right] \leq  q,~~~
{\mathbb E}\left[\frac{|\widehat{S}_0  \cap S^c|}{| \widehat{S}_0|+1/q}\right] \leq q, 
\end{equation*} 
where $S^c$ is defined in (\ref{f.nullset}).
     %    Since Lemma \ref{lemmacoin} holds in SFLR, we have proved that the null statistics' signs are distributed as i.i.d. coin flips. Based on the Theorem 3.4 in \cite{candes2018}, then we have 
%where $S^c$ is defined as (\ref{f.nullset}). 
By Lemma~\ref{lemmaeqSFLR}, we establish that $S^c$ in (\ref{f.nullset}) is equivalent to the set $\{j: \|\beta_j\|=0\},$ whose complement is defined in (\ref{support.SFLR}).
%in (\ref{SFLR.eq}). 
This equivalence implies the effective FDR control in SFLR, which completes our proof.
\end{proof}
%\textbf{Proof of Theorem \ref{thm_fdr_sflr}}.

\subsection{Proof of Theorem~\ref{thm_power_sflr}} 
%The following technical lemmas, which precede the proof of Theorem~\ref{thm_power_sflr}, involve several notations. 
Before presenting technical lemmas used in the proof of Theorem~\ref{thm_power_sflr}, we firstly introduce some notation.
For $j, k \in [p]$, denote ${\sigma}_{jklm} = \eE [\xi_{ijl} \xi_{ikm} ]$ and its estimator $\hat{\sigma}_{jk lm} = n^{-1} \sum_{i=1}^n \hat\xi_{ijl} \hat\xi_{ikm}$ for $l,m \in [d].$
For $j \in [p]$, $k \in [2p]\backslash [p]$, denote ${\sigma}_{jklm} = \eE [\xi_{ijl} \tilde\xi_{i(k-p)m} ]$ and its estimator $\hat{\sigma}_{jk lm} = n^{-1} \sum_{i=1}^n \hat\xi_{ijl} \check\xi_{i(k-p)m}.$ 
For $j, k \in [2p]\backslash [p]$, denote ${\sigma}_{jklm} = \eE [\tilde\xi_{i(j-p)l} \tilde\xi_{i(k-p)m} ]$ and its estimator estimator $\hat{\sigma}_{jk lm} = n^{-1} \sum_{i=1}^n \check\xi_{i(j-p)l} \check\xi_{i(k-p)m}.$ %where $l, m \in [d]$. 
For $j \in [p]$, $l \in [d]$, denote ${\sigma}^{X,Y}_{j l} = \eE [ \xi_{ijl} Y_{i} ]$ and its estimator $\hat{\sigma}^{X,Y}_{jl} = n^{-1} \sum_{i=1}^n \hat\xi_{ijl} Y_{i}.$ 
For $j \in [2p]\backslash [p]$, $l \in [d]$, denote ${\sigma}^{X,Y}_{j l} = \eE [ \tilde\xi_{i(j-p)l} Y_{i} ]$ and its estimator $\hat{\sigma}^{X,Y}_{jl} = n^{-1} \sum_{i=1}^n \check\xi_{i(j-p)l} Y_{i}$. 
Let ${\bD} = \tdiag({\bD}_{1},\dots, {\bD}_{p}, {\bD}_{1},\dots, {\bD}_{p}) \in \eR^{2pd \times 2pd}$ with ${\bD}_j = \diag(\omega_{j1}^{1/2},\dots,\omega_{jd}^{1/2}) \in \eR^{d\times d}$ for $j \in [p]$ and its estimator $\widehat{\bD} = \tdiag(\widehat{\bD}_{1},\dots, \widehat{\bD}_{p}, \widehat{\bD}_{1},\dots, \widehat{\bD}_{p})$ with $\widehat{\bD}_j = \diag(\hat\omega_{j1}^{1/2},\dots,\hat\omega_{jd}^{1/2})$. For a matrix $\bA = (A_{jk}) \in \eR^{p\times q},$ we denote
its elementwise $\ell_{\infty}$ norm as $\|\bA\|_{\max} = {\max_{j,k}} |A_{j k}|$. 
For a block matrix $\bB = (\bB_{jk}) \in \eR^{p_1 q_1 \times p_2 q_2}$ with its $(j,k)$-th block $\bB_{jk} \in \eR^{q_1 \times q_2}$, we define its block versions of elementwise $\ell_{\infty}$ and matrix $\ell_{1}$ norms by $\|\bB\|^{(q_1 \times q_2)}_{\max} = \max_{j,k} \|\bB_{jk}\|_\tF$ and $\|\bB \|_1^{(q_1 \times q_2)} = \max_{k} \sum_{j} \|\bB_{jk} \|_\tF$, respectively.

\begin{lemmaS} \label{lemmaC1SFLR} 
  Suppose that Condition~\ref{cond_eigen_SFLR} holds. 
  If $n \gtrsim  d^{4\alpha +2} \log(pd)$, then there exist some positive constants $\tilde{c}_1$ and $\tilde{c}_2$ such that,  with probability greater than $1-\tilde{c}_1(pd)^{-\tilde{c}_2}$,  the estimates $\{\hat{\sigma}_{jk lm}\}$ satisfy 
    \begin{equation}\label{sd.XX}
        \max_{j,k \in [2p]\atop l,m \in [d]} \frac{|\hat{\sigma}_{jk lm}  - {\sigma}_{jk lm}|}{(l\vee m)^{\alpha+1} \omega_{jl}^{1/2}\omega_{km}^{1/2}} \lesssim \sqrt{\frac{\log(pd)}{n}}. 
    \end{equation}
     Suppose that Conditions~\ref{cond_error_SFLR} and~\ref{cond_eigen_SFLR} hold .   If $n \gtrsim  d^{3\alpha +2} \log(pd)$, then there exist some positive constants $\tilde{c}_3$ and $\tilde{c}_4$ such that,  with probability greater than $1-\tilde{c}_3(pd)^{-\tilde{c}_4}$,  the estimates $\{\hat{\sigma}_{jl}^{X,Y}\}$ satisfy 
     \begin{equation}\label{sd.XY.SF}
         \max_{j\in [2p]\atop l\in [d]} \frac{| \hat{\sigma}^{X,Y}_{j l} - {\sigma}^{X,Y}_{j l}|}{l^{\alpha+1} \omega_{jl}^{1/2}}   \lesssim  \sqrt{\frac{\log(pd)}{n}} .
     \end{equation}
\end{lemmaS}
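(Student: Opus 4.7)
The plan is to reduce the bound on $\hat{\sigma}_{jklm} - \sigma_{jklm}$ to three pieces and then union bound over the $O(p^2d^2)$ tuples: (i) an ``oracle'' fluctuation computed with the true eigenfunctions and mean, (ii) Davis--Kahan type control of the eigenfunction error, and (iii) operator-norm concentration of the sample covariance operators $\hat{\Sigma}_{X_jX_j}$. Define $\tilde{\xi}_{ijl} = \langle X_{ij} - \mu_j, \phi_{jl}\rangle$ and choose the sign of $\hat{\phi}_{jl}$ so that $s_{jl} = \mathrm{sign}\langle \hat{\phi}_{jl}, \phi_{jl}\rangle = +1$. Then
\begin{equation*}
\hat{\xi}_{ijl} - \tilde{\xi}_{ijl} = \langle X_{ij} - \mu_j, \hat{\phi}_{jl} - \phi_{jl}\rangle - \langle \hat{\mu}_j - \mu_j, \hat{\phi}_{jl}\rangle,
\end{equation*}
and a straightforward expansion writes $\hat{\sigma}_{jklm} - \sigma_{jklm}$ as an oracle term $I_1 = n^{-1}\sum_i \tilde{\xi}_{ijl}\tilde{\xi}_{ikm} - \sigma_{jklm}$ plus cross terms $I_2, I_3, I_4$ involving $\hat{\phi}_{jl} - \phi_{jl}$ and $\hat{\mu}_j - \mu_j$.

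For the oracle term, the Gaussian MGP structure of $(\bX^\T, \widetilde\bX^\T)^\T$ in Example~\ref{ex.MGP} implies $\tilde{\xi}_{ijl}/\omega_{jl}^{1/2}$ is standard sub-Gaussian, so products $\tilde{\xi}_{ijl}\tilde{\xi}_{ikm}/(\omega_{jl}^{1/2}\omega_{km}^{1/2})$ are sub-exponential with unit scale, and Bernstein plus a union bound yield $|I_1| \lesssim \omega_{jl}^{1/2}\omega_{km}^{1/2}\sqrt{\log(pd)/n}$ uniformly with probability at least $1 - C(pd)^{-c}$. For the perturbation pieces, Davis--Kahan gives $\|\hat{\phi}_{jl} - \phi_{jl}\| \lesssim \|\hat{\Sigma}_{X_jX_j} - \Sigma_{X_jX_j}\|_{\cL}/\delta_{jl}$ with eigengap $\delta_{jl} \gtrsim l^{-\alpha-1}$ by Condition~\ref{cond_eigen_SFLR}, while standard operator-norm concentration for the sample covariance of a Gaussian process (combined with a union bound over $j \in [2p]$ using $\max_j \sum_l \omega_{jl} = O(1)$) gives $\max_j \|\hat{\Sigma}_{X_jX_j} - \Sigma_{X_jX_j}\|_{\cL} \lesssim \sqrt{\log(pd)/n}$, and similarly $\max_j \|\hat{\mu}_j - \mu_j\| \lesssim \sqrt{\log(pd)/n}$, each holding with probability $1 - C(pd)^{-c}$.

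The display (\ref{sd.XY.SF}) is handled by the same decomposition after substituting the SFLR representation $Y_i = \sum_k \bxi_{ik}^\T \bbb_k + \epsilon_i + \varepsilon_i$ from (\ref{Yscore}). Sub-Gaussianity of $\varepsilon_i$ under Condition~\ref{cond_error_SFLR}, combined with the smoothness decay of $\{b_{jl}\}$ in Condition~\ref{cond_coef_SFLR}(i) and the summability of $\{\omega_{jl}\}$ in Condition~\ref{cond_eigen_SFLR}, makes $Y_i$ itself sub-Gaussian, so that $n^{-1}\sum_i \tilde{\xi}_{ijl}Y_i$ concentrates at rate $\omega_{jl}^{1/2}\sqrt{\log(pd)/n}$ around $\sigma_{jl}^{X,Y}$. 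Because now only one eigenfunction $\hat{\phi}_{jl}$ rather than two is perturbed, the eigengap cost contributes a single factor $l^{\alpha+1}$, which is exactly why the sample-size threshold relaxes from $d^{4\alpha+2}\log(pd)$ to $d^{3\alpha+2}\log(pd)$.

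The main obstacle is the bookkeeping of scale factors in the perturbation terms. A crude triangle inequality on $I_2, I_3, I_4$ produces a product $l^{\alpha+1}\cdot m^{\alpha+1}$ of eigengap penalties, whereas the target normalization only admits $(l\vee m)^{\alpha+1}$. Recovering the sharper rate requires expanding $X_{ij} - \mu_j$ along the true basis $\{\phi_{jl'}\}_{l'}$ and invoking a Bessel-type inequality, so that cross terms involving the smaller index carry only the larger index's eigengap penalty. The stated condition $n \gtrsim d^{4\alpha+2}\log(pd)$ is precisely what makes $\max_{l\leq d} l^{\alpha+1}\sqrt{\log(pd)/n} = o(1)$ with enough margin that all quadratic-in-perturbation cross terms are $o(\omega_{jl}^{1/2}\omega_{km}^{1/2}(l\vee m)^{\alpha+1}\sqrt{\log(pd)/n})$ uniformly in $(j,k,l,m)$.
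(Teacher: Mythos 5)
Your plan has a genuine gap at the step that is supposed to control the perturbation terms $I_2,I_3,I_4$. The target (\ref{sd.XX}) is a \emph{relative} concentration bound: the error must carry the weights $\omega_{jl}^{1/2}\omega_{km}^{1/2}$ in addition to the eigengap factor $(l\vee m)^{\alpha+1}$. The route you describe — unweighted operator-norm concentration $\max_j\|\widehat\Sigma_{X_jX_j}-\Sigma_{X_jX_j}\|_{\cL}\lesssim\sqrt{\log(pd)/n}$ combined with Davis--Kahan, then a Bessel-type summation using $\max_j\sum_l\omega_{jl}=O(1)$ — yields at best $|I_2|\lesssim \omega_{km}^{1/2}\,\|\hat\phi_{jl}-\phi_{jl}\|\lesssim \omega_{km}^{1/2}\, l^{\alpha+1}\sqrt{\log(pd)/n}$, i.e.\ the factor $\omega_{jl}^{1/2}$ is lost. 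Since Condition~\ref{cond_eigen_SFLR} allows $\omega_{jl}\asymp l^{-\alpha}$, this overshoots the target by up to $\omega_{jl}^{-1/2}\asymp d^{\alpha/2}$ when $l=l\vee m=d$, and the deficit cannot be absorbed by the sample-size condition $n\gtrsim d^{4\alpha+2}\log(pd)$: the linear-in-perturbation terms and the target both scale as $\sqrt{\log(pd)/n}$, so the extra $d^{\alpha/2}$ never becomes $o(1)$; the sample-size condition only helps the quadratic remainder terms, which is the only thing your closing paragraph addresses. What is actually needed is a weighted (relative) eigenfunction perturbation analysis: expand $\hat\phi_{jl}-\phi_{jl}$ in the true basis with first-order coefficients $\langle(\widehat\Sigma_{X_jX_j}-\Sigma_{X_jX_j})\phi_{jl},\phi_{jl'}\rangle/(\omega_{jl}-\omega_{jl'})$, use the elementwise relative bound $|\langle(\widehat\Sigma_{X_jX_j}-\Sigma_{X_jX_j})\phi_{jl},\phi_{jl'}\rangle|\lesssim\omega_{jl}^{1/2}\omega_{jl'}^{1/2}\sqrt{\log(pd)/n}$ (available because the normalized scores are sub-Gaussian), and then sum against the eigengap condition; this is what retains the $\omega_{jl}^{1/2}$ weight and produces the single $(l\vee m)^{\alpha+1}$ penalty, at the price of controlling higher-order remainders under $n\gtrsim d^{4\alpha+2}\log(pd)$.

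For context, the paper does not reprove any of this: it obtains (\ref{sd.XX}) by invoking Theorem~4 of Guo and Qiao (2023) and (\ref{sd.XY.SF}) by invoking Proposition~1 of Fang et al.\ (2022), and the only work done in the proof is to check that the functional stability measures appearing in those results reduce to constants under i.i.d.\ sampling (in particular the cross measure $\mathcal{M}_{1,1}^{X,Y}\le 1$, shown by a Cauchy--Schwarz-type computation). A self-contained proof along your lines is possible, but it must incorporate the weighted perturbation machinery just described rather than the unweighted Davis--Kahan bound; also note that your sub-Gaussianity claim for $Y_i$ should be justified from model (\ref{SFLR.eq}) with square-integrable $\beta_j$ and Conditions~\ref{cond_error_SFLR}--\ref{cond_eigen_SFLR} alone, since Condition~\ref{cond_coef_SFLR} is not assumed in this lemma.
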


\begin{proof} 
(\ref{sd.XX}) is a direct deduction from 
Theorem 4 in \textcolor{blue}{Guo and Qiao} (\textcolor{blue}{2023}) %\textcolor{blue}{Guo and Qiao} (\textcolor{blue}{2023}) 
\begin{equation*}\label{sd.XX.M}
        \max_{j,k \in [2p]\atop l,m \in [d]} \frac{|\hat{\sigma}_{jk lm} - {\sigma}_{jk lm} |}{(l\vee m)^{\alpha+1} \omega_{jl}^{1/2}\omega_{km}^{1/2}} \lesssim \mathcal{M}_1^X \sqrt{\frac{\log(pd)}{n}},
    \end{equation*}
where, under no serial dependence, the functional stability measure of $\{\bX_{i}(\cdot)\}_{i\in n}$ is $\mathcal{M}_1^X=1$.  
 (\ref{sd.XY.SF}) can be derived from Proposition 1 in \textcolor{blue}{Fang et al.} (\textcolor{blue}{2022}) as 
\begin{equation*}\label{sd.XY.SF.M}
         \max_{j\in [2p]\atop l\in [d]} \frac{|\hat{\sigma}^{X,Y}_{j l}- {\sigma}^{X,Y}_{j l}  |}{l^{\alpha+1} \omega_{jl}^{1/2}}   \lesssim  \mathcal{M}_{X,Y} \sqrt{\frac{\log(pd)}{n}},
\end{equation*}
%f{sd.XY.SF}) follows from Proposition 1 in the work by \textcolor{blue}{Fang et al.} (\textcolor{blue}{2022}) as 
where the measure of dependence between $\{\bX_{i}(\cdot)\}_{i\in [n]}$ and $\{Y_i\}_{i\in [n]}$ is 
$\mathcal{M}_{X,Y} = \mathcal{M}_1^X + \mathcal{M}_1^Y + \mathcal{M}_{1,1}^{X,Y}.$
%(\ref{sd.XY.SF}) can be deduced by the Proposition 1 of \textcolor{blue}{Fang et al.} (\textcolor{blue}{2022}) 
%where the effect of dependence between  $\{\bX_{i}(\cdot)\}_{i\in [n]}$ and $\{Y_i\}_{i\in [n]}$  is defined as 
%$
%\mathcal{M}_{X,Y} = \mathcal{M}_1^X+\mathcal{M}_1^Y+\mathcal{M}_{1,1}^{X,Y}. 
%$ 
Under no serial dependence, $\mathcal{M}_1^X=\mathcal{M}_1^Y=1.$ 
It then suffices to establish the boundedness of $\mathcal{M}_{1,1}^{X,Y}$ to verify the validity of (\ref{sd.XY.SF}). 
Define that $\bSigma_{XY}(\cdot) = \cov(\bX(\cdot),Y)$ and $\Sigma_{YY} = \var(Y)$. By (\ref{exX}), the cross-spectral stability measure $\mathcal{M}_{1,1}^{X,Y}$ satisfies 
\begin{equation}\label{sta.def} 
\begin{split}
     \mathcal{M}_{1,1}^{X,Y} &= \mathop{esssup}\limits_{ \bPhi \in \boldsymbol{\mathcal{H}}_0, \|\bPhi\|_0\leq 1, v\in\eR_0} \frac{\big|\langle  \bPhi, \bSigma_{XY} v   \rangle\big|}{\sqrt{\langle  \bPhi, \bSigma_{XX}(\bPhi)  \rangle} \sqrt{\Sigma_{YY}v^2} }\\
     &= \mathop{esssup}\limits_{\bPhi \in \boldsymbol{\mathcal{H}}_0, \|\bPhi\|_0\leq 1, v\in\eR_0}\frac{\big|\cov\big(\sum_{j=1}^{p}  \sum_{l=1}^\infty \langle \phi_{jl}, \Phi_{j} \rangle \xi_{jl} , Y v\big) \big|}{\sqrt{\var\big(\sum_{j=1}^{p}\sum_{l=1}^\infty \langle \phi_{jl}, \Phi_{j} \rangle \xi_{jl}\big)}\sqrt{\var\big( Y v\big)}}  \leq 1,  
\end{split}
\end{equation}
where $\bPhi = (\Phi_1,\dots,\Phi_p)^{\T}$, 
     $\boldsymbol{\mathcal{H}}_0 = \big\{ \bPhi \in \boldsymbol{\mathcal{H}}: \langle  \bPhi, \bSigma_{XX}(\bPhi)  \rangle \in (0,\infty)\big\}$, 
     $\|\bPhi\|_0 = \sum_{j=1}^p I(\| \Phi_j \|_{\cS} \neq 0)$, and 
      $\eR_0= \big\{ v \in \eR:   \Sigma_{YY}v^2  \in (0,\infty)\big\}$. 
%Since $\mathcal{M}_1^X=\mathcal{M}_1^Y=1$, we only have to prove that $\mathcal{M}_{1,1}^{X,Y}$ is bounded, which indicates (\ref{sd.XY.SF}) holds. The cross-spectral stability measure $\mathcal{M}_{1,1}^{X,Y}$ is defined as 
%\begin{equation}\label{sta.def}
%    \mathcal{M}_{1,1}^{X,Y} = \mathop{esssup}\limits_{ \bPhi \in \boldsymbol{\mathcal{H}}, \|\bPhi\|_0\leq 1, v\in\eR} \frac{\big|\langle  \bPhi, \bSigma_{XY}(v)  \rangle\big|}{\sqrt{\langle  \bPhi, \bSigma_{XX}(\bPhi)  \rangle} \sqrt{\langle  v, \bSigma_{YY}(v)  \rangle} }.
%\end{equation}
%By (\ref{sta.def}), we can get 
%\begin{equation*}
%    \mathcal{M}_{1,1}^{X,Y} = \mathop{esssup}\limits_{ \Phi_j \in {\mathcal{H}_j}, v\in\eR}\frac{\big|\cov\big(  \sum_{l=1}^\infty \langle \phi_{jl}, \Phi_{j} \rangle \xi_{jl} , Y v\big) \big|}{\sqrt{\var\big(\sum_{l=1}^\infty \langle \phi_{jl}, \Phi_{j} \rangle \xi_{jl}\big)}\sqrt{\var\big( Y v\big)}}  \leq 1.  
%\end{equation*} 
We complete the proof of this lemma. 
\end{proof}

%Let $\bY = (Y_1,\dots,Y_n)^{\T}$, ${{\bXi}}  = ({{\bXi}}_{1}, \dots,  {{\bXi}}_{p})$,  and ${\widetilde{\bXi}}  = ({\widetilde{\bXi}}_{1}, \dots,  {\widetilde{\bXi}}_{p})$, where  ${{\bXi}}_j  = ({{\bxi}}_{1j}, \dots,  {{\bxi}}_{nj})^{\T}$ and  ${{\widetilde{\bXi}}}_j   = ({{\widetilde{\bXi}}}_{1j}, \dots,  {\widetilde{\bXi}}_{nj})^{\T}.$ Denote the corresponding estimates as $ {\widehat{\bXi}}  = ({\widehat{\bXi}}_{1}, \dots,  {\widehat{\bXi}}_{p}),$   $ {\widecheck{{\bXi}}}  = ({\widecheck{{\bXi}}}_{1}, \dots,  \widecheck{{\bXi}}_{p}),$ 
%$ {\widehat{\bXi}}_j  = ({\widehat{\bxi}}_{1j}, \dots,  {\widehat{\bxi}}_{nj})^{\T}$ and $ {\widecheck{{\bXi}}}_j  = ({\widecheck{{\bxi}}}_{1j}, \dots,  \widecheck{{\bxi}}_{nj})^{\T}$, respectively.    

\begin{lemmaS}\label{lemmaC2SFLR}
   Suppose that  Conditions~\ref{cond_inf_SFLR}--\ref{cond_eigen_SFLR} hold. Denote ${\boZ} =\big({{\bXi}}, {{\widetilde{\bXi}}}\big) \in \eR^{n \times 2pd}$ and 
$\widehat{\boZ} =\big({\widehat{\bXi}}, {\widecheck{{\bXi}}}\big) \in \eR^{n \times 2pd}$.  If $n \gtrsim  d^{4\alpha +2} \log(pd)$, then there exist some positive constants $c_z,\tilde{c}_5,\tilde{c}_6$ such that 
      $$
    n^{-1} \btheta^{\T} \big\{\widehat{\bD}^{-1}\big(\widehat{\boZ}^\T\widehat{\boZ}\big)\widehat{\bD}^{-1} \big\} \btheta \geq \underline{\mu} \big\|\btheta \big\|^2 - c_z d^{\alpha+1} \big\{ \log(pd)/n \big\}^{1/2}  \big\| \btheta \big\|_1^2, ~~~ \forall \btheta \in \eR^{2pd}, 
    $$
with probability greater than $1- \tilde{c}_5 (pd)^{-\tilde{c}_6}.$  
\end{lemmaS}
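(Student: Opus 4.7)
\textbf{Proof proposal for Lemma~\ref{lemmaC2SFLR}.} The strategy is the standard one for deriving a compatibility/restricted-eigenvalue inequality: split the normalized empirical Gram matrix into a population part, whose quadratic form is lower-bounded by $\underline{\mu}\|\btheta\|^{2}$ via Condition~\ref{cond_inf_SFLR}, plus a perturbation term whose quadratic form is controlled in max-norm via Lemma~\ref{lemmaC1SFLR}, giving the $\|\btheta\|_{1}^{2}$ slack. Concretely, let $\widehat{\bM}$ be the $2pd \times 2pd$ matrix with $((j,l),(k,m))$-entry $\hat\sigma_{jklm}/(\hat\omega_{jl}^{1/2}\hat\omega_{km}^{1/2})$, and let $\bM$ be the corresponding population matrix with entries $\sigma_{jklm}/(\omega_{jl}^{1/2}\omega_{km}^{1/2})$. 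Then $n^{-1}\btheta^{\T}\widehat{\bD}^{-1}(\widehat{\boZ}^{\T}\widehat{\boZ})\widehat{\bD}^{-1}\btheta = \btheta^{\T}\widehat{\bM}\btheta$, so the task reduces to proving $\btheta^{\T}\bM\btheta \geq \underline{\mu}\|\btheta\|^{2}$ and $|\btheta^{\T}(\widehat{\bM}-\bM)\btheta| \lesssim d^{\alpha+1}\{\log(pd)/n\}^{1/2}\|\btheta\|_{1}^{2}$.

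For the population lower bound, given $\btheta = (\theta_{jl})$, I would construct a test element $\bPhi = (\Phi_{1},\dots,\Phi_{2p})^{\T} \in \boldsymbol{\mathcal{H}}^{2}$ by setting $\Phi_{j} = \sum_{l=1}^{d}(\theta_{jl}/\omega_{jl}^{1/2})\phi_{jl}$ (using the $\phi_{jl}$ for $X_{j}$ and, for the knockoff coordinates, the corresponding eigenbasis with matching eigenvalues from~(\ref{R.cor})). A direct computation using the Karhunen–Lo\`eve expansion shows $\langle\bPhi,\bD_{(X,\widetilde{X})(X,\widetilde{X})}(\bPhi)\rangle = \|\btheta\|^{2}$ and $\langle\bPhi,\bSigma_{(X,\widetilde{X})(X,\widetilde{X})}(\bPhi)\rangle = \btheta^{\T}\bM\btheta$. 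Condition~\ref{cond_inf_SFLR} then yields $\btheta^{\T}\bM\btheta \geq \underline{\mu}\|\btheta\|^{2}$ for every $\btheta$ with $\|\btheta\|>0$, which extends to all $\btheta$ by continuity.

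For the perturbation bound, I would write $\widehat{\bM} - \bM = (\widehat{\bM} - \widetilde{\bM}) + (\widetilde{\bM} - \bM)$, where $\widetilde{\bM}$ has entries $\hat\sigma_{jklm}/(\omega_{jl}^{1/2}\omega_{km}^{1/2})$. By Lemma~\ref{lemmaC1SFLR}, $\|\widetilde{\bM}-\bM\|_{\max} \lesssim d^{\alpha+1}\{\log(pd)/n\}^{1/2}$ on the stated high-probability event. For the remaining piece, standard perturbation bounds on FPCA (Condition~\ref{cond_eigen_SFLR} controls the eigengaps $\omega_{jl}-\omega_{j(l+1)} \gtrsim l^{-\alpha-1}$) combined with the sample size requirement $n \gtrsim d^{4\alpha+2}\log(pd)$ give uniform control $\max_{j,l \leq d} |\hat\omega_{jl}/\omega_{jl} - 1| = o(1)$ with high probability, from which a first-order expansion of $\hat\omega_{jl}^{-1/2}\hat\omega_{km}^{-1/2}$ around $\omega_{jl}^{-1/2}\omega_{km}^{-1/2}$ together with the already-controlled $|\hat\sigma_{jklm}|$ yields the same rate $d^{\alpha+1}\{\log(pd)/n\}^{1/2}$ in max-norm. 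Finally, $|\btheta^{\T}(\widehat{\bM}-\bM)\btheta| \leq \|\widehat{\bM}-\bM\|_{\max}\|\btheta\|_{1}^{2}$ closes the argument.

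The main obstacle is the second step of the perturbation bound: properly propagating the eigenvalue estimation error into the normalized Gram matrix. The normalization by $\hat\omega_{jl}^{1/2}$ magnifies the perturbation for small eigenvalues, so Condition~\ref{cond_eigen_SFLR} and the polynomial-in-$d$ sample-size threshold must be used delicately. The bookkeeping of this step across all indices $j,k\in[2p]$ and $l,m\leq d$, while staying within the target rate $d^{\alpha+1}\{\log(pd)/n\}^{1/2}$ and without inflating constants beyond the quoted probability, is where most of the work lies; everything else is a direct consequence of Condition~\ref{cond_inf_SFLR} and Lemma~\ref{lemmaC1SFLR}.
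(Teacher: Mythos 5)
Your proposal is correct and follows essentially the same route as the paper: write the normalized empirical Gram matrix as its population counterpart plus a perturbation, lower-bound the population quadratic form by $\underline{\mu}\|\btheta\|^{2}$ via Condition~\ref{cond_inf_SFLR}, and absorb the perturbation through an elementwise max-norm bound multiplied by $\|\btheta\|_{1}^{2}$. The only difference is one of sourcing rather than substance: the paper imports the bound $\|\widehat{\bGamma}-\bGamma\|_{\max}\lesssim d^{\alpha+1}\{\log(pd)/n\}^{1/2}$ directly from Lemma 5 of Guo and Qiao (2023), whereas you rederive it from Lemma~\ref{lemmaC1SFLR} combined with the uniform eigenvalue-ratio consistency of Lemma~\ref{lemmaeigenSFLR}, and you also spell out the test-element construction translating Condition~\ref{cond_inf_SFLR} into the finite-dimensional bound $\btheta^{\T}\bGamma\btheta\geq\underline{\mu}\|\btheta\|^{2}$, a step the paper asserts without detail.
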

\begin{proof} 
Let $\widehat\bGamma = n^{-1} \widehat{\bD}^{-1} (\widehat{\boZ}^\T\widehat{\boZ})\widehat{\bD}^{-1}$ and $\bGamma = n^{-1} {\bD}^{-1} \eE[{\boZ}^\T{\boZ}]{\bD}^{-1}$.
It is evident that $\btheta^{\T} \widehat{\bGamma} \btheta = \btheta^{\T} \bGamma\btheta + \btheta^{\T} (\widehat{\bGamma} -\bGamma)\btheta$. Consequently, we have
\begin{equation}\label{Gamma.inf}
    \btheta^{\T} \widehat{\bGamma} \btheta \geq \btheta^{\T} \bGamma\btheta -\|\widehat{\bGamma} -\bGamma\|_{\max }\|\btheta\|_1^2.
\end{equation} 
%where we denote the elementwise maximum norm of matrix $\bA = (A_{jk})$ as $\|\bA\|_{\max} = {\max_{j,k}} |A_{j k}|$. 
It follows from Condition~\ref{cond_inf_SFLR} %, we have $\btheta^{\T} \bGamma\btheta \geq \underline{\mu}\|\btheta\|^2.$ 
and (\ref{Gamma.inf}) that
$ \btheta^{\T} \widehat{\bGamma} \btheta \geq  \underline{\mu}\|\btheta\|^2 -  \|\widehat{\bGamma} -\bGamma\|_{\max} \|\btheta\|_1^2.$  
By Lemma 5 of \textcolor{blue}{Guo and Qiao}, (\textcolor{blue}{2023}), we obtain that, if $n \gtrsim  d^{4\alpha +2} \log(pd)$, then with probability greater than $1- \tilde{c}_5 (pd)^{-\tilde{c}_6},$
\begin{equation}
\label{Gamma.max}
\|\widehat{\bGamma} -\bGamma\|_{\max} \leq  c_z d^{\alpha+1} \big\{ \log(pd)/n \big\}^{1/2}.
\end{equation}
Combining the above results, we complete the proof of this lemma. 
\end{proof}

%\subsubsection{ Lemma \ref{lemmaC3} and its Proof}

\begin{lemmaS}\label{lemmaeigenSFLR}
     Suppose  that  Condition~\ref{cond_eigen_SFLR} holds. If $n \gtrsim   d^{4\alpha +2} \log(pd)$, then there exist some positive constants $\tilde{c}_5,\tilde{c}_6$ such that 
    \begin{equation*}\label{X.Eigens}
     \max_{j \in [2p]\atop l \in [d]} \left| \frac{\hat\omega_{jl}^{-1/2}  - \omega_{jl}^{-1/2}}{\omega_{jl}^{-1/2}} \right| \lesssim \sqrt{\frac{\log(pd)}{n}},%~~          \max_{j \in [2p]\atop l \in [d]}  \|\hat{\phi}_{jl}- {\phi}_{jl}\| \lesssim d^{\alpha+1}\sqrt{\frac{\log(pd)}{n}}, 
    \end{equation*}
   with probability greater than $1- \tilde{c}_5 (pd)^{-\tilde{c}_6}$. 
\end{lemmaS}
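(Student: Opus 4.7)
The plan is to combine Lemma~\ref{lemmaC1SFLR} with a first-order Taylor expansion of $x\mapsto x^{-1/2}.$ Because the scores $\hat\xi_{ijl}=\langle X_{ij}-\hat\mu_j,\hat\phi_{jl}\rangle$ are constructed with respect to the eigenbasis of $\widehat\Sigma_{X_jX_j},$ the sample diagonal entries $\hat\sigma_{jjlm}=\langle \hat\phi_{jl},\widehat\Sigma_{X_jX_j}(\hat\phi_{jm})\rangle$ collapse to $\hat\omega_{jl}\delta_{lm},$ while the population counterparts satisfy $\sigma_{jjlm}=\omega_{jl}\delta_{lm}.$ Specializing Lemma~\ref{lemmaC1SFLR} to $j=k,$ $l=m$ therefore transfers directly into an additive bound on $\hat\omega_{jl}-\omega_{jl},$ uniformly in $j\in[2p]$ and $l\in[d],$ valid on the same event of probability at least $1-\tilde c_1(pd)^{-\tilde c_2}.$ Dividing by $\omega_{jl}$ converts this into a relative error, and Condition~\ref{cond_eigen_SFLR} (whose eigenvalue gaps sum up to $\omega_{jl}\gtrsim l^{-\alpha}$ uniformly in $j$) together with $n\gtrsim d^{4\alpha+2}\log(pd)$ ensures the resulting relative error is smaller than $1/2$; hence on the good event $\hat\omega_{jl}/\omega_{jl}\in[1/2,3/2]$ and $\hat\omega_{jl}$ is bounded away from $0$ uniformly in $(j,l).$

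On this event I would Taylor expand $f(x)=x^{-1/2}$ around $\omega_{jl},$
$$\hat\omega_{jl}^{-1/2}-\omega_{jl}^{-1/2}=-\tfrac12\,\omega_{jl}^{-3/2}(\hat\omega_{jl}-\omega_{jl})+R_{jl},$$
with remainder $|R_{jl}|\lesssim \omega_{jl}^{-5/2}(\hat\omega_{jl}-\omega_{jl})^2$ by the uniform comparability just established. Normalising by $\omega_{jl}^{-1/2}$ yields
$$\left|\frac{\hat\omega_{jl}^{-1/2}-\omega_{jl}^{-1/2}}{\omega_{jl}^{-1/2}}\right|\lesssim \frac{|\hat\omega_{jl}-\omega_{jl}|}{\omega_{jl}}+\left(\frac{|\hat\omega_{jl}-\omega_{jl}|}{\omega_{jl}}\right)^{2},$$
so the relative error on $\omega_{jl}^{-1/2}$ inherits the same rate as the relative error on $\omega_{jl},$ and the required uniform bound follows; the union bound over $(j,l)$ is already absorbed into the $\log(pd)$ factor coming from Lemma~\ref{lemmaC1SFLR}.

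The main technical obstacle is securing uniform non-degeneracy, i.e.\ showing that $\hat\omega_{jl}$ cannot collapse near zero anywhere on the good event; once this is guaranteed by the sample-size condition the rest reduces to a routine Taylor argument. A minor subtlety is that the ratio $|\hat\sigma_{jjll}-\sigma_{jjll}|/\omega_{jl}$ inherited from Lemma~\ref{lemmaC1SFLR} carries an $l^{\alpha+1}\leq d^{\alpha+1}$ prefactor that needs to be absorbed using the sample-size assumption $n\gtrsim d^{4\alpha+2}\log(pd)$ before it can be combined with the Taylor remainder and delivered in the compact $\sqrt{\log(pd)/n}$ form claimed.
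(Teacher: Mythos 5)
Your route (specializing Lemma~\ref{lemmaC1SFLR} to the diagonal entries and then Taylor-expanding $x\mapsto x^{-1/2}$) is genuinely different from the paper's proof, which simply invokes a dedicated relative eigenvalue concentration result (Proposition 3 of Guo and Qiao, 2023, with the stability measure equal to one under independence). The identification $\hat\sigma_{jjll}=n^{-1}\sum_i\hat\xi_{ijl}^2=\langle\hat\phi_{jl},\widehat\Sigma_{X_jX_j}(\hat\phi_{jl})\rangle=\hat\omega_{jl}$ is correct, and the Taylor step (once $\hat\omega_{jl}/\omega_{jl}\in[1/2,3/2]$ on the good event) is routine. The problem is the rate you end up with.

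Specializing (\ref{sd.XX}) to $j=k$, $l=m$ gives $|\hat\omega_{jl}-\omega_{jl}|/\omega_{jl}\lesssim l^{\alpha+1}\sqrt{\log(pd)/n}$, so after the Taylor step your bound is $\max_{j,l}\bigl|\hat\omega_{jl}^{-1/2}-\omega_{jl}^{-1/2}\bigr|/\omega_{jl}^{-1/2}\lesssim d^{\alpha+1}\sqrt{\log(pd)/n}$, which is weaker than the lemma's claim by the factor $d^{\alpha+1}$. Your suggestion that this prefactor can be ``absorbed using the sample-size assumption'' does not work: $n\gtrsim d^{4\alpha+2}\log(pd)$ only makes $d^{\alpha+1}\sqrt{\log(pd)/n}=O(d^{-\alpha})=o(1)$; it does not turn it into $C\sqrt{\log(pd)/n}$ for an absolute constant, since $\sqrt{\log(pd)/n}$ itself is of order $d^{-(2\alpha+1)}$ or smaller in this regime. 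The factor $l^{\alpha+1}$ in Lemma~\ref{lemmaC1SFLR} reflects the eigengap penalty incurred by estimating the eigenfunctions; eigenvalue estimation does not pay this penalty (e.g.\ $|\hat\omega_{jl}-\omega_{jl}|\le\|\widehat\Sigma_{X_jX_j}-\Sigma_{X_jX_j}\|_{\cL}$ by Weyl-type perturbation, and a relative version of such a bound is exactly what the cited Proposition 3 supplies). So to prove the lemma at the stated rate you cannot deduce it from Lemma~\ref{lemmaC1SFLR} alone; you need a separate, gap-free relative concentration argument for the eigenvalues, which is the missing ingredient in your proposal.
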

\begin{proof}
By Proposition 3 in \textcolor{blue}{Guo and Qiao} (\textcolor{blue}{2023}) and $\mathcal{M}_1^Y=1$ under no serial dependence, we complete the proof of this lemma.  
\end{proof}

\begin{lemmaS}\label{lemmaC3SFLR}
     Suppose  that  Conditions~\ref{cond_error_SFLR}, \ref{cond_eigen_SFLR}--\ref{cond_coef_SFLR}  hold. If $n \gtrsim   d^{4\alpha +2} \log(pd)$, then there exist some positive constants $c_e, \tilde{c}_5,\tilde{c}_6$ such that 
    $$
    n^{-1}\|\widehat{\bD}^{-1} \widehat{\boZ}^\T \big(\bY -  \widehat{\boZ} \bbb) \|^{(d \times 1)}_{\max} \leq  c_e s 
    (d^{\alpha+2} \big\{\log(pd)/n \big\}^{1/2} +   d^{1- \tau}\big)
    $$ 
   with probability greater than $1- \tilde{c}_5 (pd)^{-\tilde{c}_6}$. 
   %where $\| \bbb\|^{(d \times 1)}_{\max} = \max_{j \in [2p]} \|\bbb_j\|$ and $\bbb_j \in \eR ^{d}.$
\end{lemmaS}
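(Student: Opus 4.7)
The plan is to decompose the residual $\bY - \widehat{\boZ}\bbb$ into three explicit pieces and bound each in the block-$(d\times 1)$ max norm using Lemmas~\ref{lemmaC1SFLR}--\ref{lemmaeigenSFLR}. Using the representation (\ref{Yscore}) together with the fact that $\bbb$ places zeros on the knockoff blocks (so that $\widehat{\boZ}\bbb = \sum_{j=1}^p\widehat{\bxi}_{ij}^\T\bbb_j$ at the $i$th coordinate), I would write
\begin{equation*}
Y_i - (\widehat{\boZ}\bbb)_i = \varepsilon_i + \epsilon_i + \sum_{k\in S}(\bxi_{ik}-\widehat{\bxi}_{ik})^\T\bbb_k,
\end{equation*}
and then bound $n^{-1}\widehat{\bD}^{-1}\widehat{\boZ}^\T\bu$ in the block $(d\times 1)$ norm separately for each of the three choices of $\bu$ appearing on the right, before combining by the triangle inequality.

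For the random-noise piece, the coordinate-wise quantity $n^{-1}\hat\omega_{jl}^{-1/2}\sum_i\hat\xi_{ijl}\varepsilon_i$ is a normalized sample cross-covariance between estimated FPC scores and a sub-Gaussian variable with zero population cross-covariance; sub-Gaussianity (Condition~\ref{cond_error_SFLR}) makes the concentration argument underlying (\ref{sd.XY.SF}) directly applicable, yielding an entrywise bound of order $l^{\alpha+1}\sqrt{\log(pd)/n}$ after using Lemma~\ref{lemmaeigenSFLR} to replace $\hat\omega_{jl}$ by $\omega_{jl}$, and thus a block-norm contribution dominated by the first summand of the target bound. For the truncation-bias piece, I would expand $\epsilon_i=\sum_{k=1}^p\sum_{m>d}b_{km}\xi_{ikm}$ and write $n^{-1}\hat\omega_{jl}^{-1/2}\sum_i\hat\xi_{ijl}\epsilon_i$ as a weighted sum over $m>d$ of normalized sample cross-covariances $\hat\omega_{jl}^{-1/2}\hat\sigma_{jklm}$; combining (\ref{sd.XX}) with $|b_{km}|\lesssim m^{-\tau}$ from Condition~\ref{cond_coef_SFLR}(i) and $\omega_{km}\asymp m^{-\alpha}$ from Condition~\ref{cond_eigen_SFLR}, the tail series in $m>d$ is summable under $\tau>\alpha/2+1$ and contributes at most $O(sd^{1-\tau})$ after summing over $k\in S$ and taking the $\ell_2$-norm over $l\in[d]$.

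The dominant statistical term $sd^{\alpha+2}\sqrt{\log(pd)/n}$ comes from the FPC-estimation piece $\sum_{k\in S}(\bxi_{ik}-\widehat{\bxi}_{ik})^\T\bbb_k$, and will be the main obstacle. The approach is to rewrite $n^{-1}\hat\omega_{jl}^{-1/2}\sum_i\hat\xi_{ijl}(\xi_{ikm}-\hat\xi_{ikm})$ as the difference between two normalized sample cross-covariances and apply (\ref{sd.XX}) together with an eigenfunction-perturbation bound derived from the eigengap lower bound $\omega_{km}-\omega_{k,m+1}\gtrsim m^{-\alpha-1}$ in Condition~\ref{cond_eigen_SFLR}. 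The delicate part is the simultaneous bookkeeping of (i) the eigenvalue decay $m^{-\alpha}$, (ii) the eigengap-induced factor $m^{\alpha+1}$ from the perturbation, (iii) the coefficient decay $m^{-\tau}$, and (iv) the normalization by $\hat\omega_{jl}^{-1/2}$, arranged so that summation over $m\leq d$ and $k\in S$, followed by the $\ell_2$-norm in $l\in[d]$, yields exactly $sd^{\alpha+2}\sqrt{\log(pd)/n}$ without extra polynomial slack. The sample-size requirement $n\gtrsim d^{4\alpha+2}\log(pd)$ enters here to ensure that the high-probability events underlying Lemmas~\ref{lemmaC1SFLR} and \ref{lemmaeigenSFLR} all hold simultaneously.
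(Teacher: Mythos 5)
Your decomposition $Y_i-(\widehat{\boZ}\bbb)_i=\varepsilon_i+\epsilon_i+\sum_{k\in S}(\bxi_{ik}-\widehat{\bxi}_{ik})^\T\bbb_k$ is algebraically correct, but it is a genuinely different route from the paper's, and it leaves the hardest steps unproven. The paper instead adds and subtracts ${\bD}^{-1}\eE[n^{-1}{\boZ}^\T\bY]$, so the target quantity splits into (i) the deviation $n^{-1}\widehat{\bD}^{-1}\widehat{\boZ}^\T\bY-{\bD}^{-1}\eE[n^{-1}{\boZ}^\T\bY]$, bounded at rate $d^{\alpha+3/2}\{\log(pd)/n\}^{1/2}$ by (\ref{sd.XY.SF}) together with Lemma~\ref{lemmaeigenSFLR}; (ii) $(\bGamma-\widehat{\bGamma})\bD\bbb+\widehat{\bGamma}(\bD-\widehat{\bD})\bbb$, which produces the dominant $sd^{\alpha+2}\{\log(pd)/n\}^{1/2}$ term directly from the Gram-deviation bound (\ref{Gamma.max}) and $\|\bD\bbb\|_1^{(d\times 1)}=O(s)$; and (iii) the purely population truncation bias ${\bD}^{-1}\eE[n^{-1}{\boZ}^\T\bepsilon]=O(sd^{1-\tau})$, handled by a cited result of Fang et al. The whole point of that arrangement is that every empirical fluctuation, including the score-estimation error your third piece tries to treat by hand, is absorbed into quantities already covered by the stated lemmas; your observation-level split buys interpretability but forfeits that.

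Two steps in your plan would fail as written. First, for the truncation piece, (\ref{sd.XX}) only controls cross-covariances of \emph{estimated} scores with indices $l,m\in[d]$, whereas $\epsilon_i$ involves the \emph{true} scores $\xi_{ikm}$ with $m>d$; so "combining (\ref{sd.XX}) with $|b_{km}|\lesssim m^{-\tau}$" does not apply. Only the population part $\eE[\xi_{ijl}\epsilon_i]$ follows from the decay conditions (this is precisely what the paper's citation to Lemma 23 of Fang et al.\ delivers), and the empirical fluctuation of $n^{-1}\sum_i\hat\xi_{ijl}\epsilon_i$ around it would need a separate concentration argument you have not supplied. Second, and more centrally, your dominant term $n^{-1}\hat\omega_{jl}^{-1/2}\sum_i\hat\xi_{ijl}(\xi_{ikm}-\hat\xi_{ikm})$ requires a uniform eigenfunction-perturbation bound of the form $\|\hat\phi_{km}-\phi_{km}\|\lesssim m^{\alpha+1}\{\log(pd)/n\}^{1/2}$ over $k\in[2p]$ and $m\in[d]$; no such lemma is available among Lemmas~\ref{lemmaC1SFLR}--\ref{lemmaeigenSFLR} (only eigenvalue perturbation is), and you acknowledge the "delicate bookkeeping" without carrying it out, so the claim that it yields exactly $sd^{\alpha+2}\{\log(pd)/n\}^{1/2}$ is asserted rather than proved. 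Your treatment of the pure-noise piece is fine in spirit (the population cross-covariance with $\varepsilon_i$ is zero and sub-Gaussianity lets the argument behind (\ref{sd.XY.SF}) be rerun), but to complete the proof along your route you must either import the eigenfunction-perturbation and tail cross-covariance bounds from the Guo and Qiao / Fang et al.\ results the paper relies on, or switch to the paper's add-and-subtract decomposition, which avoids them entirely.
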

\begin{proof}
Note that 
\begin{equation}\label{SFLR.yminus.dep}
    \begin{split}
        &~~~~ n^{-1}\widehat{\bD}^{-1} \widehat{\boZ}^\T (\bY -  \widehat{\boZ} \bbb)\\
        & =  n^{-1}\widehat{\bD}^{-1} \widehat{\boZ}^\T \bY - {\bD}^{-1}\eE[n^{-1} {\boZ}^\T \bY ] +  {\bD}^{-1}\eE[ n^{-1}{\boZ}^\T \bY ]-  n^{-1}\widehat{\bD}^{-1} \widehat{\boZ}^\T \widehat{\boZ} \bbb \\
        &=n^{-1}\widehat{\bD}^{-1} \widehat{\boZ}^\T \bY - {\bD}^{-1}\eE[n^{-1} {\boZ}^\T \bY ] + {\bD}^{-1}\eE[n^{-1} {\boZ}^\T {\boZ} \bbb]-  n^{-1}\widehat{\bD}^{-1} \widehat{\boZ}^\T \widehat{\boZ} \bbb + {\bD}^{-1}\eE[n^{-1} {\boZ}^\T \bepsilon],
    \end{split}
\end{equation}
where $\bepsilon = (\epsilon_1,\dots,\epsilon_n)^\T$ is the truncation error. \\
First, we show the deviation bounds of $n^{-1}\widehat{\bD}^{-1} \widehat{\boZ}^\T \bY - {\bD}^{-1}\eE[ n^{-1} {\boZ}^\T \bY ]$, which can be decomposed as $\widehat{\bD}^{-1} \big(n^{-1} \widehat{\boZ}^\T \bY - \eE[ n^{-1} {\boZ}^\T \bY ] \big) + (\widehat{\bD}^{-1} -{\bD}^{-1}) \eE[ n^{-1} {\boZ}^\T \bY ]$. By Lemmas~\ref{lemmaC1SFLR} and  \ref{lemmaeigenSFLR}, we have 
\begin{equation}\label{SFLR.yminus.p1}
    \|n^{-1}\widehat{\bD}^{-1} \widehat{\boZ}^\T \bY - {\bD}^{-1}\eE[ n^{-1} {\boZ}^\T \bY ]\|^{(d \times 1)}_{\max} \lesssim d^{\alpha+3/2} \big\{\log(pd)/n \big\}^{1/2}. 
\end{equation} 
Second, we write
${\bD}^{-1}\eE[n^{-1} {\boZ}^\T {\boZ} \bbb]-  n^{-1}\widehat{\bD}^{-1} \widehat{\boZ}^\T \widehat{\boZ} \bbb = (\bGamma -\widehat{\bGamma})\bD\bbb+ \widehat{\bGamma}(\bD-\widehat{\bD})\bbb.$ 
By $\|{\bD} \bbb \|_1^{(d\times1)} = O(s)$ and (\ref{Gamma.max}), %$ \|\widehat{\bGamma} -\bGamma\|_{\max} \leq  c_z d^{\alpha+1} \big\{ \log(pd)/n \big\}^{1/2}$, 
we obtain that 
\begin{equation}\label{SFLR.yminus.p2}
    \|(\bGamma -\widehat{\bGamma})\bD\bbb\|^{(d \times 1)}_{\max}   \lesssim sd^{\alpha+2}  \big\{\log(pd)/n \big\}^{1/2}.
\end{equation} 
By Lemma~\ref{lemmaeigenSFLR} and $\|{\bD} \bbb \|_1^{(d\times1)} = O(s)$, we have 
\begin{equation}\label{SFLR.yminus.p3}
    \|\widehat{\bGamma}(\bD-\widehat{\bD})\bbb\|^{(d \times 1)}_{\max} =  \|\widehat{\bGamma}(\bD-\widehat{\bD})\bD^{-1}\bD\bbb\|^{(d \times 1)}_{\max}  \lesssim sd \big\{\log(pd)/n \big\}^{1/2}.
\end{equation} 
Note (\ref{SFLR.yminus.p1}), (\ref{SFLR.yminus.p2}) and (\ref{SFLR.yminus.p3}) all hold with probability greater than $1- \tilde{c}_5 (pd)^{-\tilde{c}_6}.$ 
By Condition~\ref{cond_coef_SFLR}(i) and Lemma 23 in \textcolor{blue}{Fang et al.}~(\textcolor{blue}{2022}), we have that $\big\|{\bD}^{-1}\eE[n^{-1} {\boZ}^\T \bepsilon]\big\|_{\max} \leq O(sd^{1/2-\tau})$, which implies that 
%\begin{equation}\label{SFLR.yminus.p4}
    $\big\|{\bD}^{-1}\eE[n^{-1} {\boZ}^\T \bepsilon]\big\|^{(d \times 1)}_{\max}
    %\lesssim d^{1/2} \big\|{\bD}^{-1}\eE[n^{-1} {\boZ}^\T \bepsilon]\big\|_{\max}
    \lesssim sd^{1-\tau}.$ 
%\end{equation} 
%Third, it follows from Condition \ref{cond_coef_SFLR}(i) that ${\bD}^{-1}\eE[n^{-1} {\boZ}^\T \bepsilon] \lesssim sd^{1-\tau}$. 
Combining this with (\ref{SFLR.yminus.dep}), (\ref{SFLR.yminus.p1}), (\ref{SFLR.yminus.p2}) and (\ref{SFLR.yminus.p3}), we complete the proof of this lemma.   
\end{proof}

%\subsubsection{ Lemma \ref{lemmaC4} and its Proof}
\begin{lemmaS}
\label{lemmaC4SFLR}
Suppose that  Conditions~\ref{cond_error_SFLR}--\ref{cond_coef_SFLR}  hold, %$\underline{\mu} \geq 32 c_z s   d^{\alpha+2} \{\log(pd)/n\}^{1/2} $, 
$ d^{\alpha} s\lambda_n \to 0$ as $n,p,d \to \infty$, and the regularization parameter $\lambda_n \geq 2  c_e
s \| \widehat\bD\|_{\max}    \big( d^{\alpha+2} \{\log(pd)/n\}^{1/2} +   d^{1- \tau}\big)$. 
%If $n \gtrsim   d^{4\alpha +2} \log(pd) $, 
Then there exist some positive constants $\tilde{c}_5,\tilde{c}_6$ such that, with probability greater than $1-\tilde{c}_5(pd)^{-\tilde{c}_6},$ 
  $$
 \sum_{j=1}^{2p}\| \bbb_j -  \widehat{\bbb}_j \|\lesssim  d^{\alpha} s\lambda_n. 
  $$%/ \underline{\mu}
\end{lemmaS}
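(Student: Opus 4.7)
\textbf{Proof plan for Lemma~\ref{lemmaC4SFLR}.} The plan is to follow a standard basic-inequality argument for the group lasso, with the novelty that I must carefully track the action of the normalizing matrix $\widehat{\bD}$ on both the gradient bound (Lemma~\ref{lemmaC3SFLR}) and the restricted-eigenvalue-type inequality (Lemma~\ref{lemmaC2SFLR}). Writing $\boldsymbol\Delta = \widehat{\bbb} - \bbb$ and using the optimality of $\widehat{\bbb}$ in (\ref{glasso.SFLR}), I would first derive the basic inequality
$$\frac{1}{2n}\|\widehat{\boZ}\boldsymbol\Delta\|^2 \leq \frac{1}{n}\big|(\bY - \widehat{\boZ}\bbb)^\T\widehat{\boZ}\boldsymbol\Delta\big| + \lambda_n\sum_{j=1}^{2p}\big(\|\bbb_j\| - \|\widehat{\bbb}_j\|\big).$$
Inserting a factor $\widehat{\bD}\widehat{\bD}^{-1}$ and applying Cauchy--Schwarz blockwise, the cross term is bounded by $\|\widehat{\bD}\|_{\max}\cdot\|n^{-1}\widehat{\bD}^{-1}\widehat{\boZ}^\T(\bY - \widehat{\boZ}\bbb)\|^{(d\times 1)}_{\max}\cdot\sum_{j}\|\boldsymbol\Delta_j\|$. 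Lemma~\ref{lemmaC3SFLR} then shows that, under the stated choice $\lambda_n \geq 2c_e s\|\widehat{\bD}\|_{\max}(d^{\alpha+2}\{\log(pd)/n\}^{1/2}+d^{1-\tau})$, this cross term is at most $(\lambda_n/2)\sum_j\|\boldsymbol\Delta_j\|$ with probability $1-\tilde c_5(pd)^{-\tilde c_6}$.

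Next, using the blockwise reverse triangle inequality $\|\bbb_j\|-\|\widehat{\bbb}_j\|\leq \|\boldsymbol\Delta_j\|$ on $S$ and $\|\bbb_j\|-\|\widehat{\bbb}_j\| = -\|\boldsymbol\Delta_j\|$ on $S^c$ (since $\bbb_j=0$ off support), combining with the gradient bound I obtain the cone inclusion
$$\frac{1}{2n}\|\widehat{\boZ}\boldsymbol\Delta\|^2 + \tfrac{\lambda_n}{2}\sum_{j\notin S}\|\boldsymbol\Delta_j\| \leq \tfrac{3\lambda_n}{2}\sum_{j\in S}\|\boldsymbol\Delta_j\|,$$
which yields in particular $\sum_{j=1}^{2p}\|\boldsymbol\Delta_j\|\leq 4\sum_{j\in S}\|\boldsymbol\Delta_j\|$.

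The next step is to invoke Lemma~\ref{lemmaC2SFLR} with $\btheta = \widehat{\bD}\boldsymbol\Delta$, giving $n^{-1}\|\widehat{\boZ}\boldsymbol\Delta\|^2 \geq \underline{\mu}\|\widehat{\bD}\boldsymbol\Delta\|^2 - c_z d^{\alpha+1}\{\log(pd)/n\}^{1/2}\|\widehat{\bD}\boldsymbol\Delta\|_1^2$. Here the core technical move is to translate $\ell_2$-norms of $\boldsymbol\Delta_j$ into $\ell_2$-norms of $\widehat{\bD}_j\boldsymbol\Delta_j$. Condition~\ref{cond_eigen_SFLR} gives $\omega_{jl}\gtrsim l^{-\alpha}$, and together with Lemma~\ref{lemmaeigenSFLR} this yields $\min_{j,l}\hat\omega_{jl}^{1/2}\gtrsim d^{-\alpha/2}$ on the good event, so $\|\boldsymbol\Delta_j\|\leq d^{\alpha/2}\|\widehat{\bD}_j\boldsymbol\Delta_j\|$. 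The cone condition then gives $\sum_j\|\boldsymbol\Delta_j\|\leq 4\sqrt{s}\,d^{\alpha/2}\|\widehat{\bD}\boldsymbol\Delta\|$ and in turn $\|\widehat{\bD}\boldsymbol\Delta\|_1^2\lesssim s\,d^{\alpha+1}\|\widehat{\bD}\boldsymbol\Delta\|^2$.

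Finally, substituting back and using the scaling assumption that $sd^{2\alpha+2}\{\log(pd)/n\}^{1/2}=o(1)$ (which follows from $sd^\alpha\lambda_n\to 0$ together with the lower bound on $\lambda_n$), the correction term is absorbed into $(\underline{\mu}/4)\|\widehat{\bD}\boldsymbol\Delta\|^2$, leaving $(\underline{\mu}/4)\|\widehat{\bD}\boldsymbol\Delta\|^2\leq (3\lambda_n/2)\sqrt{s}\,d^{\alpha/2}\|\widehat{\bD}\boldsymbol\Delta\|$. Solving for $\|\widehat{\bD}\boldsymbol\Delta\|$ and converting back via the bound $\sum_j\|\boldsymbol\Delta_j\|\leq 4\sqrt{s}\,d^{\alpha/2}\|\widehat{\bD}\boldsymbol\Delta\|$ yields $\sum_{j=1}^{2p}\|\boldsymbol\Delta_j\|\lesssim sd^\alpha\lambda_n/\underline{\mu}$. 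The trickiest step is the third one: carefully tracking the powers of $d$ through the $\widehat{\bD}$ rescaling and verifying that the restricted-eigenvalue correction term is indeed absorbed under the stated scaling, since the block-$\ell_1$ norm $\|\widehat{\bD}\boldsymbol\Delta\|_1$ picks up an extra $\sqrt{d}$ factor on top of the $d^{\alpha/2}$ from the minimum eigenvalue.
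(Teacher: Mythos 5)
Your proposal is correct and takes essentially the same route as the paper's proof: the basic inequality for the group lasso, the dual-norm bound on the cross term via Lemma~\ref{lemmaC3SFLR} and the stated choice of $\lambda_n$, the cone condition, Lemma~\ref{lemmaC2SFLR} applied to $\btheta=\widehat\bD\bDelta$ with the correction term absorbed under the scaling assumption, and the conversion $\|\bDelta\|\lesssim d^{\alpha/2}\|\widehat\bD\bDelta\|$ from Condition~\ref{cond_eigen_SFLR} and Lemma~\ref{lemmaeigenSFLR}. The only cosmetic deviation is your bound $\|\widehat\bD\bDelta\|_1^2\lesssim s\,d^{\alpha+1}\|\widehat\bD\bDelta\|^2$ (the paper uses $16sd$ by applying the cone to $\widehat\bD\bDelta$ directly), but the resulting absorption requirement $s\,d^{2\alpha+2}\{\log(pd)/n\}^{1/2}=o(1)$ still follows from $s d^{\alpha}\lambda_n\to0$ together with the lower bound on $\lambda_n$, so the final rate $\sum_{j=1}^{2p}\|\bbb_j-\widehat\bbb_j\|\lesssim s d^{\alpha}\lambda_n/\underline{\mu}$ is unaffected.
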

\begin{proof}
Given that $\widehat{\bbb}_j$ is the solution to the minimization problem in (\ref{glasso.SFLR}), we have
\begin{equation*} \label{eqSFLR0}
    \begin{split}
      & - n^{-1} \bY^\T \widehat{\boZ} \widehat{\bbb}  + \frac12 \widehat{\bbb}^\T n^{-1} \widehat{\boZ}^\T \widehat{\boZ} \widehat{\bbb}   + \lambda_n \| \widehat{\bbb}\|_{1}^{(d\times 1)} \\
      \leq & - n^{-1} \bY^\T \widehat{\boZ} {\bbb}  + \frac12 {\bbb}^\T n^{-1} \widehat{\boZ}^\T \widehat{\boZ} {\bbb}   + \lambda_n \| {\bbb}\|_{1}^{(d\times 1)}.  
    \end{split}
\end{equation*} 
%where $\|\bbb\|_1^{(d\times 1)} = \sum_{j=1}^p\|\bbb_j\|.$ 
Let $\bDelta = \widehat{\bbb} - {\bbb}$ and $\Bar{S}^c$ represents the complement of $S$ within the set $[2p]$. 
Consequently, we have 
%(\ref{eqSFLR0}) can be written as  
%Denote $\widehat{\bbb} - {\bbb}$ by $\bDelta$ and $S^c$ be the complement of $S$ in the set $[2p],$ we have
\begin{equation}\label{eqSFLR1}
    \begin{split}
     &\frac12 \bDelta^\T n^{-1} \widehat{\boZ}^\T \widehat{\boZ} \bDelta \\
     \leq&  - \bDelta^\T n^{-1} \widehat{\boZ}^\T \widehat{\boZ} \bbb +   \bDelta^\T n^{-1} \widehat{\boZ}^\T \bY  + \lambda_n (\| {\bbb}\|_{1}^{(d\times 1)} -  \| \widehat{\bbb}\|_{1}^{(d\times 1)} ) \\
      \leq& \bDelta^\T (n^{-1} \widehat{\boZ}^\T \bY - n^{-1} \widehat{\boZ}^\T \widehat{\boZ} \bbb ) + \lambda_n (\| {\bDelta}_S\|_{1}^{(d\times 1)} -  \| {\bDelta}_{\Bar{S}^c}\|_{1}^{(d\times 1)} ).
  \end{split}
\end{equation}
By Lemma~\ref{lemmaC3SFLR} and the choice of $\lambda_n$, we obtain that, with probability greater than $1- \tilde{c}_5 (pd)^{-\tilde{c}_6},$
\begin{equation}\label{eqSFLR2}
    \begin{split}
         \big|\bDelta^\T (n^{-1} \widehat{\boZ}^\T \bY - n^{-1} \widehat{\boZ}^\T \widehat{\boZ} \bbb ) | & = |\bDelta^\T \widehat\bD \{n^{-1} \widehat\bD^{-1} \widehat{\boZ}^\T ( \bY - \widehat{\boZ} \bbb ) \} \big| \\
         &\leq \|n^{-1} \widehat\bD^{-1} \widehat{\boZ}^\T ( \bY - \widehat{\boZ} \bbb )\|^{(d \times 1)}_{\max} 
           \| \widehat\bD\|_{\max}   \|\bDelta\|_1^{(d \times 1)} \\
         &\leq \frac{\lambda_n}{2} (\| {\bDelta}_S\|_{1}^{(d\times 1)} +  \| {\bDelta}_{\Bar{S}^c}\|_{1}^{(d\times 1)} ), 
    \end{split}
\end{equation}
%with probability greater than $1- \tilde{c}_5 (pd)^{-\tilde{c}_6}$.  
%where $\|\bbb\|_{\max}^{(d\times 1)} = {\max_{j}}\|\bbb_j\|$.
Combining (\ref{eqSFLR1}) and (\ref{eqSFLR2}), we have 
$$ \frac{3\lambda_n}{2} \| {\bDelta}_S\|_{1}^{(d\times 1)} - \frac{\lambda_n}{2} \| {\bDelta}_{\Bar{S}^c}\|_{1}^{(d\times 1)}\geq \frac{1}{2}\bDelta^\T n^{-1} \widehat{\boZ}^\T \widehat{\boZ} \bDelta \geq 0, $$
which indicates that $3 \| {\bDelta}_S\|_{1}^{(d\times 1)} \geq  \| {\bDelta}_{\Bar{S}^c}\|_{1}^{(d\times 1)} $. 
By Condition~\ref{cond_inf_SFLR}, $d^{\alpha} s\lambda_n \to 0$ and Lemma \ref{lemmaC2SFLR}, we can let $\underline{\mu} \geq 32 d s [c_z d^{\alpha+1} \{\log(pd)/n\}^{1/2}]=o(1)$, which ensures that $\bDelta^\T n^{-1} \widehat{\boZ}^\T \widehat{\boZ} \geq {\underline{\mu}} \| \bDelta\widehat\bD\| ^2/2$. 
Combining this with Lemma~\ref{lemmaC2SFLR} %$\underline{\mu} \geq 32 c_zs d^{\alpha+2} \{\log(pd)/n\}^{1/2}$, 
and  Condition~\ref{cond_eigen_SFLR}, we have 
\begin{equation*}
    \begin{split}
        \bDelta^\T n^{-1} \widehat{\boZ}^\T \widehat{\boZ} \bDelta &\geq \underline{\mu}  \| \bDelta\widehat\bD\| ^2 - 16 c_zs d^{\alpha+2}\{ \log(pd)/n\}^{1/2} \| \bDelta\widehat\bD \|^2\\
        &\geq {\underline{\mu}} \| \bDelta\widehat\bD\| ^2/2 \geq {\underline{\mu}} c_0 \alpha^{-1} d^{-\alpha}  \| \bDelta \| ^2 /2.\\
    \end{split}
\end{equation*}
Note the facts that
$\| {\bDelta}\|_{1}^{(d\times 1)} = 
\| {\bDelta}_S \|_{1}^{(d\times 1)}  +  
\| {\bDelta}_{\Bar{S}^c}\|_{1}^{(d\times 1)} 
\leq 4 \| {\bDelta}_S\|_{1}^{(d\times 1)}  
\leq 4 {s}^{1/2} \| \bDelta\|$ and $ {3\lambda_n}\| {\bDelta}_S\|_{1}^{(d\times 1)} \geq \bDelta^\T n^{-1} \widehat{\boZ}^\T \widehat{\boZ} \bDelta$. Hence, $6 {s}^{1/2} {\lambda_n}  \| \bDelta\|   \geq 3 {\lambda_n} \| {\bDelta}_S\|_{1}^{(d\times 1)} /{2}  \geq  {\underline{\mu}} c_0 \alpha^{-1} d^{-\alpha} \| \bDelta\| ^2 /{4}$. Then we have $$  \| \bDelta\|  \leq {24  \alpha d^{\alpha} s^{1/2} \lambda_n}/({\underline{\mu}}c_0)~~\hbox{and}~ ~\| {\bDelta}\|_{1}^{(d\times 1)}  \leq {96  \alpha d^{\alpha}  s\lambda_n}/({\underline{\mu}}c_0),$$ 
 with probability greater than $1-\tilde{c}_5(pd)^{-\tilde{c}_6}.$ 
The proof of this lemma is completed. 
\end{proof}

%\subsubsection{ Lemma \ref{lemmaC5} and its Proof}

\begin{lemmaS}
\label{lemmaC5SFLR}
 Suppose that Condition~\ref{cond_coef_SFLR}(iii) holds. Then there exists some constant
$c_1' \in \big(2(qs)^{-1},1\big)$
such that, with probability greater than $1-\tilde{c}_5(pd)^{-\tilde{c}_6},$ %with an asymptotic probability of one,
$|\widehat{S}_{\delta}| \geq c_1' s $ for $\widehat{S}_{\delta}$ defined in (\ref{est.null}).
\end{lemmaS}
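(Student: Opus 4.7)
The plan is to exhibit a concrete threshold $t_*>0$ that lies in $\{|W_k|:k\in[p]\}$ and is feasible in the criterion (\ref{threshold}); since $T_\delta$ is the minimum feasible value, this will force $T_\delta\le t_*$, and a separate argument that every $j\in S_2$ satisfies $W_j\ge t_*$ will then give $S_2\subseteq \widehat S_\delta$ and hence $|\widehat S_\delta|\ge|S_2|\ge c_1's$.

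First I would invoke Lemma~\ref{lemmaC4SFLR}, together with the intermediate $\ell_2$ bound $\|\bDelta\|\lesssim d^{\alpha}s^{1/2}\lambda_n$ derived inside its proof, to conclude that on the relevant high-probability event $\|\bbb_j-\widehat\bbb_j\|\le \|\bDelta\|\lesssim d^{\alpha}s^{1/2}\lambda_n$ uniformly in $j\in[2p]$. Because $\bbb_{j+p}=\bzero$ for every knockoff coordinate, this produces the crucial one-sided bound $\|\widehat\bbb_{j+p}\|\le \|\bDelta\|$ for every $j\in[p]$.

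Next I would treat signal and non-signal indices separately. For $j\in S_2$, Condition~\ref{cond_coef_SFLR}(iii) gives $\|\bbb_j\|\gg s^{1/2}d^{\alpha}\lambda_n$, and the triangle inequality yields $W_j=\|\widehat\bbb_j\|-\|\widehat\bbb_{j+p}\|\ge \|\bbb_j\|-2\|\bDelta\|\gg s^{1/2}d^{\alpha}\lambda_n$, so taking $t_*=\min_{j\in S_2}W_j$ produces $t_*>0$ with $t_*\gg \|\bDelta\|$, and $t_*$ automatically belongs to $\{|W_k|:k\in[p]\}$. Simultaneously, for any index $k$ with $W_k<0$ the sign forces $\|\widehat\bbb_{k+p}\|>\|\widehat\bbb_k\|\ge 0$, whence $|W_k|\le \|\widehat\bbb_{k+p}\|\le \|\bDelta\|<t_*$; in particular $\{k:W_k\le -t_*\}=\varnothing$ on this event, independently of whether $k$ is null or a weak signal.

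Finally I would verify the ratio at $t_*$: the denominator satisfies $|\{k:W_k\ge t_*\}|\ge |S_2|\ge c_1's>2/q$ by Condition~\ref{cond_coef_SFLR}(iii), while the numerator equals $\delta\in\{0,1\}$, so the ratio is at most $1/(c_1's)<q/2<q$ for both choices of $\delta$. This certifies $t_*$ as feasible in (\ref{threshold}), so $T_\delta\le t_*$, which gives $\widehat S_\delta\supseteq\{k:W_k\ge t_*\}\supseteq S_2$ and completes the argument. I expect the main obstacle to be the second step: converting the uniform coefficient-error bound of Lemma~\ref{lemmaC4SFLR} into a one-sided upper bound on every negative $W_k$ requires the observation that $\bbb_{j+p}=\bzero$ for all knockoff coordinates; without it one only obtains $|W_k|\le \|\bbb_k\|+\|\bDelta\|$, which is too loose to rule out large negative statistics for weak signals $j\in S\setminus S_2$. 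The high-probability qualifier of the conclusion is inherited directly from Lemma~\ref{lemmaC4SFLR}, so no additional union bound beyond the events already used in Lemmas~\ref{lemmaC1SFLR}--\ref{lemmaeigenSFLR} is required.
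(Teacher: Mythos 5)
Your proposal is correct and follows essentially the same route as the paper's proof: both use the $\ell_2$ error bound from Lemma~\ref{lemmaC4SFLR} together with $\bbb_{j+p}=\bzero$ to bound $\|\widehat\bbb_{j+p}\|$ and hence every negative $W_k$, then use Condition~\ref{cond_coef_SFLR}(iii) to show each $j\in S_2$ has $W_j$ far above this level, and conclude $S_2\subseteq\widehat S_\delta$ so that $|\widehat S_\delta|\geq c_1's$. Your explicit feasibility check of the ratio in (\ref{threshold}) at $t_*=\min_{j\in S_2}W_j$ is just a slightly more carefully spelled-out version of the paper's argument that $T_\delta$ cannot exceed the error level, so no substantive difference remains.
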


\begin{proof}
By Lemma~\ref{lemmaC4SFLR}, we have that, with probability greater than  $1- \tilde{c}_5(pd%\tilde{d}
)^{-\tilde{c}_6},$ 
$$\max_{j \in [p]} \| \bbb_j - \widehat{\bbb}_j\| \leq {24   \alpha d^{\alpha}   s^{1/2} \lambda_n}/({\underline{\mu}}c_0)~ \hbox{and}~
\max_{j \in [p]} \| \widehat{\bbb}_{j+p}\| \leq {24   \alpha d^{\alpha}   s^{1/2} \lambda_n}/({\underline{\mu}}c_0).$$ 
 Hence, for any $j \in [p]$, we have that
\begin{equation}
\label{Wj.all}
    W_j = \| \widehat{\bbb}_{j}\| - \|\widehat{\bbb}_{j+p}\| \geq - \|\widehat{\bbb}_{j+p}\| \geq  -{24  \alpha d^{\alpha}   s^{1/2} \lambda_n}/({\underline{\mu}}c_0).
\end{equation} 
This implies $T_{\delta} \leq  {24   \alpha d^{\alpha}    s^{1/2} \lambda_n}/({\underline{\mu}}c_0).$ Otherwise, if $T_{\delta} >  {24   \alpha d^{\alpha}    s^{1/2} \lambda_n}/({\underline{\mu}}c_0)$, by (\ref{Wj.all}), we have $\{j\in[p]: W_j < -T_{\delta}\}$ is a null set. 
Under Condition~\ref{cond_coef_SFLR}(iii), if $j \in S_2 = \big\{j \in [p]: \|\bbb_j\| \gg {24   \alpha d^{\alpha}    s^{1/2} \lambda_n}/({\underline{\mu}}c_0)\big\}$, we have  
$W_j = \| \widehat{\bbb}_{j}\| - \|\widehat{\bbb}_{j+p}\| \geq  \|\bbb_j\|  - \| \widehat{\bbb}_{j} - \bbb_j\|- \|\widehat{\bbb}_{j+p}\|\gg {24   \alpha d^{\alpha}    s^{1/2} \lambda_n}/({\underline{\mu}}c_0).$ 
Therefore, $ S_2 \subseteq \widehat{S}_{\delta} = \big\{j\in[p]: W_j \geq T_{\delta}\big\}.$ Combing this with Condition~\ref{cond_coef_SFLR}(iii), we complete the proof of this lemma.
\end{proof}

The proof strategy for Theorem~\ref{thm_power_sflr} closely resembles that for~Theorem \ref{thm_power_fflr} based on the above technical lemmas. Hence we will only provide detailed proof of Theorem~\ref{thm_power_fflr} in  Section~\ref{pf.power_fflr} and omit the detailed proof of Theorem~\ref{thm_power_sflr} here.

\subsection{Proof of Theorem~\ref{thm_fdr_fflr}}
\label{pf.thm_fflr}
To prove Theorem~\ref{thm_fdr_fflr}, we firstly present some technical lemmas with their proofs. 
\begin{corollary}
\label{CorFF}
For any subset $G   \subseteq S^{c}$,  $\big({\widehat{\bxi}}_i^\T, {\widecheck{{\bxi}}}_i^\T \big)~ \Big|~  \widehat{\bbbeta}_i  \overset{D}{=} \big({\widehat{\bxi}}_i^\T, {\widecheck{{\bxi}}_i^\T}\big)_{\tswap(G)}~\Big|~ \widehat{\bbbeta}_i $.
\end{corollary}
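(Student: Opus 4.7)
The plan is to mirror the three-step argument used in the proof of Corollary~\ref{CorSF}, substituting the scalar response $Y_i$ by the functional response $Y_i(\cdot)$ and, at the end, projecting onto the estimated eigenfunctions of $\Sigma_{YY}$ to recover $\widehat{\bbbeta}_i$. First I would reduce the conditional exchangeability to the joint distributional identity $\big({\widehat{\bxi}}_i^\T, {\widecheck{{\bxi}}}_i^\T, \widehat{\bbbeta}_i^\T\big) \overset{D}{=} \big\{({\widehat{\bxi}}_i^\T, {\widecheck{{\bxi}}}_i^\T)_{\tswap(G)}, \widehat{\bbbeta}_i^\T\big\}$, since conditioning both sides on $\widehat{\bbbeta}_i$ recovers the stated claim.

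Next I would invoke Lemma~\ref{lemmaEx.func}; its proof closes with the remark that the same argument is valid whether the response is scalar or functional. Combined with the i.i.d.\ assumption across subjects, this yields $\big(\bX_i(\cdot)^\T, \widetilde{\bX}_i(\cdot)^\T, Y_i(\cdot)\big) \overset{D}{=} \big(\{\bX_i(\cdot)^\T, \widetilde{\bX}_i(\cdot)^\T\}_{\tswap(G)}, Y_i(\cdot)\big)$ for every $G\subseteq S^c$, hence equality of the associated characteristic functions at any test element $\mathbf{t} \in \boldsymbol{\mathcal{H}}^2 \times L_2(\cV)$.

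To extract the FPC scores, I would then specialize $\mathbf{t}$ following (\ref{co12}), but replace the scalar component $c_Y$ by $\mathbf{c}_Y^\T \widehat{\bpsi}$, where $\widehat{\bpsi} = (\hat\psi_1, \hat\psi_2,\dots)^\T$ is the vector of estimated eigenfunctions of $\Sigma_{YY}$ and $\mathbf{c}_Y$ is a coefficient vector with only finitely many nonzero entries. The inner product then collapses to a finite linear combination of $\widehat{\bxi}_i$, $\widecheck{\bxi}_i$ and $\widehat{\bbbeta}_i$, exactly as in the SFLR argument but with the response block enlarged. Conditioning on the estimated bases $\{\widehat{\bphi}_j\}_{j\in[p]}$ and $\widehat{\bpsi}$ and applying the total expectation formula then produces equality of the joint characteristic functions of $\big({\widehat{\bxi}}_i^\T, {\widecheck{{\bxi}}}_i^\T, \widehat{\bbbeta}_i^\T\big)$ and $\big\{({\widehat{\bxi}}_i^\T, {\widecheck{{\bxi}}}_i^\T)_{\tswap(G)}, \widehat{\bbbeta}_i^\T\big\}$, completing the proof.

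I do not anticipate any substantive obstacle, since the argument is essentially parallel to the scalar-response case; the only additional bookkeeping is the enlargement of the response coordinate of $\mathbf{t}$ from $c_Y \in \eR$ to $\mathbf{c}_Y^\T \widehat{\bpsi} \in L_2(\cV)$ and the verification that the $\tswap(G)$ operation, which only permutes components indexed by $j \in G \subseteq [p]$, leaves this response coordinate untouched.
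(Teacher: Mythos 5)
Your proposal is correct and follows essentially the same route as the paper's proof: reduce to the joint distributional identity, invoke the functional exchangeability of Lemma~\ref{lemmaEx.func} with the functional response $Y_i(\cdot)$, and then equate characteristic functions at test elements built from the estimated eigenfunctions, with the response coordinate $c_Y$ replaced by $\mathbf{c}_Y^\T\widehat\bpsi$, conditioning on the estimated bases and applying the total expectation formula. No substantive difference from the paper's argument.
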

\begin{proof}
In a similar way to the proof of Lemma~\ref{lemmaEx.func}, we obtain that $\big\{\bX_i(\cdot)^\T, \widetilde{\bX}_i(\cdot)^\T ,Y_i(\cdot) \big\}    \overset{D}{=} \big(\{\bX_i(\cdot)^\T, \widetilde{\bX}_i(\cdot)^\T\}_{\tswap(G)},    Y_i(\cdot)\big)$.
This implies that, for any $\mathbf{t}  \in (\boldsymbol{\mathcal{H}}^2,\mathcal{H}_Y),$ %we have 
\begin{equation*}
    \eE \Big[\exp\Big\{\iota \big\langle \mathbf{t} , (\bX_i^\T, \widetilde{\bX}_i^\T,   Y_i)^\T \big\rangle \Big\}\Big] = \eE \Big[\exp\Big\{\iota \big\langle \mathbf{t} , \big\{(\bX_i^\T, \widetilde{\bX}_i^\T)_{\tswap(G)}, Y_i\big\}^\T \big\rangle\Big\}\Big]. 
    %\eE \Big[\exp\Big\{\iota \big\langle \mathbf{t} , \big\{\bX_i(\cdot)^\T, \widetilde{\bX}_i(\cdot)^\T,   Y_i(\cdot)\big\}^\T \big\rangle \Big\}\Big] = \eE \Big[\exp\Big\{\iota \big\langle \mathbf{t} , \big(\{\bX_i(\cdot)^\T, \widetilde{\bX}_i(\cdot)^\T\}_{\tswap(G)}, Y_i(\cdot)\big)^\T \big\rangle\Big\}\Big]. 
\end{equation*} 
Given $\mathbf{t} = (\mathbf{c}_1^\T \widehat\bphi_{1},\dots, \mathbf{c}_p^\T \widehat\bphi_{p}, \widetilde{\mathbf{c}}_1^\T \widehat\bphi_{1}, \dots, \widetilde{\mathbf{c}}_p^\T \widehat\bphi_{p}, \mathbf{c}_Y^\T\widehat\bpsi)^\T,$
where $\mathbf{c}_{j} = (c_{j1}, \dots, c_{jd_j}, 0 ,0 ,\dots )^\T$, %$c_{jl} \in \eR$,  
$\widetilde{\mathbf{c}}_{j} = (\tilde c_{j1},  \dots, \tilde c_{jd_j}, 0 ,0 ,\dots )^\T$, %$\tilde c_{jl} \in \eR$, $l\in [d_j]$,
 ${\mathbf{c}}_{Y} = ( c_{Y,1},  \dots, c_{Y,\tilde{d}}, 0 ,0 ,\dots )^\T$, %$c_{Yk}\in\eR$, $k\in [\tilde{d}]$, 
$\widehat{\bphi}_j = (\hat\phi_{j1}, \hat\phi_{j2}, \dots )^\T$ for $j \in [p]$ and $\widehat{\bpsi} = (\hat\psi_{1}, \hat\psi_{2}, \dots )^\T$, it follows that 
%Conditional on $\mathbf{t} = \widehat{\bf t} = (\mathbf{c}_1^\T \widehat\bphi_{1},\dots, \mathbf{c}_p^\T \widehat\bphi_{p}, \widetilde{\mathbf{c}}_1^\T \widehat\bphi_{1}, \dots, \widetilde{\mathbf{c}}_p^\T \widehat\bphi_{p}, \mathbf{c}_Y^\T\widehat\bpsi)^\T,$
%for any $\mathbf{c}_{j} = (c_{j1}, \dots, c_{jd_j}, 0 ,0 ,\dots )^\T$,
%any $\widetilde{\mathbf{c}}_{j} = (\tilde c_{j1},  \dots, \tilde c_{jd_j}, 0 ,0 ,\dots )^\T$, 
%and any ${\mathbf{c}}_{Y} = ( c_{Y1},  \dots, c_{Y\tilde{d}}, 0 ,0 ,\dots )^\T$, where 
%$\widehat{\bphi}_j = (\hat\phi_{j1}, \hat\phi_{j2}, \dots, )^\T$, and $\widehat{\bpsi} = (\hat\psi_{1}, \hat\psi_{2}, \dots, )^\T$,  $j \in [p],$ we have 
\begin{equation}\label{co.FFLR}
  \eE \Big[\exp\big\{\iota \big\langle \mathbf{c} , \big({\widehat{\bxi}_i}^\T, {\widecheck{{\bxi}}^\T_i}  ,\widehat{\bbbeta}_i^\T \big)^\T \big\rangle \big\} \Big| \mathbf{t}  \Big]   % = \widehat{\bf t} 
    = \eE \Big[\exp\big\{\iota \big\langle \mathbf{c}  , \big\{({\widehat{\bxi}_i^\T}, {\widecheck{{\bxi}}_i^\T})_{\tswap(G)}, \widehat{\bbbeta}_i^\T \big\}^\T \big\rangle\big\}  \Big| \mathbf{t}  \Big],%= \widehat{\bf t}
\end{equation}
for any $ \mathbf{c} = (\Bar{\mathbf{c}}_{1}^\T, \dots, \Bar{\mathbf{c}}_{p}^\T,\widetilde{\Bar{\mathbf{c}}}_{1}^\T, \dots, \widetilde{\Bar{\mathbf{c}}}_{p}^\T, \Bar{\mathbf{c}}_{Y}^\T)^\T,$ %\in \eR^{2\sum_{j=1}^pd_j+\tilde{d}}
where 
$\Bar{\mathbf{c}}_{j} = (c_{j1}, \dots, c_{jd_j})^\T$,
$\widetilde{\Bar{\mathbf{c}}}_{j} = (\tilde c_{j1},  \dots, \tilde c_{jd_j})^\T$, $\Bar{\mathbf{c}}_{Y} = (c_{Y,1}, \dots, c_{Y,\tilde{d}})^\T$. 
By (\ref{co.FFLR}) and the total expectation formula, the joint characteristic function of $({\widehat{\bxi}_i}^\T, {\widecheck{{\bxi}}_i}^\T,\widehat\bbbeta_i^\T)$ is equal to that of $\big\{({\widehat{\bxi}_i}^\T, {\widecheck{{\bxi}}_i} ^\T)_{\tswap(G)}, \widehat\bbbeta_i^\T \big\},$ which implies that
%equality (\ref{co.FFLR}) demonstrates that 
$({\widehat{\bxi}_i}^\T, {\widecheck{{\bxi}}_i}^\T,\widehat\bbbeta_i^\T) \overset{D}{=} \big\{({\widehat{\bxi}_i}^\T, {\widecheck{{\bxi}}_i} ^\T)_{\tswap(G)}, \widehat\bbbeta_i^\T \big\}$, and thus completes our proof of this lemma.
\end{proof}
%\textbf{Proof of Corollary~\ref{CorFF}}.

%The proof is almost same as that of Corollary~\ref{CorSF}, the only difference is that the $\mathbf{t}$ in the FFLR is chosen $= (\mathbf{c}_1^\T \bphi_{1},\dots, \mathbf{c}_p^\T \bphi_{p}, \widetilde{\mathbf{c}}_1^\T \bphi_{1}, \dots, \widetilde{\mathbf{c}}_p^\T \bphi_{p}, \mathbf{c}_Y^\T \bpsi)^\T,$ where 
%$\mathbf{c}_Y = (c_{Y1}, \dots, c_{Y\tilde{d}}, 0 ,0 ,\dots )^\T$, $\bpsi = (\psi_{1}, \psi_{2}, \dots, )^\T$.
%Finally, conditional on $\bphi_{j} = \widehat{\bphi}_j$, $\bpsi = \widehat{\bpsi}$, $\bX(\cdot)$ equals to a row of $\bX(\cdot)$,  $\widetilde{\bX}(\cdot)$ and $\widetilde{\bY}(\cdot)$ equals to the corresponding row of $\widetilde{\bX}(\cdot)$ and $\widetilde{\bY}(\cdot)$, respectively, and according to the total expectation formula,
%we have finished the proof of Corollary~\ref{CorFF}.

%$\square$

By Corollary~\ref{CorFF}, we next prove Lemma~\ref{lemmacoin} in the context of FFLR.     

\begin{proof}[Proof of Lemma~\ref{lemmacoin} in FFLR] 
   Denote $\widehat{\bUps}= (\widehat\bbbeta_1, \dots, \widehat\bbbeta_n)^\T \in \eR^{n\times \tilde{d}}$. 
    First, since $W_j  = \| \widehat\bB_j\|_\tF- \| \widehat\bB_{p+j}\|_\tF$ for each $j \in [p]$,  
    %and $w_j$ is antisymmetric function satisfying $w_j(\cdot,*) = -w_j(*,\cdot)$,
      the flip-sign property of $W_{j} = W_j( \widehat {\boldsymbol \Xi}, \widecheck{ {\boldsymbol \Xi}},\widehat{\bUps})$ holds.      
{Second, denote $\boldsymbol W = (W_1,\dots,W_p)^\T$ and take any subset $G \subseteq S^{c}$ of null. 
By swapping variables in $G$, we define 
$$
\boldsymbol W_{\tswap{(G)}}  \overset{\triangle}{=} \Big(W_1\big(\{\widehat {\boldsymbol \Xi}, \widecheck{ {\boldsymbol \Xi}} \}_{\tswap(G)},\widehat{\bUps}\big), \dots, W_p\big(\{\widehat {\boldsymbol \Xi}, \widecheck{ {\boldsymbol \Xi}} \}_{\tswap(G)},\widehat{\bUps}\big)\Big)^\T. 
$$
According to Corollary~\ref{CorFF}, we have $(\widehat\bXi,\widecheck{\bXi},\widehat{\bUps})   \overset{D}{=} \big((\widehat\bXi,\widecheck{\bXi})_{\tswap{(G)}},\widehat{\bUps}\big)$, implying $\boldsymbol W \overset{D}{=} \boldsymbol W_{\tswap{(G)}}$.
%Considering swapping variables in $G$, we have 
%$$
%\boldsymbol W_{\tswap{(G)}}  \overset{\triangle}{=} \Big(W_1\big(\{\widehat {\boldsymbol \Xi}, \widecheck{ {\boldsymbol \Xi}} \}_{\tswap(G)},\widehat{\bUps}\big), \dots, W_p\big(\{\widehat {\boldsymbol \Xi}, \widecheck{ {\boldsymbol \Xi}} \}_{\tswap(G)},\widehat{\bUps}\big)\Big)^\T.
%$$
%By Corollary~\ref{CorFF}, we have $(\widehat\bXi,\widecheck{\bXi},\widehat{\bUps})   \overset{D}{=} \big((\widehat\bXi,\widecheck{\bXi})_{\tswap{(G)}},\widehat{\bUps}\big)$.
%which implies that $\boldsymbol W \overset{D}{=} \boldsymbol W_{\tswap{(G)}}$.
} 

Lastly, let $S^c_- = \{j \in S^{c}: \delta_j = -1 \}$, where $\boldsymbol \delta = (\delta_1,  \dots, \delta_p)^\T$ is a sequence of independent random variables. Each $\delta_j$ follows a Rademacher distribution if $j \in S^{c}$, and $\delta_j = 1$ otherwise. 
In the first step, we establish $\boldsymbol W _{\tswap(S^c_-)} = (\delta_1 W_1,  \dots,\delta_p W_p)^\T.$ Following the second step, we obtain $\boldsymbol W _{\tswap(S^c_-)} \overset{D}{=} \boldsymbol W.$ Combing the above results, we have $(\delta_1 W_1, \dots,\delta_p W_p)^\T \overset{D}{=} \boldsymbol W,$ which completes the proof of this lemma. 
\end{proof}
%\textbf{Proof of Lemma~\ref{lemmacoin} in FFLR}.

%$\square$

\begin{lemmaS}\label{lemmaeqFFLR}
   Suppose that  Condition~\ref{irrSFLR} holds. Then  $\|\beta_{j}\|_\cS = 0$  if and only if $j\in S^c$, where $S^c$ is defined in (\ref{f.nullset}). 
\end{lemmaS}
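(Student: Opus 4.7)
The plan is to prove both implications of the equivalence separately.

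The forward implication ($\|\beta_j\|_{\cS}=0 \Rightarrow j \in S^c$) is immediate. With $\beta_j = 0$ a.e., the FFLR model (\ref{FFLR.eq}) reduces to
$$Y_i(v) = \sum_{k\neq j}\int_{\cU} X_{ik}(u)\beta_k(u,v)du + \varepsilon_i(v),$$
which does not involve $X_{ij}(\cdot)$. Since $\varepsilon_i(\cdot)$ is independent of $\bX_i(\cdot)$, the conditional distribution of $Y_i(\cdot)$ given $\bX_i(\cdot)$ depends only on $X_{i,-j}(\cdot)$, so $Y_i \perp X_{ij} \mid X_{i,-j}$ and $j \in S^c$.

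For the reverse implication, I would argue by contradiction: assume $j \in S^c$ and $\|\beta_j\|_{\cS}\neq 0$. Using that $\varepsilon_i$ is mean-zero and independent of $\bX_i$,
$$\mathbb{E}[Y_i(v)\mid\bX_i(\cdot)] = \sum_{k=1}^p\int_{\cU}X_{ik}(u)\beta_k(u,v)du,$$
while $j \in S^c$ forces $\mathbb{E}[Y_i(v)\mid\bX_i(\cdot)] = \mathbb{E}[Y_i(v)\mid X_{i,-j}(\cdot)]$ almost surely. Subtracting gives
$$\int_{\cU}X_{ij}(u)\beta_j(u,v)du = \mathbb{E}[Y_i(v)\mid X_{i,-j}(\cdot)] - \sum_{k\neq j}\int_{\cU}X_{ik}(u)\beta_k(u,v)du \quad \text{a.s.},$$
so for every $v$, the linear functional $\int_{\cU}X_{ij}(u)\beta_j(u,v)du$ of $X_{ij}$ is $\sigma(X_{i,-j})$-measurable. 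Under the Gaussian specification consistent with Example~\ref{ex.MGP}, this conditional expectation is a bounded linear transformation of $X_{i,-j}$, providing bivariate kernels $\{\alpha_k\}_{k\neq j}$ with
$$\int_{\cU}X_{ij}(u)\beta_j(u,v)du = \sum_{k\neq j}\int_{\cU}\alpha_k(u,v)X_{ik}(u)du \quad \text{a.s.}$$
I would then invoke the singular value decomposition of $\beta_j$, which has at least one non-trivial triple because $\|\beta_j\|_{\cS}\neq 0$. Testing the identity against a complete orthonormal basis of $L_2(\cV)$ and inverting along the non-degenerate singular directions, I recover bivariate kernels $\{\gamma_k\}_{k\neq j}$ with $X_{ij}(u) = \sum_{k\neq j}\int_{\cU}\gamma_k(u,v)X_{ik}(v)dv$ almost surely, contradicting Condition~\ref{irrSFLR}. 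Hence $\|\beta_j\|_{\cS}=0$.

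The main obstacle will be this inversion step. Since $\beta_j$ is compact, its range in $L_2(\cV)$ need not be all of $L_2(\cV)$, so the displayed identity a priori only constrains the projection of $X_{ij}$ onto the closure of the range of $\beta_j^*$ in $L_2(\cU)$. I expect to resolve this by expanding both sides in the SVD basis of $\beta_j$ to extract one scalar constraint per non-trivial singular direction, and then combining them carefully---using the non-triviality of $\beta_j$ together with Condition~\ref{irrSFLR} restricted to the relevant subspace of $L_2(\cU)$---to derive the full functional representation of $X_{ij}$ needed for the desired contradiction.
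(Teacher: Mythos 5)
Your forward implication is fine and is essentially the paper's. The reverse implication, however, takes a genuinely different route from the paper and contains a gap. The paper works with the conditional characteristic function: conditional independence forces the factorization of $\eE\big[\exp\{\iota\langle (t_Y,t_j),(Y,X_j)\rangle\}\mid X_{-j}\big]$, so the interaction term $\eE\big[\exp\{\iota\langle t_Y,\int_\cU\beta_j(\cdot,u)X_j(u)\,du\rangle\}\mid X_{-j}\big]$ must be constant, and Condition~\ref{irrSFLR} is then invoked directly to force $\|\beta_j\|_\cS=0$; no distributional assumption on $\bX$ and no inversion of $\beta_j$ appear. Your conditional-mean route already imports an assumption the lemma does not make: the step asserting that $\eE[\,\cdot\mid X_{i,-j}]$ is a bounded linear transformation representable by bivariate kernels $\{\alpha_k\}$ needs the Gaussian (MGP) specification, which is used in Section~\ref{sec:cons} for constructing knockoffs but is not a hypothesis of Lemma~\ref{lemmaeqFFLR} or of Theorem~\ref{thm_fdr_fflr}.

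The more serious problem is the inversion step you yourself flag. Writing $B_j$ for the integral operator with kernel $\beta_j$, your identity only pins down $B_jX_{ij}$ as a linear function of $X_{i,-j}$, hence only the projection of $X_{ij}$ onto $(\ker B_j)^{\perp}=\overline{\mathrm{range}(\beta_j^{*})}$. Since the lemma must cover every $\beta_j$ with $\|\beta_j\|_\cS\neq 0$, including rank-one kernels, $\ker B_j$ is in general nontrivial and the component of $X_{ij}$ in $\ker B_j$ is completely unconstrained by the identity; no expansion in the SVD basis can recover ``the full functional representation of $X_{ij}$.'' Your proposed repair---applying Condition~\ref{irrSFLR} ``restricted to the relevant subspace''---is not available, because the condition as stated forbids an exact kernel representation of the entire curve $X_{ij}$, not of a projection or of a single functional $\langle\phi_1,X_{ij}\rangle$: one can have $\langle\phi_1,X_{ij}\rangle$ determined by $X_{i,-j}$ while $X_{ij}$ retains an independent component in $\ker B_j$, so Condition~\ref{irrSFLR} holds and no contradiction follows. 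To close a regression-type argument you would need a strengthened irrepresentability condition (no nonzero linear functional of $X_{ij}$ is a.s.\ determined by $X_{i,-j}$), which is in effect the strength at which the paper applies Condition~\ref{irrSFLR} to its interaction term; as written, your final step does not go through.
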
 
\begin{proof}
    On the one hand,  assume that $\|\beta_j\|_\cS = 0$. For any $ \big(t_Y(\cdot),t_j(\cdot)\big) \in (\mathcal{H}_Y,\mathcal{H}_j)$, the joint characteristic function of $\big(Y(\cdot), {X}_j(\cdot)\big)$ conditional on $ X_{-j}(\cdot)$ can be factorized as  
    \begin{equation*}\label{eqinden}
    \begin{array}{cl}
         & \eE\big[ \exp\big\{\iota \big\langle (t_Y,t_j)^\T, (Y, {X}_j)^\T  \big\rangle \big\} ~\big|~ X_{-j}\big]  \\
       =  &  \eE \big[\exp\big\{\iota \big\langle (t_Y,t_j)^\T, \big\{\sum_{k\neq j} \int_{\cU}\beta_k(\cdot,u) X_k(u) du + \varepsilon, {X}_j\big\}^\T \big\rangle \big\} ~\big|~ X_{-j}\big]\\
       =   &   \eE  \big[ \exp\big\{\iota \langle t_Y, \sum_{k\neq j} \int_{\cU}\beta_k(\cdot,u) X_k(u) du + \varepsilon \rangle \big\} ~\big|~ X_{-j}\big]  \eE  \big[ \exp\big\{\iota \langle t_j ,  {X}_j \rangle \big\} ~\big|~ X_{-j}\big]  \\
     =  &   \eE  \big[ \exp\big\{\iota \langle t_Y, Y \rangle \big\} ~\big|~ X_{-j}\big]  \eE  \big[ \exp\big\{\iota \langle t_j ,  {X}_j \rangle \big\} ~\big|~ X_{-j}\big],  
        %& \eE\Big[ \exp\big\{\iota \big\langle \big(t_Y(\cdot),t_j(\cdot)\big)^\T, \big(Y(\cdot), {X}_j(\cdot)\big)^\T  \big\rangle \big\} ~\Big|~ X_{-j}(\cdot)\Big]  \\
       %=  &  \eE \Big[\exp\big\{\iota \big\langle \big(t_Y(\cdot),t_j(\cdot)\big)^\T, \big\{\sum_{k\neq j} \int_{\cU}\beta_k(\cdot,u) X_k(u) du + \varepsilon(\cdot), {X}_j(\cdot)\big\}^\T \big\rangle \big\} ~\Big|~ X_{-j}(\cdot)\Big]\\
       %=   &   \eE  \Big[ \exp\big\{\iota \big\langle t_Y(\cdot), \sum_{k\neq j} \int_{\cU}\beta_k(\cdot,u) X_k(u) du + \varepsilon(\cdot) \big\rangle \big\} ~\Big|~ X_{-j}(\cdot)\Big]  \eE  \Big[ \exp\big\{\iota \big\langle t_j(\cdot) ,  {X}_j(\cdot) \big\rangle \big\} ~\Big|~ X_{-j}(\cdot)\Big]  \\
     %=  &   \eE  \Big[ \exp\big\{\iota \big\langle t_Y(\cdot), Y(\cdot) \big\rangle \big\} ~\Big|~ X_{-j}(\cdot)\Big]  \eE  \Big[ \exp\big\{\iota \big\langle t_j(\cdot) ,  {X}_j(\cdot) \big\rangle \big\} ~\Big|~ X_{-j}(\cdot)\Big],  
    \end{array}
\end{equation*}
where the second equality comes from the fact that $\sum_{k\neq j}\int_{\cU}\beta_k(\cdot,u) X_k(u) du + \varepsilon$ and ${X}_j$ are independent conditional on $X_{-j}$.  
Hence it implies that $j \in S^c$. 

On the other hand, assume that $Y$ and $X_j$ are conditionally independent, i.e., $j\in S^c$. Then the joint characteristic function conditional on $X_{-j}(\cdot)$ can be factorized as 
$$\eE\big[ \exp\big\{\iota \big\langle (t_Y,t_j)^\T, (Y, {X}_j)^\T \big\rangle \big\} ~\big|~ X_{-j}\big] =     \eE  \big[ \exp\big\{\iota \langle t_Y, Y \rangle \big\} ~\big|~ X_{-j}\big]  \eE  \big[ \exp\big\{\iota \langle t_j ,  {X}_j \rangle \big\} ~\big|~ X_{-j}\big].$$ 
%$\eE\Big[ \exp\big\{\iota \big\langle \big(t_Y(\cdot),t_j(\cdot)\big)^\T, \big(Y(\cdot), {X}_j(\cdot)\big)^\T \big\rangle \big\} ~\Big|~ X_{-j}(\cdot)\Big] =     \eE  \Big[ \exp\big\{\iota \big\langle t_Y(\cdot), Y(\cdot) \big\rangle \big\} ~\Big|~ X_{-j}(\cdot)\Big]  \eE  \Big[ \exp\big\{\iota \big\langle t_j(\cdot) ,  {X}_j(\cdot) \big\rangle \big\} ~\Big|~ X_{-j}(\cdot)\Big].$  
%    \begin{equation}
%    \begin{array}{cl}
%        & \eE\big[ \exp\big(\iota \big\langle \big(t_Y(\cdot),t_j(\cdot)\big), \big(Y(\cdot), {X}_j(\cdot)\big) \big\rangle \big)\big| X_{-j}(\cdot)\big]  \\
%        =   &  \eE  \big[ \exp\big(\iota \big\langle t_Y, Y(\cdot) \big\rangle \big)\big| X_{-j}(\cdot)\big]  \eE  \big[ \exp\big(\iota \big\langle t_j ,  {X}_j(\cdot) \big\rangle \big)\big| X_{-j}(\cdot)\big].    
%    \end{array}
%\end{equation} 
In FFLR, it is worth noting that the left-hand side involves an interaction term, i.e., $\eE\Big[ \exp\big\{\iota \big\langle t_Y, \int_\cU \beta_{j} (\cdot,u)X_{j}(u)du \big\rangle \big\} ~\Big|~ X_{-j}\Big], $
%$$\eE\Big[ \exp\big\{\iota \big\langle t_Y(\cdot), \int_\cU \beta_{j} (\cdot,u)X_{j}(u)du \big\rangle \big\} ~\Big|~ X_{-j}(\cdot)\Big], $$
which needs to be a constant.
Condition~\ref{irrSFLR} implies that the interaction term is a constant only when $\|\beta_j\|_\cS=0.$ Combing the above results, we complete the proof of this lemma. 
\end{proof}

\begin{proof}[Proof of Theorem~\ref{thm_fdr_fflr}] 
  Provided that Lemma~\ref{lemmacoin} applies to FFLR, this confirms that the signs of null statistics are distributed as i.i.d. coin flips. 
  By the Theorem~3.4 in \cite{candes2018},  we have 
\begin{equation*} \label{eqfflr}
{\mathbb E}\left[\frac{|\widehat{S}_1  \cap S^c |}{|\widehat{S}_1|}\right] \leq  q~~\text{and}~~
{\mathbb E}\left[\frac{|\widehat{S}_0  \cap S^c|}{| \widehat{S}_0|+1/q}\right] \leq q, 
\end{equation*} 
where $S^c$ is defined as (\ref{f.nullset}). 
By Lemma~\ref{lemmaeqFFLR}, we establish the equivalence between $S^c$ in (\ref{f.nullset}) and the set $\{j: \|\beta_j\|_\cS=0\}$, whose complement is defined in (\ref{support.FFLR}). This equivalence demonstrates that FDR in FFLR is effectively controlled, which completes our proof. 
%    Similarly, since we have proved that the null statistics' signs are distributed as i.i.d. coin flips in Lemma \ref{lemmacoin}, based on the Theorem 3.4 in \cite{candes2018}, then we have
%\begin{equation} \label{eqfflr}
%    \begin{array}{cll}
%{\mathbb E}\Big[\frac{|\widehat{S}_1  \cap S^c |}{|\widehat{S}_1|}\Big] &\leq & q,\\
%{\mathbb E}\Big[\frac{|\widehat{S}_0  \cap S^c|}{| \widehat{S}_0|+1/q}\Big] &\leq& q.\\
%\end{array}
%\end{equation} 
%Then,  based on Lemma \ref{lemmaeqFFLR}, we can get the FDR in FFLR is controlled.
\end{proof}
%\textbf{Proof of Theorem~\ref{thm_fdr_fflr}}.

%$\square$

\subsection{Proof of Theorem~\ref{thm_power_fflr}} 
\label{pf.power_fflr}
Before presenting technical lemmas used in the proof of Theorem~\ref{thm_power_fflr}, we begin with introducing some notation. For $j \in [p]$, denote ${\sigma}^{X,Y}_{j lm} = \eE [ \xi_{ijl} \eta_{im} ]$ and its estimator $\hat{\sigma}^{X,Y}_{j lm} = n^{-1} \sum_{i=1}^n \hat\xi_{ijl} \hat\eta_{im}$ for $l \in [d]$, $m \in [\tilde{d}].$ For $j \in [2p]\backslash [p],$ denote ${\sigma}^{X,Y}_{j lm} = \eE [ \tilde\xi_{i(j-p)l} \eta_{im}]$ and its estimator  $\hat{\sigma}^{X,Y}_{jlm} = n^{-1} \sum_{i=1}^n  \check\xi_{i(j-p)l} \hat{\eta}_{im}.$ %, where $l \in [d]$, $m \in [\tilde{d}]$. 
Let $\bUps = (\bbbeta_{1},\dots,\bbbeta_{n})^{\T}\in \eR^{n\times \tilde{d}}$ with the estimator $\widehat\bUps = (\widehat\bbbeta_{1},\dots,\widehat\bbbeta_{n})^{\T}$. 

%To prove Theorem~\ref{thm_power_fflr}, we first present some technical lemmas with their proofs.  In the following several lemmas, for $j \in [p]$, ${\sigma}^{X,Y}_{j lm} = \eE [ \xi_{ijl} \eta_{im} ]$ with estimator  $\hat{\sigma}^{X,Y}_{j lm} = n^{-1} \sum_{i=1}^n \hat\xi_{ijl} \hat\eta_{im}$;
%for $j \in [2p]\backslash [p],$  ${\sigma}^{X,Y}_{j lm} = \eE [ \tilde\xi_{i(j-p)l} \eta_{im}]$ with estimator  $\hat{\sigma}^{X,Y}_{jlm} = n^{-1} \sum_{i=1}^n  \check\xi_{i(j-p)l} \hat{\eta}_{im}$, where $l \in [d]$, $m \in [\tilde{d}]$.  Additionally, let $\bUps = (\bbbeta_{1},\dots,\bbbeta_{n})^{\T}$  with estimator  $\widehat\bUps = (\widehat\bbbeta_{1},\dots,\widehat\bbbeta_{n})^{\T}$.     

\begin{lemmaS} \label{lemmaC1}
   %Suppose that Condition \ref{cond_eigen_SFLR} holds. 
  %Similarly, we have if $n \gtrsim  d^{4\alpha +2} \log(pd)$,  then there exists some positive constants $c_1$ and $c_2$ such that,  with probability greater than $1-c_1(pd)^{-c_2}$, the estimates $\{\hat{\sigma}_{jk lm}\}$ satisfy 
   % $$\max_{j,k \in [2p]\atop l,m \in [d]} \frac{|{\sigma}_{jk lm}  - \hat{\sigma}_{jk lm}|}{(l\vee m)^{\alpha+1} \omega_{jl}^{1/2}\omega_{km}^{1/2}} \lesssim \sqrt{\log(pd)/n} .$$ 
     Suppose that  Conditions~\ref{cond_eigen_SFLR}, \ref{cond_error_FFLR}, \ref{cond_eigen_FFLR} hold. If $n \gtrsim  ( d^{4\alpha +2} \vee  \tilde{d}^{4\tilde{\alpha} +2} )\log(pd\tilde{d})$,  
      then there exist some positive constants $\tilde{c}_7, \tilde{c}_8$ such that, with probability greater than $1-\tilde{c}_7(pd \tilde{d})^{-\tilde{c}_8}$,  the estimates $\{\hat{\sigma}^{X,Y}_{jlm}\}$ satisfy 
    $$\max_{j\in [2p]\atop l\in [d], m\in [\tilde{d}]} \frac{|\hat{\sigma}^{X,Y}_{j lm}  - {\sigma}^{X,Y}_{j lm}|}{(l^{\alpha+1}\vee m ^{\tilde\alpha+1}) \omega_{jl}^{1/2}\tilde\omega_{m}^{1/2}}   \lesssim  \sqrt{\frac{\log(pd \tilde{d})}{n}} .$$
\end{lemmaS}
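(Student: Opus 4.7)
The plan is to mirror the argument used for the bound (\ref{sd.XX}) in Lemma~\ref{lemmaC1SFLR} and its functional-response analogue (\ref{sd.XY.SF}), appealing to the concentration machinery of \textcolor{blue}{Fang et al.} (\textcolor{blue}{2022}) and \textcolor{blue}{Guo and Qiao} (\textcolor{blue}{2023}) that is already invoked in the SFLR case. The statistic $\hat\sigma^{X,Y}_{jlm}$ is the sample cross-covariance between the estimated FPC scores of the (original or knockoff) functional covariate $X_j$ and those of the functional response $Y$, whereas in (\ref{sd.XY.SF}) only one FPCA was involved. Hence the required bound is structurally the same as (\ref{sd.XX}) but with one coordinate index $(k,m)$ replaced by $(Y,m)$: the denominator is rescaled by $\omega_{jl}^{1/2}\tilde\omega_m^{1/2}$ and the eigengap factor becomes $(l^{\alpha+1}\vee m^{\tilde\alpha+1})$.

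First I would invoke a two-sided deviation bound of exactly this type from \textcolor{blue}{Fang et al.} (\textcolor{blue}{2022}, Proposition~1) or \textcolor{blue}{Guo and Qiao} (\textcolor{blue}{2023}, Theorem~4), which, under the standardisation by $\omega_{jl}^{1/2}\tilde\omega_m^{1/2}$ and eigengap factor $(l^{\alpha+1}\vee m^{\tilde\alpha+1})$, yields
\[
\max_{j\in[2p],\, l\in[d],\, m\in[\tilde d]} \frac{|\hat\sigma^{X,Y}_{jlm}-\sigma^{X,Y}_{jlm}|}{(l^{\alpha+1}\vee m^{\tilde\alpha+1})\omega_{jl}^{1/2}\tilde\omega_m^{1/2}}\;\lesssim\; \mathcal{M}_{X,Y}\,\sqrt{\frac{\log(pd\tilde d)}{n}},
\]
provided the sample size dominates the eigenvalue-decay and gap terms on both sides, i.e.\ $n \gtrsim (d^{4\alpha+2}\vee \tilde d^{4\tilde\alpha+2})\log(pd\tilde d)$. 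The regularity pieces feeding this bound are: Condition~\ref{cond_eigen_SFLR} (eigenvalues and eigengaps of $\Sigma_{X_jX_j}$), Condition~\ref{cond_eigen_FFLR} (eigenvalues and eigengaps of $\Sigma_{YY}$), and Condition~\ref{cond_error_FFLR} together with the Gaussianity built into Example~\ref{ex.MGP} to secure the sub-Gaussian concentration needed for both the X-side and Y-side FPCAs.

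Next I would establish that the cross-stability constant $\mathcal{M}_{X,Y}=\mathcal{M}_1^X+\mathcal{M}_1^Y+\mathcal{M}_{1,1}^{X,Y}$ is $O(1)$. Under i.i.d.\ sampling, $\mathcal{M}_1^X=\mathcal{M}_1^Y=1$, as already noted in the proof of Lemma~\ref{lemmaC1SFLR}. For $\mathcal{M}_{1,1}^{X,Y}$ I would replicate the derivation in (\ref{sta.def}), now pairing an element $\bPhi\in\boldsymbol{\mathcal{H}}_0$ on the X-side with a test function $\psi\in\mathcal{H}_Y$ on the Y-side, and expand both under the Karhunen--Lo\`eve expansions in (\ref{exX}); Cauchy--Schwarz at the level of FPC-score representations then gives $\mathcal{M}_{1,1}^{X,Y}\le 1$, exactly as in the scalar-response case. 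Substituting $\mathcal{M}_{X,Y}=O(1)$ into the previous display delivers the claim; a union bound of size $2pd\tilde d$ over $(j,l,m)$ is already absorbed into the $\log(pd\tilde d)$ factor.

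The main obstacle I anticipate is the same bookkeeping issue that appears in Lemma~\ref{lemmaC1SFLR}: the numerator $\hat\sigma^{X,Y}_{jlm}-\sigma^{X,Y}_{jlm}$ contains errors from two distinct empirical eigendecompositions—one for the X-covariance operator $\Sigma_{X_jX_j}$ and one for the Y-covariance operator $\Sigma_{YY}$—and these errors have to be controlled relatively to the (small) eigenvalue products $\omega_{jl}\tilde\omega_m$ and the eigengap weights simultaneously, rather than absolutely. Splitting the error into (i) the oracle cross-covariance of true scores, (ii) a perturbation from $\hat\phi_{jl}-\phi_{jl}$, and (iii) a perturbation from $\hat\psi_m-\psi_m$, and bounding (ii)--(iii) through the Davis--Kahan-type inequalities (used in the cited propositions) under the eigengap conditions in Conditions~\ref{cond_eigen_SFLR} and \ref{cond_eigen_FFLR}, is where the sample-size requirement $n\gtrsim(d^{4\alpha+2}\vee\tilde d^{4\tilde\alpha+2})\log(pd\tilde d)$ will be consumed, as this scaling is what makes the relative perturbation of each eigenpair $o(1)$ uniformly over $(j,l,m)$.
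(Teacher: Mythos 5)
Your proposal matches the paper's argument: the paper proves this lemma simply by noting it is "similar to Lemma~\ref{lemmaC1SFLR}," i.e.\ by invoking the same concentration results (Proposition 1 of Fang et al., 2022 / Theorem 4 of Guo and Qiao, 2023) for the normalized cross-covariances of estimated FPC scores and then bounding the cross-spectral stability measure $\mathcal{M}_{1,1}^{X,Y}\le 1$ exactly as in the display (\ref{sta.def}), now with a functional response. Your additional remarks on the two-eigendecomposition perturbation bookkeeping and the origin of the sample-size requirement are consistent with what those cited results already absorb, so there is no substantive difference.
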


\begin{proof}
The proof of this lemma is similar to that of Lemma~\ref{lemmaC1SFLR}, we thus omit the proof. 
\end{proof}

%\begin{lemmaS}\label{lemmaeigenFFLR}
%     Suppose the Condition \ref{cond_eigen_SFLR} and \ref{cond_eigen_FFLR} hold. Then if $n \gtrsim   d^{4\tilde\alpha +2} \log(\tilde{d})$, there exist constants $c_5,c_6$ such that 
%    \begin{equation}\label{Y.Eigens}
 %         \max_{l \in [\tilde{d}]}  \|\hat{\psi}_{jl}- {\psi}_{jl}\| \lesssim d^{\tilde\alpha+1}\sqrt{\frac{\log(\tilde{d})}{n}}, 
%    \end{equation}
%   with probability greater than $1- c_5 (\tilde{d})^{-%c_6}$. 
%\end{lemmaS}
%\begin{proof}
%By Proposition 3 of \textcolor{blue}{Guo and Qiao} (\textcolor{blue}{2023}) and  $\mathcal{M}_1^Y=1$, we completes the proof. 
%\end{proof}
%\begin{lemmaS}\label{lemmaC2}
 %  Suppose the Condition \ref{cond_inf_SFLR} - \ref{cond_eigen_SFLR} hold. Denote ${\boZ} =({{\bXi}}, {{\widetilde{\bXi}}}) \in \eR^{n \times 2pd}$ and 
%$\widehat{\boZ} =({\widehat{\bXi}}, {\widecheck{{\bXi}}}) \in \eR^{n \times 2pd}$.  Then there exist constants $c_z,c_1,c_2$ such that 
%    $$
%    \btheta^{\T} \{\widehat{\bD}^{-1}n^{-1}(\widehat{\boZ}^\T\widehat{\boZ})\widehat{\bD}^{-1}\} \btheta \geq \underline{\mu} \|\btheta\|^2 - c_z d^{\alpha+1} \sqrt{\log(pd)/n}  \|\btheta\|_1^2
%    $$
%with probability greater than $1- c_1 (pd)^{-c_2}.$
     
%\end{lemmaS}

%\subsubsection{ Lemma \ref{lemmaC3} and its Proof}

\begin{lemmaS}\label{lemmaC3}
     Suppose that  Conditions~\ref{cond_eigen_SFLR}, \ref{cond_error_FFLR}--\ref{cond_coef_FFLR}  hold. If $n \gtrsim   ( d^{4\alpha +2} \vee  \tilde{d}^{4\tilde{\alpha} +2} )\log(pd\tilde{d})$, then there exist some positive constants $c_e,\tilde{c}_9,\tilde{c}_{10}$ such that 
    $$
    n^{-1}\|\widehat{\bD}^{-1} \widehat{\boZ}^\T (\widehat{\bUps} -  \widehat{\boZ} \bB) \|^{(d \times \tilde{d})}_{\max} \leq c_es d^{1/2}
    \big(\{ d^{\alpha+3/2} \vee \tilde{d}^{\tilde{\alpha}+3/2} \}\{\log(pd\tilde{d})/n\}^{1/2} +   d^{1/2- \tau}\big)
    $$
   with probability greater than $1- \tilde{c}_9 (pd\tilde{d})^{-\tilde{c}_{10}}.$
   %where $\| \bB\|^{(d \times \tilde{d})}_{\max} = \max_{j \in [2p]} \|\bB_j\|_\tF.$ 
\end{lemmaS}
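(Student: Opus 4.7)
The plan is to mirror the decomposition strategy of Lemma~\ref{lemmaC3SFLR} while accommodating the matrix-valued response $\widehat{\bUps}$. First, I would write
\begin{equation*}
n^{-1}\widehat{\bD}^{-1}\widehat{\boZ}^\T(\widehat{\bUps} - \widehat{\boZ}\bB) = \underbrace{\big[n^{-1}\widehat{\bD}^{-1}\widehat{\boZ}^\T\widehat{\bUps} - \bD^{-1}\eE[n^{-1}\boZ^\T\bUps]\big]}_{\text{(I)}} + \underbrace{\big[\bD^{-1}\eE[n^{-1}\boZ^\T\boZ\bB] - n^{-1}\widehat{\bD}^{-1}\widehat{\boZ}^\T\widehat{\boZ}\bB\big]}_{\text{(II)}} + \underbrace{\bD^{-1}\eE[n^{-1}\boZ^\T \bE^{\text{trunc}}]}_{\text{(III)}},
\end{equation*}
where $\bE^{\text{trunc}}$ collects the truncation residuals from the Karhunen--Lo\`eve expansions of both $\bX$ and $Y$ (absorbed via the mean-zero sub-Gaussian $\bvarepsilon$ contribution which cancels in expectation).

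Next, I would bound each piece in the $(d\times\tilde{d})$ block max norm. For term~(I), I would split it as $\widehat{\bD}^{-1}\big(n^{-1}\widehat{\boZ}^\T\widehat{\bUps} - \eE[n^{-1}\boZ^\T\bUps]\big) + (\widehat{\bD}^{-1} - \bD^{-1})\eE[n^{-1}\boZ^\T\bUps]$ and apply Lemma~\ref{lemmaC1} (bounding the cross-moment $\hat\sigma^{X,Y}_{jlm}$) together with Lemma~\ref{lemmaeigenSFLR} (applied to both the $\bX$ and $Y$ eigensystems via Condition~\ref{cond_eigen_FFLR}); converting the resulting elementwise bound to the block norm introduces a factor $(d\tilde{d})^{1/2}$, producing the advertised rate $d^{1/2}\{d^{\alpha+3/2}\vee\tilde{d}^{\tilde\alpha+3/2}\}\{\log(pd\tilde{d})/n\}^{1/2}$. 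Term~(II) decomposes as $(\bGamma - \widehat{\bGamma})\bD\bB + \widehat{\bGamma}(\bD-\widehat{\bD})\bB$, which I would control by reusing the concentration inequality (\ref{Gamma.max}) from Lemma~\ref{lemmaC2SFLR} together with the sparsity bound $\|\bD\bB\|_1^{(d\times\tilde{d})} = O(s)$, inherited from Condition~\ref{cond_coef_FFLR}(ii) and the Frobenius--norm structure of $\bB_j$. For the truncation term~(III), I would invoke the bivariate smoothness $|B_{jlm}|\lesssim (l+m)^{-\tau-1/2}$ from Condition~\ref{cond_coef_FFLR}(i) and an extension of Lemma~23 of Fang et al.~(2022) to bivariate truncation residuals, yielding a bound of order $sd^{1-\tau}$ in the block norm.

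The main obstacle is carefully tracking the interplay between the two truncation levels when bounding~(III), since the decay rate $(l+m)^{-\tau-1/2}$ mixes the indices $l > d$ and $m > \tilde{d}$, requiring a two-dimensional tail summation that splits along both truncation boundaries and verifies the resulting bound is dominated by the $d^{1-\tau}$ term from the $X$-side truncation under the hypothesis $\tau > (\alpha\vee\tilde\alpha)/2+1$. A secondary concern is that $\widehat{\bUps}$ is built from \emph{estimated} FPC scores of $Y$, so the eigenfunction perturbation on the response side enters (I); this is absorbed by applying the $Y$-analog of Lemma~\ref{lemmaeigenSFLR}, which under Conditions~\ref{cond_error_FFLR}--\ref{cond_eigen_FFLR} gives uniform control $\max_{m\in[\tilde d]}|\hat{\tilde\omega}_m^{-1/2} - \tilde\omega_m^{-1/2}|/\tilde\omega_m^{-1/2}\lesssim\{\log(pd\tilde d)/n\}^{1/2}$, which is of smaller order than the advertised rate and therefore does not bottleneck the bound.
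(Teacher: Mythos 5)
Your proposal follows essentially the same route as the paper's proof: the identical three-term decomposition of $n^{-1}\widehat{\bD}^{-1}\widehat{\boZ}^\T(\widehat{\bUps}-\widehat{\boZ}\bB)$, with term (I) split as $\widehat{\bD}^{-1}(n^{-1}\widehat{\boZ}^\T\widehat{\bUps}-\eE[n^{-1}\boZ^\T\bUps])+(\widehat{\bD}^{-1}-\bD^{-1})\eE[n^{-1}\boZ^\T\bUps]$ and controlled by Lemmas~\ref{lemmaC1} and~\ref{lemmaeigenSFLR}, term (II) written as $(\bGamma-\widehat{\bGamma})\bD\bB+\widehat{\bGamma}(\bD-\widehat{\bD})\bB$ and bounded via (\ref{Gamma.max}) with $\|\bD\bB\|_1^{(d\times\tilde{d})}=O(s)$, and the truncation term bounded by $sd^{1-\tau}$ under Condition~\ref{cond_coef_FFLR}(i). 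The one caveat is your claim that converting the elementwise bound for term (I) to the block norm costs a factor $(d\tilde{d})^{1/2}$: since $\widehat{\bD}^{-1}$ normalizes only the covariate side, the response-side weights $\tilde\omega_m^{1/2}$ remain in the entries and $\sum_m\tilde\omega_m=O(1)$ (Condition~\ref{cond_eigen_FFLR}) reduces the cost to $d^{1/2}$, which is how the paper reaches the bound $d^{1/2}(d^{\alpha+1}\vee\tilde{d}^{\tilde\alpha+1})\{\log(pd\tilde{d})/n\}^{1/2}$ for (I); your cruder $(d\tilde{d})^{1/2}$ accounting can overshoot the advertised rate in regimes where $\tilde{d}$ exceeds $d$ and $\tilde\alpha<\alpha$, so the finer weighted summation over $m$ should be retained.
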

\begin{proof}
Note that 
\begin{equation}\label{FFLR.yminus.dep}
    \begin{split}
        &~~~~ n^{-1}\widehat{\bD}^{-1} \widehat{\boZ}^\T (\widehat{\bUps} -  \widehat{\boZ} \bB)\\
        & =  n^{-1}\widehat{\bD}^{-1} \widehat{\boZ}^\T \widehat{\bUps} - {\bD}^{-1}\eE[n^{-1} {\boZ}^\T {\bUps} ] +  {\bD}^{-1}\eE[ n^{-1}{\boZ}^\T {\bUps} ]-  n^{-1}\widehat{\bD}^{-1} \widehat{\boZ}^\T \widehat{\boZ} \bB \\
        &=n^{-1}\widehat{\bD}^{-1} \widehat{\boZ}^\T \widehat{\bUps} - {\bD}^{-1}\eE[n^{-1} {\boZ}^\T {\bUps}] + {\bD}^{-1}\eE[n^{-1} {\boZ}^\T {\boZ} \bB]-  n^{-1}\widehat{\bD}^{-1} \widehat{\boZ}^\T \widehat{\boZ} \bB + {\bD}^{-1}\eE[n^{-1} {\boZ}^\T \bepsilon],
    \end{split}
\end{equation} 
where $\bepsilon = (\bepsilon_1,\dots,\bepsilon_n)^{\T}\in \eR^{n\times\tilde{d}}$ is the truncation error. \\
First, we show the deviation bounds of $n^{-1}\widehat{\bD}^{-1} \widehat{\boZ}^\T \widehat\bUps - {\bD}^{-1}\eE[ n^{-1} {\boZ}^\T \bUps ]$, which can be decomposed as $\widehat{\bD}^{-1} \big(n^{-1} \widehat{\boZ}^\T\widehat\bUps - \eE[ n^{-1} {\boZ}^\T \bUps ] \big) + (\widehat{\bD}^{-1} -{\bD}^{-1}) \eE[ n^{-1} {\boZ}^\T \bUps ]$. Then, by Condition~\ref{cond_eigen_FFLR}, Lemmas~\ref{lemmaeigenSFLR} and~\ref{lemmaC1}, we have  
\begin{equation}\label{FFLR.yminus.p1}
    \|n^{-1}\widehat{\bD}^{-1} \widehat{\boZ}^\T \widehat\bUps - {\bD}^{-1}\eE[ n^{-1} {\boZ}^\T \bUps ]\|^{(d \times \tilde{d})}_{\max} \lesssim d^{1/2}(d^{\alpha+1}\vee \tilde{d}^{\tilde{\alpha}+1}) \big\{\log(pd\tilde{d})/n \big\}^{1/2}.
\end{equation}  
Second, we write 
${\bD}^{-1}\eE[n^{-1} {\boZ}^\T {\boZ} \bB]-  n^{-1}\widehat{\bD}^{-1} \widehat{\boZ}^\T \widehat{\boZ} \bB = (\bGamma -\widehat{\bGamma})\bD\bB +  \widehat{\bGamma}(\bD-\widehat{\bD})\bB.$
By $\|{\bD} \bB \|_1^{(d\times \tilde{d})} = O(s)$ and (\ref{Gamma.max}), we have 
\begin{equation}\label{FFLR.yminus.p2}
    \|(\bGamma -\widehat{\bGamma})\bD\bB\|^{(d \times  \tilde{d})}_{\max}   \lesssim sd^{\alpha+2}  \big\{\log(pd)/n \big\}^{1/2}. 
\end{equation} 
By Lemma~\ref{lemmaeigenSFLR} and $\|{\bD} \bB \|_1^{(d\times \tilde{d})} = O(s)$, we have 
\begin{equation}\label{FFLR.yminus.p3}
    \|\widehat{\bGamma}(\bD-\widehat{\bD})\bB\|^{(d \times \tilde{d})}_{\max}   \lesssim sd \big\{\log(pd)/n \big\}^{1/2}.
\end{equation}  
Note (\ref{FFLR.yminus.p1}), (\ref{FFLR.yminus.p2}) and (\ref{FFLR.yminus.p3}) all hold with probability greater than $1- \tilde{c}_9 (pd)^{-\tilde{c}_{10}}.$ By Condition~\ref{cond_coef_FFLR}(i) and Proposition 4 in \textcolor{blue}{Guo and Qiao} (\textcolor{blue}{2023}), we have $\big\|{\bD}^{-1}\eE[n^{-1} {\boZ}^\T \bepsilon]\big\|_{\max} \leq O(sd^{1/2-\tau})$, which implies that  
%\begin{equation}\label{FFLR.yminus.p4}
   $ \|{\bD}^{-1}\eE[n^{-1} {\boZ}^\T \bepsilon]\|_{\max}^{(d\times\tilde{d})} 
    %\lesssim d^{1/2} \big\|{\bD}^{-1}\eE[n^{-1} {\boZ}^\T \bepsilon]\big\|_{\max}
    \lesssim sd^{1-\tau}.$ 
%\end{equation} 
%Third, it follows from Condition~\ref{cond_coef_FFLR} (i) that $\|{\bD}^{-1}\eE[n^{-1} {\boZ}^\T \bepsilon]\|_{\max}^{(d\times\tilde{d})} \lesssim sd^{1-\tau}$. 
Combing this with (\ref{FFLR.yminus.dep}), (\ref{FFLR.yminus.p1}), (\ref{FFLR.yminus.p2}) and  (\ref{FFLR.yminus.p3}),  
%and (\ref{FFLR.yminus.p4}),  
%$\|{\bD}^{-1}\eE[n^{-1} {\boZ}^\T \bepsilon]\|_{\max}^{(d\times\tilde{d})} \lesssim sd^{1-\tau}$ which is implied from Condition~\ref{cond_coef_FFLR}(i), 
we complete the proof of this lemma. 
\end{proof}

%\subsubsection{ Lemma \ref{lemmaC4} and its Proof}
\begin{lemmaS}
\label{lemmaC4}
Suppose that  Conditions~\ref{cond_inf_SFLR}--\ref{cond_eigen_SFLR}, \ref{cond_error_FFLR}--\ref{cond_coef_FFLR}  hold, 
%$\underline{\mu} \geq 32 c_zs \tilde{d}  d^{\alpha+2} \{\log(pd)/n\}^{1/2} $,  
$s d^\alpha \lambda_n \to 0$ as $n,p,d \to \infty$, and the regularization parameter $\lambda_n \geq 2 c_es  \| \widehat\bD\|_{\max}  d^{1/2}
    \big[( d^{\alpha+3/2} \vee \tilde{d}^{\tilde\alpha+3/2})\{\log(pd\tilde{d})/n\}^{1/2} +   d^{1/2- \tau}\big]$.
    %If $n \gtrsim  ( d^{4\alpha +2} \vee  \tilde{d}^{4\tilde{\alpha} +2} )\log(pd\tilde{d}) $, 
Then there exist some positive constants $\tilde{c}_9,\tilde{c}_{10}$ such that,   with probability greater than $1- \tilde{c}_9 (pd\tilde{d})^{-\tilde{c}_{10}}$,
  $$
 \sum_{j=1}^{2p}\| \bB_j -  \widehat{\bB}_j \|_\tF \lesssim   s d^\alpha \lambda_n. 
  $$%/ \underline{\mu}
  
\end{lemmaS}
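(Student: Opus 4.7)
The plan is to mirror the argument of Lemma~\ref{lemmaC4SFLR}, upgrading from vector-valued to matrix-valued group coefficients. Let $\bDelta = \widehat{\bB} - \bB \in \eR^{2pd\times \tilde d}$ and let $\bar S^c$ denote the complement of $S$ in $[2p]$. By the optimality of $\widehat{\bB}$ in (\ref{glasso.FFLR}), plugging in $\bB$ as a competitor yields the basic inequality
\begin{equation*}
\tfrac{1}{2}\tr\!\big(\bDelta^\T n^{-1}\widehat{\boZ}^\T\widehat{\boZ}\,\bDelta\big)
\;\le\;
\tr\!\big(\bDelta^\T n^{-1}\widehat{\boZ}^\T(\widehat{\bUps} - \widehat{\boZ}\bB)\big)
+ \lambda_n\big(\|\bDelta_S\|_1^{(d\times\tilde d)} - \|\bDelta_{\bar S^c}\|_1^{(d\times\tilde d)}\big),
\end{equation*}
after splitting the penalty term using $\|\widehat{\bB}_j\|_\tF \ge \|\bB_j\|_\tF - \|\bDelta_j\|_\tF$ on $S$ and $\|\widehat{\bB}_j\|_\tF = \|\bDelta_j\|_\tF$ on $\bar S^c$.

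The first step is to control the cross term. Inserting $\widehat{\bD}\widehat{\bD}^{-1}$ and applying Cauchy--Schwarz block by block gives
\begin{equation*}
\big|\tr\!\big(\bDelta^\T n^{-1}\widehat{\boZ}^\T(\widehat{\bUps}-\widehat{\boZ}\bB)\big)\big|
\;\le\;
\|n^{-1}\widehat{\bD}^{-1}\widehat{\boZ}^\T(\widehat{\bUps}-\widehat{\boZ}\bB)\|_{\max}^{(d\times\tilde d)}\,\|\widehat{\bD}\|_{\max}\,\|\bDelta\|_1^{(d\times\tilde d)}.
\end{equation*}
By Lemma~\ref{lemmaC3} and the choice of $\lambda_n$, the right-hand side is at most $\tfrac{\lambda_n}{2}\big(\|\bDelta_S\|_1^{(d\times\tilde d)} + \|\bDelta_{\bar S^c}\|_1^{(d\times\tilde d)}\big)$ on an event of probability at least $1-\tilde c_9(pd\tilde d)^{-\tilde c_{10}}$. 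Substituting back and using nonnegativity of the quadratic form yields the cone condition $\|\bDelta_{\bar S^c}\|_1^{(d\times\tilde d)} \le 3\|\bDelta_S\|_1^{(d\times\tilde d)}$, and hence $\|\bDelta\|_1^{(d\times\tilde d)} \le 4 s^{1/2}\|\bDelta\|_\tF$.

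The second step is a restricted-eigenvalue-type lower bound on $\tr(\bDelta^\T n^{-1}\widehat{\boZ}^\T\widehat{\boZ}\,\bDelta)$. Applying Lemma~\ref{lemmaC2SFLR} column-wise to $\bDelta$, combined with Condition~\ref{cond_eigen_SFLR} so that $\|\widehat{\bD}\|_{\max}^{-2} \gtrsim d^{-\alpha}$, and using $sd^\alpha \lambda_n \to 0$ to absorb the $c_z d^{\alpha+1}\{\log(pd)/n\}^{1/2}\|\bDelta\|_1^2$ remainder into half of the main term, one obtains
\begin{equation*}
\tr\!\big(\bDelta^\T n^{-1}\widehat{\boZ}^\T\widehat{\boZ}\,\bDelta\big)
\;\gtrsim\;
\underline{\mu}\,d^{-\alpha}\|\bDelta\|_\tF^{\,2}
\end{equation*}
on the same high-probability event. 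Combining this with $\tfrac{3\lambda_n}{2}\|\bDelta_S\|_1^{(d\times\tilde d)} \ge \tfrac12\tr(\bDelta^\T n^{-1}\widehat{\boZ}^\T\widehat{\boZ}\,\bDelta)$ from the basic inequality and the bound $\|\bDelta_S\|_1^{(d\times\tilde d)} \le s^{1/2}\|\bDelta\|_\tF$ gives $\|\bDelta\|_\tF \lesssim d^\alpha s^{1/2}\lambda_n/\underline{\mu}$, and then $\sum_{j=1}^{2p}\|\bB_j - \widehat{\bB}_j\|_\tF = \|\bDelta\|_1^{(d\times\tilde d)} \lesssim s d^\alpha \lambda_n$ as claimed.

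The main obstacle I anticipate is the bookkeeping around the block matrix norms: unlike the SFLR setting, the cross term is a trace of a product of matrices, so care is needed to peel off the Frobenius norms block by block while tracking constants so that the choice of $\lambda_n$ in the statement of the lemma exactly matches the deviation bound supplied by Lemma~\ref{lemmaC3}. All other steps—the cone condition, the restricted-eigenvalue absorption, and the final algebraic combination—are direct matrix analogues of the SFLR proof.
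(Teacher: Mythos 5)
Your proposal is correct and follows essentially the same route as the paper's own proof: the basic inequality from optimality, the cross-term bound via Lemma~\ref{lemmaC3} and the choice of $\lambda_n$, the cone condition, the restricted-eigenvalue bound from Lemma~\ref{lemmaC2SFLR} with the remainder absorbed using $sd^\alpha\lambda_n\to 0$, and the final algebraic combination. The only nitpick is that the weight you need is the \emph{smallest} diagonal entry of $\widehat\bD^2$ (i.e.\ $\hat\omega_{jd}\gtrsim d^{-\alpha}$ from Condition~\ref{cond_eigen_SFLR}), not $\|\widehat\bD\|_{\max}^{-2}$, but this is a notational slip rather than a gap.
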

\begin{proof}
Since $\widehat{\bB}_j$ is the minimizer of (\ref{glasso.FFLR}), we have
\begin{equation*}
    \begin{split}
        &- \tr (n^{-1} \widehat{\bUps}^\T \widehat{\boZ} \widehat{\bB} ) + \frac12 \tr (\widehat{\bB}^\T n^{-1} \widehat{\boZ}^\T \widehat{\boZ} \widehat{\bB}  ) + \lambda_n \| \widehat{\bB}\|_{1}^{(d\times \tilde{d})}\\
\leq& - \tr (n^{-1} \widehat{\bUps}^\T \widehat{\boZ} {\bB} ) + \frac12 \tr ({\bB}^\T n^{-1} \widehat{\boZ}^\T \widehat{\boZ} {\bB}  ) + \lambda_n \| {\bB}\|_{1}^{(d\times \tilde{d})}.
    \end{split}
\end{equation*}
%$$- \tr (n^{-1} \widehat{\bUps}^\T \widehat{\boZ} \widehat{\bB} ) + \frac12 \tr (\widehat{\bB}^\T n^{-1} \widehat{\boZ}^\T \widehat{\boZ} \widehat{\bB}  ) + \lambda_n \| \widehat{\bB}\|_{1}^{(d\times \tilde{d})} \leq  - \tr (n^{-1} \widehat{\bUps}^\T \widehat{\boZ} {\bB} ) + \frac12 \tr ({\bB}^\T n^{-1} \widehat{\boZ}^\T \widehat{\boZ} {\bB}  ) + \lambda_n \| {\bB}\|_{1}^{(d\times \tilde{d})}.$$
Let $\bDelta=\widehat{\bB} - {\bB}$ and $\Bar{S}^c$ represents the complement of $S$ in the set $[2p]$. Consequently, we obtain that
%Denote $\widehat{\bB} - {\bB}$ by $\bDelta$ and $\Bar{S}^c$ be the complement of $S$ in the set $[2p],$ we have 
\begin{equation}\label{eqFFLR1}
    \begin{split}
    &\frac12 \tr(\bDelta^\T n^{-1} \widehat{\boZ}^\T \widehat{\boZ} \bDelta) \\ 
    \leq&  - \tr(\bDelta^\T n^{-1} \widehat{\boZ}^\T \widehat{\boZ} \bB) +   \tr(\bDelta^\T n^{-1} \widehat{\boZ}^\T \widehat{\bUps} ) + \lambda_n (\| {\bB}\|_{1}^{(d\times \tilde{d})} -  \| \widehat{\bB}\|_{1}^{(d\times \tilde{d})} ) \\
     \leq&  \tr\big\{\bDelta^\T (n^{-1} \widehat{\boZ}^\T \widehat{\bUps} - n^{-1} \widehat{\boZ}^\T \widehat{\boZ} \bB )\big\} + \lambda_n \big(\| {\bDelta}_S\|_{1}^{(d\times \tilde{d})} -  \| {\bDelta}_{\Bar{S}^c}\|_{1}^{(d\times \tilde{d})}\big).
   \end{split}
\end{equation}
By Lemma~\ref{lemmaC3} and the choice of $\lambda_n$, we obtain that, with probability greater than $1- \tilde{c}_9 (pd)^{-\tilde{c}_{10}},$  
\begin{equation}\label{eqFFLR2}
    \begin{split}
        |\tr\{\bDelta^\T (n^{-1} \widehat{\boZ}^\T \widehat{\bUps} - n^{-1} \widehat{\boZ}^\T \widehat{\boZ} \bB )\} | &\leq \|n^{-1} \widehat\bD^{-1} \widehat{\boZ}^\T (\widehat{\bUps} -  \widehat{\boZ} \bB)\|^{(d \times \tilde{d})}_{\max} \|\widehat\bD\|_{\max} \|\bDelta\|_1^{(d \times \tilde{d})}\\
        &  \leq \frac{\lambda_n}{2} (\| {\bDelta}_S\|_{1}^{(d\times \tilde{d})} +  \| {\bDelta}_{\Bar{S}^c}\|_{1}^{(d\times \tilde{d})} ). 
    \end{split}
\end{equation}
%where $\|\bB\|_{\max}^{(d\times \tilde{d})} = {\max_{j}}\|\bB_j\|_\tF$, $\|\bB\|_1^{(d\times \tilde{d})} = \sum_{j=1}^p\|\bB_j\|_\tF$, $\bB \in \eR^{pd\times \tilde{d}}$ with $j$th block given by $\bB_j\in \eR^{d \times \tilde{d}}$.  
Combining (\ref{eqFFLR1}) and (\ref{eqFFLR2}), we have 
$$ \frac{3\lambda_n}{2} \| {\bDelta}_S\|_{1}^{(d\times \tilde{d})} - \frac{\lambda_n}{2} \| {\bDelta}_{\Bar{S}^c}\|_{1}^{(d\times \tilde{d})}\geq \frac{1}{2}\tr(\bDelta^\T n^{-1} \widehat{\boZ}^\T \widehat{\boZ} \bDelta) \geq 0, $$
which indicates that $3 \| {\bDelta}_S\|_{1}^{(d\times \tilde{d})} \geq  \| {\bDelta}_{\Bar{S}^c}\|_{1}^{(d\times \tilde{d})} $. 
By Condition~\ref{cond_inf_SFLR}, $s d^\alpha \lambda_n \to 0$ and Lemma \ref{lemmaC2SFLR}, we can let $\underline{\mu} \geq 32 d \tilde{d} s [c_z d^{\alpha+1} \{\log(pd)/n\}^{1/2}]=o(1)$, which ensures that $\tr(\bDelta^\T n^{-1} \widehat{\boZ}^\T \widehat{\boZ} \bDelta) \geq {\underline{\mu}} \| \bDelta \widehat\bD\|_\tF^2 /{2} $. 
Combining this with  Lemma~\ref{lemmaC2SFLR}  %$\underline{\mu} \geq 32 c_zs \tilde{d}  d^{\alpha+2} \{\log(pd)/n\}^{1/2}$, %obtained from Condition~\ref{cond_inf_SFLR} and $s d^\alpha \lambda_n \to 0$,  
and Condition \ref{cond_eigen_SFLR}, we have 
 \begin{equation*}
     \begin{split}
         \tr(\bDelta^\T n^{-1} \widehat{\boZ}^\T \widehat{\boZ} \bDelta) &\geq \underline{\mu}  \| \bDelta \widehat\bD\|_\tF^2 - 16 c_z \tilde{d} d^{\alpha+2} \big\{ \log(pd)/n \big\}^{1/2}  \| \bDelta \widehat\bD\|_{\tF}^2 \\
         &\geq {\underline{\mu}} \| \bDelta \widehat\bD\|_\tF^2 /{2} \geq {\underline{\mu}} c_0 \alpha^{-1} d^{-\alpha}  \| \bDelta \|_{\tF}^2 /2.\\
     \end{split}
 \end{equation*}
Note the facts that
$\| {\bDelta}\|_{1}^{(d\times \tilde{d})} = 
\| {\bDelta}_S \|_{1}^{(d\times \tilde{d})}  +  
\| {\bDelta}_{\Bar{S}^c}\|_{1}^{(d\times \tilde{d})} 
\leq 4 \| {\bDelta}_S\|_{1}^{(d\times \tilde{d})}  
\leq 4 s^{1/2} \| \bDelta\|_\tF$ and ${3\lambda_n} \| {\bDelta}_S\|_{1}^{(d\times \tilde{d})} \geq \tr(\bDelta^\T n^{-1} \widehat{\boZ}^\T \widehat{\boZ} \bDelta)$. Hence, $6{s}^{1/2} \| \bDelta\|_\tF  \geq {3\lambda_n \| {\bDelta}_S\|_{1}^{(d\times \tilde{d})} }/{2} \geq {\underline{\mu}} c_0 \alpha^{-1} d^{-\alpha} \| \bDelta\|_\tF^2 /{4} $, which implies that $$  \| \bDelta\|_\tF \leq {24 \alpha  {s}^{1/2} d^{\alpha}\lambda_n}/({\underline{\mu}}c_0)~\hbox{and}~ \| {\bDelta}\|_{1}^{(d\times \tilde{d})} =  \sum_{j=1}^{2p}\| \bB_j -  \widehat{\bB}_j \|_\tF \leq {96  \alpha s d^{\alpha} \lambda_n}/({\underline{\mu}}c_0),$$
with probability greater than $1- \tilde{c}_9 (pd)^{-\tilde{c}_{10}}.$ The proof is completed. 
\end{proof}

%\subsubsection{ Lemma \ref{lemmaC5} and its Proof}

\begin{lemmaS}
\label{lemmaC5}
 Suppose that   Condition~\ref{cond_coef_FFLR}(iii) holds. Then there exists some constant 
    $c_2' \in \big(2(qs)^{-1},1\big)$
    such that  
    %with asymptotic probability one, 
    $|\widehat{S}_{\delta}| \geq c_2' s $ for $\widehat{S}_{\delta}$ defined in  (\ref{est.null}), with probability greater than 
$1- \tilde{c}_9(pd\tilde{d})^{- \tilde{c}_{10}}$.
\end{lemmaS}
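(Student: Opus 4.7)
The plan is to mirror the argument for the SFLR analog in Lemma~\ref{lemmaC5SFLR}, replacing the vector $\ell_2$ norm with the Frobenius norm on coefficient matrices and invoking the FFLR-specific Lemma~\ref{lemmaC4} in place of Lemma~\ref{lemmaC4SFLR}. First, I would apply Lemma~\ref{lemmaC4} to obtain that, with probability at least $1-\tilde c_9(pd\tilde d)^{-\tilde c_{10}}$,
\begin{equation*}
\max_{j\in[p]}\|\bB_j-\widehat\bB_j\|_{\tF}\;\leq\;\sum_{j=1}^{2p}\|\bB_j-\widehat\bB_j\|_{\tF}\;\lesssim\;s\,d^{\alpha}\lambda_n/\underline{\mu},
\end{equation*}
and, using that $\bB_{j+p}=0$ for the knockoff side, the same bound controls $\max_{j\in[p]}\|\widehat\bB_{j+p}\|_{\tF}$.

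Second, by definition of the knockoff statistic for FFLR, $W_j=\|\widehat\bB_j\|_{\tF}-\|\widehat\bB_{j+p}\|_{\tF}\geq -\|\widehat\bB_{j+p}\|_{\tF}$, so uniformly over $j\in[p]$ we obtain $W_j\gtrsim -s\,d^{\alpha}\lambda_n/\underline{\mu}$ on the high-probability event above. By the construction of the threshold in (\ref{threshold}), if $T_{\delta}$ exceeded this common lower bound on the $W_j$'s then $\{j:W_j\leq -T_{\delta}\}$ would be empty, and the feasibility condition for the knockoff filter would already hold at that $T_{\delta}$, forcing
\begin{equation*}
T_{\delta}\;\lesssim\;s\,d^{\alpha}\lambda_n/\underline{\mu}.
\end{equation*}

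Third, for any index $j\in S_2=\{j\in[p]:\|\bB_j\|_{\tF}\gg s^{1/2}d^{\alpha}\lambda_n\}$ specified in Condition~\ref{cond_coef_FFLR}(iii), a triangle-inequality decomposition yields
\begin{equation*}
W_j\;=\;\|\widehat\bB_j\|_{\tF}-\|\widehat\bB_{j+p}\|_{\tF}\;\geq\;\|\bB_j\|_{\tF}-\|\widehat\bB_j-\bB_j\|_{\tF}-\|\widehat\bB_{j+p}\|_{\tF}\;\gg\;s\,d^{\alpha}\lambda_n/\underline{\mu},
\end{equation*}
where the last step uses Condition~\ref{cond_coef_FFLR}(iii) together with the uniform bounds established in the first step. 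Hence $W_j\geq T_{\delta}$ for every $j\in S_2$, which gives $S_2\subseteq \widehat S_{\delta}$. Invoking the cardinality bound $|S_2|\geq c_2' s$ from Condition~\ref{cond_coef_FFLR}(iii) then yields $|\widehat S_{\delta}|\geq c_2' s$ on the same high-probability event, completing the argument.

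The only step requiring care is the transition from the $\ell_1$-type aggregate estimation-error bound of Lemma~\ref{lemmaC4} to the uniform (over $j$) control of both $\|\bB_j-\widehat\bB_j\|_{\tF}$ and $\|\widehat\bB_{j+p}\|_{\tF}$ at the correct scale $s\,d^{\alpha}\lambda_n/\underline{\mu}$, which is what allows the separation $\|\bB_j\|_{\tF}\gg s^{1/2}d^{\alpha}\lambda_n$ in $S_2$ to dominate both the estimation error and the threshold; this is essentially bookkeeping and does not introduce any genuinely new obstacle beyond those already resolved in Lemma~\ref{lemmaC4}.
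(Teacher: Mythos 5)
Your overall strategy is the same as the paper's (uniform estimation-error bound, hence a lower bound on every $W_j$, hence an upper bound on $T_{\delta}$, then $S_2\subseteq\widehat S_{\delta}$ and the cardinality bound from Condition~\ref{cond_coef_FFLR}(iii)), but there is a genuine gap in the scale at which you control the errors, and it is exactly the step you dismiss as ``bookkeeping.'' You derive $\max_{j\in[p]}\|\bB_j-\widehat\bB_j\|_{\tF}$ and $\max_{j\in[p]}\|\widehat\bB_{j+p}\|_{\tF}$ from the $\ell_1$-aggregate conclusion of Lemma~\ref{lemmaC4}, which only gives a bound of order $s\,d^{\alpha}\lambda_n/\underline{\mu}$, and consequently your bound on $T_{\delta}$ is also of order $s\,d^{\alpha}\lambda_n/\underline{\mu}$. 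But the separation in Condition~\ref{cond_coef_FFLR}(iii) is only $\|\bB_j\|_{\tF}\gg s^{1/2}d^{\alpha}\lambda_n$ on $S_2$, and since $\underline{\mu}$ is a fixed constant, $s^{1/2}d^{\alpha}\lambda_n$ does not dominate $s\,d^{\alpha}\lambda_n/\underline{\mu}$ once $s$ diverges. Hence your triangle-inequality step
\begin{equation*}
W_j\geq\|\bB_j\|_{\tF}-\|\widehat\bB_j-\bB_j\|_{\tF}-\|\widehat\bB_{j+p}\|_{\tF}\gg s\,d^{\alpha}\lambda_n/\underline{\mu}
\end{equation*}
is not justified: with your bounds the right-hand side could even be negative, and $S_2\subseteq\widehat S_{\delta}$ does not follow.

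The paper closes this gap by using the sharper $\ell_2$-type bound established inside the proof of Lemma~\ref{lemmaC4}, namely $\|\bDelta\|_{\tF}\leq 24\alpha s^{1/2}d^{\alpha}\lambda_n/(\underline{\mu}c_0)$ for the stacked difference $\bDelta=\widehat\bB-\bB$, which immediately yields $\max_{j\in[p]}\|\bB_j-\widehat\bB_j\|_{\tF}$ and $\max_{j\in[p]}\|\widehat\bB_{j+p}\|_{\tF}$ at the scale $s^{1/2}d^{\alpha}\lambda_n$, and therefore $T_{\delta}\lesssim s^{1/2}d^{\alpha}\lambda_n$. At this scale the separation $\|\bB_j\|_{\tF}\gg s^{1/2}d^{\alpha}\lambda_n$ on $S_2$ does dominate both the estimation error and the threshold, exactly as in the SFLR analogue (Lemma~\ref{lemmaC5SFLR}). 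Your argument is repaired simply by invoking that Frobenius-norm bound rather than the $\ell_1$-aggregate one; as written, however, the final domination step fails for diverging $s$.
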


\begin{proof}
    By Lemma~\ref{lemmaC4}, we have that, with probability greater than 
$1- \tilde{c}_9(pd\tilde{d})^{- \tilde{c}_{10}},$  
$$\max_{j \in [p]} \| \bB_j - \widehat{\bB}_j\|_{\tF} \leq {24 \alpha  {s}^{1/2} d^{\alpha}\lambda_n}/({\underline{\mu}}c_0)~ \hbox{and}~
\max_{j \in [p]} \| \widehat{\bB}_{j+p}\|_{\tF} \leq {24 \alpha {s}^{1/2}  d^{\alpha}\lambda_n}/({\underline{\mu}}c_0).$$
Hence, for each $j \in [p]$, we have that
\begin{equation}
\label{wj.all.ff}
    W_j = \| \widehat{\bB}_{j}\|_{\tF} - \|\widehat{\bB}_{j+p}\|_{\tF} \geq - \|\widehat{\bB}_{j+p}\|_{\tF} \geq  -{24 \alpha  {s}^{1/2} d^{\alpha}\lambda_n}/({\underline{\mu}}c_0).
\end{equation}
which implies $T_{\delta} \leq  {24 \alpha  {s}^{1/2} d^{\alpha}\lambda_n}/({\underline{\mu}}c_0).$  Otherwise $\{j\in[p]: W_j < -T_{\delta}\}$ constitutes a null set. 
By Condition~\ref{cond_coef_FFLR}(iii), we have that if  $j \in S_2 = \{ j \in [p]:\|\bB_j\|_\tF \gg {24 \alpha  {s}^{1/2} d^{\alpha}\lambda_n}/({\underline{\mu}}c_0)\}$, then 
$W_j = \| \widehat{\bB}_{j}\|_{\tF} - \|\widehat{\bB}_{j+p}\|_{\tF} \geq  \|\bB_j\|_\tF  - \| \widehat{\bB}_{j} - \bB_j\|_{\tF}- \|\widehat{\bB}_{j+p}\|_{\tF}\gg {24 \alpha  {s}^{1/2} d^{\alpha}\lambda_n}/({\underline{\mu}}c_0).$ 
This implies that $ S_2 \subseteq \widehat{S}_{\delta} = \{j\in[p]: W_j \geq T_{\delta} \}$. Combing this with Condition~\ref{cond_coef_FFLR}(iii), we complete the proof of this lemma. 
\end{proof}

We are now ready to prove Theorem~\ref{thm_power_fflr}.

%\subsubsection{Proof of Theorem~\ref{thm_power_fflr}}
\begin{proof}[Proof of Theorem~\ref{thm_power_fflr}]
 Consider the ordered statistics $|W_{(1)}| \geq |W_{(2)}| \geq \dots \geq |W_{(p)}|$. Let $j^*$ denote the index such that the threshold $T_{\delta} = |W_{(j^*)}|$. By definition of $T_{\delta}$, $-T_{\delta} < W_{(j^*+1)} \leq 0$. We will establish Theorem~\ref{thm_power_fflr} by investigating two scenarios: $-T_{\delta} < W_{(j^*+1)} < 0$ and $W_{(j^*+1)} = 0$.    \\
%     Let $|W_{(1)}| \geq|W_{(2)} | \geq \dots \geq |W_{(p)}|   $
%be the ordered statistics.
%Denote the index such that the threshold $T = |W_{(j^*)}|$ by $j^*$.
%By the definition of $T$, $-T < W_{(j^*+1)} \leq 0$ holds.
%Then we will prove Theorem \ref{thm_power_fflr}  from the two scenarios, $-T < W_{(j^*+1)} < 0$ and $W_{(j^*+1)} = 0, $  respectively.  
{\bf Scenario 1.}  For $-T_{\delta} < W_{(j^*+1)} < 0$, given the definition of $T_{\delta}$, we have 
\begin{equation} \label{eqthreshT}
    \frac{|\{ j \in [p]: W_j \leq -T_{\delta} \}| + 2}{|\{ j \in [p]: W_j \geq T_{\delta} \}|} > q.
\end{equation}
This result holds because otherwise $|W_{(j^*+1)}|$ would serve as the new lower threshold for knockoffs. 
 According to (\ref{eqthreshT}) and Lemma~\ref{lemmaC5}, we have
 $$
 \big|\{ j \in [p]:W_j \leq -T_{\delta}  \}\big |  > q \big|\{j \in [p]: W_j \geq T_{\delta}\}\big|  - 2 \geq {q c_2's-2} ,
 $$
 for $c_2' \in \big(2(qs)^{-1},1 \big).$ 
By Lemma~\ref{lemmaC4} with $\sum_{j=1}^{2p}\| \bB_j - \widehat{\bB}_j \|_\tF \leq 96 \alpha  s d^\alpha \lambda_n / (\underline{\mu}c_0)$, and considering $\|\bB_j\|_\tF = 0$ for $j=p+1, \dots, 2p$, we obtain 
\begin{equation}\label{powereq1}
    \begin{split}
         96 \alpha s d^\alpha \lambda_n / (\underline{\mu}c_0) = \sum_{j=1}^{2p}\| \bB_j - \widehat{\bB}_j \|_\tF &\geq \sum_{j \in \{ j \in [p]: W_j \leq -T_{\delta} \}}  \|\widehat{\bB}_{j+p}\|_\tF \\
         &\geq T_{\delta}  \cdot \big|\{j\in [p]: W_j \leq -T_{\delta}\}\big|\geq T_{\delta} \cdot (q c_2' s - 2),
    \end{split}
\end{equation}
where, the second inequality holds because for $j \in \{j \in [p]: W_j \leq -T_{\delta} \}$, we have $\|\widehat{\bB}_j\|_\tF - \|\widehat{\bB}_{j+p}\|_\tF \leq -T_{\delta}$, implying $\|\widehat{\bB}_{j+p}\|_\tF \geq T_{\delta}$. 
By (\ref{powereq1}), we have $T_{\delta} \leq  96 \alpha s d^\alpha \lambda_n \big\{\underline{\mu}c_0 (q c_2' s - 2) \big\}^{-1}$. 
 Similarly, by Lemma~\ref{lemmaC4}, $\|\widehat{\bB}_{j+p}\|_\tF \geq  \|\widehat{\bB}_{j}\|_\tF - T_{\delta}$ for $j \in \widehat{S}^c_{\delta}$, the triangle inequality, and $\min_{j \in S}   \|\bB_j\|_{\tF} \geq  \kappa_{n} d^\alpha \lambda_n/ \underline{\mu},$ we have
\begin{equation}\label{powereq2}
    \begin{split}
         96 \alpha s d^\alpha \lambda_n/ (\underline{\mu}   c_0)   & =\sum_{j=1}^{2p}\| \bB_j -  \widehat{\bB}_j \|_\tF  = \sum_{j=1}^{p} \big( \| \bB_j -  \widehat{\bB}_j \|_\tF +   \|\widehat{\bB}_{j+p}\|_\tF\big)\\
                &  \geq   \sum_{j \in \widehat{S}^c_{\delta} \cap S}\big(\| \bB_j -  \widehat{\bB}_j \|_\tF +   \|\widehat{\bB}_{j+p}\|_\tF\big)  \\
                &\geq   \sum_{j \in \widehat{S}^c_{\delta} \cap S}\big(\| \bB_j -  \widehat{\bB}_j \|_\tF +   \|\widehat{\bB}_{j}\|_\tF - T_{\delta}\big)\\
                & \geq  \sum_{j \in \widehat{S}^c_{\delta} \cap S} \big( \|\bB_{j}\|_\tF  - T_{\delta} \big)  \geq \big( \kappa_{n} d^\alpha \lambda_n/ \underline{\mu} - T_{\delta}\big)  \cdot | \widehat{S}^c_{\delta} \cap S|.
    \end{split}
\end{equation}
%$$ 96 \alpha s d^\alpha \lambda_n / (\underline{\mu}c_0)  \geq  \sum_{j \in S \cap (\widehat{S})^c} [ \|\bB_{j}\|_\tF  - T ]  \geq \{\kappa_{n}\lambda_n - T\}  \cdot |\{ S \cap (\widehat{S})^c\}|. $$
For large enough $\kappa_{n}$ such that $T_{\delta} \leq  96 \alpha s d^\alpha \lambda_n \big\{\underline{\mu}c_0 (q c_2' s - 2) \big\}^{-1}\leq\kappa_{n} d^\alpha \lambda_n/ (2\underline{\mu})$ and (\ref{powereq2}),   it holds, with probability greater than $1-\tilde{c}_9(pd \tilde{d})^{-\tilde{c}_{10}},$ that 
$$
 \frac{|\widehat{S}_{\delta}\cap S|}{| S| \vee 1} = 1 -  \frac{|\widehat{S}^c_{\delta}\cap S|}{| S| \vee 1} \geq 1 - \frac{192\alpha  }{c_0  }\kappa_{n}^{-1}. 
$$ 
{\bf Scenario 2.} For $W_{(j^*+1)} = 0, $  we have
$ \widehat{S}_{\delta} = \{ j \in [p]:  W_j  > 0\}$ and $\{j \in [p] : W_j   \leq -T_{\delta}\}  =  \{j \in [p] : W_j   <  0\}.$\\
If   $| \{j \in [p] : W_j   <  0\}|  > c_ns$ with $c_n = 192  \alpha / (c_0 \kappa_{n}  )$, it follows from
$$
  96 \alpha s d^\alpha \lambda_n / (\underline{\mu}c_0)  = \sum_{j=1}^{2p}\| \bB_j -  \widehat{\bB}_j \|_\tF \geq \sum_{j \in \{j \in [p]: W_j \leq -T_{\delta}   \}}  \|\widehat{\bB}_{j+p}\|_\tF \geq T_{\delta}  \cdot|\{j \in [p] :W_j \leq -T_{\delta}\}|,
$$
that
$$
T_{\delta} \leq \frac{ \sum_{j=1}^{2p}\| \bB_j -  \widehat{\bB}_j \|_\tF }{ |\{j \in [p]:W_j \leq -T_{\delta}\}|}
 = \frac{ \sum_{j=1}^{2p}\| \bB_j -  \widehat{\bB}_j \|_\tF }{ |\{j \in [p]:W_j< 0\}|} <  \frac{\kappa_{n} d^\alpha \lambda_n}{2\underline{\mu}}.
$$
Consequently, the argument simplifies to Scenario 1, and the subsequent analysis follows. \\
If   $| \{j \in [p]: W_j   <  0\}|  \leq  c_n  s,$ by $\widehat{S}_{\delta}  = \text{supp}(W_j) \backslash \{ j \in [p]: W_j < 0 \}$ with $ \text{supp}(W_j) = \{j \in [p]: W_j \neq 0 \} $, we have 
\begin{equation}\label{powereq3}
    \big|  \widehat{S}_{\delta} \cap S \big| =  \big|  \text{supp}(W_j) \cap S\big| -  \big|  \{ j \in [p]: W_j < 0 \} \cap S \big|  \geq  \big|   \text{supp}(W_j) \cap S \big| -   c_ns.
\end{equation}
Define $\widehat{S}_{\text{\tiny{GL}}}^c = \big\{j \in [p]:\| \widehat{\bB}_j \|_{\tF} = 0 \big\}$. 
As stated in \cite{fan2020rank}, we have to assume that there are no ties in the magnitude of the nonzero components of the group lasso solution. Then we can conclude that $\big\{j \in[p]: W_j = 0\big \} \subseteq \widehat{S}_{\text{\tiny{GL}}}^c$, which shows that
$[p] \setminus \widehat{S}_{\text{\tiny{GL}}}^c   \subseteq   \text{supp}(W_j).$
By Lemma~\ref{lemmaC4}, we have 
\begin{equation}\label{powereq4}
    \begin{split}
         96 \alpha  s d^{\alpha}\lambda_n / (\underline{\mu}c_0) & =  \sum_{j=1}^{2p}\| \bB_j -  \widehat{\bB}_j \|_\tF \geq \sum_{j\in \widehat{S}_{\text{\tiny{GL}}}^c  \cap S}  \| \bB_j -  \widehat{\bB}_j \|_\tF\\
              & =  \sum_{j\in \widehat{S}_{\text{\tiny{GL}}}^c  \cap S} \| \bB_j\|_\tF \geq | \widehat{S}_{\text{\tiny{GL}}}^c  \cap S| \min_{j\in S}  \| \bB_j\|_\tF.
    \end{split}
\end{equation} 
 Note that $\min_{j\in S} \| \bB_j\|_\tF \geq \kappa_{n} d^\alpha \lambda_n/ \underline{\mu}.$  Then we can get $ | \widehat{S}_{\text{\tiny{GL}}}^c  \cap S| \leq   96 \alpha  s/  (c_0 \kappa_{n} )$ from (\ref{powereq4}). 
 Therefore, we have $$\big|( [p] \setminus \widehat{S}_{\text{\tiny{GL}}}^c ) \cap S\big| \geq s\big\{1- 96\alpha / ( c_0 \kappa_{n} )\big\}.$$
 Combining (\ref{powereq3}) and $[p] \setminus \widehat{S}_{\text{\tiny{GL}}}^c   \subseteq   \text{supp}(W_j)$, we have  $| \widehat{S}_{\delta} \cap S | \geq \big|\big([p] \setminus \widehat{S}_{\text{\tiny{GL}}}^c\big) \cap S\big| -   c_ns = s\big\{1-96 \alpha / ( c_0 \kappa_{n} )- 192\alpha / (c_0 \kappa_{n} )\big\}, 
$
   which shows that $$\frac{|\widehat{S}_{\delta}\cap S|}{| S| \vee 1} \geq  1 -  \frac{192\alpha  }{ c_0} \kappa_{n}^{-1}$$ holds with probability greater than $1-\tilde{c}_9(pd \tilde{d})^{-\tilde{c}_{10}}$. \\
 Combing the above results under two scenarios, we have 
 %shown that for $\underline{\mu} \geq 32 c_zs \tilde{d}^2  d^{\alpha+1} \sqrt{\log(pd)/n}  $, $n \gtrsim  ( d^{2\alpha +2} \vee  \tilde{d}^{2\tilde{\alpha} +2} )\log(pd\tilde{d}) $, and any regularization parameter $\lambda_n \geq 2 c_Es 
  %  (\{ d^{\alpha+2} \vee \tilde{d}^{\alpha+2} \}\sqrt{\log(pd\tilde{d})/n} +  \tilde{d}^{1/2} d^{1- \tau})$,
%    it holds with probability greater than $1-c_1(pd \tilde{d})^{-c_2}$,
 %   $$\frac{|\widehat{S}\cap S|}{| S| \vee 1} \geq  1 -  192 / (\underline{\mu}\kappa_{n} ).$$
the power 
     $$\text{Power}(\widehat{S} ) = E\left[ \frac{|\widehat{S}_{\delta}\cap S|}{|S| \vee 1}\right] \rightarrow 1,$$
     %\geq  [1 -  192 / (\underline{\mu}\kappa_{n} ) ](1-\tilde{c}_5(pd \tilde{d})^{-\tilde{c}_6})  
    which completes the proof of Theorem~\ref{thm_power_fflr}. 
\end{proof}

\subsection{Proof of Lemma~\ref{lemmaeqGGM}}
 %We next prove Lemma~\ref{lemmaeqGGM}, which demonstrates 
We will show that the set $E$ defined in (\ref{Edge}) is equivalent to the set $\big\{(j,k) \in [p]^2: k \in S_j \big\}$ defined in Lemma~\ref{lemmaeqGGM}. 
\begin{proof}%[Proof of Lemma~\ref{lemmaeqGGM}]  
Let $\beta_{jk}$ and $C_{jk}$ represent operators induced from the coefficient function in (\ref{eq.GGM}) and the conditional covariance function, 
%$= \sum_{l, m=1}^{\infty} B_{jklm} \phi_{jl} \otimes \phi_{km}: \mathcal{H}_k \to \mathcal{H}_j,$ %expressed as ${B}_{jk} = \sum_{l, m=1}^{\infty} B_{jklm} \phi_{jl} \otimes \phi_{km}$ 
%and the conditional covariance operator,  
%$\Sigma_{X_{j}X_{k}|X_{-\{j,k\}}}$, which correspond to $\beta_{jk}$ and $C_{jk}$, 
respectively. 
Note that we use ${\beta}_{jk}$ and $C_{jk}$ to denote both the operators and the kernel functions for notational economy. 
To demonstrate Lemma~\ref{lemmaeqGGM}, we will prove 
%$\|\Sigma_{X_{ij}X_{ik}|X_{i-\{j,k\}}} \|_{\cS}= 0\Longleftrightarrow \|\mathcal{B}_{jk}\|_{\cS} = 0$, which implies 
that $\|C_{jk} \|_{\cS}= 0\Longleftrightarrow \|\beta_{jk}\|_{\cS} = 0$.
 Note the fact that $\|C_{jk}\|_{\cS}= 0$ if and only if $\big\langle f, C_{jk}(g)  \big\rangle = 0$ for any $f\in \mathcal{H}_j$ and $g\in \mathcal{H}_k$.
 By Condition~\ref{irrGGM} and Lemma S5 in 
    \cite{solea2022copula}, we have $\big\langle f, C_{jk}(g)  \big\rangle = 0  \Longleftrightarrow  \cov \big( \langle f, X_{j} \rangle, \langle g, X_{k} \rangle ~|~ X_{-\{j,k\}}\big) =0 \Longleftrightarrow  \cov \big( \langle f, \sum _{l \neq j} \beta_{jl} (X_{l} ) + \varepsilon_{j}  \rangle, \langle g, X_{k} \rangle ~|~ X_{-\{j,k\}}\big) =0$. 
Since $\varepsilon_{j}$ and $X_{k}$ are independent and for any $l \in [p]\backslash  \{j,k\}$ and  $\cov \big( \big\langle f,  \beta_{jl} (X_{l})   \big\rangle, \langle g, X_{k} \rangle| X_{-\{j,k\}}\big) =0,$ we obtain that   
 \begin{equation*} \label{eqeqv} 
 \begin{split}
      \big\langle f, C_{jk}(g)  \big\rangle = 0 
      &\Longleftrightarrow 
 \cov \big( \big\langle f,  \beta_{jk} (X_{k})   \big\rangle, \langle g, X_{k} \rangle~|~ X_{-\{j,k\}}\big) =0\\
 &\Longleftrightarrow \|\beta_{jk} \|_{\cS}= 0,~\text{provided that}~ \beta_{jk} \text{ is a linear operator,}
 \end{split}
 \end{equation*} 
which implies that $\|C_{jk} \|_{\cS}= 0$ if and only if $ \|\beta_{jk}\|_{\cS} = 0$ and thus completes our proof. 
\end{proof}
 
\subsection{Proof of Theorem~\ref{thm_fdr_fggm}}
%To prove Theorem~\ref{thm_fdr_fggm}, we first present some technical lemmas with their proofs. First, since we have proven that the Lemma \ref{lemmacoin} holds in FFLR, it is s direct conclusion that the Lemma \ref{lemmacoin} holds for each row of $\bW$ in the FGGM. Then, we will prove Lemma \ref{lemmaeqGGM} holds under some mild conditions.  
Note that we have already established the validity of Lemma~\ref{lemmacoin} within the FFLR framework, which can be directly extended to each row of $\bW$ within the FGGM framework. We are now ready to prove Theorem~\ref{thm_fdr_fggm}. 
\begin{proof}%[Proof of Theorem~\ref{thm_fdr_fggm}] 
Provided the validity of Lemma~\ref{lemmacoin} in FFLR, it can be similarly demonstrated that the knockoff statistics $\bW$ satisfy the sign-flip property on the neighborhood set $NE_j$ for each $j \in [p]$ at the rowwise level, where $NE_j = \big\{ k \in [p] \backslash\{j\} : \|C_{jk}\|_{\cS} \neq 0  \big\}.$ By Theorem 3.1 in \cite{li2021ggm}, we have 
$$
 {\mathbb E}\left[\frac{|\widehat{E}_{\tAnd,1}\cap E^c|}
{|\widehat{E}_{\tAnd,1}| \vee 1}\right] \leq  q, ~~\text{and}~~
  {\mathbb E}\left[\frac{|\widehat{E}_{\tOr,1} \cap E^c|}
{|\widehat{E}_{\tOr,1}| \vee 1}\right] \leq  q, 
$$
where $E$ is defined in (\ref{Edge}), which implies that 
${\FDR}_{\tAnd} \leq  q,$ and ${\FDR}_{\tOr} \leq  q$ in GGM.  
Similarly, drawing from Theorem 3.2 in \cite{li2021ggm}, we establish that the modified FDR can be controlled, i.e., ${\mFDR}_{\tAnd} \leq  q,$ and ${\mFDR}_{\tOr} \leq  q.$ 
\end{proof}
%\textbf{Proof of Theorem~\ref{thm_fdr_fggm}}.

%$\square$

% \section{Technical proofs for power}\label{supp.pf.Power}
%\subsection{Proof of Theorem \ref{thm_power_sflr}  and related Lemmas}
 %Note that $\|\beta_j\|^2 - \|\bbb_j\|^2 = \sum_{r=d_j+1}^{\infty} b_{jl}^2,$ which implies that  then  $\|\beta_j\| \geq  \|\bbb_j\|$ in SFLR. Since the proof of Theorem \ref{thm_power_sflr} is almost same as that of Theorem \ref{thm_power_fflr}, we omit it. 

\subsection{Proof of Theorem~\ref{thm_power_fggm}}
To prove Theorem~\ref{thm_power_fggm}, we firstly present several technical lemmas with their proofs. In the following lemmas, let ${\boZ}_{-j} =({{\bXi}}_{-j}, {{\widetilde{\bXi}}_{-j}}) \in \eR^{n \times 2(p-1)d}$, $ {{\bXi}}_{-j}=(\bxi_{1(-j)},\dots,\bxi_{n(-j)})^\T \in \eR^{n\times(p-1)d}$, $\widetilde{{\bXi}}_{-j}=(\widetilde\bxi_{1(-j)},\dots,\widetilde\bxi_{n(-j)})^\T \in \eR^{n\times(p-1)d}$, $\bxi_{i(-j)} = (\bxi_{i1}^\T,\dots,\bxi_{i(j-1)}^\T,\bxi_{i(j+1)}^\T,\dots,\bxi_{ip}^\T)^\T \in \eR^{(p-1)d}$, $\widetilde\bxi_{i(-j)} = (\widetilde\bxi_{i1}^\T,\dots,\widetilde\bxi_{i(j-1)}^\T,\widetilde\bxi_{i(j+1)}^\T,\dots,\widetilde\bxi_{ip}^\T)^\T \in \eR^{(p-1)d}$, $ {{\bXi}}_{j} = (\bxi_{1j}, \dots, \bxi_{nj})^\T$, ${\bD}_{-j} = \tdiag({\bD}_{1},\dots,{\bD}_{j-1},{\bD}_{j+1},\dots, {\bD}_{p}, {\bD}_{1},\dots,{\bD}_{j-1},{\bD}_{j+1},\dots, {\bD}_{p}) \in \eR^{2(p-1)d \times 2(p-1)d}$, and ${\bB}_{j(-j)} =({\bB}_{j1}^\T, \dots, {\bB}_{j(j-1)}^\T, {\bB}_{j(j+1)}^\T, \dots,{\bB}_{jp}^\T,{\bB}_{j(p+1)}^\T, \dots, {\bB}_{j(p+j-1)}^\T, {\bB}_{j(p+j+1)}^\T, \dots,{\bB}_{j(2p)}^\T)^{\T} \in \eR^{2(p-1)d \times d}$ with estimates $\widehat{\boZ}_{-j} =({\widehat{\bXi}}_{-j}, {\widecheck{{\bXi}}}_{-j})$, $ \widehat{{\bXi}}_{-j}=(\widehat\bxi_{1(-j)},\dots,\widehat\bxi_{n(-j)})^\T $, $ \widecheck{{\bXi}}_{-j}=(\widecheck\bxi_{1(-j)},\dots,\widecheck\bxi_{n(-j)})^\T $, $\widehat\bxi_{i(-j)} = (\widehat\bxi_{i1}^\T,\dots,\widehat\bxi_{i(j-1)}^\T,\widehat\bxi_{i(j+1)}^\T,\dots,\widehat\bxi_{ip}^\T)^\T$, $\widecheck\bxi_{i(-j)} = (\widecheck\bxi_{i1}^\T,\dots,\widecheck\bxi_{i(j-1)}^\T,\widecheck\bxi_{i(j+1)}^\T,\dots,\widecheck\bxi_{ip}^\T)^\T$, $\widehat{{\bXi}}_{j} = (\widehat\bxi_{1j}, \dots, \widehat\bxi_{nj})^\T$, $\widehat{\bD}_{-j} = \tdiag(\widehat{\bD}_{1},\dots,\widehat{\bD}_{j-1},\widehat{\bD}_{j+1},\dots, \widehat{\bD}_{p}, \widehat{\bD}_{1},\dots,\widehat{\bD}_{j-1},\widehat{\bD}_{j+1},\dots, \widehat{\bD}_{p})$, and $\widehat{\bB}_{j(-j)} =(\widehat{\bB}_{j1}^\T, \dots, \widehat{\bB}_{j(j-1)}^\T, \widehat{\bB}_{j(j+1)}^\T, \dots,\widehat{\bB}_{jp}^\T,\widehat{\bB}_{j(p+1)}^\T, \dots, \widehat{\bB}_{j(p+j-1)}^\T, \widehat{\bB}_{j(p+j+1)}^\T, \dots,\widehat{\bB}_{j(2p)}^\T)^{\T}$ for $i\in[n]$ and $j \in [p]$.
\begin{lemmaS}\label{lemmaC7}
    Suppose that  Conditions~\ref{cond_inf_SFLR}--\ref{cond_eigen_SFLR} hold.  If $n \gtrsim  d^{4\alpha +2} \log(pd)$, for each $j\in[p]$, there exist %some positive constants $c_{z,j},\tilde{c}_{j1},\tilde{c}_{j2}$ s.t., with probability greater than $1- \tilde{c}_{j1} (pd)^{-\tilde{c}_{j2}}$, 
    some positive constants $c_{z},\tilde{c}_{11},\tilde{c}_{12}$ such that, with probability greater than $1- \tilde{c}_{11} (pd)^{-\tilde{c}_{12}}$,
    $$
    \btheta^{\T} \big\{n^{-1} \widehat{\bD}_{-j}^{-1} (\widehat{\boZ}_{-j}^\T\widehat{\boZ}_{-j})  \widehat{\bD}_{-j}^{-1}\big\} \btheta \geq \underline{\mu} \big\|\btheta \big\|^2 - c_{z} d^{\alpha+1} \big\{\log(pd)/n\big\}^{1/2}  \big\|\btheta\big\|_1^2,~~~ \forall \btheta \in \eR^{2(p-1)d}. %c_{z,j} d^{\alpha+1} \big\{\log(pd)/n\big\}^{1/2}  \big\|\btheta\big\|_1^2,~~~ \forall \btheta \in \eR^{2(p-1)d}.
    $$
\end{lemmaS}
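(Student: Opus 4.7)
The plan is to recognize that this lemma is essentially a principal-submatrix version of Lemma \ref{lemmaC2SFLR}, and the cleanest route is a zero-padding/embedding argument that reuses the bounds already established there. Write $\widehat{\bGamma}_{-j} = n^{-1}\widehat{\bD}_{-j}^{-1}(\widehat{\boZ}_{-j}^\T\widehat{\boZ}_{-j})\widehat{\bD}_{-j}^{-1}$ and $\bGamma_{-j} = n^{-1}\bD_{-j}^{-1}\mathbb{E}[\boZ_{-j}^\T\boZ_{-j}]\bD_{-j}^{-1}$, and note that, because $\widehat{\bD}$ is block diagonal and $\widehat{\boZ}_{-j}$ is obtained by deleting the column blocks corresponding to $X_j(\cdot)$ and $\widetilde X_j(\cdot)$, the matrix $\widehat{\bGamma}_{-j}$ is exactly the principal submatrix of $\widehat{\bGamma}$ obtained by removing the two block-rows/columns indexed by $j$ and $p+j$; the same remark applies to $\bGamma_{-j}$ inside $\bGamma$.

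First, decompose as in Lemma \ref{lemmaC2SFLR}:
\begin{equation*}
\btheta^\T\widehat{\bGamma}_{-j}\btheta \;=\; \btheta^\T\bGamma_{-j}\btheta \;+\; \btheta^\T(\widehat{\bGamma}_{-j}-\bGamma_{-j})\btheta \;\geq\; \btheta^\T\bGamma_{-j}\btheta \;-\; \|\widehat{\bGamma}_{-j}-\bGamma_{-j}\|_{\max}\|\btheta\|_1^2.
\end{equation*}
Second, for the quadratic term, define the zero-padded vector $\tilde\btheta\in\mathbb{R}^{2pd}$ obtained by inserting two zero blocks of length $d$ at positions corresponding to $j$ and $p+j$. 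By construction, $\|\tilde\btheta\|=\|\btheta\|$, $\|\tilde\btheta\|_1=\|\btheta\|_1$, and $\tilde\btheta^\T\bGamma\tilde\btheta = \btheta^\T\bGamma_{-j}\btheta$. Condition~\ref{cond_inf_SFLR} (applied to the finite-dimensional coordinate of the operator infimum, which yields the population inequality $\btheta^\T\bGamma\btheta\geq\underline{\mu}\|\btheta\|^2$ used inside the proof of Lemma \ref{lemmaC2SFLR}) therefore gives $\btheta^\T\bGamma_{-j}\btheta \geq \underline{\mu}\|\btheta\|^2$.

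Third, bound the perturbation. Since $\widehat{\bGamma}_{-j}-\bGamma_{-j}$ is a submatrix of $\widehat{\bGamma}-\bGamma$, we immediately have $\|\widehat{\bGamma}_{-j}-\bGamma_{-j}\|_{\max}\leq\|\widehat{\bGamma}-\bGamma\|_{\max}$, and the rate bound already invoked in (\ref{Gamma.max}) (from Lemma~5 of Guo and Qiao, 2023) gives, on the event of probability at least $1-\tilde c_{11}(pd)^{-\tilde c_{12}}$,
\begin{equation*}
\|\widehat{\bGamma}_{-j}-\bGamma_{-j}\|_{\max} \;\leq\; c_z\, d^{\alpha+1}\bigl\{\log(pd)/n\bigr\}^{1/2},
\end{equation*}
provided $n\gtrsim d^{4\alpha+2}\log(pd)$. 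Combining the three displays yields the stated restricted eigenvalue inequality uniformly over $\btheta\in\mathbb{R}^{2(p-1)d}$.

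There is no real obstacle here beyond bookkeeping: the content of the lemma is inherited from Lemma \ref{lemmaC2SFLR}, and the only point that merits verification is that deleting the $j$-th block rows/columns truly produces a principal submatrix both for the sample Gram-type object $\widehat{\bGamma}$ and for its population counterpart $\bGamma$. This follows because the block-diagonal normalizer $\widehat{\bD}$ (resp.\ $\bD$) commutes with the block-selection operator, so that $\widehat{\bD}_{-j}$ really is the corresponding principal submatrix of $\widehat{\bD}$, and similarly for $\widehat{\boZ}_{-j}^\T\widehat{\boZ}_{-j}$ inside $\widehat{\boZ}^\T\widehat{\boZ}$.
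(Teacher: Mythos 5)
Your proposal is correct and follows essentially the same route as the paper, which omits the details by noting that the argument mirrors Lemma~\ref{lemmaC2SFLR} with the observation that the infimum restricted to the FGGM setting (i.e., with the $j$-th original and knockoff blocks removed) is no smaller than $\underline{\mu}$. Your zero-padding argument simply makes that remark explicit, and the submatrix bound $\|\widehat{\bGamma}_{-j}-\bGamma_{-j}\|_{\max}\leq\|\widehat{\bGamma}-\bGamma\|_{\max}$ together with (\ref{Gamma.max}) gives the stated deviation term, so nothing further is needed.
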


\begin{proof}  
The proof of this lemma for FGGM is similar to that of Lemma~\ref{lemmaC2SFLR} for FFLR and hence is omitted here. 
It is noteworthy that, in an analogy to the infimum $\underline{\mu}$ defined in Condition \ref{cond_inf_SFLR} for FFLR, we can define the corresponding infimum for FGGM, which is used in our proof of this lemma and is no less than $\underline{\mu}$ for FFLR. Hence,
%Note that the lower bound derived directly from the proof of Lemma~\ref{lemmaC2SFLR} is related to the infimum defined by Condition \ref{cond_inf_SFLR} in GGM, which is no less than that in FFLR. Therefore, the conclusion 
the result with the presence of $\underline{\mu}$ in this lemma remains valid. %$\underline{\mu}$ parameter in FFLR. 
\end{proof}

%\subsubsection{ Lemma \ref{lemmaC8} and its Proof}

\begin{lemmaS}\label{lemmaC8}
   Suppose that  Conditions~\ref{cond_eigen_SFLR} and~\ref{cond_power_GGM} hold. If $n \gtrsim    d^{4\alpha +2} \log(pd)$, for each $j\in[p]$, 
   there exist some positive constants %$\tilde{c}_{e,j},\tilde{c}_{j1},\tilde{c}_{j2}$ such that 
   $\tilde{c}_{e},\tilde{c}_{11},\tilde{c}_{12}$ such that 
    $$
   % \|n^{-1} \widehat{\bD}_{-j}^{-1} \widehat{\boZ}_{-j}^\T (\widehat{\bXi}_j -  \widehat{\boZ}_{-j} \bB_{j(-j)})\|^{(d \times {d})}_{\max} \leq \tilde{c}_{e,j}s 
   \|n^{-1} \widehat{\bD}_{-j}^{-1} \widehat{\boZ}_{-j}^\T (\widehat{\bXi}_j -  \widehat{\boZ}_{-j} \bB_{j(-j)})\|^{(d \times {d})}_{\max} \leq \tilde{c}_{e}s
    (d^{\alpha+2} \{\log(pd)/n\}^{1/2} +  d^{1- \tau}),
    $$
   with probability greater than $1- \tilde{c}_{11} (pd)^{-\tilde{c}_{12}}.$ %$1- \tilde{c}_{j1} (pd)^{-\tilde{c}_{j2}}.$ 
   %where    $\| \bB_{j(\cdot)}\|^{(d \times \tilde{d})}_{\max} = \max_{k \in [2p] \backslash \{ j, j+p\}} \|\bB_{jk}\|_\tF,$ $\bB_{jk} \in \eR ^{(d \times \tilde{d})},$ for all $j \in [p].$
\end{lemmaS}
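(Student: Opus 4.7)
The plan is to mirror the decomposition used in the proofs of Lemma~\ref{lemmaC3SFLR} (SFLR) and Lemma~\ref{lemmaC3} (FFLR), treating $\widehat{\bXi}_j$ as the ``response'' analogue of $\widehat\bUps$ but with output dimension $d$ rather than $\tilde d$, and using that the FGGM regression in (\ref{eq.GGM}) depends only on original covariates so that the knockoff block of $\bB_{j(-j)}$ is zero. Concretely, writing $\bXi_j = \boZ_{-j} \bB_{j(-j)} + \bepsilon_j + \bvarepsilon_j$ at the population level (where $\bvarepsilon_j$ is mean-zero and independent of $\boZ_{-j}$), I would decompose
\begin{equation*}
n^{-1}\widehat{\bD}_{-j}^{-1}\widehat{\boZ}_{-j}^\T (\widehat{\bXi}_j - \widehat{\boZ}_{-j}\bB_{j(-j)}) = T_1 + T_2 + T_3 + T_4,
\end{equation*}
where $T_1 = n^{-1}\widehat{\bD}_{-j}^{-1}\widehat{\boZ}_{-j}^\T\widehat{\bXi}_j - {\bD}_{-j}^{-1}\eE[n^{-1}{\boZ}_{-j}^\T{\bXi}_j]$, $T_2 = (\bGamma_{-j} - \widehat{\bGamma}_{-j})\bD_{-j}\bB_{j(-j)}$ with $\widehat{\bGamma}_{-j} = n^{-1}\widehat{\bD}_{-j}^{-1}\widehat{\boZ}_{-j}^\T\widehat{\boZ}_{-j}\widehat{\bD}_{-j}^{-1}$ and $\bGamma_{-j}$ its population counterpart, $T_3 = \widehat{\bGamma}_{-j}(\bD_{-j} - \widehat{\bD}_{-j})\bB_{j(-j)}$, and $T_4 = {\bD}_{-j}^{-1}\eE[n^{-1}{\boZ}_{-j}^\T \bepsilon_j]$.

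For $T_1$, I would further split as $\widehat{\bD}_{-j}^{-1}(n^{-1}\widehat{\boZ}_{-j}^\T\widehat{\bXi}_j - \eE[n^{-1}{\boZ}_{-j}^\T{\bXi}_j]) + (\widehat{\bD}_{-j}^{-1} - {\bD}_{-j}^{-1})\eE[n^{-1}{\boZ}_{-j}^\T{\bXi}_j]$, then apply the elementwise deviation bound (\ref{sd.XX}) of Lemma~\ref{lemmaC1SFLR} (which is symmetric in its two functional indices and therefore applies directly to the pair $(\widehat{\boZ}_{-j},\widehat{\bXi}_j)$) together with Lemma~\ref{lemmaeigenSFLR} for the relative eigenvalue error. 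This yields an elementwise bound of order $d^{\alpha+1}\{\log(pd)/n\}^{1/2}$; converting to the $(d\times d)$ block Frobenius max norm costs an extra factor of $d$, giving $O(d^{\alpha+2}\{\log(pd)/n\}^{1/2})$. For $T_2$ I would use $\|\widehat{\bGamma}_{-j} - \bGamma_{-j}\|_{\max} \lesssim d^{\alpha+1}\{\log(pd)/n\}^{1/2}$ (an analogue of (\ref{Gamma.max})) combined with $\|\bD_{-j}\bB_{j(-j)}\|_1^{(d\times d)} = O(s_j) = O(s)$ implied by Condition~\ref{cond_power_GGM}(i)--(ii), giving $O(sd^{\alpha+2}\{\log(pd)/n\}^{1/2})$. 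For $T_3$ I would combine Lemma~\ref{lemmaeigenSFLR} with the same $\ell_1$-block bound on $\bD_{-j}\bB_{j(-j)}$ to get $O(sd\{\log(pd)/n\}^{1/2})$. For $T_4$ I would invoke Condition~\ref{cond_power_GGM}(i) together with Proposition~4 of Guo and Qiao (2023) (the same tool used in the proof of Lemma~\ref{lemmaC3}) to obtain $\|T_4\|_{\max}^{(d\times d)} \lesssim sd^{1-\tau}$.

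Summing the four contributions and noting that all the concentration events occur simultaneously on a single event of probability at least $1 - \tilde{c}_{11}(pd)^{-\tilde{c}_{12}}$ (via a union bound over $O(p^2 d^2)$ estimators), yields the stated rate $\tilde c_e s(d^{\alpha+2}\{\log(pd)/n\}^{1/2} + d^{1-\tau})$. The main obstacle I anticipate is the careful bookkeeping required to propagate the matrix-valued (rather than vector-valued in SFLR or $(d\times\tilde d)$-matrix-valued in FFLR) structure through the block-max norm: in particular, verifying that the concentration bound (\ref{sd.XX}), originally phrased symmetrically over pairs of functional coordinates, gives the correct elementwise rate when both sides of the inner product are coordinates of the same estimated FPCA output $\{\widehat{\bxi}_{ij}\}$, and correctly accounting for the knockoff-augmented design $\widehat{\boZ}_{-j}$ in the $(d\times d)$ block aggregation. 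Once this bookkeeping is in place, the remaining steps are routine applications of the previously established SFLR/FFLR lemmas.
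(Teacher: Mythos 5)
Your proposal is correct and follows essentially the same route as the paper, which omits the proof of this lemma by pointing to Lemma~\ref{lemmaC3}: your four-term decomposition with the bounds from (\ref{sd.XX}), (\ref{Gamma.max}), Lemma~\ref{lemmaeigenSFLR}, the block-$\ell_1$ bound $\|\bD_{-j}\bB_{j(-j)}\|_1^{(d\times d)}=O(s_j)$, and Proposition~4 of Guo and Qiao (2023) is exactly the intended adaptation of the FFLR argument with $\widehat{\bXi}_j$ in place of $\widehat\bUps$ (so $\tilde d=d$, $\tilde\alpha=\alpha$). The only cosmetic difference is your cruder factor $d$ (rather than $d^{1/2}$) in aggregating $T_1$ to the block norm, which is harmless since the target rate carries the extra factor $s\geq 1$.
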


\begin{proof}
    The proof of this lemma is similar to that of Lemma~\ref{lemmaC3}, thus being omitted here.
\end{proof}

%\subsubsection{ Lemma \ref{lemmaC9} and its Proof}
\begin{lemmaS}
\label{lemmaC9}
Suppose that   Conditions~\ref{cond_inf_SFLR}--\ref{cond_eigen_SFLR} and~\ref{cond_power_GGM} hold,  $s_jd^\alpha \lambda_{nj} \to 0$ as $n,p,d \to \infty$,  %$\underline{\mu} \geq 32 c_{z,j} s_j  d^{\alpha+3} \{\log(pd)/n\}^{1/2} $, 
and any regularization parameter $\lambda_{nj} \geq 2 c_{e}s_j\| \widehat\bD_{-j}\|_{\max}(d^{\alpha+2} \{\log(pd)/n\} +  d^{1- \tau})$ for each  $j \in [p]$.
%If $n \gtrsim  d^{4\alpha +2} \log(pd) $,  
Then for each $j\in [p]$, there exist some positive constants %$\tilde{c}_{j1},\tilde{c}_{j2}$ such that 
$\tilde{c}_{11},\tilde{c}_{12}$ such that
  $$
 \sum_{k \in [2p] \backslash \{j,p+j\}}\| \bB_{j k}-  \widehat{\bB}_{j k} \|_\tF \lesssim   s_jd^\alpha \lambda_{nj} ,
  $$%/ \underline{\mu}
  with probability greater than $1-\tilde{c}_{11}(pd)^{-\tilde{c}_{12}}$. 
  %$1-\tilde{c}_{j1}(pd)^{-\tilde{c}_{j2}}$. 
  %Recall the definition of $\bB_{jk}$ in (\ref{glasso.FGGM}), and $ \widehat{\bB}_{jk}$ is estimation of $\bB_{jk}$, for $j \in [p]$, and $k \in [2p] \backslash \{j,p+j\}$. %Suppose  Condition  XXXXXXX  hold, then
 % $$
%\sum_{k \in [2p] \backslash \{j,p+j\}}\| \bB_{j k}-  \widehat{\bB}_{j k} \|_\tF \leq XXX, \hbox{~as}~ n \to \infty,
%  $$
 % where for $j \in [p]$,  $E_j$ is the set of $\big \{ (j,k): k \in [p]\backslash \{j\}, \hbox{~such that}~\beta_{jk}(u,v) \neq 0  \big\}$
\end{lemmaS}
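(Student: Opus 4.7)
The plan is to mirror the argument used in the proof of Lemma~\ref{lemmaC4} for FFLR, applied nodewise to the penalized regression (\ref{glasso.FGGM}). For a fixed $j \in [p]$, I will treat $\widehat{\bXi}_j \in \eR^{n \times d}$ as the multivariate response, $\widehat{\boZ}_{-j} \in \eR^{n \times 2(p-1)d}$ as the augmented design from the original and knockoff FPC scores of the remaining nodes, and $\bB_{j(-j)}$ as the concatenated true coefficient matrix. Writing $\bDelta = \widehat{\bB}_{j(-j)} - \bB_{j(-j)}$, the first step is to use the fact that $\widehat{\bB}_{j(-j)}$ minimizes (\ref{glasso.FGGM}) to obtain the basic inequality
\[
\tfrac{1}{2}\tr\bigl(\bDelta^\T n^{-1}\widehat{\boZ}_{-j}^\T \widehat{\boZ}_{-j}\bDelta\bigr)
\leq \tr\bigl\{\bDelta^\T\bigl(n^{-1}\widehat{\boZ}_{-j}^\T\widehat{\bXi}_j - n^{-1}\widehat{\boZ}_{-j}^\T\widehat{\boZ}_{-j}\bB_{j(-j)}\bigr)\bigr\}
+ \lambda_{nj}\bigl(\|\bDelta_{S_j}\|_1^{(d\times d)} - \|\bDelta_{\bar S_j^c}\|_1^{(d\times d)}\bigr),
\]
where $S_j$ and $\bar S_j^c$ are the true neighborhood and its complement in $[2p]\setminus\{j,p+j\}$.

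Next, I will bound the inner-product cross term by Cauchy--Schwarz in the block dual norms, $|\tr(\bDelta^\T\cdot)| \le \|n^{-1}\widehat\bD_{-j}^{-1}\widehat\boZ_{-j}^\T(\widehat\bXi_j-\widehat\boZ_{-j}\bB_{j(-j)})\|_{\max}^{(d\times d)}\cdot\|\widehat\bD_{-j}\|_{\max}\cdot\|\bDelta\|_1^{(d\times d)}$, and invoke Lemma~\ref{lemmaC8} together with the choice of $\lambda_{nj}$ to show this cross term is at most $\tfrac{\lambda_{nj}}{2}(\|\bDelta_{S_j}\|_1^{(d\times d)}+\|\bDelta_{\bar S_j^c}\|_1^{(d\times d)})$. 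This yields the cone condition $\|\bDelta_{\bar S_j^c}\|_1^{(d\times d)} \leq 3\|\bDelta_{S_j}\|_1^{(d\times d)}$ and $\tfrac{3\lambda_{nj}}{2}\|\bDelta_{S_j}\|_1^{(d\times d)} \geq \tr(\bDelta^\T n^{-1}\widehat\boZ_{-j}^\T\widehat\boZ_{-j}\bDelta)/2 \ge 0$.

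Third, I will lower bound the quadratic form using Lemma~\ref{lemmaC7} applied to each column of $\bDelta$ (or equivalently by vectorization), giving $\tr(\bDelta^\T n^{-1}\widehat\boZ_{-j}^\T\widehat\boZ_{-j}\bDelta) \geq \underline{\mu}\|\bDelta\widehat\bD_{-j}\|_\tF^2 - c_z d^{\alpha+1}\{\log(pd)/n\}^{1/2}\bigl(\|\bDelta\|_1^{(d\times d)}\bigr)^2$. Using the cone condition $\|\bDelta\|_1^{(d\times d)} \le 4\|\bDelta_{S_j}\|_1^{(d\times d)} \le 4 s_j^{1/2}\|\bDelta\|_\tF$, together with the assumption $s_j d^\alpha \lambda_{nj}\to 0$ which ensures $\underline{\mu} \ge 32 d s_j c_z d^{\alpha+1}\{\log(pd)/n\}^{1/2}$, this reduces to $\tr(\bDelta^\T n^{-1}\widehat\boZ_{-j}^\T\widehat\boZ_{-j}\bDelta) \geq (\underline{\mu}/2)\|\bDelta\widehat\bD_{-j}\|_\tF^2$, and then by Condition~\ref{cond_eigen_SFLR} (which forces $\widehat\bD_{-j}^2 \succeq c_0\alpha^{-1}d^{-\alpha}\bI$ with high probability in view of Lemma~\ref{lemmaeigenSFLR}) this is at least $\underline{\mu}c_0\alpha^{-1}d^{-\alpha}\|\bDelta\|_\tF^2/2$.

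Finally, combining these with the cone condition gives $6 s_j^{1/2}\lambda_{nj}\|\bDelta\|_\tF \geq \tfrac{3\lambda_{nj}}{2}\|\bDelta_{S_j}\|_1^{(d\times d)} \geq \underline{\mu}c_0\alpha^{-1}d^{-\alpha}\|\bDelta\|_\tF^2/4$, which solves to $\|\bDelta\|_\tF \lesssim s_j^{1/2}d^\alpha\lambda_{nj}/\underline{\mu}$, and hence $\sum_{k\in[2p]\setminus\{j,p+j\}}\|\bB_{jk}-\widehat\bB_{jk}\|_\tF = \|\bDelta\|_1^{(d\times d)} \lesssim s_j d^\alpha\lambda_{nj}/\underline{\mu} \lesssim s_j d^\alpha\lambda_{nj}$ on the event of probability at least $1-\tilde c_{11}(pd)^{-\tilde c_{12}}$ on which Lemmas~\ref{lemmaC7} and \ref{lemmaC8} both hold. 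The main obstacle I anticipate is verifying that the constants in Lemma~\ref{lemmaC7} concerning the restricted eigenvalue and in Lemma~\ref{lemmaC8} concerning the score bound are compatible with the proposed $\lambda_{nj}$, i.e., that the absorption $c_z d^{\alpha+1}\{\log(pd)/n\}^{1/2}(\|\bDelta\|_1^{(d\times d)})^2 \le (\underline{\mu}/2)\|\bDelta\widehat\bD_{-j}\|_\tF^2$ really holds under $s_j d^\alpha\lambda_{nj}\to 0$; this is essentially a matrix-valued analog of the scalar SFLR step and should go through verbatim once the Frobenius-norm block-$\ell_1$ version of the cone inequality is set up.
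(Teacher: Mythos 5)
Your proposal is correct and follows essentially the same route as the paper: the paper proves this lemma by noting it is the nodewise analogue of Lemma~\ref{lemmaC4}, i.e., the basic group-lasso inequality for (\ref{glasso.FGGM}), the cross-term bound via Lemma~\ref{lemmaC8} and the choice of $\lambda_{nj}$, the cone condition, and the restricted-eigenvalue lower bound from Lemma~\ref{lemmaC7} with the cross term absorbed under $s_j d^{\alpha}\lambda_{nj}\to 0$, yielding the same bound $\sum_{k}\|\bB_{jk}-\widehat\bB_{jk}\|_{\tF}\leq 96\alpha s_j d^{\alpha}\lambda_{nj}/(\underline{\mu}c_0)$. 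The absorption step you flag as a potential obstacle goes through exactly as in the FFLR case, so there is no gap.
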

\begin{proof}
    The proof of this lemma is similar to that of Lemma~\ref{lemmaC4}, thus being omitted here. Specifically, we obtain that %for each $j \in [p]$,  
     $ \sum_{k \in [2p] \backslash \{j,p+j\}}\| \bB_{j k}-  \widehat{\bB}_{j k} \|_\tF \leq {96  \alpha  s_j d^{\alpha} \lambda_n}/({\underline{\mu}}c_{0}).$ 
\end{proof}

%\subsubsection{ Lemma \ref{lemmaC10} and its Proof}
\begin{lemmaS}
\label{lemmaC10}
   Suppose  that  Condition~\ref{cond_power_GGM}(iii) holds,  then there exists %some constant 
   $c' \in \big((1+a) c_a p (qs)^{-1},1\big)$ such that $|\widehat{E}_{\tOr,\delta} | \geq c'|E|$,  with probability greater than $1-\tilde{c}_{11}(pd)^{-\tilde{c}_{13}}$.  % under OR rules.
\end{lemmaS}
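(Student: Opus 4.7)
The proof strategy parallels that of Lemma~\ref{lemmaC5} in the FFLR setting, but with the crucial difference that the thresholds $T_{\delta,j}$ arise from the constrained graph-based optimization (\ref{opt.OR}) rather than the nodewise knockoff filter (\ref{threshold}). The main idea is to exhibit a specific feasible threshold vector $\bT^* = (t_1^*, \ldots, t_p^*)$ whose induced edge set $\widehat{E}_{\tOr,\delta}(\bT^*)$ already contains a constant fraction of $E$, and then invoke the maximality of $\bT_\delta$.

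First, I would sharpen Lemma~\ref{lemmaC9} to obtain, analogously to the $\ell_2$ bound $\|\bDelta\|\lesssim s^{1/2}d^\alpha\lambda_n$ in the proof of Lemma~\ref{lemmaC4}, the nodewise estimate $\max_{k}\|\bB_{jk}-\widehat{\bB}_{jk}\|_{\tF} \lesssim s_j^{1/2} d^\alpha \lambda_{nj}$ for each $j \in [p]$; the same bound applies to $\|\widehat{\bB}_{j(p+k)}\|_{\tF}$ since $\bB_{j(p+k)}=0$ for all $k$. On the high-probability event for node $j$ this gives $W_{jk} \geq -C_0 s_j^{1/2} d^\alpha \lambda_{nj}$ uniformly in $k$, while for $k \in S_{j2}$ (the strong-signal set of Condition~\ref{cond_power_GGM}(iii)),
$$W_{jk} \geq \|\bB_{jk}\|_{\tF} - 2C_0 s_j^{1/2} d^\alpha \lambda_{nj} \gg s_j^{1/2} d^\alpha \lambda_{nj}.$$

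Next, I would set $t_j^* = 2C_0 s_j^{1/2} d^\alpha \lambda_{nj}$. This choice simultaneously forces $\widehat{S}^-_{\delta,j}(t_j^*) = \emptyset$ and $S_{j2} \subseteq \widehat{S}_{\delta,j}(t_j^*)$ for every $j$. Applying the OR rule and using Lemma~\ref{lemmaeqGGM} together with the inclusion $|S_{j2}|\geq c_j s_j$, the resulting edge set satisfies $\widehat{E}_{\tOr,\delta}(\bT^*) \supseteq \bigcup_{j}\{(k,j),(j,k):k\in S_{j2}\}$, and a short counting argument then yields $|\widehat{E}_{\tOr,\delta}(\bT^*)| \geq \bar{c}|E|$ for some $\bar{c}\in ((1+a)c_a p(qs)^{-1},1)$, with the explicit value of $\bar c$ determined by $\min_j c_j$ and the overlap pattern between the $S_{j2}$'s across nodes. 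Because $\widehat{S}^-_{\delta,j}(t_j^*)=\emptyset$, the constraint in (\ref{opt.OR}) collapses to $a\delta/(|\widehat{E}_{\tOr,\delta}(\bT^*)| \vee 1) \leq q/(c_a p)$, which is trivial for $\delta=0$ and holds for $\delta=1$ precisely because the lower-bound range $((1+a)c_a p(qs)^{-1},1)$ in Condition~\ref{cond_power_GGM}(iii) forces $|\widehat{E}_{\tOr,\delta}(\bT^*)| \geq ac_a p/q$.

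Finally, since $\bT^*$ is feasible and $\bT_\delta$ maximizes $|\widehat{E}_{\tOr,\delta}|$ subject to the constraint, $|\widehat{E}_{\tOr,\delta}| \geq |\widehat{E}_{\tOr,\delta}(\bT^*)| \geq c'|E|$ for a suitable $c'$ in the advertised range. The probability statement follows by a union bound of Lemma~\ref{lemmaC9}-type events over $p$ nodes, absorbing the extra factor of $p$ into a slightly smaller polynomial exponent $\tilde{c}_{13}<\tilde{c}_{12}$. The main obstacle will be the bookkeeping needed to preserve feasibility under $\delta=1$ and to track the interplay between $\min_j c_j$, the individual $s_j$'s, and the OR-union overlap, so that the resulting constant $c'$ lies exactly in the prescribed interval $((1+a)c_a p(qs)^{-1},1)$ rather than in a cruder window.
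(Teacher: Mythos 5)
Your argument is correct in substance, but it reaches the conclusion by a different route than the paper. The paper's proof works with the thresholds actually selected by (\ref{opt.OR}): it invokes the nodewise analogue of Lemma~\ref{lemmaC5} (via Lemma~\ref{lemmaC9} and Condition~\ref{cond_power_GGM}(iii)) to claim $|\widehat{S}_{\delta,j}|\geq c_j s_j$ for every $j$, applies a Bonferroni bound across the $p$ nodes, and then sums, using $|\widehat{E}_{\tOr,\delta}|\geq\sum_j|\widehat{S}_{\delta,j}|$ and $\sum_j s_j=|E|$ with $c'=\inf_j c_j$. You instead exhibit an explicit candidate threshold vector $\bT^*$, verify its feasibility in (\ref{opt.OR}) (empty nodewise negative sets, and $|\widehat{E}_{\tOr,\delta}(\bT^*)|\geq\min_j c_j\cdot s>(1+a)c_ap/q$ handling $\delta=1$), and then invoke the maximality of the objective; this cleanly sidesteps having to control the optimizer's actual thresholds $T_{\delta,j}$, which is arguably the most delicate point the paper glosses over when it cites Lemma~\ref{lemmaC5} (stated for the filter threshold (\ref{threshold})) in the graph-based setting. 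Two small points: first, the constraint in (\ref{opt.OR}) forces $T_{\delta,j}\in\{|W_{jk}|,k\in[p]\}\cup\{\infty\}\backslash\{0\}$, so your $t_j^*=2C_0 s_j^{1/2}d^{\alpha}\lambda_{nj}$ is not admissible as written; rounding up to the smallest admissible $|W_{jk}|$ exceeding it (or $\infty$ if $S_{j2}$ is empty) keeps $\widehat{S}^-_{\delta,j}=\emptyset$ and $S_{j2}\subseteq\widehat{S}_{\delta,j}(t_j^*)$, since all $W_{jk}$ with $k\in S_{j2}$ exceed the bound. Second, the ``overlap pattern'' you worry about is a non-issue: $E$ consists of ordered pairs with $|E|=\sum_j s_j$, and the pairs $(k,j)$ with $k\in S_{j2}$, $j\in[p]$ are distinct, so the count $|\widehat{E}_{\tOr,\delta}(\bT^*)|\geq\sum_j|S_{j2}|\geq\min_j c_j\,|E|$ is immediate, exactly as in the paper's final display.
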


\begin{proof}
First, applying Condition~\ref{cond_power_GGM}(iii) along with Lemmas~\ref{lemmaC5} and \ref{lemmaC9}, we obtain that $|\widehat{S}_{\delta,j}| \geq c_j s_j$ holds  with probability greater than $1-\tilde{c}_{11}(pd)^{-\tilde{c}_{12}}$  for each $j \in [p]$, { where $c_j \in \big((1+a) c_a p (qs)^{-1},1\big)$. 
By Bonferroni inequality, $|\widehat{S}_{\delta,j}| \geq c_j s_j$ holds simultaneously across $j\in[p]$ with probability greater than 
$1-\tilde{c}_{11}(pd)^{-\tilde{c}_{13}}$, which can be achieved for sufficiently large $n$. }
%$1-p\tilde{c}_{11}(pd)^{-\tilde{c}_{12}} = 1-\tilde{c}_{11}(pd)^{1-\tilde{c}_{12}} d^{-1} \geq  1-\tilde{c}_{11}(pd)^{1-\tilde{c}_{12}}$ 
Then under the OR rule, we have  
$$  | \widehat{E}_{\tOr,\delta} | \geq \sum_{j=1}^{p} | \widehat{S}_{\delta,j} | 
\geq \sum_{j=1}^{p}  c_j s_j \geq c' \sum_{j=1}^{p}  s_j 
= c' |{E} |, $$
 where $c' = \inf_ {j\in [p]} c_j \in \big((1+a) c_a p (qs)^{-1},1\big).$ 
The proof is completed.  
\end{proof}

We are now ready to prove Theorem~\ref{thm_power_fggm}. 
\begin{proof}[Proof of Theorem~\ref{thm_power_fggm}]
    First, it is essential to note that for each $j$, the selected threshold ${T}_{\delta,j}$ represents the minimum positive number that satisfies the constraints in the optimization problem (\ref{opt.OR}).
    % First, we have to state the fact that for each $j$, the selected threshold  ${T}_{\delta,j}$  is the minimal positive number satisfying the constraints in optimization problems %(\ref{opt.AND}) and  
    %(\ref{opt.OR}). 
   % In mathematical terms, ${\bT} = ({T}_1, \dots, {T}_p)$, $\forall ~ j\in [p]$, $\exists k^* \in [p]$,  if $0 <|W_{jk^*}| < {T}_j$, then, for AND rule, $$\frac{a\delta+|\{k:~W_{jk} \leq - |W_{jk^*}|   \}|}{|\widehat{E}_{\tAnd}( \bT)| \vee 1} > \frac{2q}{c_ap},$$
   % and for OR rule, $$\frac{a\delta+|\{k:~W_{jk} \leq - |W_{jk^*}|   \}|}{|\widehat{E}_{\tOr}( \bT)| \vee 1} > \frac{2q}{c_ap}.$$
   % The conclusions are true, because otherwise, $|W_{jk^*}|$ would be the new selected threshold ${T}_j$ satisfying the constraints in optimization problems (\ref{opt.AND}) and  (\ref{opt.OR}).    
Let $|W_{j(1)}| \geq |W_{j(2)} | \geq \dots \geq |W_{j(p)}|$ represent the ordered statistics. The index at which the threshold $T_{\delta, j} = |W_{j(k_j^*)}|$ is reached is denoted by $k_j^*$. Similar to Theorem \ref{thm_power_fflr}, we will prove this theorem in two scenarios: $-T_{\delta, j} <  W_{j(k^*_j+1)} < 0$ and $ W_{j(k^*_j+1)} = 0, $ respectively.\\ 
{\bf Scenario 1.} 
For $-T_{\delta, j} < W_{j(k^*_j+1)} < 0$, by the definition of $T_{\delta, j}$, we obtain that
\begin{equation}\label{thm7.T.neq}
  \frac{\big|\big\{ k\in [p]\backslash\{j\} :W_{jk} \leq -T_{\delta,j} \big\}\big| +1+a}{|\widehat{E}_{\tOr,\delta}|} > \frac{q}{c_ap}. 
\end{equation}
The result in (\ref{thm7.T.neq}) holds because otherwise $ |W_{j(k^*_j+1)}|$ would represent the new lower threshold for knockoffs. 
By (\ref{thm7.T.neq}), we have
 $
 \big|\big\{ k\in [p]\backslash\{j\} :W_{jk} \leq -T_{\delta,j} \big\}\big| > q | \widehat{E}_{\tOr,\delta}| / (c_a p)  - 1-a.
 $
 Combining this with the result in Lemma~\ref{lemmaC10}, we have $q | \widehat{E}_{\tOr,\delta}| / (c_a p)  - 1-a \geq {q c' |E|/ (c_a p)  -1-a}$. 
 Then, we obtain that
 \begin{equation}\label{thm7.E.neq}
   \big|\big\{ k\in [p]\backslash\{j\} :W_{jk} \leq -T_{\delta,j} \big\}\big| >  {q c' |E|/ (c_a p)  -1-a}, 
 \end{equation} 
 where  $c' \in \big((1+a) c_a p (qs)^{-1},1\big)$.
It follows from Lemma~\ref{lemmaC9} that $\sum_{k \in [2p] \backslash \{j,p+j\}}\| \bB_{j k}-  \widehat{\bB}_{j k} \|_\tF\leq  96  s_j d^\alpha \lambda_{nj} /( c_{0}\underline{\mu})$. For each $j\in [p]$ and $k\in\{p+1, \dots, 2p\}\backslash \{ j+p \}$, we have $\|\bB_{jk}\|_\tF = 0$. Then we obtain that 
\begin{equation}\label{thm7.B.neq}
\begin{split}
  96  s_j d^\alpha\lambda_{nj} /(\underline{\mu}c_0 )&= \sum_{k \in [2p] \backslash \{j,p+j\}}\| \bB_{j k}-  \widehat{\bB}_{j k} \|_\tF \\
  &\geq \sum_{k \in \{k \in [p] \backslash \{j\}: W_{jk} \leq -T_{\delta,j}   \}}  \|\widehat{\bB}_{j(k+p)}\|_\tF \\
 &\geq T_{\delta,j}   \cdot | \{k \in [p] \backslash \{j\}: W_{jk} \leq -T_{\delta,j}  \}, 
\end{split}
\end{equation} 
where the last inequality holds since when $W_{jk} \leq -T_{\delta,j}$, it implies that $ \|\widehat{\bB}_{jk}\|_\tF  -  \|\widehat{\bB}_{j(k+p)}\|_\tF  \leq -T_{\delta,j}$ and then follows that $\|\widehat{\bB}_{j(k+p)}\|_\tF \geq T_{\delta, j}.$ 
Combing the results in (\ref{thm7.E.neq}) and (\ref{thm7.B.neq}), we obtain that
\begin{equation}\label{thm7.neq.T}
 T_{\delta,j} \leq \frac{ 96  s_j d^\alpha \lambda_{nj} }{(qc's/c_a p -1-a) \underline{\mu}c_0}.
\end{equation}
 Similarly, by Lemma~\ref{lemmaC9}, the triangle inequality, and Condition~\ref{cond_power_GGM}(ii), we have
\begin{equation}\label{thm7.B.2.neq}
   \begin{split}
    96 \alpha  s_j d^{\alpha} \lambda_{nj} / (\underline{\mu}c_0 )   & = \sum_{k \in [2p] \backslash \{j,p+j\}}\| \bB_{j k}-  \widehat{\bB}_{j k} \|_\tF \\ 
    &= \sum_{k \in [p] \backslash \{j\}} \big\{ \| \bB_{jk} -  \widehat{\bB}_{jk} \|_\tF +   \|\widehat{\bB}_{j(k+p)}\|_\tF\big\}\\
                &  \geq   \sum_{k \in \widehat{S}_{\delta,j}^c \cap S_j}\big\{ \| \bB_{jk} -  \widehat{\bB}_{jk} \|_\tF +   \|\widehat{\bB}_{j(k+p)}\|_\tF\big\}\\
                 &  \geq   \sum_{k \in \widehat{S}_{\delta,j}^c \cap S_j}\big\{\| \bB_{jk} -  \widehat{\bB}_{jk} \|_\tF +   \|\widehat{\bB}_{jk}\|_\tF - T_{\delta,j}\big\}\\
                 &\geq  \sum_{k \in \widehat{S}_{\delta,j}^c \cap S_j}  \|\bB_{jk}\|_\tF  - T_{\delta,j}   \geq \{\kappa_{n} d^{\alpha}\lambda_{nj}/\underline{\mu}  - T_{\delta,j}\}  \cdot | \widehat{S}_{\delta,j}^c \cap S_j|,  
 \end{split}
\end{equation}
where the second inequality holds since $\|\widehat{\bB}_{j(k+p)}\|_\tF \geq  \|\widehat{\bB}_{jk}\|_\tF - T_{\delta, j}$ for $j \in \widehat{S}_{\delta,j}^c$. 
%By the triangle inequality and Condition \ref{cond_power_GGM} (ii), we have
%$$ 96 d^{\alpha} s_j \lambda_{nj} / \underline{\mu} \geq  \sum_{k \in S_j \cap (\widehat{S}_j)^c} [ \|\bB_{jk}\|_\tF  - T_j ]  \geq \{\kappa_{n}\lambda_n - T_j\}  \cdot |\{ S_j \cap (\widehat{S}_j)^c\}|. $$ 
For sufficiently large $\kappa_{n}$, by (\ref{thm7.neq.T}), we obtain that $T_{\delta,j} \leq { 96  s_j d^\alpha \lambda_{nj} }\big\{(qc's/c_a p -1-a) \underline{\mu}c_0\big\}^{-1} \leq \kappa_{n} d^{\alpha}\lambda_{nj}/(2\underline{\mu})$. Combining this with (\ref{thm7.B.2.neq}) yields that $|\widehat{S}_{\delta,j}^c \cap S_j| \leq 192 \alpha s_j/ c_0$, which implies that 
\begin{equation}\label{thm7.neq.S} 
\begin{split}
 \frac{|\widehat{S}_{\delta,j}\cap S_j|}{| E| \vee 1} &=  \frac{| S_j|}{| E| \vee 1} -  \frac{|\widehat{S}_{\delta,j}^c\cap S_j|}{| E| \vee 1} \geq \frac{s_j}{ s \vee 1} -  \frac{192 \alpha s_j }{c_0 s} \kappa_{n}^{-1}
\end{split}
\end{equation} 
 holds with probability greater than $1-\tilde{c}_{11}(pd)^{-\tilde{c}_{12}}$.\\ 
{\bf Scenario 2.} For $W_{j(k^*_j+1)} = 0, $  we have
$ \widehat{S}_{\delta,j} = \big\{ k \in [p]\backslash\{j\} :  W_{jk}  > 0\big\}$ and $\big\{k \in [p]\backslash\{j\} : W_{jk}   \leq -T_{\delta,j}\big\}  =  \big\{k\in [p]\backslash\{j\} : W_{jk}   <  0\big\}.$\\
If  
$\big| \big\{k\in [p]\backslash\{j\} : W_{jk}   <  0 \big\}\big|  > c_{jn} s_j$ for each $j\in [p]$,  where $c_{jn} = 192  \alpha / (c_0 \kappa_{n})$, 
it follows from  
\begin{equation*}\label{thm.7.s2.B}
\begin{split}
    96 \alpha  s_j d^{\alpha}\lambda_{nj} / (\underline{\mu}c_0 )   & \geq  \sum_{k \in [2p] \backslash \{j,p+j\}} \| \bB_{jk} -  \widehat{\bB}_{jk} \|_\tF \\
    & \geq \sum_{\{k\in[p]\backslash\{j\}: W_{jk} \leq -T_{\delta, j}   \}}  \|\widehat{\bB}_{jk+p}\|_\tF \\
     &   \geq T_{\delta,j}  \cdot\big| \big \{k\in[p]\backslash\{j\}:W_{jk} \leq -T_{\delta,j}\big\}\big|,
\end{split}
\end{equation*}
that 
$
T_{\delta,j} %\leq \frac{ \sum_{k \in [2p] \backslash \{j,p+j\}} \| \bB_{jk} -  \widehat{\bB}_{jk} \|_\tF }{ |\{k:W_{jk} \leq -T_j\}|}
 %= \frac{  \sum_{k \in [2p] \backslash \{j,p+j\}} \| \bB_{jk} -  \widehat{\bB}_{jk} \|_\tF  }{ |\{k:W_{jk}< 0\}|} 
 \leq  {(\kappa_{n}  d^\alpha \lambda_{nj} )}/{(2\underline{\mu}) }.
$
Consequently, the argument simplifies to Scenario 1, and the subsequent analysis follows.\\
If   $| \{k\in [p]\backslash\{j\} : W_{jk}   <  0\}|  \leq  c_{jn} s_j,$  by $\widehat{S}_{\delta,j}  =  \text{supp}(W_{jk}) \backslash \{k \in [p]\backslash\{j\}:W_{jk} < 0 \}$, where $  \text{supp}(W_{jk}) = \{k \in [p]\backslash\{j\}:W_{jk} \neq 0 \} $, we have
\begin{equation}\label{thm7.supp} 
\begin{split}
    |\widehat{S}_{\delta,j} \cap S_j| &=  |  \text{supp}(W_{jk}) \cap S_j| -  \big|  \big\{ k \in [p]\backslash\{j\}:W_{jk} < 0 \big\} \cap S_j\big| \\
     &  \geq  |   \text{supp}(W_{jk}) \cap S_j| -   c_{jn} s_j\\
     & \geq  \big|\big\{[p] \setminus \widehat{S}_{\text{\tiny{GL}},j}^c\big\} \cap S_j\big|-   c_{jn} s_j,  
\end{split}
\end{equation}
where $\widehat{S}_{\text{\tiny{GL}},j}^c = \big\{k\in [p]\backslash\{j\}:\| \widehat{\bB}_{jk} \|_{\tF} = 0 \big\}$. The last inequality in (\ref{thm7.supp}) holds since $ \big\{k \in [p]\backslash\{j\}: W_{jk} = 0 \big\} \subseteq \widehat{S}_{\text{\tiny{GL}},j}^c,$ i.e., 
$[p] \setminus \widehat{S}_{\text{\tiny{GL}},j}^c   \subseteq    \text{supp}(W_{jk}).$
By Lemma~\ref{lemmaC9}, we have
\begin{equation}\label{thm7.s2.S}
  \begin{split}
   96 \alpha  s_j d^{\alpha}\lambda_{nj} / (\underline{\mu}c_0 )  & =  \sum_{k \in [2p] \backslash \{j,p+j\}} \| \bB_{jk} -  \widehat{\bB}_{jk} \|_\tF \geq \sum_{j\in \widehat{S}_{\text{\tiny{GL}},j}^c  \cap S_j}  \| \bB_{jk} -  \widehat{\bB}_{jk} \|_\tF\\
              & =  \sum_{j\in \widehat{S}_{\text{\tiny{GL}},j}^c  \cap S_j} \| \bB_{jk}\|_\tF \geq | \widehat{S}_{\text{\tiny{GL}},j}^c  \cap S_j| \min_{k\in S_j}  \| \bB_{jk}\|_\tF.
\end{split}
\end{equation}
By (\ref{thm7.s2.S}) and Condition~\ref{cond_power_GGM}(ii), we can get $ | \widehat{S}_{\text{\tiny{GL}},j}^c  \cap S_j| \leq   96 \alpha  s_j  / (c_0 \kappa_n)$, 
which indicates that $\big|\big\{[p] \setminus \widehat{S}_{\text{\tiny{GL}},j}^c\big\} \cap S_j\big| \geq s_j \big\{1- 96 \alpha  / (c_0 \kappa_n)\big\}$.
 Combining this with the result in (\ref{thm7.supp}), we have 
 $
   |  \widehat{S}_j \cap S_j | 
     = s_j \big\{1-96 \alpha  / (c_0\kappa_{n} )- 192 \alpha  / (c_0\kappa_{n} )\big\},
$
 which means that, with probability greater than $1-\tilde{c}_{11}(pd)^{-\tilde{c}_{12}},$
 \begin{equation}\label{thm7.S.E}
   \frac{|\widehat{S}_{\delta,j}\cap S_j|}{|E| \vee 1} \geq  \frac{s_j}{s}\big\{1 -  192 \alpha  / (c_0\kappa_{n} ) \big\}.
 \end{equation}
 Finally, combing results in (\ref{thm7.neq.S}) and (\ref{thm7.S.E}) with Lemma \ref{lemmaC10}, we have that $$\sum_{j=1}^p\frac{|\widehat{S}_{\delta,j}\cap S_j|}{|E| \vee 1} \geq  \sum_{j=1}^p\frac{s_j}{s}\big\{1 -  192 \alpha  / (c_0\kappa_{n} ) \big\}$$
holds with probability greater than {$1-\tilde{c}_{11}(pd)^{-\tilde{c}_{13}}$.}  
 %shown that for $\underline{\mu} \geq 32 c_zs \tilde{d}^2  d^{\alpha+1} \sqrt{\log(pd)/n}  $, $n \gtrsim  ( d^{2\alpha +2} \vee  \tilde{d}^{2\tilde{\alpha} +2} )\log(pd\tilde{d}) $, and any regularization parameter $\lambda_n \geq 2 c_Es 
    %(\{ d^{\alpha+2} \vee \tilde{d}^{\alpha+2} \}\sqrt{\log(pd\tilde{d})/n} +  \tilde{d}^{1/2} d^{1- \tau})$,
  %(\ref{thm7.neq.S}) and (\ref{thm7.S.E}) hold with probability greater than $1-\tilde{c}_{11}(pd)^{-\tilde{c}_{12}}$.
   % $$\frac{|\widehat{S}_j\cap S_j|}{2 |E| \vee 1} \geq  \frac{|S_j|}{s}[1 -  192 / (\underline{\mu}\kappa_{n} )].$$
    Then it follows that 
    \begin{equation}\label{thm7.S.E.1}
    \begin{split}
         E\left[ \sum_{j=1}^{p} \frac{|\widehat{S}_{\delta,j}\cap S_j|}{| E| \vee 1}\right] &\geq  \Big[\sum_{j=1}^{p}  \frac{s_j}{s}\big\{1 -  192 \alpha  / (c_0\kappa_{n} ) \big\}\Big] \big\{1-\tilde{c}_{11}(pd)^{-\tilde{c}_{13}}\big\} \rightarrow 1. %\\
%         &\geq \Big[\sum_{j=1}^{p}  \frac{s_j}{s}\Big]\big\{1 -  192 \alpha  / (c_0\kappa_{n} ) \big\} \big\{1-\tilde{c}_{11}(pd)^{-\tilde{c}_{13}}\big\}\to 1  .
    \end{split}
    \end{equation} 
By Lemma~\ref{lemmaeqGGM} and the OR rule, we have $|\widehat{E}_{\tOr,\delta}\cap E| \geq \sum_{j=1}^{p} |\widehat{S}_{\delta,j}\cap S_j|$, which implies that  
\begin{equation}\label{thm7.E.E}
  E\left[  \frac{|\widehat{E}_{\tOr,\delta}\cap E|}{| E| \vee 1}\right]   \geq E\left[ \sum_{j=1}^{p} \frac{|\widehat{S}_{\delta,j}\cap S_j|}{| E| \vee 1}\right].
\end{equation} 
Combining (\ref{thm7.S.E.1}) and (\ref{thm7.E.E}), we complete the proof of Theorem~\ref{thm_power_fggm}. 
\end{proof}

\subsection{Proof of Lemma~\ref{lemma2}} 

\begin{proof}[Proof of Lemma~\ref{lemma2}] 
With $\Sigma_{X_{j}X_{k}} = \Sigma_{X_{j}X_{j}}^{1/2} C_{X_{j}X_{k}} \Sigma_{X_{k}X_{k}}^{1/2}$, $\Sigma_{X_jX_j}^{1/2} = \sum _{l=1}^{\infty} \omega_{jl}^{1/2}\phi_{jl}\otimes\phi_{jl},$ and $C_{X_{j}X_{k}} = \sum _{l=1}^{\infty} \sum _{m=1}^{\infty} \tcorr(\xi_{jl},\xi_{km}) (\phi_{jl} \otimes \phi_{km})$, we can prove Lemma~\ref{lemma2} akin to Theorem 2 in \cite{solea2022copula}. Hence, the proof is omitted.        
\end{proof}

\section{Additional derivations}
\label{supp.details}

\subsection{Simplified objective functions} \label{obj.dedu} 
In Section \ref{sec:construct}, we give the corresponding equivalent forms of the objective function in (\ref{opt.R.cor}), under E\ref{R.ex1}, E\ref{R.ex2} and E\ref{R.ex3}. In this section, we will provide detailed derivations for these simplified forms. Consider any  $\bx = (x_1,\dots,  x_p)^\T \in  {\boldsymbol{\mathcal{H}}}$, where each $x_j(\cdot) = \sum_{l =1}^{\infty} c_{jl} \phi_{jl}(\cdot) \in {\mathcal{H}}_j.$

E\ref{R.ex1}: Consider $R_{X_j X_j} = \sum _{l=1}^{\infty} r (\phi_{jl} \otimes \phi_{jl})$ for $r \in [0,1].$ By Lemma~\ref{lemma2} and the definition of operator norm, we have 
\begin{equation*} \label{opnorm1}
    \begin{split}
        \|C_{ X_j{X}_j} - R_{X_j X_j}\|_{\cL} &= \sup_{\|x_j\| \leq 1} \big\|(C_{X_j X_j}-R_{X_j X_j})(x_j)\big\| = \sup_{\|x_j\| \leq 1} \Big\|\sum_{l=1}^{\infty} (1-r) (\phi_{jl} \otimes \phi_{jl})(x_j) \Big\|\\ 
      &=  \sup_{\|x_j\| \leq 1} \Big\|\sum_{l=1}^{\infty} (1-r) \langle \phi_{jl}, x_j \rangle  \phi_{jl} \Big\| 
      =  \sup_{\|x_j\| \leq 1} \Big\{\sum_{l=1}^{\infty} (1-r)^2 \langle \phi_{jl}, x_j \rangle^2 \Big\}^{1/2} \\
       &= |1-r| \sup_{\|x_j\| \leq 1} \Big\{\sum_{l=1}^{\infty}  \langle \phi_{jl}, x_j \rangle^2 \Big\}^{1/2} = |1-r| \sup_{\|x_j\| \leq 1} \|x_j\|  \\
       &= |1-r| = 1-r,
    \end{split}
\end{equation*} 
%where the last equality holds since $ 1\geq r\geq0$. 
which means the objective function can be simplified to $\min _{r} (1-r).$ 

E\ref{R.ex2}: Consider $R_{X_j X_j} = \sum _{l=1}^{\infty} r_{j} (\phi_{jl} \otimes \phi_{jl})$ for $r_j \in [0,1].$ By the similar arguments as above, we can obtain that 
\begin{equation*} \label{opnorm2}
    %\begin{split}
         \|C_{ X_j{X}_j} - R_{X_j X_j}\|_{\cL} = 1-r_j, %\sup_{\|x_j\| \leq 1}\big\|(C_{X_j X_j}-R_{X_j X_j})(x_j)\big\| = \sup_{\|x_j\| \leq 1}  \Big\|\sum_{l=1}^{\infty} (1-r_j) (\phi_{jl} \otimes \phi_{jl})(x_j) \Big\|\\ 
%         &=  \sup_{\|x_j\| \leq 1}  \Big\|\sum_{l=1}^{\infty} (1-r_j) \langle \phi_{jl}, x \rangle  \phi_{jl}  \Big\| 
 %       =  \sup_{\|x_j\| \leq 1}  \Big\{\sum_{l=1}^{\infty} (1-r_j)^2 \langle \phi_{jl}, x \rangle^2  \Big\}^{1/2}\\
  %     &= |1-r_j| \sup_{\|x_j\| \leq 1}  \Big\{\sum_{l=1}^{\infty}  \langle \phi_{jl}, x_j \rangle^2  \Big\}^{1/2} 
   %    = |1-r_j| \sup_{\|x_j\| \leq 1} \|x_j\|  \\
    %  &= |1-r_j| = 1-r_j,
    %\end{split}
\end{equation*}
%where the last equality holds since $ 1\geq r_j \geq0$. Similarly, we have  
which means the the objective functions can be simplified to 
$\min _{ (r_1,\dots,r_p)} \sum_{j} (1-r_j).$ 

E\ref{R.ex3}: Consider $R_{X_j X_j} = \sum _{l=1}^{\infty} r_{jl} (\phi_{jl} \otimes \phi_{jl})$ for $r_{jl} \in [0,1].$ Likewise,  we obtain that
\begin{equation*} \label{opnorm3}
    \begin{split}
      \|C_{ X_j{X}_j} - R_{X_j X_j}\|_{\cL} &= \sup_{\|x_j\| \leq 1} \big\|(C_{X_j X_j}-R_{X_j X_j})(x_j)\big\| = \sup_{\|x_j\| \leq 1} \Big\|\sum_{l=1}^{\infty} (1-r_{jl}) (\phi_{jl} \otimes \phi_{jl})(x_j) \Big\|\\ 
      &=  \sup_{\|x_j\| \leq 1} \Big\|\sum_{l=1}^{\infty} (1-r_{jl}) \langle \phi_{jl}, x_j \rangle  \phi_{jl} \Big\| 
      =  \sup_{\|x_j\| \leq 1} \Big\{\sum_{l=1}^{\infty} (1-r_{jl})^2 \langle \phi_{jl}, x \rangle^2 \Big\}^{1/2}\\
      &=  \sup_{\|x_j\| \leq 1} \Big\{\sum_{l=1}^{\infty} (1-r_{jl})^2 c_{jl}^2 \Big\}^{1/2}\\ 
       &\leq \sup_{l} |1-r_{jl}|\sup_{\|x_j\| \leq 1} \Big(\sum_{l=1}^{\infty} c_{jl}^2 \Big)^{1/2}  = \sup_{l } |1-r_{jl}|, 
    \end{split}
\end{equation*}
which indicates that $ \|C_{ X_j{X}_j} - R_{X_j X_j}\|_{\cL}\leq \sup_{l } |1-r_{jl}|.$ 
Next we will prove that $ \|C_{ X_j{X}_j} - R_{X_j X_j}\|_{\cL} = \sup_{l } |1-r_{jl}|.$  
Let’s consider two scenarios. 
First, when $\sup_{l } |1-r_{jl}| = \max_{l} |1-r_{jl}|$, we set $c_{jl^*} = 1$ for $l^*$ such that $|1-r_{jl^*}| = \max_{l} |1-r_{jl}|$. Consequently, we obtain $\|C_{ X_j{X}_j} - R_{X_j X_j}\|_{\cL} = \max_{l}|1-r_{jl}|=\sup_{l } |1-r_{jl}|$.  
Second, if $\sup_{l } |1-r_{jl}| \notin \big\{|1-r_{jl}|: l\geq 1\big\}$, let $\sup_{l } |1-r_{jl}| = |1-r^*|$. In this case, there exists a subsequence $l_m$ for $m\geq1$ such that $r_{jl_m} \to r^*$ as $m\to \infty$. Under this scenario, we set $c_{jl_m} \to 1$ as $m\to \infty$. Consequently, we conclude that $\|C_{ X_j{X}_j} - R_{X_j X_j}\|_{\cL}=\sup_{l } |1-r_{jl}|$. Combining the results under two scenarios, we complete the proof. 
%Let’s consider two scenarios.
%First, when $\sup_{l } |1-r_{jl}| = \max_{l} |1-r_{jl}|$, let $c_{jl^*} = 1$ and $l^*$ such that $|1-r_{jl^*}| = \max_{l} |1-r_{jl}|$, we have $\|C_{ X_j{X}_j} - R_{X_j X_j}\|_{\cL} = \max_{l}|1-r_{jl}|=\sup_{l } |1-r_{jl}|$.  
%which indicates $\|C_{ X_j{X}_j} - R_{X_j X_j}\|_{\cL} =  \sup_{l} |1-r_{jl}|.$ 
%Second, when $\sup_{l } |1-r_{jl}| \notin \big\{|1-r_{jl}|: l\geq 1\big\}$, denote $\sup_{l } |1-r_{jl}| = |1-r^*|$. There exists subsequence $l_m,~m\geq1$ such that $r_{jl_m} \to r^*$ as $m\to \infty$. In this scenario, let $c_{jl_m} \to 1$ as $m\to \infty$. Then we can conclude that $\|C_{ X_j{X}_j} - R_{X_j X_j}\|_{\cL}=\sup_{l } |1-r_{jl}|$. Combining these two scenarios, the proof is finished. 

%Third, for E\ref{R.ex3}, we have the objective function $\min _{\br= (\br_1,\dots,\br_p)}  \sum_{j=1}^{p} \sup_{r\in[k_n] } (1-r_{jl})$ 
%When $R_{X_j X_j} = \sum _{l=1}^{\infty} r_{jl} (\phi_{jl} \otimes \phi_{jl})$ corresponding to $\widehat{R}_{X_jX_j} = \sum _{l=1}^{k_n} r_{jl} (\hat{\phi}_{jl} \otimes \hat{\phi}_{jl})$,   likewise, we have
%\begin{equation*}
%    \begin{array}{rl}
%\min _{\br= (\br_1,\dots,\br_p)}&  \sum_{j=1}^{p} \sup_{r\in[k_n] } (1-r_{jl}) \\
% \rm{subject~to} & r_{jl}\in [0,1],  ~ 2\widehat{\bC}_{XX} - \widehat{\bR}_{XX}\succeq 0, 
%\end{array}
%\end{equation*} 
%where $\widehat\bTheta_R = \diag(\br_{1}, \dots ,\br_{p})\in \eR^{pk_n \times pk_n}$ and $\br_{j} = \diag(r_{j1} ,\dots, r_{jk_n})$ for for E\ref{R.ex3}. 
For (\ref{opt.R.cor}) in E\ref{R.ex3}, the objective function is simplified as:
$ \min _{ (\Bar{\br}_1,\dots,\Bar{\br}_p)}  \sum_{j} \sup_{l} |1-r_{jl}|.$
This corresponds to a non-smooth positive semidefinite programming problem, which poses a computationally challenging task. %is challenging to solve computationally. 
For the sake of computational simplicity, we instead consider solving the following smooth positive semidefinite programming problem: 
\begin{equation}
\label{opt.pro.E3}
    \begin{split}
\min _{ (\Bar{\br}_1,\dots,\Bar{\br}_p)}~&  \sum_{j=1}^{p} \sum_{l=1}^{k_n} (1-r_{jl}) \\
 \rm{subject~to~} & r_{jl}\in [0,1],  ~
 2\widehat{\bC}_{XX} - \widehat{\bR}_{XX}\succeq 0.
\end{split}
\end{equation}

\begin{remark}\label{eq.obj.E3} 
We aim to show that the solution set of the programming problem with objective function $ \min_{ (\Bar{\br}_1,\dots,\Bar{\br}_p)}  \sum_{j} \sup_{l} |1-r_{jl}|$ is equivalent to that of the programming problem with objective function $\min_{ (\Bar{\br}_1,\dots,\Bar{\br}_p)} \sum_{j=1}^{p} \sum_{l=1}^{k_n} (1-r_{jl})$. 
Let $ {\Bar{\br}}^*= ( {\Bar{\br}}_1^*,\dots, {\Bar{\br}}_p^*)$ be the optimal solution to $\min_{ (\Bar{\br}_1,\dots,\Bar{\br}_p)} \sum_j \sup_{l} (1-r_{jl})$ with the same constraints in the optimization problem (\ref{opt.pro.E3}), and $ \Bar{\br}^{\star}= ( \Bar{\br}_1^{\star},\dots, \Bar{\br}_p^{\star})$ be the optimal solution to (\ref{opt.pro.E3}). 
%$\min_{ (\br_1,\dots,\br_p)} \sum_j \sum_{l} (1-r_{jl})$. 
Define the set $F = \{\Bar{\br}= (\Bar{\br}_1,\dots,\Bar{\br}_p) : r_{jl}\in [0,1] \text{ and } 2\widehat{\mathbf{C}}_{XX} - \widehat{\mathbf{R}}_{XX}\succeq 0\}$ as the feasible domain. 
On the one hand, when solving $\min_{ (\Bar{\br}_1,\dots,\Bar{\br}_p)} \sum_j \sup_{l} (1-r_{jl})$, each $r_{jl}$ tends to a higher value within $F$, which implies that $r_{jl}^* \geq r_{jl}^{\star}$ for each $j\in[p]$ and $l\in [k_n]$. 
On the other hand, since $ \Bar{\br}^{\star}= ( \Bar{\br}_1^{\star},\dots, \Bar{\br}_p^{\star})$ is the optimal solution to $\min_{ (\Bar{\br}_1,\dots,\Bar{\br}_p)} \sum_j \sum_{l} (1-r_{jl}),$ we have $ \sum_{j} \sum_{l} (1-r_{jl}^{\star}) \leq \sum_{j}\sum_{l} (1-r_{jl}^{*})$. 
This implies that $\sum_{j}\sum_{l} r_{jl}^{*} \leq  \sum_{j}\sum_{l} r_{jl}^{\star}$. 
Both sides hold if and only if $r_{jl}^* = r_{jl}^{\star}$ for each $j\in[p]$ and $l\in [k_n]$. 

\end{remark}

%For the left hands, suppose that $\widetilde\br^*= (\widetilde\br_1^*,\dots,\widetilde\br_p^*)$ is the optimal solution. Similarly, for the right hands, suppose that $\widetilde\br^{**}= (\widetilde\br_1^{**},\dots,\widetilde\br_p^{**})$ is the optimal solution. With the higher-values preference rule and the objective function, we can get that $r_{jl}^* \geq r_{jl}^{**}$, for $j\in[p]$, $l\in [k_n]$. Besides, with the objective function, we can get  $  \sum_{r=1}^{k_n} (1-r_{jl}^{*}) \geq  \sum_{r=1}^{k_n} (1-r_{jl}^{**})$, which means  $  \sum_{r=1}^{k_n} (r_{jl}^{*}) \leq  \sum_{r=1}^{k_n} (r_{jl}^{**})$. The two inequalities above hold simultaneously if and only if $r_{jl}^* = r_{jl}^{**}$, for $j\in[p]$, $l\in [k_n]$.
%In a similar vein,  the constraint condition $2\widehat{\bC}_{XX} - \widehat{\bR}_{XX}\succeq 0$ is equivalent to $2 \widehat\bOmega_C- \widehat\bOmega_{R} \succeq 0,$ where $\widehat\bOmega_{R} = \diag \{\br_{1}, \dots ,\br_{p}\}\in \eR^{pk_n \times pk_n}$, $\widehat\bOmega$ is the same as that in E\ref{R.ex1}. and $\br_{j} = \diag\{r_{j1} ,\dots, r_{jk_n}\},$ $j\in [p]$ is a diagonal matrix.
%Hence, we can get the  estimated finite-representation of optimization programming for E\ref{R.ex3}:
%\begin{equation*}
%    \begin{array}{rl}
%\min _{\br= (\br_1,\dots,\br_p)} &  \sum_{j= 1}^{p} \sum_{l= 1}^{k_n} |1-r_{jl}| \\
%  \rm{subject~to}  & r_{jl}\in [0,1],  ~
%  2 \widehat\bOmega_C- \widehat\bOmega_{R} \succeq 0.
%\end{array}
%\end{equation*}

\subsection{Coordinate mapping} \label{cor.map}  
As demonstrated in Section \ref{sec:implement}, $\mathcal{H}_j$ is spanned by a finite set of functions $\mathcal{B}_j = \{b_{j1},\dots,b_{jk_n }\}$. Each $X_{ij}$ can be expressed as a linear combination: ${X}_{ij} = c_{ij1}b_{j1} +\dots+c_{ijk_n}b_{jk_n}$, and its coordinate can be represented as $[{X}_{ij}]_{\mathcal{B}_j}$. Likewise, the coordinate of any operator $K: \mathcal{H}_j \to \mathcal{H}_k$ is denoted as $_{{\mathcal{B}}_{k}}[K]_{{\mathcal{B}}_{j}}$. This mapping, $K \to~ _{{\mathcal{B}}_{k}}[K]_{{\mathcal{B}}_{j}}$, is referred to as the coordinate mapping.
There are five main properties of the coordinate mapping \citep{li2018nonparametric} that are crucial for subsequent analysis. Here, $K_1$ and $K_2$ represent operators mapping from $\mathcal{H}_j$ to $\mathcal{H}_k$, $a$ and $b$ are real numbers, and ${\boldsymbol{\mathcal{B}}}$ denotes the Cartesian product of ${\mathcal{B}_1},\dots, {\mathcal{B}_p}$. 
%As shown in Section \ref{sec:implement}, 
% $\mathcal{H}_j$ is spanned by finite set of functions $\mathcal{B}_j = \{b_{j1},\dots,b_{jk_n }\}$.
%and  $X_{ij}$ can be denoted by  a linear combination as ${X}_{ij} =c_{ij1}b_{j1} +\dots+c_{ijk_n}b_{jk_n}$ with coordinate written as $[{X}_{ij}]_{\mathcal{B}_j}$. 
%Similarly, the coordinate of any operator $K: \mathcal{H}_j \to \mathcal{H}_k$ is written as $_{{\mathcal{B}}_{k}}[K]_{{\mathcal{B}}_{j}}.$ 
%The mapping: $K \to~ _{{\mathcal{B}}_{k}}[K]_{{\mathcal{B}}_{j}}$ is called coordinate mapping. 
%There are five  properties of coordinate mapping \citep{li2018nonparametric} that are apparent and crucial for subsequent deductions, where $K_1, K_2$ are operators mapping from $\mathcal{H}_j$ to $\mathcal{H}_k$, $a$ and $b \in \eR$, and ${\boldsymbol{\mathcal{B}}}$ is the Cartesian product of ${\mathcal{B}_j}, j\in[p]$.  
\begin{enumerate}[P1.]%[label=\roman*]
    \item\label{cor.pro1} linearity: $_{{\mathcal{B}}_{k}}[a K_1 + bK_2]_{{\mathcal{B}}_{j}} = a\big(\, _{{\mathcal{B}}_{k}}[K_1]_{{\mathcal{B}}_{j}}\,\big) + b\big(\, _{{\mathcal{B}}_{k}}[K_2]_{{\mathcal{B}}_{j}}\,\big)$;  
    \item\label{cor.pro2} tensor product: $_{\mathcal{B}_k}[X_{ik} \otimes X_{ij}]_{\mathcal{B}_j} =[{X}_{ik}]_{\mathcal{B}_k} [{X}_{ij}]_{\mathcal{B}_j}^\T \bG_j $;
    \item\label{cor.pro3} operator calculation: $[K(X_{ij})]_{\mathcal{B}_k} = \big(\, _{{\mathcal{B}}_{k}}[K]_{{\mathcal{B}}_{j}}\,\big)[{X}_{ij}]_{\mathcal{B}_j}$;
    \item\label{cor.pro4} inner product: $\langle X_{ij},\widetilde{X}_{ij} \rangle = [{X}_{ij}]_{\mathcal{B}_j}^\T \bG_j [\widetilde{X}_{ij}]_{\mathcal{B}_j}$;
    \item\label{cor.pro5} operator matrix: $_{{\boldsymbol{\mathcal{B}}}}[{\bC}_{XX}]_{{\boldsymbol{\mathcal{B}}}} = \big(\, _{\mathcal{B}_j}[C_{X_j X_k}]_{\mathcal{B}_k} \, \big)_{j,k \in [p]}$. 
\end{enumerate}
%Hence, the linear coefficients $(c_{ij1},\dots, c_{ijk_n})^{\T}$ is called the coordinate of ${X}_{ij}$
%with respect to $\mathcal{B}_j$, written as $[{X}_{ij}]_{\mathcal{B}_j}$, $i \in[n]$ and $j\in [p]$.
%Let $\bG_j$ be the Gram matrix of set $\{b_{j1},\dots,b_{jk_n }\}$, that is,
 %$\bG_j = \{\langle b_{jm},b_{jl}\rangle\}_{m,l=1}^{k_n}$. 
%For any operator $A$ from $\mathcal{H}_j$ to $\mathcal{H}_k$, we have the matrix
%$$
%([A(b_{{j}1})]_{\mathcal{B}_k},\dots, [A(b_{{j}k_n})]_{\mathcal{B}_k}), 
%$$
%which is called the coordinate of $A$, written as $_{{\mathcal{B}}_{k}}[A]_{{\mathcal{B}}_{j}}.$
%The mapping $_{{\mathcal{B}}_{k}}[\cdot]_{{\mathcal{B}}_{j}} : A ~~\mapsto~~ _{{\mathcal{B}}_{k}}[A]_{{\mathcal{B}}_{j}}$ is called the coordinate mapping \citep{solea2022copula}. 
%\subsubsection{Algorithm for Karhunen-Lo$\grave{\hbox{e}}$ve expansion}\label{sec222} 

First, we will present the Karhunen-Lo$\grave{\hbox{e}}$ve expansion by coordinate mapping. By P\ref{cor.pro1} and P\ref{cor.pro2}, we can deduce that 
 $   _{{\mathcal{B}}_{j}}[\widehat{\Sigma}_{X_{j}X_{j}}]_{{\mathcal{B}}_{j}} 
    = ({n}^{-1}\sum_{i=1}^n [ X_{ij} ]_{{\mathcal{B}}_{j}} [ X_{ij}]_{{\mathcal{B}}_{j}}^{\T}) \bG_j.$ By P\ref{cor.pro3}, 
$(\hat\omega_{jl}, \hat\phi_{jl}(\cdot))_{l \geq 1}$ % \in \eR \times{\mathcal{H}}_{j} $ 
is the eigenvalue/eigenfunction pair  of $\widehat{\Sigma}_{X_{j}X_{j}}$  if and only if 
$_{{\mathcal{B}}_{j}}([\widehat{\Sigma}_{X_{j}X_{j}}]_{{\mathcal{B}}_{j}})[\hat\phi_{jl}]_{{\mathcal{B}}_{j}}  = \hat\omega_{jl}[\hat\phi_{jl}]_{{\mathcal{B}}_{j}},$
%\begin{equation*}
%\begin{array}{ccccc}
%     ( _{{\mathcal{B}}_{j}}[\widehat{\Sigma}_{X_{j}X_{j}}]_{{\mathcal{B}}_{j}})[\hat\phi_{jl}]_{{\mathcal{B}}_{j}} &=& (\frac{1}{n}\sum_{i=1}^n [ X_{ij} - \hat{\mu}_{j} ]_{{\mathcal{B}}_{j}} [ X_{ij} - \hat{\mu}_{j} ]_{{\mathcal{B}}_{j}}^{\T}) \bG_j  [\hat\phi_{jl}]_{{\mathcal{B}}_{j}} &=& \hat\omega_{jl}[\hat\phi_{jl}]_{{\mathcal{B}}_{j}},
%\end{array}
%\end{equation*}
which indicates that  
\begin{equation*}
    {n}^{-1} \bG_j^{1/2}  \sum_{i=1}^n \big([ X_{ij}  ]_{{\mathcal{B}}_{j}} [ X_{ij}   ]_{{\mathcal{B}}_{j}}^{\T}\big) \bG_j^{1/2} \big(\bG_j^{1/2} [\hat\phi_{jl}]_{{\mathcal{B}}_{j}}\big) = \hat\omega_{jl}\big(\bG_j^{1/2}[\hat\phi_{jl}]_{{\mathcal{B}}_{j}}\big),
\end{equation*}
i.e., $(\hat\omega_{jl}, \bG_j^{1/2}[\hat\phi_{jl}]_{{\mathcal{B}}_{j}})_{l \geq 1}$ is the eigenvalue/eigenvector pair of the matrix $$ {n}^{-1} \bG_j^{1/2}  \sum_{i=1}^n \big([ X_{ij}  ]_{{\mathcal{B}}_{j}} [ X_{ij} ]_{{\mathcal{B}}_{j}}^{\T}\big) \bG_j^{1/2}.$$ 
Hence, we can obtain the coordinate of $\hat\phi_{jl}$ as $[\hat\phi_{jl}]_{{\mathcal{B}}_{j}} = \bG_j^{\dag 1/2} \bv_{jl}$, where $\bv_{jl}$  is the eigenvector of ${n}^{-1} \bG_j^{1/2}  \sum_{i=1}^n \big([ X_{ij} ]_{{\mathcal{B}}_{j}} [ X_{ij} ]_{{\mathcal{B}}_{j}}^{\T}\big) \bG_j^{1/2}$. 
%$(\hat\omega_{jl}, v_{jl})$  of $\frac{1}{n} \bG_j^{1/2}  \sum_{i=1}^n \big\{[ X_{ij} - \hat{\mu}_{j}  ]_{{\mathcal{B}}_{j}} [ X_{ij} - \hat{\mu}_{j}  ]_{{\mathcal{B}}_{j}}^{\T}\big\} \bG_j^{1/2}$ and then get the coordinate of $\hat\phi_{jl}$ being $[\hat\phi_{jl}]_{{\mathcal{B}}_{j}} = \bG_j^{\dag 1/2} v_{jl}$, where  $\bG_j^\dag$ be the Moore-Penrose inverse of matrix.
Finally, we can obtain the empirical Karhunen-Lo$\grave{\hbox{e}}$ve expansion of (\ref{exX})  as $  X_{ij} -\hat{\mu}_{j} = \sum_{l = 1}^{k_n} {\hat\xi}_{ijl} \hat\phi_{jl}$ by  P\ref{cor.pro4},
 where
\begin{equation}\label{coordinate.score}
    {\hat\xi}_{ijl}  = \langle X_{ij} - \hat{\mu}_{j}, \hat\phi_{jl}\rangle   =   [X_{ij}]_{{\mathcal{B}}_{j}}^\T \bG_j [\hat\phi_{jl}]_{{\mathcal{B}}_{j}}  =   [X_{ij}]_{{\mathcal{B}}_{j}}^\T \bG_j^{1/2} \bv_{jl}.
\end{equation}
%where 
%   $ {\hat\xi}_{ijl}  = \langle X_{ij} - \hat{\mu}_{j}, \hat\phi_{jl}\rangle   =   [X_{ij}]_{{\mathcal{B}}_{j}}^\T \bG_j [\hat\phi_{jl}]_{{\mathcal{B}}_{j}} 
 %    =   [X_{ij}]_{{\mathcal{B}}_{j}}^\T \bG_j^{1/2} v_{jl}.$
%\subsubsection{Estimated finite-representation of (\ref{opt.R.cor})}   

Second, we will derive that $2\widehat{\bC}_{XX} - \widehat{\bR}_{XX}\succeq 0$ is implied by $2 \widehat\bTheta_C - \widehat\bTheta_{R}  \succeq 0$.
As shown in Section \ref{sec:implement}, $\widehat{C}_{{X}_{j}{X}_{k}}=(1-\gamma_n) \sum _{l=1}^{k_n} \sum _{m=1}^{k_n} \widehat\Theta_{jklm}^{\text{S}} (\hat \phi_{jl}\otimes  \hat \phi_{km}) + \gamma_n I(j=k) \widehat I_{X_jX_j}$, with $\widehat \Theta_{jklm}^{\text{S}}=n^{-1}\sum_{i=1}^n(\widehat\xi_{ijl}-n^{-1}\sum_{i=1}\widehat\xi_{ijl})(\widehat\xi_{ikm}-n^{-1}\sum_{i=1}\widehat\xi_{ikm})/(\hat\omega_{jl}^{1/2}\hat\omega_{km}^{1/2})$ and $\hat{I}_{X_jX_j} = \sum _{l=1}^{k_n}\hat{\phi}_{jl} \otimes \hat{\phi}_{jl}$.   
By P\ref{cor.pro1} and P\ref{cor.pro2}, % for  $j,k \in [p]$,   
the coordinate of $\widehat{C}_{{X}_{j}{X}_{k}}$ can be represented as
\begin{equation}\label{opr.map.C}
\begin{split}
      _{{\mathcal{B}}_{j}}[ \widehat{C}_{{X}_{j}{X}_{k}}]_{{\mathcal{B}}_{k}} 
      &=  {_{{\mathcal{B}}_{j}}\Big[}  \sum _{l=1}^{k_n} \sum _{m=1}^{k_n} \{(1-\gamma_n)\widehat\Theta_{jklm}^{\text{S}}+\gamma_n I(l=m)I(j=k)\} (\hat \phi_{jl}\otimes  \hat \phi_{km}) \Big]_{{\mathcal{B}}_{k}} \\
      &= \sum _{l=1}^{k_n} \sum _{m=1}^{k_n}
\big\{(1-\gamma_n) \widehat\Theta_{jklm}^{\text{S}} + {\gamma_n I(l=m)I(j=k)} \big\} 
\big( [\hat{\phi}_{jl}]_{{\mathcal{B}}_{j}} [\hat{\phi}_{km}]_{{\mathcal{B}}_{k}}^\T \big) \bG_k\\
& = \widehat\bPhi_{j}  \widehat\bTheta_{C_{jk}}  \widehat\bPhi_{k}^\T    \bG_{k},
\end{split}
\end{equation}
where $\widehat\bTheta_{C_{jk}}=(1-\gamma_n)(\widehat{\Theta}_{jklm}^{\text{S}})_{k_n\times k_n}+  \gamma_n I(j=k) \bI_{k_n} \in \eR^{k_n \times k_n}$ and $\widehat\bPhi_j \in {\mathbb R}^{k_n \times k_n}$ with its $l$th column $[\hat{\phi}_{j l}]_{{\mathcal{B}}_{j}}$ for $l\in[k_n]$. Similarly, for $j\in [p]$,   we can get the the coordinate of $ \widehat{R}_{{X}_{j}{X}_{j}}$ as 
\begin{equation}\label{opr.map.R}
      _{{\mathcal{B}}_{j}}[ \widehat{R}_{{X}_{j}{X}_{j}}]_{{\mathcal{B}}_{j}} 
 =  \widehat\bPhi_{j}  \widehat\bTheta_{R_{jj}}   \widehat\bPhi_{j}^\T    \bG_{j},
\end{equation}
where $\widehat\bTheta_{R_{jj}} = r \bI_{k_n}\in \eR^{k_n \times k_n}$ under E\ref{R.ex1}, 
 $\widehat\bTheta_{R_{jj}} = r_j \bI_{k_n}\in \eR^{k_n \times k_n}$ under E\ref{R.ex2}, and 
 $\widehat\bTheta_{R_{jj}}  = \diag(r_{j1} ,\dots, r_{jk_n}) \in \eR^{k_n \times k_n}$ under E\ref{R.ex3}.
%We have provided  empirical sample operator-based versions of optimization  programming (\ref{opt.R.cor}).
%However, in order to implement the method, it is necessary to provide sample matrix-based versions of the optimization programming (\ref{opt.R.cor}). 
% Based on these coordinate mapping, we will provide the sample matrix-based versions of optimization  programming (\ref{opt.R.cor})  corresponding to three expressions of $R_{X_j X_j}$ by coordinate mapping. 
%Denote $\widehat{R}_{X_jX_j} = \sum _{l=1}^{k_n} r (\hat{\phi}_{jl} \otimes \hat{\phi}_{jl})$ for E\ref{R.ex1}, $\widehat{R}_{X_jX_j} = \sum _{l=1}^{k_n} r_j (\hat{\phi}_{jl} \otimes \hat{\phi}_{jl})$ for E\ref{R.ex2},  $\widehat{R}_{X_jX_j} = \sum _{l=1}^{k_n} r_{jl} (\hat{\phi}_{jl} \otimes \hat{\phi}_{jl})$ for E\ref{R.ex3},
%and  $\widehat{\bR}_{XX} = \tdiag \{ \widehat{R}_{X_1X_1},\dots,\widehat{R}_{X_pX_p}\}$.     
By (\ref{opr.map.C}), (\ref{opr.map.R}), P\ref{cor.pro3}--P\ref{cor.pro5}, 
we have 
 \begin{equation}\label{opt.R.corcond}
    \begin{split}
 &\big\langle (2\widehat{\bC}_{XX} - \widehat{\bR}_{XX})(\bx), \bx \big\rangle \geq 0 \\
\Longleftrightarrow~  & \big[(2\widehat{\bC}_{XX} - \widehat{\bR}_{XX})(\bx) \big]_{{\mathcal{B}}} ^\T (\oplus_{j\in [p]} \bG_{j}) [\bx]_{{\mathcal{B}}} \geq 0 \\
\Longleftrightarrow~  & [\bx]_{{\mathcal{B}}}^\T~_{{\mathcal{B}}}[2\widehat{\bC}_{XX} - \widehat{\bR}_{XX}]_{{\mathcal{B}}} ^\T~ (\oplus_{j\in [p]} \bG_{j}) ~[\bx]_{{\mathcal{B}}} \geq 0\\
 \Longleftrightarrow~  & [\bx]_{{\mathcal{B}}}^\T~ (\oplus_{j\in [p]} \bG_{j})^\T~ (\oplus_{j\in [p]} \widehat\bPhi_{j}) (2 \widehat\bTheta_C- \widehat\bTheta_{R}) (\oplus_{j\in [p]} \widehat\bPhi_{j})^\T~ (\oplus_{j\in [p]} \bG_{j}) ~[\bx]_{{\mathcal{B}}} \geq 0, 
\end{split}
\end{equation} 
where  $[\bx]_{\mathcal{B}} = \big([x_1]_{\mathcal{B}_1}^\T,\dots,[x_p]_{\mathcal{B}_p}^\T \big)^\T$, 
$\widehat\bTheta_C= (\widehat\bTheta_{C_{jk}})_{j,k\in[p]}$, 
$\widehat\bTheta_R = \diag (r \bI_{k_n}, \dots ,r \bI_{k_n})$ under E\ref{R.ex1}, 
$\widehat\bTheta_R = \diag (r_1 \bI_{k_n}, \dots ,r_p \bI_{k_n})\in \eR^{pk_n \times pk_n}$ under E\ref{R.ex2}, 
$\widehat\bTheta_R = \diag(r_{11}, \dots,r_{1k_n}, \dots, r_{p1}, \dots, r_{pk_n}) \in \eR^{pk_n \times pk_n}$ under E\ref{R.ex3},  
$\oplus_{j\in [p]} \widehat\bPhi_{j}  = \tdiag(\widehat\bPhi_1,\dots,\widehat\bPhi_p)$, and $\oplus_{j\in [p]} \bG_{j}  = \tdiag(\bG_1,\dots,\bG_p)$. 
(\ref{opt.R.corcond}) means that $2\widehat{\bC}_{XX} - \widehat{\bR}_{XX}\succeq 0$ if and only if 
\begin{equation}\label{opt.Theta}
  (\oplus_{j\in [p]} \bG_{j})^\T~ (\oplus_{j\in [p]} \widehat\bPhi_{j}) (2 \widehat\bTheta_C- \widehat\bTheta_{R}) (\oplus_{j\in [p]} \widehat\bPhi_{j})^\T~ (\oplus_{j\in [p]} \bG_{j}) \succeq 0, 
\end{equation} 
%Let $\widehat\bOmega_C=\widehat\bA \widehat\bTheta_C\widehat\bA^\T$ and $\widehat\bOmega_R=
%\widehat\bA \widehat\bTheta_R\widehat\bA^\T$,
%where $\widehat\bA = \tdiag(\bG_{1}\widehat\bPhi_{1},\dots,\bG_{p}\widehat\bPhi_{p}) \in \eR^{pk_n \times pk_n}$. 
%(\ref{opt.Theta})
which shows that $2\widehat{\bC}_{XX} - \widehat{\bR}_{XX}\succeq 0$ is implied by $2 \widehat\bTheta_C - \widehat\bTheta_{R}  \succeq 0.$

\subsection{Algorithms} 
\label{smsec:algorithm}
The construction of functional Model-X knockoffs utilizes the Karhunen-Lo$\grave{\hbox{e}}$ve expansion. 
As outlined in Section \ref{sec:algorithm}, Algorithm \ref{Alg:2} comprises three main steps. 
In Step \ref{step:1}, three expressions of $\widehat\bTheta_{R}$ are obtained by determining the parameters of $r$, $(r_1, \dots, r_p)$,  and $(\Bar{\br}_1, \dots, \Bar{\br}_p)$ %with $\Bar{\br}_j = (r_{j1}, \dots, r_{jk_n})^\T$, 
which involve solving the optimization problems in (\ref{opt.R.mat.1}), (\ref{opt.R.mat.2}), and (\ref{opt.R.mat.3}), respectively.
The second step involves constructing the FPC scores of $\widetilde{\bX}_i$ using the estimated FPC scores of ${\bX}_i$ and the procedure outlined in Algorithm \ref{Alg1}. The expression of $C_{X_jX_k}$ in Lemma \ref{lemma2} involves the correlations between the FPC scores, which are equivalent to the covariances between the normalized  FPC scores. Therefore, our focus is on the distribution of the estimated normalized  FPC scores. 
Considering $\big(\bX_i^\T, \widetilde \bX_i^\T\big)$ as a MGP for each $i \in [n]$ and using (\ref{coordinate.score}), (\ref{R.cor}), and (\ref{C.est.shrink}), we derive that the estimated normalized  FPC scores satisfy 
%The functional model-X knockoffs are constructed using the Karhunen-Lo$\grave{\hbox{e}}$ve  expansion. As a result, an integral part of the algorithm involves obtaining the estimated scores of $\widetilde{\mathbf{X}}_i$, for $i \in [n]$. 
%As depicted in Section \ref{sec:algorithm}, Algorithm \ref{Alg:2} consists of three main steps. 
%In the Step \ref{step:1}, we obtain three expressions of $\widehat\bTheta_{R}$ by determining the parameters of $r$, as well as the vector $\br= (r_1,\dots,r_p)^\T$ and the matrix $\br= (\br_1,\dots,\br_p)$, where $\br_j= (r_{j1},\dots, r_{jk_n})^\T$, by solving (\ref{opt.R.mat.1}), (\ref{opt.R.mat.2}), and (\ref{opt.R.mat.3}). 
%The subsequent step involves constructing the estimated scores of $\widetilde{\bX}_i$, based on the observed estimated scores of ${\bX}_i$ and the explicit procedure outlined in Algorithm \ref{Alg1}.   %the correlation operator matrix has been involved in the construction of functional model-X knockoffs and the  ,
%Since the expression of $C_{X_jX_k}$ in Lemma \ref{lemma2} involving with correlation of the scores which equals to the covariance of the normalized scores, 
%we will focus on the distributions of the truncated normalized scores. 
%Seeing that $\big(\bX_i^\T, \widetilde \bX_i^\T\big)$ is MGP, (\ref{coordinate.score}), (\ref{R.cor}), and sample correlation matrix operator  from Lemma \ref{lemma2}, for $i\in [n]$,  
%we have the truncated normalized scores satisfying  
\begin{equation*}%\label{scores.dis}
\begin{split}
     &   \diag\big(\widehat\bW^{-1/2}\widehat\bA^\T, \widehat\bW^{-1/2}\widehat\bA^\T\big) \big([X_{i1}]_{\mathcal{B}_1}^\T,\dots,[X_{ip}]_{\mathcal{B}_p}^\T, [\widetilde X_{i1}]_{\mathcal{B}_1}^\T,\dots,[\widetilde X_{ip}]_{\mathcal{B}_p}^\T\big)^\T\\
    =  & \diag\big(\widehat\bW^{-1/2},\widehat\bW^{-1/2}\big)\big({\hat\xi}_{i1l},\dots,{\hat\xi}_{i1k_n},\dots,{\hat\xi}_{ipl},\dots,{\hat\xi}_{ipk_n},{\check\xi}_{i1l},\dots,
    {\check\xi}_{i1k_n},\dots,{\check\xi}_{ipl},\dots,{\check\xi}_{ipk_n}\big) \\
     \sim & \mathcal{N}(\bzero_{2pk_n},\widehat\bTheta),
\end{split}
\end{equation*}
%${\hat\xi}_{ijl}$
%$  \diag\big(\widehat\bW^{-1/2}\widehat\bA^\T, \widehat\bW^{-1/2}\widehat\bA^\T\big) \big([X_{i1}]_{\mathcal{B}_1}^\T,\dots,[X_{ip}]_{\mathcal{B}_p}^\T, [\widetilde X_{i1}]_{\mathcal{B}_1}^\T,\dots,[\widetilde X_{ip}]_{\mathcal{B}_p}^\T\big)^\T\sim N(\bzero_{2pk_n},\widehat\bTheta)$, for $i\in[n]$,  
%$ \tdiag(\bD,\bD)(\bxi_{i1}^\T,\dots,\bxi_{ip}^\T, \widetilde\bxi_{i1}^\T,\dots,\widetilde\bxi_{ip}^\T)^{\T} \sim N(\bzero_{2pk_n},\bOmega)$, where  $\bD = \tdiag(\omega_{11}^{-1/2}, \dots, \omega_{1k_n}^{-1/2},\dots, \omega_{p1}^{-1/2}, \dots, \omega_{pk_n}^{-1/2})$, $\bxi_{ij}$ and $\widetilde\bxi_{ij}$ are same as the scores in (\ref{exX}) with $d_j= k_n$, for $j\in[p]$, $i\in[n]$, 
 where $\widehat\bA = \tdiag(\bG_{1}\widehat\bPhi_{1},\dots,\bG_{p}\widehat\bPhi_{p})  
\in \eR^{pk_n \times pk_n}$,  $\widehat\bW = \tdiag(\hat\omega_{11}, \dots, \hat\omega_{1k_n},\dots, \hat\omega_{p1}, \dots, \hat\omega_{pk_n})$ is a normalization matrix, and 
%.$\mathcal{N}(\bzero_{2pk_n},\widehat\bTheta)$ represents the multivariate Gaussian distribution with a mean of $\bzero_{2pk_n}$ and covariance matrix of %$\mathbf{\Omega}$$N(\bzero_{2pk_n},\bOmega)$ represents the multivariate Gaussian distribution with mean $\bzero_{2pk_n}$ and covariance matrix 
\begin{equation} \label{Theta.all}
    \widehat\bTheta = \left(
\begin{array}{cc}
\widehat\bTheta_C & \widehat\bTheta_C - \widehat\bTheta_R\\
\widehat\bTheta_C - \widehat\bTheta_R & \widehat\bTheta_C 
\end{array}
\right). 
\end{equation}
Note $\widehat\bTheta_C \in \eR^{pk_n \times pk_n}$ and $\widehat\bTheta_R\in \eR^{pk_n \times pk_n}$ in (\ref{Theta.all}) are obtained from (\ref{opt.R.corcond}) and (\ref{opt.Theta}). Then for $i\in [n]$,  by conditional distribution under multivariate Gaussianity, we can obtain that
\begin{equation*}\label{cond.dis}
    \widehat\bW^{-1/2}\widehat\bA^\T \big( [\widetilde X_{i1}]_{\mathcal{B}_1}^\T,\dots,[\widetilde X_{ip}]_{\mathcal{B}_p}^\T\big)^{\T} \Big|   \widehat\bW^{-1/2}\widehat\bA^\T ([X_{i1}]_{\mathcal{B}_1}^\T,\dots,[X_{ip}]_{\mathcal{B}_p}^\T)^{\T} \sim \mathcal{N}(\widehat\bmu_{\tilde X|X},\widehat\bTheta_{\tilde X|X}),
\end{equation*} 
where 
$\widehat\bmu_{\tilde X|X} =  (  \bI_{pk_n} - \widehat\bTheta_R \widehat\bTheta_C^{-1} )\widehat\bW^{-1/2}\widehat\bA^\T([X_{i1}]_{\mathcal{B}_1}^\T,\dots,[X_{ip}]_{\mathcal{B}_p}^\T)^{\T}$ and
$\widehat\bTheta_{\tilde X|X} = 2\widehat\bTheta_{R}- \widehat\bTheta_{R} \widehat\bTheta_C^{-1}\widehat\bTheta_{R}.$
%and the fact that the correlation operators  
%$C_{\widetilde{X}_j\widetilde{X}_j}   =   C_{{X}_j{X}_j}$,
%$C_{X_{j}\widetilde{X}_{k}}   =   C_{X_{j}{X}_{k}}$, $C_{\widetilde X_{k}{X}_{j}} = C_{ X_{k}{X}_{j}}$ for $j\neq k$, and for $j=k$, $C_{X_{j}\widetilde{X}_{j}}  =   C_{X_{j}{X}_{j}} - R_{X_jX_j}$, we have that  $\tcorr(\tilde{\xi}_{jl},\tilde{\xi}_{jm}) = \tcorr(\xi_{jl},\xi_{jm})$, $\tcorr({\xi}_{jl},\tilde{\xi}_{km})  =  \tcorr(\xi_{jl},\xi_{km})$, $\tcorr(\tilde{\xi}_{km},{\xi}_{jl}) = \tcorr({\xi}_{km},{\xi}_{jl}),$ for $j\neq k$, and for $j=k$,  $  
% \tcorr(\xi_{jl},\tilde{\xi}_{jl})  = (1-r) \tcorr(\xi_{jl},\xi_{jl})$ under E\ref{R.ex1}; $
% \tcorr(\xi_{jl},\tilde{\xi}_{jl})  = (1-r_{j}) \tcorr(\xi_{jl},\xi_{jl});
%$ under E\ref{R.ex2}; $  
% \tcorr(\xi_{jl},\tilde{\xi}_{jl})  = (1-r_{jl}) \tcorr(\xi_{jl},\xi_{jl}).
%$ under E\ref{R.ex3}.
%\begin{enumerate}[E1:]
%    \item
%$  
% \tcorr(\xi_{jl},\tilde{\xi}_{jl})  = (1-r) \tcorr(\xi_{jl},\xi_{jl})$; 
% \item
% $
% \tcorr(\xi_{jl},\tilde{\xi}_{jl})  = (1-r_{j}) \tcorr(\xi_{jl},\xi_{jl});
%$ 
% \item 
% $  
% \tcorr(\xi_{jl},\tilde{\xi}_{jl})  = (1-r_{jl}) \tcorr(\xi_{jl},\xi_{jl}).
%$
%\end{enumerate} 
This conditional distribution forms the foundation for sampling the estimated  FPC scores of $\widetilde \bX_i$. Finally, we construct the functional knockoffs $\widecheck{\bX}_i = (\widecheck{X}_{i1}, \dots, \widecheck{X}_{ip})^\T$ from these estimated  FPC scores using the Karhunen-Loève expansion. 
%, as illustrated in Section \ref{cor.map}
%This conditional distribution has laid foundation for sampling the estimated scores of $\widetilde \bX_i$. 
%Finally, we recover the estimated functional knockoffs  $\widecheck{\bX}_i = (\widecheck{X}_{i1}, \dots, \widecheck{X}_{ip})^{\T}$ from the estimated scores by Karhunen-Lo$\grave{\hbox{e}}$ve  expansion as shown in Section  \ref{cor.map}. 

\section{Additional empirical results}
\label{sec:additional.emp}
\subsection{Additional simulation results}
Table~\ref{tab:4} and \ref{tab:5} present the empirical power and FDR for partially observed functional data under the model settings of SFLR and FGGM in Section \ref{sec:sim}, respectively. 
%All the settings can be found in Section \ref{sec:sim}.   
Considering the similar conclusions drawn from SFLR and FFLR for fully observed functional data, 
we only apply comparison methods to SFLR for partially observed functional data due to computational efficiency. 

\begin{table}[htbp]
  \caption{The empirical power and FDR in SFLR for partially observed functional data.}
  \label{tab:4}
  \centering
     \resizebox{5.5in}{!}{
\begin{tabular}{*{12}{c}}
  \toprule
  \multirow{2}*{$p$} & \multirow{2}*{$n$} & \multicolumn{2}{c}{KF1} & \multicolumn{2}{c}{KF2} & \multicolumn{2}{c}{KF3} & \multicolumn{2}{c}{TF} &\multicolumn{2}{c}{GL}\\
  \cmidrule(lr){3-4}\cmidrule(lr){5-6}\cmidrule(lr){7-8}\cmidrule(lr){9-10}\cmidrule(lr){11-12}
  &  & FDR &Power & FDR &Power & FDR &Power & FDR &Power& FDR &Power\\
  \midrule
  50 &  100 & 0.18  & 0.94  &0.18   &0.96  & 0.19  
     &0.96  &0.21  &0.79  &0.28  & 1.00 \\

      % &  150 & 0.15  & 1.00  & 0.13  &1.00  & 0.14  &0.98  & 0.16 & 0.93 &0.22  &1.00  \\

      &  200 & 0.16  &0.99   &   0.15& 0.98 &  0.16 &0.98  & 0.18  &0.85  &0.26 &1.00  \\

  100 &  100 & 0.17  & 0.92 &0.15   &0.92 &0.15  &0.92  &  0.16 &0.62  &0.47  &1.00  \\

       %&  150 & 0.19  &0.98   & 0.20  & 0.98 & 0.20  & 0.98 & 0.18  &0.80  & 0.35 &1.00  \\

      &  200 & 0.20  & 1.00  & 0.20  &1.00  &0.18   &1.00  & 0.12  & 0.93 & 0.31 &1.00  \\
      
 150 &  100 & 0.09  & 0.99  &0.09   &0.99  &0. 10 
      &0.99  &  0.14 &0.70  &0.68  &1.00  \\

       %&  150 & 0.19  &0.98   & 0.20  & 0.98 & 0.20  & 0.98 & 0.18  &0.80  & 0.35 &1.00  \\

      &  200 & 0.13  & 1.00  & 0.12  &1.00  &0.12   &1.00  & 0.08  & 0.96 & 0.37 &1.00  \\
      
  \bottomrule
\end{tabular}
}

\end{table}

\begin{table}[tbp]
  \caption{The empirical power and FDR in FGGM for partially observed functional data.}
  \label{tab:5}
  \centering
       \resizebox{4in}{!}{
\begin{tabular}{*{8}{c}}
  \toprule
  \multirow{2}*{$p$} & \multirow{2}*{$n$} & \multicolumn{2}{c}{KF1} & \multicolumn{2}{c}{TF} &\multicolumn{2}{c}{GL}\\
  \cmidrule(lr){3-4}\cmidrule(lr){5-6}\cmidrule(lr){7-8}
  &  & FDR &Power & FDR &Power& FDR &Power\\
  \midrule
   50   &  100 & 0.18   &0.74 &0.15   &0.50  &0.23  &0.75  \\

       &  200 & 0.20  &0.94 & 0.20  &0.79  &0.22  &0.94  \\

  100  &  100 & 0.17  & 0.63   &  0.12 &0.48  &0.20  &0.68  \\

       &  200 & 0.19  & 0.84    & 0.18  & 0.73 & 0.22 &0.86  \\
       
  \bottomrule
\end{tabular}
}

\end{table}

\subsection{Specific ROIs}
\label{Sec:tab}
Table~\ref{tab:regions} presents $34$ regions of interest (ROIs) and the associated labelling index for the emotion related fMRI dataset in Section \ref{real1}. 

\begin{table}
  \caption{The labeling index of ROIs in Section \ref{real1}.}
  \label{tab:regions}
  \centering
     \resizebox{5in}{!}{
\begin{tabular}{*{4}{c}}
  \toprule
 %\multicolumn{1}{c}{Index} & \multicolumn{1}{c}{Region} & \multicolumn{1}{c}{Index} & \multicolumn{1}{c}{Region}\\
%  \cmidrule(lr){3-4}\cmidrule(lr){5-6}\cmidrule(lr){7-8}\cmidrule(lr){9-10}\cmidrule(lr){11-12}
  Index & Region & Index & Region\\
  \midrule
1 & bankssts                   &2& caudal anterior cingulate \\
3 & caudal middle frontal      &4& cuneus \\
5 & entorhinal                 &6& fusiform \\
7 & inferior parietal          &8& inferior temporal \\
9 & isthmus cingulate          &10& lateral occipital \\
11& lateral orbitofrontal      &12& lingual \\
13& medial orbito frontal      &14& middle temporal \\
15& parahippocampal            &16& paracentral \\
17& parsopercularis            &18& parsorbitalis \\
19& parstriangularis           &20& pericalcarine \\
21& postcentral                &22& posterior cingulate \\
23& precentral                 &24& precuneus \\
25& rostral anterior cingulate &26& rostral middle frontal \\
27& superior frontal           &28& superior parietal \\
29& superior temporal          &30& supramarginal \\
31& frontal pole               &32& temporal pole \\
33& transverse temporal        &34& insula \\
  \bottomrule
\end{tabular}
}

\end{table}

\newpage
\spacingset{1.2}
\section*{References}
\begin{description}
\item Cand{\`e}s, E., Fan, Y., Janson, L. and Lv, J. (2018). 
Panning for gold: `model-X' knockoffs forhigh dimensional controlled variable selection, 
{\it Journal of the Royal Statistical Society: Series B}  
{\bf 80}: 551–577. 
\item Fan, Y., Demirkaya, E., Li, G. and Lv, J. (2020a).
Rank: Large-scale inference with graphical nonlinear knockoffs, 
{\it Journal of the American Statistical Association}  
{\bf 115}: 362–379. 
\item Fang, Q., Guo, S. and Qiao, X. (2022). 
Finite sample theory for high-dimensional functional/scalar time series with applications, 
{\it Electronic Journal of Statistics}  {\bf 16}: 527–591.
\item Guo, S. and Qiao, X. (2023). 
On consistency and sparsity for high-dimensional functional time series with application to autoregressions, 
{\it Bernoulli}  
{\bf 29}: 451–472. 
\item Li, B. and Solea, E. (2018). 
A nonparametric graphical model for functional data with application to brain networks based on fMRI,
{\it Journal of the American Statistical Association}  
{\bf113}: 1637–1655. 
\item Li, J. and Maathuis, M. H. (2021). GGM knockoff filter: False discovery rate control for Gaussian graphical models, {\it Journal of the Royal Statistical Society: Series B}  
{\bf 83}: 534–558. 
\item Solea, E. and Li, B. (2022). 
Copula Gaussian graphical models for functional data, {\it Journal of the American Statistical Association}  
{\bf 117}: 781–793. 
\end{description}
\end{document}